\documentclass[10pt,a4paper,reqno]{amsart}
\usepackage{geometry}
\usepackage{enumerate}
\usepackage{amsmath}
\usepackage{amssymb}
\usepackage{amsthm}
\usepackage{xcolor}
\usepackage{verbatim}
\usepackage{tabularx}
\usepackage{mathrsfs}
\usepackage{hyperref}
\hypersetup{colorlinks=true, linktoc=page, linkcolor=red, citecolor=blue}
\usepackage{empheq}
\newcommand{\RR}{\mathbb{R}}

\DeclareMathOperator{\tr}{tr}
\DeclareMathOperator{\Ricc}{Ric}

\theoremstyle{plain}
\newtheorem{theorem}{Theorem}[section]
\newtheorem{lemma}[theorem]{Lemma}
\newtheorem{proposition}[theorem]{Proposition}
\newtheorem{corollary}[theorem]{Corollary}
\newtheorem{question}[theorem]{Question}
\theoremstyle{remark}
\newtheorem{remark}[theorem]{Remark}
\numberwithin{equation}{section}

\theoremstyle{definition}
\newtheorem{definition}{Definition}[section]

\DeclareMathOperator{\diver}{div}

\newcommand{\Riem}{\mathrm{Riem}}
\newcommand{\Ric}{\mathrm{Ric}}
\newcommand{\ric}{\mathrm{Ric}}

\newcommand{\di}{\mathrm{d}}

\newcommand{\p}{\varphi}
\newcommand{\pa}[1]{{\left(#1\right)}}                  
\newcommand{\sq}[1]{{\left[#1\right]}}                  
\newcommand{\abs}[1]{{\left|#1\right|}}                 
\newcommand{\hess}{\operatorname{Hess}}

\renewcommand{\div}{\diver}

\makeatletter
\newcommand*\owedge{\mathpalette\@owedge\relax}
\newcommand*\@owedge[1]{
	\mathbin{
		\ooalign{
			$#1\m@th\bigcirc$\cr
			\hidewidth$#1\m@th\wedge$\hidewidth\cr
		}
	}
}
\makeatother

\newcommand{\KN}{\mathbin{\bigcirc\mspace{-15mu}\wedge\mspace{3mu}}}

\title{On the Geometry of Cotton Gravity}

\date{\today}

\author{Giulio Colombo}
\address{Dipartimento di Matematica ``F. Enriques'', Universit\`a degli Studi di Milano, Via Saldini 50, I-20133 Milano, Italy}
\email{giulio.colombo@unimi.it}
\author{Filippo Mastropietro}
\thanks{The second author figures as corresponding author}
\address{Dipartimento di Matematica ``F. Enriques'', Universit\`a degli Studi di Milano, Via Saldini 50, I-20133 Milano, Italy}
\email{filippo.mastropietro@unimi.it}

\author{Marco Rigoli}
\address{Dipartimento di Matematica ``F. Enriques'', Universit\`a degli Studi di Milano, Via Saldini 50, I-20133 Milano, Italy}
\email{marco.rigoli@unimi.it}

\begin{document}
	
\maketitle

\begin{abstract}
	We analyze the geometry of the field equations of Cotton gravity (for a quite general energy-momentum tensor) on a static space-time. In particular, we describe the local structure of the spatial Riemannian factor. This structure, that we call \emph{Cotton-$\p$-perfect fluid} (C-$\p$-PF, for short) is a generalization to the regime of Cotton Gravity of the recently introduced notion of $\p$-static perfect fluid space-time ($\p$-SPFST). After discussing the variational origin of this system,  we provide sufficient conditions for a C-$\p$-PF to reduce to a $\p$-SPFST. We also study the geometry of the level sets of the lapse function $f$ and  we provide a rigidity result for C-$\p$-PFs under some curvature conditions. The role that Codazzi tensors hold in this theory is highlighted.
\end{abstract}

\bigskip

\noindent \textbf{MSC 2020} {
	Primary: 	53C25, 
	53C24, 
	Secondary: 	53C50, 
	53C80. 
}

\noindent \textbf{Keywords} {
	Cotton tensor$\cdot$
	Static space-times $\cdot$
	Harmonic maps  $\cdot$
	General Relativity
}

\section{Introduction}
Despite  its mathematical beauty, elegance and numerous successes in the description of cosmological phenomena on the scale of the solar system and the nearby universe, Einstein theory of relativity has encountered observational challenges at the cosmic level which, in the last years, have been confirmed by the Hubble telescope data. 
The most important paradigm that has been proposed to solve these challenges has been the introduction 
of dark matter and dark energy.
However, up today, we still lack sound and convincing explanations for these new fundamental concepts. Indeed, since the sixties, in the literature have appeared a plethora of what are called \emph{extended theories of gravity} in order to find motivated answers to the physical problems coming from observations.
In this paper we focus our attention on an approach recently proposed by J. Harada, \cite{Harada}, which, since its appearance, has been  called \emph{Cotton gravity}. In this theory, the gravitational dynamic is described not by the Einstein tensor, but by the Cotton tensor, notably a 3-covariant tensor, given by third order derivatives of the metric.
For instance, the field equations are
\begin{equation} \label{Intro: Har_eq}
	\hat C = -(n-2)\div_1\hat{\mathcal T} \, ,
\end{equation}
where $\hat{C}$ is the Cotton tensor of the $n$-dimensional space-time $(\hat{M},\hat{g})$, and $\div_1\hat{\mathcal T}$ is a certain combination of the components of the covariant derivative of the stress-energy tensor $\hat{T}$.
As a matter of fact, equation \eqref{Intro: Har_eq} is obtained performing  a peculiar  variation of the action functional, in which the connection, instead of the metric, is the fundamental entity. Solutions of Einstein equations also satisfy the Cotton gravity equations, but the converse is generally false.
From the physical point of view, Cotton gravity has been  first put in doubt by Bargueño, \cite{PhysRevD.104.088501}, but Harada was able to show that the galactic rotation curves  of 84 far away galaxies can be explained by his theory without the need for dark matter contributions, \cite{Harada2022}.
Some other criticisms, always regarding the physical validity of the theory, have been raised in \cite{Clement:2023tyx} and \cite{Altas2025} (see also \cite{Altas2025Killing}, where a similar framework, called \emph{Conformal Killing Gravity}, is addressed) and then confuted in \cite{mantica2025noteconservedcurrentsstatic}, \cite{Feng2024} and \cite{sussman2024responsecritiquecottongravity}.
In any case, Cotton gravity is presently an active field of research and, besides these controversial disputes, it reveals to be certainly interesting  from the purely geometric point of view.
The purpose of the present work is to use its field equations  as a stepping stone to derive meaningful generalizations of mathematical structures that, despite of their physical origin, have been vastly analysed by the mathematical community.
These include, most notably, \emph{vacuum static spaces} (see, for example, \cite{Ambrozio}, \cite{BorghiniMazzieri1}, \cite{Chrusciel}, \cite{HwangYun2}, \cite{Shen}, \cite{QingWan}) and \emph{static perfect fluid space-times} (see \cite{Ribeiro1}, \cite{Ribeiro2}, \cite{Lindblom}, \cite{Masood}).

It is known that these two classes of spaces have a mathematical importance which is independent from their relation to physics: the former class is tightly related to the theory of the deformation of the scalar curvature (see \cite{FischerMarsden1}, \cite{FischerMarsden2}), while the latter is a special instance of an Einstein-type structure, \cite{CatMastMontRig}.
The interplay between these mathematical objects is very rich: think, for example, about the work of Kobayashi and Obata, \cite{KO}, (see also the successive \cite{Kobayashi}, \cite{Lafontaine}) on the integrability conditions of vacuum static spaces and their impact on the geometry of the level sets of the static potential $f$. The same technique has been later applied to Ricci solitons, \cite{Cao2013}, and more general classes of spaces, \cite{CatMastMontRig}, \cite{Anselli_2021}. This circle of ideas will also be crucial to the present work. The structures that we are going to study encompass  the previous ones, and have, as a most distinctive feature, that of including terms depending on \emph{third order} derivatives of the metric tensor.

Our investigation is concerned with the geometry of the field equations in case our Lorentzian manifold $\hat{M}$ is a static space-time. More specifically, if $\hat{M}$ splits as $\RR\times M$ with $(M,g)$ a Riemannian $m$-manifold, then, for some $f\in C^\infty(M),$ we consider on $\hat{M}$ the Lorentzian static space-time metric $\hat{g}$ given by
\begin{equation*}
	\hat{g}=-e^{-2\hat{f}}\di t\otimes \di t+g
\end{equation*}
where $t$ is the canonical coordinate on $\RR$, $\hat{\pi}_M: \hat{M}\to M$ is the projection and $\hat{f}= \hat{\pi}_M\circ f$.
We will consider physical contributions given by two functions $\mu,p\in C^\infty(M)$, that are to be thought as density and pressure so that $\hat{\mu}= \hat{\pi}_M\circ \mu$ and $\hat{p}= \hat{\pi}_M\circ p$ will be the density and pressure of a static perfect fluid on $\hat{M}$, and by $\hat{\p}: \hat{M}\to (N,h)$, a non-linear field of the form $\hat{\p}=\hat{\pi}_M\circ \p, \p:M\to (N,h)$ with a self-interacting potential $U:N\to \RR$. We will study how, in this context, the equations of Cotton gravity,  from now on called C-$\p$-PF equations, reduce on the time-slice $(M,g)$. Before doing so, let us set some notations. Since in the following we will make extensive use of iterated covariant derivatives, metric contractions and skew-symmetrizations, we did not find it convenient to adopt modern, global notation \emph{à la} Koszul and we instead opted for an ``abstract index'' notation. Therefore, tensors and equations between them are often presented component-wise. Due to the manifest tensorial nature of all the given quantities, it will be clear to the reader that all of the expressions that we will introduce will be, indeed, globally defined and independent from the given local frame. Computations will be carried out with respect to a local orthonormal frame. More precisely, given an $m$-dimensional semi-Riemannian manifold $(M,g)$ of signature $(k,m-k)$ and a small open subset $U$ of $M$, we can always find a family of orthonormal one-forms $\{\theta^i\}_{i=1}^m$ on $U$ such that
\begin{align*}
	g=-\sum_{i=1}^k \theta^i\otimes \theta^i+\sum_{j={k+1}}^m\theta^j\otimes \theta^j \ \ \text{ on } U.
\end{align*}
When $g$ is a Riemannian metric, we deduce that the components $g_{ij}$ of the metric tensor on the dual frame to $\{\theta^i\}$ coincide with the Kronecker delta $\delta_{ij}$. Due to this fact, the operation of lowering or raising indexes does not change the value of the components of a tensor. Therefore, when we will work with a Riemannian metric, we will consider every tensor to be purely covariant and use the musical isomorphism implicitly. From now on, Einstein summation convention is on force, unless otherwise stated. In Section \ref{Sect 2: phi-curv} we will introduce some special tensors, called $\p$-\emph{curvatures}, that encode both the contributions of the classical curvature tensors and that of the map $\p$.
The most important of these tensors are the $\p$-Ricci tensor
\begin{align*}
	\ric^\p:=\ric-\alpha\p^*h,
\end{align*}
where $\alpha$ is a fixed, non-zero, coupling constant, and the $\p$-scalar curvature
\begin{align*}
	S^\p:=\tr_g \ric^\p.
\end{align*}
Combining them, we can define the $\p$-Schouten tensor
\begin{align*}
	A^\p:=\ric^\p-\frac{1}{2(m-1)}S^\p g
\end{align*}
and the $\p$-Cotton tensor, of components
\begin{align*}
	C^\p_{ijk}:=A^\p_{ij,k}-A^\p_{ik,j};
\end{align*}
here and in the following commas denote covariant derivatives.
The $\p$-Weyl tensor is defined by
\begin{align*}
	W^\p:=\Riem-\frac{1}{m-2}A^\p\KN g,
\end{align*}
where $\KN$ denotes the Kulkarni-Nomizu product. The $\p$-Bach tensor is given, in local coordinates, by the expression
\begin{align*}
	(m-2)B^\p_{ij} & = C^\p_{ijk,k}+R^\p_{lk}W^\p_{likj}-\alpha R^\p_{lj}\p^a_l\p^a_i +\alpha\pa{\p^a_{ij}\p^a_{kk}-\p^a_{kkj}\p^a_i-\frac{\abs{\tau(\p)}^2}{m-2}\delta_{ij}}, \nonumber
\end{align*}
where $\tau(\p)$ is the tension of the map $\p$,  as defined in \cite{EL}.
We will also need the following two tensors, that resemble the $D$ tensor introduced by Cao and Chen in \cite{Cao2013}:
\begin{align*}
	D^A_{ijk} & := \frac{1}{m-2}\bigg[f_kR^\p_{ij}-f_jR^\p_{ik}+\frac{1}{m-1}f_t\pa{R^\p_{tk}\delta_{ij}-R^\p_{tj}\delta_{ik}} -\frac{S^\p}{m-1}\pa{f_k\delta_{ij}-f_j\delta_{ik}}\bigg]
\end{align*}
and
\begin{align*}
	D^B_{ijk} & := \frac{1}{m-2}\bigg[ f_j f_{ik}-f_k f_{ij}+\frac{1}{m-1}f_t \pa{f_{tj}\delta_{ik}-f_{tk}\delta_{ij}} -\frac{\Delta f}{m-1}\pa{f_j \delta_{ik}-f_k\delta_{ij}}\bigg]. 	
 \end{align*}
 We will comment on their importance and meaning in a second.
 With these preparations in mind, the equations of C-$\p$-PF are
\begin{subequations}\label{Intro: C-phi-PF}
 \begin{empheq}[left={\empheqlbrace}]{alignat=2}
 		0 & = C^\p_{ijk}+f_lW^\p_{lijk}-D^A_{ijk}-(m-2)D^B_{ijk}+\frac{1}{m-1}U^a\pa{\p^a_j\delta_{ik}-\p^a_k\delta_{ij}}\label{Intro: C-phi-PF 1}\\
 		0 & = -\frac{m-1}{m}f_{lli}+\frac{m-2}{m}f_lf_{li}+\pa{\Delta f}f_i+\frac{1}{2m}S^\p_i-f_lR^\p_{il}+\frac{1}{m}U^a\p^a_i-\frac{m-1}{m}\mu_i,\label{Intro: C-phi-PF 2}\\[0.1cm]
 		0 & = \nabla p-(\mu+p)\nabla f,\label{Intro: C-phi-PF 3}\\
 		0 & = \tau(\p)-\di\p(\nabla f)-\frac{1}{\alpha}  \nabla^h U (\p) \label{Intro: C-phi-PF 4}.
 	\end{empheq}
\end{subequations}

In \eqref{Intro: C-phi-PF}, indexes refer to two local orthonormal co-frames $(U, \{\theta^i\}), (V, \{\omega^a\})$, respectively in $M$ and $N$, with the property that $\p(U)\subset V$. Clearly, $1\leq i,j,... \leq m$ and $1\leq a,b,... \leq d=\dim N$. $\nabla ^h U$ denotes the covariant derivative of $U$ with respect to the metric $h$ of $N$, so that $\nabla ^h U (\p)$ will be the corresponding section  of the pullback bundle $\p^*TN$.\\

It is useful to compare the solutions to \eqref{Intro: C-phi-PF} to those of the Einstein Field Equations corresponding to the same stress-energy tensor: these are called $\p$-\emph{Static Perfect Fluid Space-Times} ($\p$-SPFST, for shorts) and are studied in our forthcoming work \cite{phiSPFST}, jointly with L. Branca  and P. Mastrolia. In this case, the equations write as
\begin{align}\label{Intro: phi SPFST}
	\begin{cases}
		\ric^\p+\hess f-\di f\otimes \di f-\dfrac{1}{m-1}\pa{\dfrac{S^\p}{2}-p+U(\p)}g=0,\\[0.3cm]
		\Delta_f f=-\dfrac{1}{m-1}\sq{mp-mU(\p)+\dfrac{m-2}{2}S^\p},\\[0.3cm]
		\tau(\p)-\di \p(\nabla f)=\dfrac{1}{\alpha} \nabla U,\\[0.2cm]
		\mu+U(\p)=\dfrac{1}{2}S^\p,\\[0.2cm]
		\nabla p-(\mu+p)\nabla f=0.
	\end{cases}
\end{align}
	Here, we are using the notation \begin{equation*}
	\Delta_ f u:=\Delta u-g(\nabla u, \nabla f),
\end{equation*}
for any $u\in C^\infty(M)$. The operator $\Delta_f$ is often called the $f$-\emph{laplacian}.

When comparing the analysis of system \eqref{Intro: C-phi-PF} to what has been done in \cite{phiSPFST}, a question comes to mind: there, we were able to describe, for $m\geq 3$, the geometry of the regular level sets of a solution $f$ of \eqref{Intro: phi SPFST} \emph{via} the previously introduced  tensor $D^A$; to understand why this tensor is a natural object, it suffices to note its appearance in the first of the two integrability conditions of system \eqref{Intro: phi SPFST}, that is,
\begin{align}\label{Intro: prima cond int phi spfst}
	(m-1)D^A_{ijk}=C^\p_{ijk}+f_l W^\p_{lijk}+\frac{1}{m-1}U^a\pa{\p^a_j\delta_{ik}-\p^a_k\delta_{ij}}.
\end{align}
In doing so, we were inspired by the work of Cao and Chen \cite{Cao2013} on Ricci solitons, see also \cite{CatMastMontRig}, and by the (much older) work of Kobayashi and Obata \cite{KO} on vacuum static spaces.
\begin{question}\label{Question: domanda 1}
	Are similar considerations possible for \eqref{Intro: C-phi-PF}, i.e. for Cotton gravity?
\end{question}
Due to the shape of equations \eqref{Intro: C-phi-PF}, an affirmative answer seems possible; indeed,  it is easy to prove that, when \eqref{Intro: phi SPFST} holds, then $D^A=D^B$ so that \eqref{Intro: C-phi-PF 1} reduces to \eqref{Intro: prima cond int phi spfst}. Therefore, \eqref{Intro: C-phi-PF 1} is a nice generalization of the first integrability condition of a $\p$-SPFST.
%
In light of the work done in \cite{phiSPFST}, it seems now reasonable to study the geometry of the non-singular level hypersurfaces of $f$ in case $D^A\equiv D^B\equiv 0$. 
This is done in Theorem \ref{local struct: teorema warp product} below, under the further assumption that $\p$ is $\frac{U}{\alpha}$-harmonic, that is, it solves
\begin{align*}
	\tau(\p)=\frac{1}{\alpha} (\nabla^h U) (\p), 
\end{align*} 
see Remark \ref{Deduction: remark: U armonica} for more details.
The key concept here is that of a \emph{warped product metric}. Given a Riemannian manifold $(\Sigma,g_\Sigma)$, a real interval $I$ and a positive function $\rho\in C^\infty(I)$, the latter is given by the metric $g=\di r\otimes \di r+\rho^2(r)g_\Sigma$ on the manifold $I\times \Sigma$, where $r:I\times \Sigma \to I$ denotes the projection.

Pushing this reasoning further, in the next result we give sufficient conditions (of which some are also necessary) for the validity of the requirement $D^A\equiv D^B\equiv 0$. This is the main Theorem of the present work and its statement is the following:
\begin{theorem}\label{Introduzione: codazzi: teo: teoremone}
	Let $(M,g)$ be a complete $m$-dimensional Riemannian manifold, $m\geq 3$, and let $\p:(M,g)\to (N,h)$ be a smooth map of Riemannian manifolds. Let $\alpha \in \RR,\alpha > 0, f\in C^\infty(M), U\in C^\infty(N)$. Assume that $(M,g)$ satisfies
	\begin{align*}
		\begin{cases}
			0=C^\p_{ijk}+f_lW^\p_{lijk}-\dfrac{1}{m-1}U^a\pa{\p^a_k\delta_{ij}-\p^a_j\delta_{ik}}-D^A_{ijk}-(m-2)D^B_{ijk},\\[0.2cm]
			\tau(\p)-\di \p(\nabla f)=\dfrac{1}{\alpha} \nabla^h U.
		\end{cases}
	\end{align*}
	Let $S^2(M)$ be the space of $2$-covariant, symmetric tensors on $M$ and define a linear map $F:S^2(M)\to C^\infty(M)$ by setting, for $\beta\in S^2(M), \beta=\beta_{lk}\theta^l\otimes \theta^k$ locally,
	\begin{align}\label{Introduzione: codazzi: teoremone: funzionale F}
		F(\beta):=\pa{\frac{m}{m-1}U^a\p^a_lf_k+D^A_{lik}f_i-W^\p_{likj}f_if_j}\beta_{lk}.
	\end{align}
	Assume that
	\begin{enumerate}[1.]
		\item $f$ is proper;\label{main: f propria}
		\item $B^\p(\nabla f,\nabla f)=0$;\label{main: phi bach piatta}
		\item $\p$ is $\frac{1}{\alpha} U$-harmonic;\label{main: U amonic}
		\item for all $ p\in M$ regular for $f$, we have that $\nabla f$ is an eigenvector of $\ric^\p$ at $p$;\label{main: autovet}
		\item $\ric^\p+\hess f\in \ker (F)$.\label{main nucleo di F}
	\end{enumerate}
	Then, for each regular level set $\Sigma$ of $f$ and for every $p\in \Sigma$, there exists $A\subset M$ open such that $p\in A$ and $g_{|_A}$ is a warped product metric.  Moreover, $(\Sigma,g_{\Sigma})$ satisfies
	\begin{align*}
		\begin{cases}
			\ric^{\p_{|_{\Sigma}}}=\dfrac{S^{\p_{|_{\Sigma}}}}{m-1}g_\Sigma,\\[0.2cm]
			\tau(\p_{|_{\Sigma}})=0.
		\end{cases}
	\end{align*}
\end{theorem}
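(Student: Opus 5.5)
The strategy is to show that hypotheses \ref{main: f propria}--\ref{main nucleo di F} force $D^A\equiv D^B\equiv 0$ on the regular set of $f$. Once this holds, we are in the setting of Theorem \ref{local struct: teorema warp product}: $\p$ is $\frac{U}{\alpha}$-harmonic by \ref{main: U amonic}, so the local warped product structure and the equations on $\Sigma$ follow from it directly.

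The core step is an integral identity of Cao--Chen type, adapted to two tensors. First I contract the first equation of the system with the combination $D^A_{ijk}+(m-2)D^B_{ijk}$. Then I rewrite the contraction of $C^\p_{ijk}+f_lW^\p_{lijk}$ against $f_k\beta_{ij}$, where $\beta=\ric^\p+\hess f$, using the $\p$-Bach tensor. From the definition of $B^\p$ and the divergence identity $C^\p_{ijk,k}$, together with $W^\p_{lijk,l}$ expressed through $C^\p$, I expect a pointwise identity of the form
\begin{align*}
(m-2)B^\p_{ij}f_if_j=\bigl(\text{divergence of a vector field}\bigr)-\frac{m-2}{2}\abs{D^A}^2+(\text{mixed terms in }D^A,D^B)+F(\ric^\p+\hess f)+(\text{terms in } \tau(\p)-\tfrac1\alpha\nabla U).
\end{align*}
Here the $\p$-terms $\p^a_{ij}\p^a_{kk}-\p^a_{kkj}\p^a_i$ are handled with the harmonic-map equation $\tau(\p)-\di\p(\nabla f)=\frac1\alpha\nabla^hU$ and hypothesis \ref{main: U amonic}. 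Together these give $\di\p(\nabla f)=0$, which kills the $\p$-contributions in the $\p$-Bach tensor, and the $U^a\p^a$ terms collect exactly into the first summand of $F$. The $W^\p_{likj}f_if_j\beta_{lk}$ and $D^A_{lik}f_i\beta_{lk}$ pieces are by design the remaining summands of $F$, and they vanish by \ref{main nucleo di F}. Hypothesis \ref{main: autovet} is used to show that the mixed term $\langle D^A,D^B\rangle$ has a sign, or reduces to $|D^B|^2$-type quantities. When $\nabla f$ is an eigenvector of $\ric^\p$, one has $f_kR^\p_{ij}f_j=\lambda f_kf_i$, and both $D^A$ and $D^B$ then have the restricted form that makes them computable through the second fundamental form of the level sets.

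The next step is integration over sublevel sets. Since $f$ is proper by \ref{main: f propria}, the sets $\{f\le c\}$ are compact. By Sard, almost every $c$ is a regular value, so $\{f\le c\}$ is a compact domain with smooth boundary $\{f=c\}$, whose outer normal is $\nabla f/|\nabla f|$. I integrate the identity and use $B^\p(\nabla f,\nabla f)=0$ from \ref{main: phi bach piatta}. The divergence term is chosen to be of the form $(\text{tensor})_{ijk}f_k$ contracted so that it becomes a contraction with $D^A$ or $D^B$ times $f_k$. Antisymmetry in $j,k$ then makes the boundary flux vanish, exactly as in Cao--Chen, where the boundary term contains $D_{ijk}f_k\nu_j$ with $\nu\parallel\nabla f$. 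This yields $\int_{\{f\le c\}}\bigl(\abs{D^A}^2+c_m\abs{D^B}^2\bigr)=0$ with $c_m>0$, hence $D^A\equiv D^B\equiv0$.

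The main obstacle is getting the positive-definite quadratic form in $(D^A,D^B)$. The cross terms produced by the $-(m-2)D^B$ piece of the equation need not have a sign, and hypothesis \ref{main: autovet} must be what rescues this. My first attempt is to use \ref{main: autovet} and the frame adapted to $\nabla f$ to express $|D^A|^2$, $|D^B|^2$ and $\langle D^A,D^B\rangle$ via the traceless parts of the restriction of $\ric^\p$ and $\hess f$ to $\nabla f^\perp$. I then complete the square. If this fails, I will instead show $D^A=0$ first and then derive $D^B=0$ from the equation itself.
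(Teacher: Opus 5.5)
Your skeleton matches the paper's: a Cao--Chen divergence identity, a vector field orthogonal to $\nabla f$, properness, then Theorem \ref{local struct: teorema warp product}. The gap is that the central identity is neither derived nor correctly anticipated, and the errors sit exactly where the hypotheses enter.

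\textbf{The quadratic form.} Do not contract the first equation with $D^A+(m-2)D^B$; that contraction is what manufactures the cross term you worry about. Instead, read the first equation, given the second, as $\overline D\equiv0$, i.e.
\begin{align*}
D^A_{ijk}+(m-2)D^B_{ijk}=C^\p_{ijk}+f_lW^\p_{lijk}-\frac{1}{m-1}U^a\pa{\p^a_k\delta_{ij}-\p^a_j\delta_{ik}} .
\end{align*}
Take its divergence in $k$ (this is where $B^\p$ enters, via its definition and the formula for $\div_1 W^\p$) and contract with $f_if_j$. Equivalently, compute $\div Y$ for
\begin{align*}
Y_k=f_if_j\pa{D^A_{ijk}+(m-2)D^B_{ijk}}-\frac{1}{m-1}U^a\pa{\p^a_jf_jf_k-\abs{\nabla f}^2\p^a_k},\qquad Y_kf_k=0 .
\end{align*}
The squares come from trace-freeness alone: $f_jf_{ik}D^B_{ijk}=\frac{m-2}{2}\abs{D^B}^2$ and $f_kR^\p_{ij}D^A_{ijk}=\frac{m-2}{2}\abs{D^A}^2$. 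No $\langle D^A,D^B\rangle$ term ever appears. The only mixed term is $D^A_{ijk}f_{ik}f_j$, which equals $\frac{m-2}{2}\abs{D^A}^2+D^A_{ijk}(R^\p_{ik}+f_{ik})f_j$, and its last piece is a summand of $F(\ric^\p+\hess f)$.

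Your proposed rescue would also fail. Hypothesis 4 constrains only $\ric^\p$, and in a frame adapted to $\nabla f$ one finds $\langle D^A,D^B\rangle=-\frac{2\abs{\nabla f}^2}{(m-2)^2}\sum_{A,B}\mathring{R}^\p_{AB}\mathring{f}_{AB}$. Here $\mathring{R}^\p_{AB}$ and $\mathring{f}_{AB}$ are the traceless parts of $\ric^\p$ and $\hess f$ on $\nabla f^\perp$, so this has no sign. The fallback of getting $D^B=0$ from the equation once $D^A=0$ has no mechanism behind it.

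\textbf{The $\p$-terms and the conclusion.} When you trade $C^\p_{ijk,k}$ for $B^\p$, the relation $\p^a_if_i=0$ does not kill all $\p$-contributions. The term $\frac{\alpha}{m-2}\abs{\tau(\p)}^2\abs{\nabla f}^2$ survives and gives $+\frac{\abs{\nabla f}^2}{\alpha(m-2)}\abs{\nabla^hU}^2$ in $\div Y$. This term is nonnegative precisely because $\alpha>0$, a hypothesis your plan never uses. Nor do the $U$-terms collect exactly into $F$: one obtains $\frac{m}{m-1}U^a\p^a_lf_kf_{lk}$, and completing it to the first summand of $F$ leaves $-\frac{m}{m-1}U^a\p^a_lR^\p_{lk}f_k$. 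Killing this leftover, together with $\p^a_lf_l=0$, is the actual role of hypothesis 4. Under the system and hypotheses 2--5 one gets
\begin{align*}
\div Y=\frac{m-2}{2}\pa{\abs{D^A}^2+(m-2)\abs{D^B}^2}+\frac{\abs{\nabla f}^2}{\alpha(m-2)}\abs{\nabla^hU}^2\geq 0 .
\end{align*}
The second term is needed for the conclusion. Theorem \ref{local struct: teorema warp product} only yields $h(\tau(\p_{|_\Sigma}),\di\p_{|_\Sigma})=0$. The stronger $\tau(\p_{|_\Sigma})=\tau(\p)=\frac{1}{\alpha}\nabla^hU=0$ holds only once the integral identity gives $\nabla^hU=0$ at regular points. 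So your claim that the equations on $\Sigma$ ``follow directly'' is false as stated.

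\textbf{Minor point.} Properness makes $\{\delta\le f\le\eta\}$ compact, not $\{f\le c\}$, since $f$ need not be bounded below. Because $Y\perp\nabla f$, the flux through both boundary components vanishes, so integrate over these slabs and let $\delta\to-\infty$, $\eta\to+\infty$.
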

To prove Theorem \ref{Introduzione: codazzi: teo: teoremone}, we will construct a vector field on $M$ with non-negative divergence and integrate it on a family of subsets of the form $\Omega_{\delta,\eta}=\{x\in M: \delta\leq f(x)\leq \eta\}$, for some $\delta,\eta\in \RR, \delta<\eta$. The assumptions of Theorem \ref{Introduzione: codazzi: teo: teoremone} are at first glance unappealing, but they are motivated as follows:
\begin{itemize}
	\item[1)] that $f$ is proper is needed to obtain the compactness of the sets $\Omega_{\delta,\eta}$ in order to apply the divergence theorem;
	\item[2)] assumption \ref{main: phi bach piatta} is inspired by the work of Cao and Chen, \cite{Cao2013};
	\item[3)] assumption \ref{main: U amonic} is needed to make our analysis of the regular level sets of $f$ effective, as explained in Section \ref{Sect 4: local struct};
	\item[4)] the last two assumptions are rather technical; we note that \ref{main nucleo di F} is automatically satisfied by a $\p$-SPFST if $\p$ is $\frac{1}{\alpha}U$-harmonic, so that the present setting is more general. Moreover, as explained in Remark \ref{Remark U=0} below, in the situation of a vanishing potential $U\equiv 0$, we have that assumptions \ref{main: U amonic} and \ref{main: autovet} can be removed after a small modification of the functional $F$.
\end{itemize}

With a little effort, it can be proved that solutions of \eqref{Intro: phi SPFST} are also solutions of \eqref{Intro: C-phi-PF} so that every $\p$-SPFST is a C-$\p$-PF. In this case, Theorem \ref{Introduzione: codazzi: teo: teoremone} reduces to \cite[Theorem 4.17]{phiSPFST}.
In particular, we have proved that techniques used to obtain rigidity results for solutions to Einstein field equations can sometimes be readapted to the present setting.
In light of this fact, one might ask how ``far'' solutions to \eqref{Intro: C-phi-PF} are to solutions of \eqref{Intro: phi SPFST}.
\begin{question}\label{question: domanda 2}
	When a solution of \eqref{Intro: C-phi-PF} is a solution of \eqref{Intro: phi SPFST}?
\end{question}

In Section \ref{Sect 5: riem comp} we address this issue, giving sufficient conditions under which a C-$\p$-PF solution is ``almost'' (that is, up to two integration constants) a $\p$-SPFST solution. The precise meaning of the above statement is given in Theorem \ref{riem comp: risultato principale} below. The result is obtained by studying how the existence of a Codazzi tensor on $M$ affects the algebraic structure of the Weyl tensor. Similar considerations, for a 4-dimensional space-time, are presented in Section 6 of \cite{Mantica2012WeylCT}, where the authors study the vanishing of a certain tensor, often called the ``magnetic component'' of the Weyl tensor.
More in depth, recall that a 2-covariant, symmetric tensor $P$ on a semi-Riemannian manifold $(M,g)$ is said to be a \emph{Codazzi tensor} if its covariant derivative is a totally symmetric 3-covariant tensor.	Then it can be proved that $P$  satisfies the following, more general, condition:
\begin{align}\label{Intro: weyl comp}
	W^l_{\ iks}P_{jl}+W^l_{\ isj}P_{kl}+W^l_{\ ijk}P_{sl}=0,
\end{align}
which is called \emph{Weyl compatibility} and it has been introduced by Mantica and Molinari in \cite{Mantica2012WeylCT}.
Studying equation \eqref{Intro: weyl comp}, one can force every Weyl compatible tensor on $(M,g)$ to be proportional to the metric by imposing some algebraic conditions on the Weyl tensor (the injectivity of the $W^*$ operator defined in Section \ref{Sect 5: riem comp}).
Choosing 
		\[
		P=\ric^\p+\hess f-\di f\otimes \di f,
		\] one can prove, \emph{via} a careful tensorial analysis, that, under these assumptions,
 system \eqref{Intro: C-phi-PF} reduces to a system closely related to \eqref{Intro: phi SPFST}. This is the content of Theorem \ref{riem comp: risultato principale}.\\
 \\
\textbf{Structure of the paper.}
In Section \ref{Sect 1: setting}, we recall some basic facts in Lorentzian Geometry and we introduce the Cotton Gravity equations. In subsection \ref{subset: origine variazionale}, following \cite{Harada}, we discuss and prove the variational origin of the field equations. Note that subsection \ref{subset: origine variazionale} is independent from the rest of the paper.

In Section \ref{Sect 2: phi-curv} we introduce the $\p$-curvature tensors. In Section \ref{Sect 3: system} we present the C-$\p$-PF equations and the tensors $D^A$ and $D^B$. We then compare the C-$\p$-PFs with their analogues in General Relativity, namely the $\p$-SPFSTs.
The derivation of system \eqref{Intro: C-phi-PF} starting from the field equations is postponed to Appendix \ref{Appendice A}, together with a general discussion concerning the role that Codazzi tensors hold in this theory. In Section \ref{Sect 4: local struct}, we first carry out the aforementioned analysis of the level sets of the function $f$ and then apply it to prove Theorem \ref{Introduzione: codazzi: teo: teoremone}.

In Section \ref{Sect 5: riem comp}, we discuss the notions of Riemann and Weyl compatibility and we prove Theorem \ref{riem comp: risultato principale} about the relation between systems \eqref{Intro: C-phi-PF} and \eqref{Intro: phi SPFST}. In Appendix \ref{Appendice B}, we further characterize Riemann and Weyl compatibility giving special emphasis to the case of 4-dimensional Riemannian manifolds. On the one hand, this will allow us to provide natural conditions under which the operator $W^*$ is injective, therefore giving more depth to the results of Section \ref{Sect 5: riem comp}, and, on the other hand, we will be able to strengthen some results of Mantica and Molinari, see \cite{Mantica2012RiemannCT}.
\tableofcontents

\section{The setting}\label{Sect 1: setting}
The aim of this section is to recall the definition of Cotton gravity and the variational origin of the equations.
\subsection{Basic Lorentzian geometry}\label{Lorentzian Geometry}

A Lorentzian manifold is a smooth $n$-dimensional manifold $\hat M$ with a non-degenerate $(0,2)$-symmetric tensor $\hat g$ of signature $(-,+,\dots,+)$. Greek letters $\alpha,\beta,\dots$ will denote indexes running from $0$ to $n-1$. Given a Lorentzian $n$-manifold $(\hat M,\hat g)$ and a point $p\in\hat M$, we can find an open neighbourhood $U\subset M$ of $p$ and a local co-frame $\{\omega^\gamma\}_{\gamma=0}^{n-1}$ on $U$ such that
\[
	\hat g = -\omega^0\otimes\omega^0 + \omega^1\otimes\omega^1 + \dots + \omega^{n-1}\otimes\omega^{n-1} \, .
\]
We also write
\[
	\hat g = g_{\gamma\beta} \, \omega^\gamma\otimes\omega^\beta
\]
where $(g_{\gamma\beta}) = \mathrm{diag}(-1,1,\dots,1)$. We denote by $g^{\gamma\beta}$ the entries of the inverse matrix of $(g_{\gamma\beta})$ and by $\{e_\gamma\}_{\gamma=0}^{n-1}$ the frame dual to $\{\omega^\gamma\}_{\gamma=0}^{n-1}$. One can then prove that, on $U$, there exist $1$-forms $\{\omega_{\ \gamma}^\beta\}_{\gamma,\beta=0}^{n-1}$, called the \emph{Levi-Civita connection forms}, characterized by the equations
\[
	\begin{cases}
		\di\omega^\gamma = -\omega_{\ \beta}^\gamma \wedge \omega^\beta \qquad \text{(first structure equations)} \\
		\omega_{\gamma\beta} + \omega_{\beta\gamma} = 0
	\end{cases}
\]
where $\omega_{\gamma\beta} := g_{\eta\gamma} \, \omega_{\ \beta}^\eta$. More in general, we will use $g_{\gamma\beta}$ and $g^{\gamma\beta}$ to raise and lower indexes. Then, the Levi-Civita connection of $(\hat M,\hat g)$ is defined \emph{via} the formula
\[
	\hat{\nabla} e_\gamma = \omega_{\ \gamma}^\beta \otimes e_\beta \, .
\]
The second structure equations read
\[
	\di\omega_{\ \gamma}^\beta = -\omega_{\ \gamma}^\eta \wedge \omega_{\ \eta}^\beta + \Omega_{\ \gamma}^\beta
\]
where the $2$-forms $\{\Omega_{\ \gamma}^\beta\}$ are the curvature forms associated to $\{\omega^\gamma\}_{\gamma=0}^{n-1}$. Setting $\Omega_{\gamma\beta} := g_{\gamma\eta} \, \Omega_{\ \beta}^\eta$, we also have
\[
	\Omega_{\gamma\beta} + \Omega_{\beta\gamma} = 0 \, .
\]
The $(0,4)$-version, $\widehat{\Riem}$, of the Riemann curvature tensor has components $\hat R_{\eta\beta\gamma\delta}$ determined by the expression
\[
	\Omega_{\eta\beta} = \frac{1}{2} \hat R_{\eta\beta\gamma\delta} \, \omega^\gamma \wedge \omega^\delta \, .
\]
The $(0,4)$-version has the symmetries
\[
	\hat R_{\eta\beta\gamma\delta} = \hat R_{\gamma\delta\eta\beta} = - \hat R_{\delta\gamma\eta\beta} = \hat R_{\delta\gamma\beta\eta} \, .
\]
Defining as usual the components of the $(1,3)$-version of $\widehat{\Riem}$ as
\[
	\hat R^\eta_{\ \beta\gamma\delta} = g^{\eta\rho} \hat R_{\rho\beta\gamma\delta},
\]
we can define those of the Ricci tensor, $\hat{\ric}=\hat{R}_{\eta\beta}\omega^\eta\otimes \omega^\beta$, by
\[
	\hat R_{\eta\beta} = \hat R^\gamma_{\ \eta\gamma\beta} \, .
\]
Note that $\hat R_{\eta\beta} = \hat R_{\beta\eta}$. Given a $(0,2)$-tensor $P$, we define its $\hat g$-trace by
\[
	\tr_{\hat g} P= g^{\eta\beta} P_{\eta\beta} = P^\eta_{\ \eta} \, .
\]
The scalar curvature of $(\hat M,\hat g)$ is
\[
	\hat S = \tr_{\hat g} \hat{\Ric} = g^{\eta\beta} \hat R_{\eta\beta} \, .
\]
Given two $(0,2)$-tensors $P$ and $Q$, their Kulkarni-Nomizu product $P\owedge Q$ is defined by
\[
	(P\owedge Q)_{\eta\beta\gamma\rho} := P_{\eta\gamma} Q_{\beta\rho} + P_{\beta\rho} Q_{\eta\gamma} - P_{\eta\rho} Q_{\beta\gamma} - P_{\beta\gamma} Q_{\eta\rho} \, .
\]
With this notation in mind, the components of the Weyl tensor of $(\hat M,\hat g)$ can be written as
\[
	\hat W_{\eta\beta\gamma\rho} = \hat R_{\eta\beta\gamma\rho} - \frac{1}{n-2} (\hat A\owedge\hat g)_{\eta\beta\gamma\rho}
\]
where
\[
	\hat A_{\eta\beta} = \hat R_{\eta\beta} - \frac{1}{2(n-1)} \hat S \, g_{\eta\beta}
\]
determines the Schouten tensor $\hat{A}$.
To express the components of the covariant derivative of a tensor, we will use commas to separate old and new indexes. For example, given a $(0,2)$-tensor $P$, the notation $P_{\eta\beta,\gamma}$ will denote the components of its covariant derivative $\hat{\nabla} P$, while $P_{\eta\beta,\gamma\rho}$ will be the components of its second covariant derivative and so on.
The Cotton tensor of $(\hat M,\hat g)$ is given by
\[
	\hat C_{\eta\beta\gamma} = \hat A_{\eta\beta,\gamma} - \hat A_{\eta\gamma,\beta} \, .
\]
When $n\geq 4$, $\hat C$ can be alternatively, but equivalently, defined by
\[
	\hat C_{\eta\beta\gamma} = -\frac{n-3}{n-2} \hat W^\rho_{\ \eta\beta\gamma,\rho} \, .
\]
We define the $1$-divergence of a $p$ times covariant tensor $P$ by setting
\[
	(\div_1 P)_{\alpha_1\dots\alpha_{p-1}} = g^{\eta\beta} P_{\eta\alpha_1\dots\alpha_{p-1},\beta}
\]
so that, when $n\geq 4$,
\[
	\hat C = -\frac{n-3}{n-2} \div_1 \hat W \, .
\]
\subsection{Cotton Gravity and General Relativity}\label{subset: Cotton and GR}
Given a symmetric covariant $(0,2)$-tensor field $\hat T$, the stress-energy tensor, the Einstein field equations write in the form
\begin{equation} \label{Ein_eq}
	\hat\Ric - \frac{1}{2} \hat S\hat g = \hat T \, .
\end{equation}
As it is well-known, $\hat\Ric - \frac{1}{2}\hat S\hat g$ is divergence-free by the Schur's identity (that is, the twice contracted second Bianchi identity for the Riemann tensor), so that \eqref{Ein_eq} forces $\hat T$ to be so.
Following \cite{Harada}, we define
\begin{equation}\label{Tensore T maiuscolo}
	\hat{\mathcal T} := \frac{1}{n-2} \left( \hat T\owedge \hat g - \frac{1}{2(n-1)} (\tr_{\hat g}\hat T) \hat g\owedge \hat g \right).
\end{equation}
Then, the equations of Cotton gravity are
\begin{equation} \label{Har_eq}
	\hat C = -(n-2)\div_1\hat{\mathcal T} \, .
\end{equation}
We compute $\div_1\hat{\mathcal T}$ explicitly. By definition,
\[
	\hat{\mathcal T}_{\eta\beta\gamma\delta} = \frac{1}{n-2} \left[ \hat T_{\eta\gamma} g_{\beta\delta} + \hat T_{\beta\delta} g_{\eta\gamma} - \hat T_{\eta\delta} g_{\beta\gamma} - \hat T_{\beta\gamma} g_{\eta\delta} - \frac{1}{n-1} \hat T^\rho_{\ \rho} (g_{\eta\gamma} g_{\beta\delta} - g_{\eta\delta} g_{\beta\gamma}) \right]
\]
so that
\begin{equation} \label{div_That}
	\begin{split}
	(\div_1\hat{\mathcal T})_{\beta\gamma\delta} & = \frac{1}{n-2} g^{\eta\rho} \left[ \hat T_{\eta\gamma,\rho} g_{\beta\delta} + \hat T_{\beta\delta,\rho} g_{\eta\gamma} - \hat T_{\eta\delta,\rho} g_{\beta\gamma} - \hat T_{\beta\gamma,\rho} g_{\eta\delta} \right] \\
	& \phantom{=\;} - \frac{1}{(n-1)(n-2)} \hat T^\xi_{\ \xi,\rho} (g_{\eta\gamma} g_{\beta\delta} - g_{\eta\delta} g_{\beta\gamma}) \\
	& = \frac{1}{n-2} \left[ (\div_1\hat T)_\gamma g_{\beta\delta} - (\div_1\hat T)_\delta g_{\beta\gamma} + \hat T_{\beta\delta,\gamma} - \hat T_{\beta\gamma,\delta} \right] \\
	& \phantom{=\;} - \frac{1}{(n-1)(n-2)} (\hat T^\xi_{\ \xi,\gamma} g_{\beta\delta} - \hat T^\xi_{\ \xi,\delta} g_{\beta\gamma}) .
	\end{split}
\end{equation}
Using \eqref{div_That} we can show that \eqref{Har_eq} implies $\div_1\hat T = 0$. Indeed, substituting \eqref{div_That} into \eqref{Har_eq} we get
\[
	\hat C_{\beta\gamma\delta} = (\div_1\hat T)_\delta g_{\beta\gamma} - (\div_1\hat T)_\gamma g_{\beta\delta} + \hat T_{\beta\gamma,\delta} - \hat T_{\beta\delta,\gamma} + \frac{1}{n-1} (\hat T^\eta_{\ \eta,\gamma} g_{\beta\delta} - \hat T^\eta_{\ \eta,\delta} g_{\beta\gamma})
\]
and then contracting both sides of the resulting equality with $g^{\beta\gamma}$ we get
\[
	g^{\beta\gamma} \hat C_{\beta\gamma\delta} = (n-2) (\div_1\hat T)_\delta \, .
\]
Since $\hat C$ is totally trace-free, as it is well-known, we deduce $\div_1\hat T = 0$. In view of this, the previous equality simplifies to
\begin{equation} \label{Har_eq2}
	\hat C_{\beta\gamma\delta} = \hat T_{\beta\gamma,\delta} - \hat T_{\beta\delta,\gamma} + \frac{1}{n-1} (\hat T^\eta_{\ \eta,\gamma} g_{\beta\delta} - \hat T^\eta_{\ \eta,\delta} g_{\beta\gamma}) \, .
\end{equation}
Using equation \eqref{Har_eq2} it is also easy to show how the solutions of \eqref{Ein_eq} also solve \eqref{Har_eq}. Indeed, tracing \eqref{Ein_eq} one obtains $(n-2) \hat S = -2\tr_{\hat g}\hat T$ and therefore
\begin{align*}
	\hat\Ric & = \hat T + \frac{1}{2} \hat S \hat g = \hat T - \frac{1}{n-2} (\tr_{\hat g}\hat T) \hat g \, , \\
	\hat A & = \hat\Ric - \frac{1}{2(n-1)} \hat S\hat g = \hat T - \frac{1}{n-1} (\tr_{\hat g}\hat T) \hat g \, ,
\end{align*}
yielding
\[
	\hat C_{\beta\gamma\delta} = \hat A_{\beta\gamma,\delta} - \hat A_{\beta\delta,\gamma} = \hat T_{\beta\gamma,\delta} - \hat T_{\beta\delta,\gamma} + \frac{1}{n-1} (\hat T^\eta_{\ \eta,\gamma} g_{\beta\delta} - \hat T^\eta_{\ \eta,\delta} g_{\beta\gamma}).
\]
Summarizing, we have proved the following
\begin{proposition}\label{Prop: cotton gravity: conseguenze base delle eq. di campo}
Given an $n$-dimensional Lorentzian manifold $(\hat{M}, \hat{g}),\ n\geq 3$, and a 2-covariant, symmetric tensor $\hat{T}$ on $\hat{M}$, consider the Cotton Gravity field equations
\begin{equation} \label{Har_eq bis}
	\hat C = -(n-2)\div_1\hat{\mathcal T} \, .
\end{equation}
with $\hat{\mathcal T}$ given by \eqref{Tensore T maiuscolo}; then we have the following:
\begin{itemize}
	\item[1.] if $(\hat{M},\hat{g},\hat{T})$ solves \eqref{Har_eq bis}, then $\div_1T=0$;
	\item[2.] component-wise, \eqref{Har_eq bis} is equivalent to
	\begin{equation} \label{Har_eq2 bis}
		\hat C_{\beta\gamma\delta} = \hat T_{\beta\gamma,\delta} - \hat T_{\beta\delta,\gamma} + \frac{1}{n-1} (\hat T^\eta_{\ \eta,\gamma} g_{\beta\delta} - \hat T^\eta_{\ \eta,\delta} g_{\beta\gamma}) \, ;
	\end{equation}
	\item[3.] if $(\hat{M},\hat{g},\hat{T})$ solves Einstein field equation \eqref{Ein_eq}, then it also solves \eqref{Har_eq bis}.
\end{itemize}
\end{proposition}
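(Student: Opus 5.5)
The three items are essentially the computation already carried out before the statement, so I will organize that computation into three steps.

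First, I expand the Kulkarni--Nomizu products in \eqref{Tensore T maiuscolo} to write the components $\hat{\mathcal T}_{\eta\beta\gamma\delta}$ explicitly. Then I take the $1$-divergence $g^{\eta\rho}\hat{\mathcal T}_{\eta\beta\gamma\delta,\rho}$. Here I use $\hat\nabla\hat g=0$, so only $\hat T$ and its trace $\hat T^\xi_{\ \xi}$ get differentiated. Contracting the Kronecker-type factors gives formula \eqref{div_That}: two divergence terms, two skew derivative terms $\hat T_{\beta\delta,\gamma}-\hat T_{\beta\gamma,\delta}$, and the trace-gradient terms. Multiplying by $-(n-2)$ rewrites \eqref{Har_eq bis} as
\[
\hat C_{\beta\gamma\delta} = (\div_1\hat T)_\delta g_{\beta\gamma} - (\div_1\hat T)_\gamma g_{\beta\delta} + \hat T_{\beta\gamma,\delta} - \hat T_{\beta\delta,\gamma} + \frac{1}{n-1}\pa{\hat T^\eta_{\ \eta,\gamma} g_{\beta\delta} - \hat T^\eta_{\ \eta,\delta} g_{\beta\gamma}}.
\]

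For item 1, I contract this identity with $g^{\beta\gamma}$. On the left, $g^{\beta\gamma}\hat C_{\beta\gamma\delta}=0$, because the Cotton tensor is totally trace-free. Checking this needs two facts: $\hat C$ is skew in its last two indices, and the trace on the first pair, $\hat A^\beta_{\ \beta,\delta}-\hat A^\beta_{\ \delta,\beta}$, vanishes by the contracted Bianchi identity $\hat R^\beta_{\ \delta,\beta}=\tfrac12\hat S_\delta$ together with $\tr_{\hat g}\hat A=\frac{n-2}{2(n-1)}\hat S$. On the right, the contraction yields
\[
n(\div_1\hat T)_\delta-(\div_1\hat T)_\delta+(\div_1\hat T)_\delta-\hat T^\eta_{\ \eta,\delta}+\frac{1}{n-1}(1-n)\hat T^\eta_{\ \eta,\delta},
\]
and the bookkeeping must be done carefully. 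The expected outcome is $(n-2)\div_1\hat T$ after the trace terms cancel. Since $n\ge3$, this forces $\div_1\hat T=0$.

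For item 2, substituting $\div_1\hat T=0$ back into the displayed identity gives \eqref{Har_eq2 bis}. For the converse, I assume \eqref{Har_eq2 bis} and contract it in $\beta,\gamma$. The trace-freeness of $\hat C$ then gives $\div_1\hat T=0$ directly, since the computation is the same. Hence the divergence terms in the first displayed identity vanish, and \eqref{Har_eq bis} is recovered. This establishes the equivalence.

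For item 3, I trace \eqref{Ein_eq} to get $\hat S=-\frac{2}{n-2}\tr_{\hat g}\hat T$. From this I express $\hat A=\hat T-\frac{1}{n-1}(\tr_{\hat g}\hat T)\hat g$. Computing $\hat A_{\beta\gamma,\delta}-\hat A_{\beta\delta,\gamma}$ then gives exactly the right-hand side of \eqref{Har_eq2 bis}, and item 2 concludes.

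The only real obstacle is the sign and trace bookkeeping in the contraction for item 1, where an index slip would spoil the cancellation that produces the factor $n-2$.
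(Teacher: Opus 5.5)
Your route is exactly the paper's. You expand $\hat{\mathcal T}$ to get \eqref{div_That}, rewrite the field equations, contract with $g^{\beta\gamma}$ using that $\hat C$ is totally trace-free, and trace Einstein's equations for item 3. Your explicit argument for the converse in item 2 is a small addition the paper leaves implicit.

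However, the contraction you display for item 1 is wrong. Contracting $\hat T_{\beta\gamma,\delta}-\hat T_{\beta\delta,\gamma}$ with $g^{\beta\gamma}$ gives $+\hat T^\eta_{\ \eta,\delta}-(\div_1\hat T)_\delta$, but you wrote $+(\div_1\hat T)_\delta-\hat T^\eta_{\ \eta,\delta}$. As written, your expression sums to $n(\div_1\hat T)_\delta-2\hat T^\eta_{\ \eta,\delta}$, so the trace terms do not cancel and item 1 would not follow. With the signs corrected, the contraction reads
\[
n(\div_1\hat T)_\delta-(\div_1\hat T)_\delta+\hat T^\eta_{\ \eta,\delta}-(\div_1\hat T)_\delta+\Bigl(\frac{1}{n-1}-\frac{n}{n-1}\Bigr)\hat T^\eta_{\ \eta,\delta}=(n-2)(\div_1\hat T)_\delta .
\]
This is precisely the identity $g^{\beta\gamma}\hat C_{\beta\gamma\delta}=(n-2)(\div_1\hat T)_\delta$ the paper uses.

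The same slip would carry over to your converse in item 2, since you say "the computation is the same". Done correctly, the $g^{\beta\gamma}$-trace of the right-hand side of \eqref{Har_eq2 bis} is exactly $-(\div_1\hat T)_\delta$. Trace-freeness of $\hat C$ then gives $\div_1\hat T=0$, and the equivalence, hence item 3, follows as you describe.
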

Our goal is to study the solutions of \eqref{Har_eq} for a stress-energy tensor that encodes both the informations of a perfect fluid and those of a non-linear field $\hat{\varphi}$ with a self-interacting potential $U$, that is, we choose
\[
	\hat T = \hat T^{\hat{\p}} + \hat T^F
\]
for two tensors $\hat T^{\hat{\p}}$, $\hat T^F$ that we are going to introduce. Let ${\hat{\p}} : (\hat M,\hat g) \to (N,h)$ be a smooth map, where $(N,h)$ is a Riemannian manifold. Let $\alpha\neq0$ be a real coupling constant and let $U\in C^\infty(N)$. Then
\begin{equation}\label{T phi def}
	\hat T^{\hat{\p}} := \alpha {\hat{\p}}^\ast h - \left(\frac{\alpha}{2}|\di{\hat{\p}}|^2 + U({\hat{\p}})\right)\hat g
\end{equation}
where ${\hat{\p}}^\ast h$ denotes the pullback of $h$ along ${\hat{\p}}$ and $|\di{\hat{\p}}|^2 := \tr_{\hat g}{\hat{\p}}^\ast h$ is twice the energy density of the map $\hat{\p}$. To introduce $\hat T^F$ we need the following definition.
\begin{definition}\label{Def: spaziotempo statico}
 Given a Lorentzian manifold $(\hat M,\hat g)$ we say that $\hat M$ is a \emph{static space-time} if it satisfies the following two conditions:
\begin{enumerate}
	\item as a differentiable manifold, $\hat M$ splits as the product $\hat M = M\times\RR$;
	\item there exists a Riemannian metric $g$ on $M$ and a function $f\in C^\infty(M)$ such that
	\[
		\hat g = -e^{-2\hat{f}} \di t\otimes\di t + g
	\]
	where $t : \hat M \to \RR$ is the canonical projection onto the second factor and $\hat{f}:=f\circ \hat{\pi}_M$. 
\end{enumerate}
\end{definition}
We will use the notation $\hat{M}=M\times_f \RR$ to denote the structure of static space-time on $(\hat{M}, \hat{g})$.  In this context, given two smooth functions $\mu,p\in C^\infty(M)$, that will be respectively called \emph{density} and \emph{pressure} of the fluid, we define
\begin{equation}\label{T F def}
\hat T^F := (\hat{\mu}+\hat{p}) e^{-2\hat{f}} \di t\otimes\di t + \hat{p} \hat g  ,
\end{equation}
where $\hat{\mu}:=\mu\circ \hat{\pi}_M$ and $\hat{p}:=p\circ \hat{\pi}_M$.
In the present paper we study solutions of the Cotton gravity equations $\hat C = -(n-2)\div_1\hat{\mathcal T}$ on a static space-time for a stress energy tensor of the type $\hat T = \hat T^{\hat{\p}} + \hat T^F$.
\subsection{Variational origin of the equations: the case of GR}\label{subset: variazioni: GR}

As it is well-known, the stress energy tensor $\hat T^F$ of a perfect fluid space-time is not derived from a Lagrangian so that it does not have a (clear) variational origin. A derivation of $\hat T^F$ which is variational, but only in a relaxed sense, is given in \cite[Chapter 3]{Hawking}, where one varies an appropriate functional depending on $\hat{\mu}$ with respect to the flow lines of the vector field $e^{-\hat{f}}\frac{\partial}{\partial t}$. By contrast, the situation is simpler for $\hat T^{\hat{\p}}$, since the equation
\[
	\hat\Ric - \frac{1}{2}\hat S\hat g = \hat T^{\hat{\p}}
\]
is the Euler-Lagrange equation of the functional
\[
	I[\hat{g}] = \int \hat S - \alpha|\di{\hat{\p}}|^2 - 2U({\hat{\p}}) \, 
\]
with respect to compactly supported variations of the metric $\hat{g}$. Here and in the following, integration is always implicitly assumed to be with respect to the natural volume form $\di \mu_{\hat{g}}$ induced by $\hat{g}$.
Similarly, in the regime of Cotton gravity, the equation
\begin{equation} \label{Har_phi}
	\hat C = -(n-2) \div_1 \hat{\mathcal T}^{\hat{\p}}
\end{equation}
has a variational origin. This is what we are going to show in the next subsection, without assuming that $(\hat{M},\hat{g})$ be a static space-time.
We first need to reformulate \eqref{Har_phi} in a more manageable form.
\begin{proposition}\label{prop: cotton gravity: C U phi}
	Consider an $n$-dimensional Lorentzian manifold $(\hat{M,\hat{g}})$, $n\geq 3$, a smooth map $\hat{\p}:(\hat{M}, \hat{g})\to (N,h)$, where $(N,h)$ is a Riemannian manifold, a constant $\alpha\in\RR$ and a smooth function $U\in C^\infty(N)$. Define a 3-covariant tensor $\hat{C}^{U,\hat{\p}}$ on $\hat{M}$ by the expression
	\begin{align*}
		(\hat C^{U,{\hat{\p}}})_{\beta\gamma\delta} & := \hat C_{\beta\gamma\delta} - \alpha\left({\hat{\p}}^a_{\beta\delta}{\hat{\p}}^a_\gamma-{\hat{\p}}^a_{\beta\gamma}{\hat{\p}}^a_\delta\right) + \frac{\alpha}{n-1} g^{\rho\eta}{\hat{\p}}^a_\rho \left( {\hat{\p}}^a_{\eta\delta} g_{\beta\gamma} - {\hat{\p}}^a_{\eta\gamma} g_{\beta\delta} \right) \\
		& \phantom{:=\;} - \frac{1}{n-1} U^a{\hat{\p}}^a_\delta g_{\beta\gamma} + \frac{1}{n-1} U^a{\hat{\p}}^a_\gamma g_{\beta\delta}.
	\end{align*}
	Then, \eqref{Har_phi} is equivalent to
	\begin{equation} \label{Cot_Uphi}
		\hat C^{U,{\hat{\p}}} = 0 \, .
	\end{equation}
	
\end{proposition}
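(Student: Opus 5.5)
The plan is to use item 2 of Proposition \ref{Prop: cotton gravity: conseguenze base delle eq. di campo}, which rewrites \eqref{Har_phi} as \eqref{Har_eq2 bis} with $\hat T=\hat T^{\hat\p}$. It then suffices to compute the right-hand side of \eqref{Har_eq2 bis} for $\hat T^{\hat\p}$ explicitly, move everything to the left, and check that the result is exactly $\hat C^{U,\hat\p}$. For the converse direction we must also recover the hidden condition $\div_1\hat T^{\hat\p}=0$. Item 2 says the two formulations are equivalent component-wise. However, \eqref{Har_eq2 bis} was derived after using $\div_1\hat T=0$, so this point needs care.

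The first step is to compute $\hat T^{\hat\p}_{\beta\gamma,\delta}$. Write $\hat\p^a_\beta$ for the components of $\di\hat\p$ and $\hat\p^a_{\beta\gamma}$ for those of its covariant derivative, which is symmetric in $\beta\gamma$. Then
\[
\hat T^{\hat\p}_{\beta\gamma,\delta}=\alpha\pa{\hat\p^a_{\beta\delta}\hat\p^a_\gamma+\hat\p^a_\beta\hat\p^a_{\gamma\delta}}-\pa{\alpha\, g^{\rho\eta}\hat\p^a_\rho\hat\p^a_{\eta\delta}+U^a\hat\p^a_\delta}g_{\beta\gamma}.
\]
Next, antisymmetrize in $\gamma\delta$. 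The terms $\alpha\hat\p^a_\beta\hat\p^a_{\gamma\delta}$ cancel by symmetry of the Hessian, which leaves
\[
\hat T_{\beta\gamma,\delta}-\hat T_{\beta\delta,\gamma}=\alpha\pa{\hat\p^a_{\beta\delta}\hat\p^a_\gamma-\hat\p^a_{\beta\gamma}\hat\p^a_\delta}-\pa{\alpha g^{\rho\eta}\hat\p^a_\rho\hat\p^a_{\eta\delta}+U^a\hat\p^a_\delta}g_{\beta\gamma}+\pa{\alpha g^{\rho\eta}\hat\p^a_\rho\hat\p^a_{\eta\gamma}+U^a\hat\p^a_\gamma}g_{\beta\delta}.
\]
The trace is $\tr_{\hat g}\hat T^{\hat\p}=\alpha(1-\tfrac n2)|\di\hat\p|^2-nU(\hat\p)$, so its derivative is $\hat T^\eta_{\ \eta,\gamma}=(2-n)\alpha g^{\rho\eta}\hat\p^a_\rho\hat\p^a_{\eta\gamma}-nU^a\hat\p^a_\gamma$. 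Substituting into \eqref{Har_eq2 bis}, the coefficient of $g_{\beta\delta}$ becomes $\alpha\pa{1-\tfrac{n-2}{n-1}}g^{\rho\eta}\hat\p^a_\rho\hat\p^a_{\eta\gamma}+\pa{1-\tfrac{n}{n-1}}U^a\hat\p^a_\gamma$, that is, $\tfrac{\alpha}{n-1}(\cdots)-\tfrac{1}{n-1}U^a\hat\p^a_\gamma$. The coefficient of $g_{\beta\gamma}$ works out symmetrically. Moving everything to the left-hand side reproduces exactly the expression defining $\hat C^{U,\hat\p}$, including its signs. This gives \eqref{Har_phi}$\Rightarrow$\eqref{Cot_Uphi}.

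The main obstacle is the converse, because \eqref{Har_eq2 bis} was obtained only after using $\div_1\hat T=0$. To handle it, trace $\hat C^{U,\hat\p}=0$ in $\beta\gamma$. Since $\hat C$ is totally trace-free, the trace of the remaining terms must vanish. A direct computation shows that this trace is a nonzero multiple of $(\div_1\hat T^{\hat\p})_\delta$, which equals $\alpha\,\hat\p^a_{\tau}\cdot\big(\tau(\hat\p)^a-\tfrac1\alpha U^a\big)$ up to sign. Hence \eqref{Cot_Uphi} itself forces $\div_1\hat T^{\hat\p}=0$. Combining this with \eqref{div_That}, the identity $\hat C^{U,\hat\p}=0$ can be rewritten as $\hat C+(n-2)\div_1\hat{\mathcal T}^{\hat\p}=0$, and \eqref{Har_phi} follows. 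I expect only routine sign bookkeeping in this trace computation, but it is the step that genuinely needs checking.
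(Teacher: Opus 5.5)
Your proposal is correct and follows the paper's proof: invoke item 2 of Proposition \ref{Prop: cotton gravity: conseguenze base delle eq. di campo} to reduce \eqref{Har_phi} to \eqref{Cot_Tphi}, then substitute the explicit covariant derivative and trace of $\hat T^{\hat\p}$, and your coefficient bookkeeping reproduces $\hat C^{U,\hat\p}$ exactly. Your extra trace argument for the converse, namely that \eqref{Cot_Uphi} forces $\div_1\hat T^{\hat\p}=0$, is sound and is precisely the identity \eqref{phi_eq} that the paper records in the remark after the proposition (more generally, tracing \eqref{Har_eq2 bis} in $\beta\gamma$ always yields $\div_1\hat T=0$, so the converse in item 2 holds for any $\hat T$).
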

\begin{proof}
 By equation \eqref{Har_eq2 bis}, \eqref{Har_phi} is equivalent to
\begin{equation} \label{Cot_Tphi}
	\hat C_{\beta\gamma\delta} = - \left( \hat T^{\hat{\p}}_{\beta\delta,\gamma} - \hat T^{\hat{\p}}_{\beta\gamma,\delta} - \frac{1}{n-1} (\tr_{\hat g}\hat T^{\hat{\p}})_{\gamma} g_{\beta\delta} + \frac{1}{n-1} (\tr_{\hat g}\hat T^{\hat{\p}})_{\delta} g_{\beta\gamma} \right) .
\end{equation}
From the definition of $\hat T^{\hat{\p}}$ we get
\begin{equation}\label{Deduzione: derivata di T phi}
	\hat T^{\hat{\p}}_{\beta\delta,\gamma} = \alpha\left[ {\hat{\p}}^a_{\beta\gamma}{\hat{\p}}^a_\delta + {\hat{\p}}^a_\beta{\hat{\p}}^a_{\delta\gamma} - \left(g^{\rho\eta} {\hat{\p}}^a_{\rho\gamma}{\hat{\p}}^a_\eta + \frac{1}{\alpha}U^a{\hat{\p}}^a_\gamma\right) g_{\beta\delta} \right]
\end{equation}
and
\begin{equation}\label{Deduzione: traccia di T phi}
	\tr_{\hat g}\hat T^{\hat{\p}} = -\frac{n-2}{2} \alpha|\di{\hat{\p}}|^2 - nU({\hat{\p}})
\end{equation}
so that
\begin{equation}\label{Deduzione: der cov di traccia di T phi}
	(\tr_{\hat g}\hat T^{\hat{\p}})_{\gamma} = -\alpha(n-2)g^{\rho\eta}{\hat{\p}}^a_{\rho\gamma}{\hat{\p}}^a_\eta - nU^a{\hat{\p}}^a_\gamma \, .
\end{equation}
Using \eqref{Deduzione: derivata di T phi} and \eqref{Deduzione: der cov di traccia di T phi} in \eqref{Cot_Tphi} we immediately get \eqref{Cot_Uphi}.
\end{proof}
\begin{remark}
Note that, defining
\begin{equation} \label{A_Uphi}
	\hat A^{U,{\hat{\p}}} := \hat A - \alpha{\hat{\p}}^\ast h + \frac{\alpha}{2(n-1)} |\di{\hat{\p}}|^2 \hat g - \frac{U({\hat{\p}})}{n-1}\hat g
\end{equation}
we have that $\hat C^{U,{\hat{\p}}}$ measures the obstruction to $\hat A^{U,{\hat{\p}}}$ being a Codazzi tensor. Moreover, the conservation equation
\[
	0 = \div_1\hat T^{\hat{\p}} \, ,
\]
that is,
\[
	0 = \alpha g^{\gamma\eta}{\hat{\p}}^a_{\gamma\eta}{\hat{\p}}^a_\beta - U^a {\hat{\p}}^a_\beta
\]
is always implied by \eqref{Cot_Tphi},  as we already saw, and therefore also by \eqref{Cot_Uphi}, as it can be verified by direct computation.  Indeed, one has
\begin{equation} \label{phi_eq}
	g^{\beta\gamma} (\hat C^{U,{\hat{\p}}})_{\beta\gamma\delta} = \alpha g^{\beta\gamma} {\hat{\p}}^a_{\beta\gamma} {\hat{\p}}^a_\delta - U^a {\hat{\p}}^a_\delta \, .
\end{equation}
\end{remark}
\subsection{Variational origin of the equations: the case of Cotton Gravity}\label{subset: origine variazionale}
We are now ready to give a derivation of equation  \eqref{Har_phi}, that is,
\begin{equation*} 
	\hat C = -(n-2) \div_1 \hat{\mathcal T}^{\hat{\p}}.
\end{equation*}
 Following \cite{Harada}, we are interested in variations where the connection $\hat{\nabla}$ is varying while the metric $\hat g$ is fixed. To be more precise, recall that the difference between two connections on $T\hat M$ is a tensor, in particular, a section of $T^{(1,2)}(\hat M)$, the space of $(1,2)$-tensors
\[
	B = B^\eta_{\ \beta\gamma} e_\eta \otimes \theta^\beta \otimes \theta^\gamma.
\]
 Therefore, the space of connections on $T\hat M$ is an affine space over $T^{(1,2)}(\hat M)$. Having fixed a section $B$ of $T^{(1,2)}(\hat M)$, we define $\hat{\nabla}^t = \hat{\nabla} + tB$. In any arbitrary coordinate system $(U,\{x^\eta\}_{\eta=0}^{n-1})$ we have that the variation of the Christoffel symbols of $\hat{\nabla}$ in the direction of $B$ is
\[
	\delta\Gamma^\eta_{\ \beta\gamma} \frac{\partial}{\partial x^\eta} = \lim_{t\to0} \frac{1}{t} (\hat{\nabla}^t_{\partial x^\beta}\frac{\partial}{\partial x^\eta} - \hat{\nabla}_{\partial x^\beta}\frac{\partial}{\partial x^\eta}) = B^\eta_{\ \beta\gamma} \frac{\partial}{\partial x^\eta} \, .
\]
Given a $2$-covariant tensor $T$, define
\begin{equation} \label{S2T}
	S_2(T) := \frac{1}{2} \left[(\tr_{\hat g}T)^2 - T^{\eta\beta} T_{\eta\beta}\right] \, .
\end{equation}
To derive \eqref{Har_phi}, we consider the action functional
\[
	\mathcal S_{{\hat{\p}}}(\hat g) = \int S_2(\hat A^{\hat{\p}}) - \frac{\alpha}{2}|\tau({\hat{\p}})|^2
\]
where $\tau({\hat{\p}})$ is the tension field of the map ${\hat{\p}}$, given in components by
\[
	\tau({\hat{\p}})^a = g^{\eta\beta} \frac{\partial^2{\hat{\p}}^a}{\partial x^\eta\partial x^\beta} - g^{\eta\beta}\Gamma^\gamma_{\ \eta\beta} \frac{\partial{\hat{\p}}^a}{\partial x^\gamma} + {}^N \Gamma^a_{\ bc} g^{\eta\beta} \frac{\partial{\hat{\p}}^b}{\partial x^\eta} \frac{\partial{\hat{\p}}^c}{\partial x^\beta}
\]
where ${}^N\Gamma^a_{\ bc}$ denote the Christoffel symbols of the Levi-Civita connection of $(N,h)$ and $\hat A^{\hat{\p}}$ is the ${\hat{\p}}$-Schouten tensor, defined by
\[
	\hat A^{\hat{\p}} = \hat A - \alpha\,{\hat{\p}}^\ast h + \frac{\alpha}{2(n-1)} |\di{\hat{\p}}|^2 \hat g \, .
\]
More informations on the ${\hat{\p}}$-curvature tensors will be given in Section \ref{Sect 2: phi-curv}. Here, we recall that, having set
\[
	\hat\Ric^{\hat{\p}} := \hat\Ric - \alpha\,{\hat{\p}}^\ast h \qquad \text{and} \qquad \hat S^{\hat{\p}} := \hat S - \alpha\,|\di{\hat{\p}}|^2
\]
we have the validity of the ${\hat{\p}}$-Schur's identity, that is,
\[
	g^{\beta\gamma}\hat R^{\hat{\p}}_{\eta\beta,\gamma} = \frac{1}{2} \hat S^{\hat{\p}}_\eta - \alpha \tau({\hat{\p}})^a {\hat{\p}}^a_\eta \, .
\]
The functional $\mathcal S_{\hat{\p}}$ has been introduced in \cite{Anselli_2021} where it is shown that, in Riemannian signature and for $n=4$, the critical points of $\mathcal S_{\hat{\p}}$ with respect to variations of the metric have vanishing ${\hat{\p}}$-Bach tensor $\hat{B}^{\hat{\p}}$ (see Section \ref{Sect 2: phi-curv} for the definition of $\hat{B}^{\hat{\p}}$). Here we modify $\mathcal S_{\hat{\p}}$ in order to take into account the contribution of the potential $U$. From \eqref{A_Uphi} we have
	$\hat A^{U,{\hat{\p}}} = \hat A^{\hat{\p}} - \frac{U({\hat{\p}})}{n-1} \hat g$ .
We will study the functional
\begin{equation} \label{S_Uphi}
	\mathcal S_{U,{\hat{\p}}}(\hat g) = \int S_2(\hat A^{U,{\hat{\p}}}) - \frac{\alpha}{2} \left|\tau({\hat{\p}}) - \frac{\nabla^h U}{\alpha}\right|^2_h \, .
\end{equation}
When $n=4$, $\p$ is constant and $U\equiv 0$, Harada proved that the Cotton flat metrics on $\hat{M}$ are the critical points of $S_{U,\hat{\p}}$ with respect to compactly supported variations of $\hat{\nabla}$ that leave $\hat{g}$ fixed, see \cite{Harada}. More generally, we have the following
\begin{proposition}\label{variazioni: proposizione}
	Let $(\hat M,\hat g)$ be an $n$-dimensional Lorentzian manifold, $n\geq 3$. Then, equation \eqref{Har_phi} is satisfied by $(\hat M,\hat g)$ if and only if $(\hat M,\hat g)$ is a critical point of the action $\mathcal S_{U,{\hat{\p}}}$ defined in \eqref{S_Uphi} with respect to compactly supported variations of $\hat{\nabla}$ that leave $\hat g$ fixed.
\end{proposition}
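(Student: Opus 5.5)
We plan to compute the first variation of $\mathcal S_{U,\hat\p}$ when the connection varies as $\hat\nabla^t=\hat\nabla+tB$ with $B$ compactly supported and $\hat g$ fixed. In this Palatini-type setting the metric, the volume form, $\hat\p$ and $U(\hat\p)$ do not change. Only the curvature built from $\hat\nabla^t$ changes, and so does the tension, through its Christoffel term. Two standard formulas for the variation of curvature do most of the work:
\[
\delta \hat R^\eta_{\ \beta\gamma\rho}=B^\eta_{\ \rho\beta,\gamma}-B^\eta_{\ \gamma\beta,\rho}, \qquad \delta\hat R_{\beta\rho}=B^\eta_{\ \rho\beta,\eta}-B^\eta_{\ \eta\beta,\rho}.
\]
We use these up to the index convention fixed in Section \ref{Lorentzian Geometry}. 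For a general connection the resulting Ricci tensor need not be symmetric, so we will always symmetrize before inserting it into $S_2$. For the tension we get $\delta\tau(\hat\p)^a=-g^{\eta\beta}B^\gamma_{\ \eta\beta}\hat\p^a_\gamma$.

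The first step is to vary the Schouten part. We write $S_2(T)=\frac12[(\tr T)^2-|T|^2]$, so that $\delta S_2(T)=(\tr T)\,\tr\delta T-T^{\eta\beta}\delta T_{\eta\beta}$. Here $T=\hat A^{U,\hat\p}$ and $\delta T=\delta\hat A=\delta\hat\Ric-\frac{1}{2(n-1)}(\tr\delta\hat\Ric)\hat g$. A direct algebraic simplification gives $\delta S_2(\hat A^{U,\hat\p})=-Q^{\beta\rho}\delta\hat R_{\beta\rho}$ with $Q:=\hat A^{U,\hat\p}-\frac{1}{n-1}\cdot$(a trace multiple of $\hat g$). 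The exact coefficient has to be chosen so that the trace parts combine; in particular $\tr\delta\hat A=\frac{n-2}{2(n-1)}\tr\delta\hat\Ric$.

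Next we insert $\delta\hat R_{\beta\rho}$ and integrate by parts. Since $B$ has compact support, no boundary terms appear. This gives
\[
\delta\!\int S_2=\int \big(Q^{\beta\rho}_{\ \ ,\eta}-\delta^\rho_\eta\,\text{(trace/divergence term)}\big)B^\eta_{\ \rho\beta}.
\]
Because $B$ is an arbitrary $(1,2)$-tensor, we lower $\eta$ and read off the Euler–Lagrange expression $E_{\eta\rho\beta}$. After antisymmetrization induced by the symmetries of the curvature variation, its first part should be exactly $\hat A^{U,\hat\p}_{\beta\rho,\eta}-\hat A^{U,\hat\p}_{\beta\eta,\rho}$. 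The tension term contributes
\[
-\alpha\Big(\tau-\tfrac{\nabla^hU}{\alpha}\Big)^a\delta\tau^a=\big(\alpha\tau^a-U^a\big)\hat\p^a_\gamma\,g^{\eta\beta}B^\gamma_{\ \eta\beta}.
\]
This is a pure-trace contribution in the last two slots of $B$.

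The main obstacle is the algebra that matches the two contributions with $\hat C^{U,\hat\p}$. The derivative of $\hat A^{U,\hat\p}$ produces $\hat C$ together with terms in $\hat\p^a_{\beta\delta}\hat\p^a_\gamma$ and $U^a\hat\p^a_\delta g_{\beta\gamma}$. The tension term produces $(\alpha\tau^a-U^a)\hat\p^a_\delta g_{\beta\gamma}$. Moreover, the $B$-component along $\hat g$ in the last two slots sees the trace of the divergence part, and we must check that this coincides with \eqref{phi_eq}. Once this is done, we will show that the full first variation equals $c\int(\hat C^{U,\hat\p})_{\beta\gamma\delta}\,\Pi(B)^{\beta\gamma\delta}$ for a nonzero constant $c$ and a linear map $\Pi$ from $(1,2)$-tensors onto tensors with the symmetries of $\hat C^{U,\hat\p}$.

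For the converse we need $\Pi$ to be surjective onto such tensors. We can take $B^\eta_{\ \rho\beta}$ to be the raised $\hat C^{U,\hat\p}$ times a compactly supported cutoff, or choose $B$ antisymmetric in the appropriate pair; this settles the ``only if'' direction. The ``if'' direction is immediate from the integrated formula. By Proposition \ref{prop: cotton gravity: C U phi}, the condition $\hat C^{U,\hat\p}=0$ is equivalent to \eqref{Har_phi}, which finishes the proof.
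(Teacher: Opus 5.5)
The proposal has a genuine gap at the step where you expect the Cotton tensor to appear. You take $B$ to be an arbitrary $(1,2)$-tensor and use the two-term formula $\delta\hat R_{\beta\rho}=B^\eta{}_{\rho\beta,\eta}-B^\eta{}_{\eta\beta,\rho}$, and you then expect the Euler--Lagrange expression to ``antisymmetrize'' into $\hat A^{U,\hat\p}_{\beta\rho,\eta}-\hat A^{U,\hat\p}_{\beta\eta,\rho}$.

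For a general $B$ the lowered variation $\delta\hat R_{\eta\beta\gamma\rho}$ is not antisymmetric in $\eta\beta$, so there is no curvature symmetry that could produce this. Symmetrizing Ricci is vacuous, because it is contracted against the symmetric tensor $Q=\hat A^{U,\hat\p}-\frac12(\tr_{\hat g}\hat A^{U,\hat\p})\hat g$ (this is the correct $Q$). If you finish your own computation, the first variation is
\[
\int\Big[Q^{\beta\rho}{}_{,\eta}-\delta^\rho_\eta(\div_1 Q)^\beta+\big(\alpha\tau(\hat\p)^a-U^a\big)\hat\p^a_\eta\,g^{\rho\beta}\Big]B^\eta{}_{\rho\beta}.
\]
Requiring this to vanish for all $B$ gives the following:
\begin{enumerate}
\item The part antisymmetric in $\rho\beta$ gives $(n-1)\div_1 Q=0$.
\item Then $Q_{\beta\rho,\eta}=-(\alpha\tau(\hat\p)^a-U^a)\hat\p^a_\eta g_{\beta\rho}$.
\item Tracing in $\eta\beta$ and using $\div_1Q=0$ kills the right-hand side, so $\hat\nabla Q=0$, i.e.\ $\hat\nabla\hat A^{U,\hat\p}=0$.
\end{enumerate}
This is strictly stronger than $\hat C^{U,\hat\p}=0$. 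For constant $\hat\p$ it reads $\hat\nabla\hat\Ric=0$, and a conformally flat Robertson--Walker metric with generic scale factor ($n\ge 4$) solves \eqref{Har_phi} without being critical. So in your setting the ``if'' direction is false, and no identity of the form $\delta\mathcal S_{U,\hat\p}=c\int\hat C^{U,\hat\p}\cdot\Pi(B)$ can hold.

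In the paper, the Cotton structure comes from the step ``lowering an index and skew-symmetrizing''. There $\delta\hat R_{\eta\beta\gamma\rho}$ is taken to be $\frac12(B_{\eta\beta\rho,\gamma}-B_{\beta\eta\rho,\gamma}-B_{\eta\beta\gamma,\rho}+B_{\beta\eta\gamma,\rho})$. Thus only the part of $B_{\eta\beta\rho}$ antisymmetric in the output/argument pair $(\eta,\beta)$ acts on the curvature; these are the metric-compatible variations. This yields the four-term formula \eqref{var_Ric}, and it is this antisymmetry, not any property of a general $B$, that assembles $\hat A^{U,\hat\p}_{\beta\eta,\rho}-\hat A^{U,\hat\p}_{\beta\rho,\eta}$ after integration by parts.

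The second missing ingredient is the $\hat\p$-Schur identity, hidden in your unspecified ``trace/divergence term''. It gives
\[
(\div_1\hat A^{U,\hat\p})_\eta=(\tr_{\hat g}\hat A^{U,\hat\p})_\eta-\alpha\big(\tau(\hat\p)^a-\tfrac1\alpha U^a\big)\hat\p^a_\eta .
\]
This cancels the $\delta\hat S$ contribution and produces the $\tau-\nabla^hU/\alpha$ terms that meet your tension variation. The paper arrives at
\[
\tfrac12\int\big[\hat C^{U,\hat\p}_{\beta\eta\rho}+\alpha(\tau(\hat\p)^a-\tfrac1\alpha U^a)(\hat\p^a_\eta g_{\beta\rho}+\hat\p^a_\rho g_{\eta\beta})\big]B^{\rho\eta\beta},
\]
and removes the extra terms by tracing with $g^{\rho\beta}$ and using \eqref{phi_eq}.

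If you restrict from the outset to $B^{\rho\eta\beta}$ antisymmetric in $\rho\eta$ (paper's convention), those extra terms cancel identically. Your plan then goes through with $\Pi$ the antisymmetrization and $c=\frac12$. For the converse, localize componentwise using the fundamental lemma of the calculus of variations, rather than testing with $\hat C^{U,\hat\p}$ itself: in Lorentzian signature $\hat C_{\beta\eta\rho}\hat C^{\beta\eta\rho}$ can vanish when $\hat C\neq0$.
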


\begin{proof}
	By Proposition \ref{prop: cotton gravity: C U phi}, we only need to prove that the critical points of the given action satisfy \eqref{Cot_Uphi}.
	As it is well-known, the Riemann curvature tensor can be expressed using the Christoffel symbols by
	\[
		\hat R^\eta_{\ \beta\gamma\rho} = \partial_\gamma \Gamma^\eta_{\ \beta\rho} - \partial_\rho \Gamma^\eta_{\ \beta\gamma} + \Gamma^\xi_{\ \beta\rho} \Gamma^\eta_{\ \gamma\xi} - \Gamma^\xi_{\ \beta\gamma} \Gamma^\eta_{\ \rho\xi}
	\]
	so that its variation is
	\[
		\delta\hat R^\eta_{\ \beta\gamma\rho} = \partial_\gamma B^\eta_{\ \beta\rho} - \partial_\rho B^\eta_{\ \beta\gamma} + B^\xi_{\ \beta\rho} \Gamma^\eta_{\ \gamma\xi} + \Gamma^\xi_{\ \beta\rho} B^\eta_{\ \gamma\xi} - B^\xi_{\ \beta\gamma} \Gamma^\eta_{\ \rho\xi} - \Gamma^\xi_{\ \beta\gamma} B^\eta_{\ \rho\xi}
	\]
	where $B^\eta_{\ \beta\gamma} = \delta\Gamma^\eta_{\ \beta\gamma}$ is the variation of the Christoffel symbols. Recalling the expression for the covariant derivative of $B$
	\[
		B^\eta_{\ \beta\rho,\gamma} = \partial_\gamma B^\eta_{\ \beta\rho} + B^\xi_{\ \beta\rho} \Gamma^\eta_{\ \xi\gamma} - B^\eta_{\ \xi\rho} \Gamma^\xi_{\ \beta\gamma} - B^\eta_{\ \beta\xi} \Gamma^\xi_{\ \rho\gamma}
	\]
	the above equation yields
	\[
		\delta\hat R^\eta_{\ \beta\gamma\rho} = B^\eta_{\ \beta\rho,\gamma} - B^\eta_{\ \beta\gamma,\rho} \, .
	\]
	Lowering an index and skew-symmetrizing we get
	\[
		\delta\hat R_{\eta\beta\gamma\rho} = \frac{1}{2} ( B_{\eta\beta\rho,\gamma} - B_{\beta\eta\rho,\gamma} - B_{\eta\beta\gamma,\rho} + B_{\beta\eta\gamma,\rho} ) .
	\]
	Therefore, the variation of $\hat\Ric$ is
	\begin{equation} \label{var_Ric}
		\delta\hat R_{\eta\beta} = \frac{1}{2} (B^{\;\,\rho}_{\eta\;\rho,\beta} - B^{\rho}_{\;\,\eta\rho,\beta} - B^{\;\,\rho}_{\eta\;\beta,\rho} + B^\rho_{\;\,\eta\beta,\rho})
	\end{equation}
	and that of $\hat S$ is
	\begin{equation} \label{var_S}
		\delta\hat S = \frac{1}{2} (B^{\eta\beta}_{\;\;\;\;\beta,\eta} - B^{\eta\beta}_{\;\;\;\;\eta,\beta} - B^{\eta\beta}_{\;\;\;\;\eta,\beta} + B^{\eta\beta}_{\;\;\;\;\beta,\eta}) = B^{\eta\beta}_{\;\;\;\;\beta,\eta} - B^{\eta\beta}_{\;\;\;\;\eta,\beta} \, .
	\end{equation}
	Next, note that, since ${\hat{\p}}^\ast h$, $|\di{\hat{\p}}|^2$ and $U({\hat{\p}})$ are independent from $\hat{\nabla}$, their variation is zero, so that $\delta\hat A^{U,{\hat{\p}}} = \delta\hat A$.
	By the definition of the tension field $\tau({\hat{\p}})$ one immediately gets
	\begin{equation}
		\delta\tau({\hat{\p}})^a = - g^{\eta\beta} B^\gamma_{\ \eta\beta} \frac{\partial{\hat{\p}}^a}{\partial x^\gamma} = - B^{\eta\beta}_{\;\;\;\;\beta} \frac{\partial{\hat{\p}}^a}{\partial x^\eta}
	\end{equation}
	and therefore
	\begin{equation}\label{Variazione: variazione di tau p - nabla U quadro}
		\begin{split}
			\delta\left|\tau({\hat{\p}})-\frac{\nabla^h U}{\alpha}\right|^2_h & = 2 \left(\delta\tau({\hat{\p}})^a - \frac{1}{\alpha}\delta U^a\right)\left(\tau({\hat{\p}})^a - \frac{1}{\alpha}U^a\right) \\
			& = - 2 B^{\eta\beta}_{\;\;\;\;\beta} \frac{\partial{\hat{\p}}^a}{\partial x^\eta} \left(\tau({\hat{\p}})^a - \frac{1}{\alpha}U^a\right) .
		\end{split}
	\end{equation}
	By the definition of $\hat A^{U,{\hat{\p}}}$ we compute
	\begin{align} \label{tr_AUp}
		\tr_{\hat g} \hat A^{U,{\hat{\p}}} = \tr_{\hat g} \hat A^{\hat{\p}} - \frac{n}{n-1} U({\hat{\p}}) = \frac{n-2}{2(n-1)} \hat S^{\hat{\p}} - \frac{n}{n-1} U({\hat{\p}}) \, .
	\end{align}
	Using \eqref{tr_AUp} we get
	\begin{align*}
		\delta S_2(\hat A^{U,{\hat{\p}}}) & = (\tr_{\hat g} \hat A^{U,{\hat{\p}}}) \delta(\tr_{\hat g} \hat A^{U,{\hat{\p}}}) - (\hat A^{U,{\hat{\p}}})^{\beta\gamma} (\delta\hat A^{U,{\hat{\p}}})_{\beta\gamma} \\
		& = (\tr_{\hat g} \hat A^{U,{\hat{\p}}}) \delta(\tr_{\hat g} \hat A) - (\hat A^{U,{\hat{\p}}})^{\beta\gamma} (\delta\hat A)_{\beta\gamma} \\
		& = \frac{n-2}{2(n-1)} (\tr_{\hat g} \hat A^{U,{\hat{\p}}}) \delta\hat S - (\hat A^{U,{\hat{\p}}})^{\beta\gamma} (\delta\hat\Ric)_{\beta\gamma} + \frac{1}{2(n-1)}(\hat A^{U,{\hat{\p}}})^{\beta\gamma} g_{\beta\gamma} \delta\hat S \\
		& = \frac{1}{2}(\tr_{\hat g} \hat A^{U,{\hat{\p}}}) \delta\hat S - (\hat A^{U,{\hat{\p}}})^{\beta\gamma} \delta\hat R_{\beta\gamma} \, .
	\end{align*}
	Using \eqref{var_Ric} and \eqref{var_S} we find
	\[
		\delta S_2(\hat A^{U,{\hat{\p}}}) = \frac{1}{2}(\tr_{\hat g} \hat A^{U,{\hat{\p}}})(B^{\eta\beta}_{\;\;\;\;\beta,\eta} - B^{\eta\beta}_{\;\;\;\;\eta,\beta}) - \frac{1}{2} (\hat A^{U,{\hat{\p}}})^{\eta\beta} (B^{\;\,\rho}_{\eta\;\rho,\beta} - B^{\rho}_{\;\,\eta\rho,\beta} - B^{\;\,\rho}_{\eta\;\beta,\rho} + B^\rho_{\;\,\eta\beta,\rho}) \, .
	\]
	Integrating and using the divergence theorem we get
	\begin{align*}
		\int \delta S_2(\hat A^{U,{\hat{\p}}}) & = \frac{1}{2} \int \left[(\tr_{\hat g} \hat A^{U,{\hat{\p}}})_{\beta} B^{\eta\beta}_{\;\;\;\;\eta} - (\tr_{\hat g} \hat A^{U,{\hat{\p}}})_{\eta} B^{\eta\beta}_{\;\;\;\;\beta}\right] \\
		& \phantom{=\;} + \frac{1}{2} \int (\hat A^{U,{\hat{\p}}})^{\;\;\beta}_{\eta\;\,,\beta} (B^{\eta\rho}_{\;\;\;\;\rho} - B^{\rho\eta}_{\;\;\;\;\rho}) - \frac{1}{2} \int (\hat A^{U,{\hat{\p}}})^{\eta\beta}_{\;\;\;\;,\rho} (B^{\;\,\rho}_{\eta\;\beta} - B^\rho_{\;\,\eta\beta}) \, .
	\end{align*}
	Using the ${\hat{\p}}$-Schur's identity and \eqref{tr_AUp} we have
	\begin{align*}
		(\hat A^{U,{\hat{\p}}})^{\;\;\beta}_{\eta\;\,,\beta} & = (\hat R^{\hat{\p}})^{\;\;\beta}_{\eta\;\,,\beta} - \frac{1}{2(n-1)} \hat S^{\hat{\p}}_\eta - \frac{1}{n-1} U^a {\hat{\p}}^a_\eta \\
		& = \frac{n-2}{2(n-1)} \hat S^{\hat{\p}}_\eta - \alpha \tau({\hat{\p}})^a {\hat{\p}}^a_\eta - \frac{1}{n-1} U^a {\hat{\p}}^a_\eta \\
		& = (\tr_{\hat g} \hat A^{U,{\hat{\p}}})_{\eta} - \alpha \left(\tau({\hat{\p}})^a - \frac{1}{\alpha}U^a\right){\hat{\p}}^a_\eta
	\end{align*}
	and substituting the latter into the above integral identity we obtain
	\begin{align*}
		\int \delta S_2(\hat A^{U,{\hat{\p}}}) & = - \frac{1}{2} \int 
		(\hat{A}^{U,{\hat{\p}}})^{\eta\beta}_{\ \ ,\rho} (B_{\eta \ \beta}^{\ \rho}-B^{ \rho}_{\ \eta\beta}) - \frac{\alpha}{2} \int \pa{\tau({\hat{\p}})^a-\frac{1}{\alpha}U^a}{\hat{\p}}^a_\eta\pa{B^{\eta\rho}_{\ \ \rho}-B^{\rho\eta}_{\ \ \rho}}.
	\end{align*}
	Renaming indexes we get
	\begin{align*}
		\int \delta S_2 (\hat{A}^{U,{\hat{\p}}}) & = \frac{1}{2}\int ((\hat{A}^{U,{\hat{\p}}})_{\beta\eta,\rho}-(\hat{A}^{U,{\hat{\p}}})_{\beta\rho,\eta})B^{\rho\eta\beta} \\
		& \phantom{=\;} + \frac{\alpha}{2} \int \pa{\tau({\hat{\p}})^a-\frac{1}{\alpha}U^a}\pa{{\hat{\p}}^a_\eta g_{\beta\rho}-{\hat{\p}}^a_\rho g_{\beta\eta}}B^{\rho\eta\beta}.
	\end{align*}
	Recalling that, by definition, $\hat{C}^{U,{\hat{\p}}}_{\beta\eta\rho}=\hat{A}^{U,{\hat{\p}}}_{\beta\eta,\rho}-\hat{A}^{U,{\hat{\p}}}_{\beta\rho,\eta}$
	we get
	\begin{align}\label{Variazione: variazione di S 2 A U phi}
		\int \delta S_2(\hat{A}^{U,{\hat{\p}}}) & = \frac{1}{2}\int 
		\sq{\hat{C}^{U,{\hat{\p}}}_{\beta\eta\rho} + \alpha \pa{\tau({\hat{\p}})^a-\frac{1}{\alpha}U^a}\pa{{\hat{\p}}^a_\eta g_{\beta\rho}-{\hat{\p}}^a_\rho g_{\eta \beta}}}B^{\rho\eta\beta}.
	\end{align}
	Therefore, computing the variation of \eqref{S_Uphi} and using \eqref{Variazione: variazione di S 2 A U phi} and \eqref{Variazione: variazione di tau p - nabla U quadro} we find
	\begin{align*}
		\delta \mathcal{S}_{U,{\hat{\p}}} & = \int \delta S_2(\hat{A}^{U,{\hat{\p}}})-\frac{\alpha}{2}\delta \pa{\abs{\tau({\hat{\p}})-\frac{1}{\alpha}\nabla U}^2_h}\\
		& = \frac{1}{2} \int \sq{\hat{C}^{U,{\hat{\p}}}_{\beta\eta\rho}+\alpha \pa{\tau({\hat{\p}})^a-\frac{1}{\alpha}U^a}\pa{{\hat{\p}}^a_\eta g_{\beta\rho}-{\hat{\p}}^a_\rho g_{\eta \beta}}}B^{\rho\eta\beta} \\
		& \phantom{=\;} + \frac{\alpha}{2} \int \pa{\tau({\hat{\p}})^a-\frac{1}{\alpha}U^a}{\hat{\p}}^a_\rho g_{\eta\beta}B^{\rho\eta\beta}\\
		& = \frac{1}{2}\int \sq{\hat{C}^{U,{\hat{\p}}}_{\beta\eta\rho}+\alpha \pa{\tau({\hat{\p}})^a-\frac{1}{\alpha}U^a}\pa{{\hat{\p}}^a_\eta g_{\beta\rho}+{\hat{\p}}^a_\rho g_{\eta \beta}}}B^{\rho\eta\beta}.
	\end{align*}
	Therefore, the compactly supported variations of $\mathcal{S}_{U,{\hat{\p}}}$ with respect to $\hat{\nabla}$ satisfy
	\begin{equation}\label{Variazione: quasi equazione critica}
		0 = \hat{C}^{U,{\hat{\p}}}_{\rho\beta\eta}+\alpha \pa{\tau({\hat{\p}})^a-\frac{1}{\alpha}U^a}\pa{{\hat{\p}}^a_\eta g_{\beta\rho}+{\hat{\p}}^a_{\beta}g_{\eta\rho}}.
	\end{equation}
	Contracting \eqref{Variazione: quasi equazione critica} with respect to $g^{\rho\beta}$ and using \eqref{phi_eq} we deduce
	\begin{align*}
		0=\alpha \pa{\tau({\hat{\p}})^a-\frac{1}{\alpha}U^a}{\hat{\p}}^a_\eta+\alpha(n+1)\pa{\tau({\hat{\p}})^a-\frac{1}{\alpha}U^a}{\hat{\p}}^a_\eta
	\end{align*}
	which implies
	\begin{align*}
		\tau({\hat{\p}})^a{\hat{\p}}^a_\eta-\frac{1}{\alpha}U^a{\hat{\p}}^a_\eta=0
	\end{align*}
	so that \eqref{Variazione: quasi equazione critica} implies \eqref{Cot_Uphi} and we are done.
\end{proof}

\section{$\varphi$-curvatures}\label{Sect 2: phi-curv}

As we shall see, it is useful to encode some informations on the non-linear field $\p$ into the curvature tensors of $\pa{M,g}$ in order to see, at the same time,  the combined action of $\p$ and that of the Riemannian metric $g$ of $M$.
With these motivations (other can be found in \cite{ACR},\cite{CMR2022},\cite{MR}) we introduce modified curvature tensors depending on the map $\p: \pa{M,g}\to \pa{N,h}$. The first step in this direction, that is, the definition of the $\p$-Ricci tensor, is due to B. List, that merged the Ricci flow with the harmonic map flow (for details and background see \cite{List2008EvolutionOA}). For some fixed coupling constant $\alpha \neq 0$ we set
\begin{align}
	\Ric^\p = \Ric-\alpha \p^*h
\end{align}
for the $\p$-Ricci tensor and the $\p$-scalar curvature will be its contraction with the metric $g$, that is,
\begin{align}
	S^\p=S-\alpha \abs{\di \p}^2.
\end{align}
In components (that is, with respect to orthonormal coframes ($U, \theta^i$),($V,\omega^a$) 
with $\p(U)\subset V$, $1\leq i,j,...\leq m=\dim M$, $1\leq a,b,...\leq d=\dim N$), we have
\begin{align}
	R^\p_{ij}=R_{ij}-\alpha \p^a_i\p^a_j
\end{align}
(note that List uses the notation $S_{ij}$ instead of $R^\p_{ij}$).
The next formula (see \cite{ACR} for a simple proof) should be called the $\p$-Schur's identity and it will be repeatedly used in the sequel
\begin{align}\label{phi curv: phi Schur}
	R^\p_{ij,i}=\frac{1}{2}S^\p_j-\alpha \p^a_{ii}\p^a_j.
\end{align} 
From \eqref{phi curv: phi Schur} we immediately infer that, for $m\geq 3$, if
\begin{align}\label{phi curv: ric phi proporz metrica}
	\Ric^\p=\Lambda g, \ \ \ \ \ \Lambda\in C^\infty(M),
\end{align}
then the function $\Lambda$ satisfies
\begin{align}
	(m-2)\nabla \Lambda=2\alpha h(\tau(\p),\di \p)^\sharp.
\end{align}
Here, $\tau(\p)$, with components $\p^a_{ll}$, is the tension field of the map $\p$, see \cite{EL}. In case $\p$ is conservative, that is, the divergence of the stress-energy tensor of $\p$ is null, equivalently,
\begin{align*}
	h(\tau(\p),\di \p)\equiv 0,
\end{align*}
then $\Lambda$ is constant. This is, in particular, the case of a harmonic map, for which $\tau(\p)\equiv 0$. Thus, if $\pa{M,g}$ is harmonic-Einstein with respect to $\p$ for some $\alpha \neq 0$, that is, the system
\begin{align}\label{phi curv: harmonic Einstein}
	\begin{cases}
		\Ric^\p=\Lambda g,\\
		\tau(\p)=0
	\end{cases}
\end{align}
holds on $M$, then $\Lambda$ is constant, or in other terms the $\p$-scalar curvature $S^\p$ is constant. This also suggests the use of the term $\p$-Einstein manifold for $\pa{M,g}$, but the previous terminology has been introduced before and we shall adhere to it.
In analogy with the classical case, the $\p$-Schouten tensor is defined by setting
\begin{align}
	A^\p = \Ric^\p-\frac{S^\p}{2(m-1)}g
\end{align}
and the $\p$-Cotton tensor as the obstruction to $A^\p$ to be Codazzi.
The $\p$-Weyl tensor is introduced with the care to formally respect the usual decomposition of the Riemann tensor, that is,
\begin{align}\label{phi curv: phi weyl}
	W^\p=\Riem-\frac{1}{m-2}A^\p\KN g
\end{align}
where $\KN$ is the Kulkarni-Nomizu product. Although $W^\p$ has the same ``algebraic'' symmetries of the Riemann curvature tensor it is not totally trace free. Indeed
\begin{align}\label{phi curv: traccia di W phi}
	W^\p_{kikj}=\alpha\p^a_i\p^a_j
\end{align}
while the remaining traces are determined by \eqref{phi curv: phi weyl} and by the algebraic symmetries of $W^\p$. Note that, in terms of the classical counterparts, we have
\begin{align}
	A^\p=A-\alpha A(\p^*h),
\end{align}
\begin{align}\label{phi curv: W phi e W}
	W^\p=W+\frac{\alpha}{m-2}A(\p^*h)\KN g
\end{align}
where
\begin{align*}
	A(\p^*h)=\p^*h-\frac{\abs{\di \p}^2}{2(m-1)}g
\end{align*}
is the ``Schouten tensor'' of the symmetric, 2-covariant tensor $\p^*h$. For the $\p$-Cotton tensor $C^\p$ we have the symmetries
\begin{align}
	C^\p_{ijk}=-C^\p_{ikj} \ \text{ so that } \ C^\p_{ikk}=0,
\end{align}
however it is not totally trace free, since
\begin{align}\label{phi curv: traccia di C phi e tensione}
	C^\p_{kki}=\alpha \p^a_{kk}\p^a_i.
\end{align}
It is easy to prove that $C^\p$ satisfies the Bianchi-type identity
\begin{align}
	C^\p_{ijk}+C^\p_{kji}+C^\p_{jki}=0.
\end{align}
Again, from its definition it is immediate to see that
\begin{align}
	C^\p_{ijk} = C_{ijk}-\alpha\sq{\p^a_{ik}\p^a_j-\p^a_{ij}\p^a_k-\frac{\p^a_l}{m-1}\pa{\p^a_{lk}\delta_{ij}-\p^a_{tj}\delta_{ik}}}.
\end{align}
The next two formulas are ``fake'' second Bianchi identities for the $\p$-Weyl tensor and the $\p$-Cotton tensor. They will not be used in the sequel, but we state them for completeness.
We have
\begin{align}
	W^\p_{ijkl,p} + W^\p_{ijpk,l} + W^\p_{ijlp,k} = \frac{1}{m-2}(C^\p_{ikl}\delta_{pj}+C^\p_{ilp}\delta_{jk}+C^\p_{ipk}\delta_{jl} - C^\p_{jlp}\delta_{ki}-C^\p_{jpk}\delta_{li}-C^\p_{jkl}\delta_{pi})\nonumber
\end{align}
and
\[
	C^\p_{ijk,l} + C^\p_{ikl,j}+C^\p_{ilj,k} = W^\p_{pilj}R^\p_{kp}+W^\p_{pijk}R^\p_{lp}+W^\p_{pikl}R^\p_{jp}.
\]
The next alternative definition of the $\p$-Cotton tensor for $m\geq 4$, points out at deep differences between the classical and the $\p$-curvatures (see Proposition 2.64 of \cite{ACR})
\begin{align}\label{phi curv: divergenza di W phi}
	-\frac{m-3}{m-2}C^\p_{jkl}=W^\p_{sjkl,s}-\alpha \pa{\p^a_{jk}\p^a_l-\p^a_{jl}\p^a_k}-\frac{\alpha}{m-2}\p^a_{ss}\pa{\p^a_k\delta_{jl}-\p^a_l\delta_{jk}}.
\end{align}
With the obvious meaning of the notation we shall set
\begin{align}
	\pa{\div_1 W^\p}_{jkl}=W^\p_{sjkl,s}.
\end{align}
This notation and its obvious extensions will be used throughout the paper.
The $\p$-Bach tensor, $B^\p$, is not defined in analogy with the classical one; indeed, for $m\geq 3$, we set
\begin{align}\label{phi curv: phi bach}
	(m-2)B^\p_{ij} & = C^\p_{ijk,k}+R^\p_{lk}W^\p_{likj}-\alpha R^\p_{lj}\p^a_l\p^a_i + \alpha\pa{\p^a_{ij}\p^a_{kk}-\p^a_{kkj}\p^a_i-\frac{\abs{\tau(\p)}^2}{m-2}\delta_{ij}}. \nonumber
\end{align}
Note that $B^\p$ is a symmetric, 2-covariant tensor; for a proof see \cite{ACR}. Contrary to the Bach tensor, $B^\p$ is not trace free, in general. Indeed we have
\begin{align}
	B^\p_{ii}=\alpha \frac{m-4}{(m-3)^2}\abs{\tau(\p)}^2.
\end{align}

\section{The system}\label{Sect 3: system}

In this section, we show how the equations of Cotton gravity, for the stress energy tensor $\hat{T}=\hat{T}^F+\hat{T}^{\hat{\p}}$ of the Introduction, reduce in the space and time components of a static space-time.
Motivated by physical reasons (see Remark \ref{Deduzione: rmk: U armonicità} below), we will split the natural condition $0=\div_1 \hat{T}$ into the two conditions $0=\div_1\hat{T}^F=\div_1\hat{T}^{\hat{\p}}$.
\subsection{Cotton-$\p$-Perfect Fluids}
In this section, we consider an $n$-dimensional static space-time $\hat{M}=M\times_f \RR$ that solves the Cotton Gravity equations \eqref{Har_eq}
 for a stress-energy tensor $\hat{T}=\hat{T}^F+\hat{T}^{\hat{\p}}$, where $\hat{T}^F$ and $\hat{T}^{\hat{\p}}$ are given by \eqref{T F def} and \eqref{T phi def}, respectively.
After a long, but routine, computation one sees that equation \eqref{Har_eq} can be reduced on the time-slice $(M,g)$ of the static space-time, giving rise to the system 
	\begin{subequations}\label{Deduzione: C-phi-PF versione 3}
	\begin{empheq}[left={\empheqlbrace}]{alignat=2}
		0 = & \; C^\p_{ijk}+f_{ijk}-f_{ikj}-f_{ik}f_j+f_{ij}f_k\notag\\
		&-\frac{1}{m}\sq{\pa{f_{llk}-2f_lf_{lk}}\delta_{ij}-\pa{f_{llj}-2f_lf_{lj}}\delta_{ik}}\notag\\
		&+\frac{1}{2m(m-1)}\pa{S^\p_k\delta_{ij}-S^\p_j\delta_{ik}}\notag\\
		&+\frac{1}{m}U^a\p^a_j\delta_{ik}-\frac{1}{m}U^a\p^a_k\delta_{ij}+\frac{1}{m}\pa{\mu_j\delta_{ik}-\mu_k\delta_{ij}},\label{Deduzione: C-phi-PF versione 3 1}\\
		0=&-\frac{m-1}{m}f_{lli}+\frac{m-2}{m}f_lf_{li}+\pa{\Delta f}f_i+\frac{1}{2m}S^\p_i-f_lR^\p_{il}\notag\\
		&+\frac{1}{m}U^a\p^a_i-\frac{m-1}{m}\mu_i,\label{Deduzione: C-phi-PF versione 3 2}\\
		0=& \; h\pa{\tau(\p)-\di\p(\nabla f)-\frac{1}{\alpha}\nabla^h U (\p),\di \p},\label{Deduzione: C-phi-PF versione 3 4} \\ 
		0=& \; \nabla p-(\mu+p)\nabla f \label{Deduzione: C-phi-PF versione 3 3}.
	\end{empheq}
\end{subequations}
Here $m=\dim M$, so that $n=m+1$. The deduction of system \eqref{Deduzione: C-phi-PF versione 3} is carried out in full details in Appendix \ref{Appendice A 1}.
\begin{remark}\label{Deduzione: rmk: U armonicità}
	In what follows, we will replace condition \eqref{Deduzione: C-phi-PF versione 3 4} with the stronger
	\begin{equation}\label{Deduzione: tau-dphi -U}
		\tau(\p)=\di\p(\nabla f)+\frac{1}{\alpha}\nabla^h U(\p).
	\end{equation}
	This is motivated by the following reasoning: the stress-energy tensor $\hat{T}^{\hat{\p}}$ is obtained from the matter action
	\begin{equation*}
		I(\hat{g},\hat{\p})=\int \pa{\abs{\di \hat{\p}}_{\hat{g}}^2+2U(\hat{\p})}
	\end{equation*} 
	by considering the critical points of its variations with respect to the metric. The critical points for variations of the field $\hat{\p}$ of this action satisfy
	\begin{equation}\label{Deduzione: U armonicità}
		\tau(\hat{\p})=\frac{1}{\alpha}\nabla^h U (\hat{\p})
	\end{equation}
	see \cite{Lemaire1977OnTE} for a proof in the Riemannian case, the Lorentzian case being identical. It is elementary to prove that \eqref{Deduzione: U armonicità} reduces to \eqref{Deduzione: tau-dphi -U} on a static space-time, so that it is natural to consider equation \eqref{Deduzione: tau-dphi -U} instead of
	\eqref{Deduzione: C-phi-PF versione 3 4}. 
	This point of view also motivates the splitting of the natural condition
	\begin{equation*}
		0=\div_1 \hat{T},
	\end{equation*}
	which, as we saw in Proposition \ref{Prop: cotton gravity: conseguenze base delle eq. di campo}, is a direct consequence of the field equations,
	into the two conditions
	\begin{equation*}
		0=\div_1 \hat{T}^F=\div_1 \hat{T}^{\hat{\p}}.
	\end{equation*}
\end{remark}
\begin{remark}\label{Deduction: remark: U armonica}
	In the Riemannian setting, a smooth map $\p$ satisfying \eqref{Deduzione: U armonicità} is called \emph{$\frac{1}{\alpha}U$-harmonic} 
	\noindent Note that, for $U$ constant, we obtain the case of harmonic maps.
	The previous definition goes back to the work of Fardoun, Ratto and Regbaoui (\cite{Fardoun1997HarmonicMW}, \cite{FRRharmonicPot}, \cite{Ratto}); it comes from a variational setting that has been vastly analyzed by Lemaire in his Ph.D. Thesis, (\cite{Lemaire1977OnTE}).
	Indeed, as in the Lorentzian case, $\frac{1}{\alpha}U$-harmonic maps  are the critical points of the functional
	\begin{align*}
		E(\p)=\frac{1}{2}\int\sq{\abs{d\p}^2+\frac{2}{\alpha}U(\p)}\,
	\end{align*}
	(see \cite{Lemaire1977OnTE} for the details).
	A map that satisfies
	\begin{align*}
		0=h\pa{\tau(\p)-\frac{1}{\alpha}\nabla^h U,\di \p}
	\end{align*}	
	is called \emph{$\frac{1}{\alpha}U$-conservative}.
\end{remark}

The following definition identifies a (static) Cotton-$\p$-perfect fluid, C-$\p$-PF for shorts.
\begin{definition}
	Given a Riemannian manifold $(M,g)$ and a smooth map $\p:(M,g)\to (N,h)$ that targets a second Riemannian manifold, and given $\alpha\in \RR, \alpha \neq 0, U\in C^\infty(N), \mu,p\in C^\infty(M)$ and a function $f\in C^{\infty}(M)$, we will say that $(M,g,f)$ is a C-$\p$-PF if it satisfies
	 \begin{subequations}\label{Deduzione: C-phi-PF}
	\begin{empheq}[left={\empheqlbrace}]{alignat=2}
			0=& \; C^\p_{ijk}+f_{ijk}-f_{ikj}-f_{ik}f_j+f_{ij}f_k\notag\\
			&-\frac{1}{m}\sq{\pa{f_{llk}-2f_lf_{lk}}\delta_{ij}-\pa{f_{llj}-2f_lf_{lj}}\delta_{ik}}\notag\\
			&+\frac{1}{2m(m-1)}\pa{S^\p_k\delta_{ij}-S^\p_j\delta_{ik}}\notag\\
			&+\frac{1}{m}U^a\p^a_j\delta_{ik}-\frac{1}{m}U^a\p^a_k\delta_{ij}+\frac{1}{m}\pa{\mu_j\delta_{ik}-\mu_k\delta_{ij}}, \label{Deduzione: C-phi-PF 1}\\
			0=&-\frac{m-1}{m}f_{lli}+\frac{m-2}{m}f_lf_{li}+\pa{\Delta f}f_i+\frac{1}{2m}S^\p_i-f_lR^\p_{il}\notag\\
			&+\frac{1}{m}U^a\p^a_i-\frac{m-1}{m}\mu_i, \label{Deduzione: C-phi-PF 2}\\
			0=& \; \tau(\p)-\di\p(\nabla f)-\frac{1}{\alpha}\nabla^h U (\p), \label{Deduzione: C-phi-PF 4}\\
			0=& \; \nabla p-(\mu+p)\nabla f \label{Deduzione: C-phi-PF 3}.
		\end{empheq}
	\end{subequations}
\end{definition}
We want to write system \eqref{Deduzione: C-phi-PF} in an equivalent form that will be useful in the next section.
From equation \eqref{Deduzione: C-phi-PF 2} we get
\begin{align*}
	\frac{m-1}{m}f_{lli}-2\frac{m-1}{m}f_lf_{li} = & \; \pa{\frac{m-2}{m}-2\frac{m-1}{m}}f_lf_{li}+\pa{\Delta f}f_i+\frac{1}{2m}S^\p_i\\
	& -f_lR^\p_{li}+\frac{1}{m}U^a\p^a_i-\frac{m-1}{m}\mu_i.
\end{align*}
Simplifying and dividing by $m-1$ we deduce
\begin{align*}
	\frac{1}{m}\pa{f_{lli}-2f_lf_{li}}=&-\frac{1}{m-1}\sq{f_lf_{li}-\pa{\Delta f}f_i}+\frac{1}{2m(m-1)}S^\p_i\\
	&-\frac{1}{m-1}f_lR^\p_{li}+\frac{1}{m(m-1)}U^a\p^a_i-\frac{1}{m}\mu_i.
\end{align*}
Using the latter into \eqref{Deduzione: C-phi-PF 1} we obtain
\begin{align*}
	0=&\;C^\p_{ijk}+f_{ijk}-f_{ikj}-f_{ik}f_j+f_{ij}f_k\\
	&+\frac{1}{m-1}\pa{f_lf_{lk}-\pa{\Delta f}f_k}\delta_{ij}-\frac{1}{2m(m-1)}S^\p_k\delta_{ij}+\frac{1}{m-1}f_lR^\p_{lk}\delta_{ij}\\
	&-\frac{1}{m(m-1)}U^a\p^a_k\delta_{ij}+\frac{1}{m}\mu_k\delta_{ij}\\
		&-\frac{1}{m-1}\pa{f_lf_{lj}-\pa{\Delta f}f_j}\delta_{ik}+\frac{1}{2m(m-1)}S^\p_j\delta_{ik}-\frac{1}{m-1}f_lR^\p_{lj}\delta_{ik}\\
	&+\frac{1}{m(m-1)}U^a\p^a_j\delta_{ik}-\frac{1}{m}\mu_j\delta_{ik}\\
	&+\frac{1}{2m(m-1)}\pa{S^\p_k\delta_{ij}-S^\p_j\delta_{ik}}+\frac{1}{m}U^a\pa{\p^a_j\delta_{ik}-\p^a_k\delta_{ij}}\\
	&+\frac{1}{m}\pa{\mu_j\delta_{ik}-\mu_k\delta_{ij}}.
\end{align*}
Simplifying, the latter becomes
\begin{align}\label{Deduzione: C-phi 1 riscritta}
	0 = & \; C^\p_{ijk}+f_{ijk}-f_{ikj}+f_{ij}f_k-f_{ik}f_j+\frac{1}{m-1}\sq{f_lf_{lk}-\pa{\Delta f}f_k}\delta_{ij}\\ \nonumber
	&-\frac{1}{m-1}\sq{f_lf_{lj}-(\Delta f)f_j}\delta_{ik}+\frac{1}{m-1}f_l\pa{R^\p_{lk}\delta_{ij}-R^\p_{lj}\delta_{ik}}\\
	&-\frac{1}{m-1}U^a\p^a_k\delta_{ij}+\frac{1}{m-1}U^a\p^a_j\delta_{ik}. \nonumber
\end{align}
Using the Ricci commutation relations and the definition of $W^\p$ we get
\begin{align}\label{Deduzione: ricci rel + W phi}
	f_{ijk}-f_{ikj} & = f_lR_{lijk}\\ \nonumber
	& = f_lW^\p_{lijk}+\frac{1}{m-2}\sq{f_jR^\p_{ik}-f_kR^\p_{ij}+f_lR^\p_{lj}\delta_{ik}-f_lR^\p_{lk}\delta_{ij}}\\ \nonumber
	& \phantom{=\;} -\frac{S^\p}{(m-1)(m-2)}\pa{f_j\delta_{ik}-f_k\delta_{ij}}.
\end{align}
Plugging \eqref{Deduzione: ricci rel + W phi} into \eqref{Deduzione: C-phi 1 riscritta} we deduce
\begin{align*}
	0=&\; C^\p_{ijk}+f_lW^\p_{lijk}+\frac{1}{m-2}\sq{f_jR^\p_{ik}-f_kR^\p_{ij}+f_lR^\p_{lj}\delta_{ik}-f_lR^\p_{lk}\delta_{ij}}\\
	&-\frac{S^\p}{(m-1)(m-2)}\pa{f_j\delta_{ik}-f_k\delta_{ij}}+\frac{1}{m-1}f_l\pa{R^\p_{lk}\delta_{ij}-R^\p_{lj}\delta_{ik}}\\ 
	&+f_{ij}f_k-f_{ik}f_j-\frac{1}{m-1}\sq{f_lf_{lj}-(\Delta f)f_j}\delta_{ik}+\frac{1}{m-1}\sq{f_lf_{lk}-(\Delta f)f_k}\delta_{ij}\\
	&-\frac{1}{m-1}U^a\p^a_k\delta_{ij}+\frac{1}{m-1}U^a\p^a_j\delta_{ik}.
\end{align*}
Simplifying
\begin{equation}\label{Deduzione: spezzamento: quasi versione con D A}
	\begin{split}
		0 = & \; C^\p_{ijk}+f_lW^\p_{lijk} + \frac{1}{m-2} (f_jR^\p_{ik}-f_kR^\p_{ij}) \\
		& + \frac{1}{(m-1)(m-2)} \left[f_l (R^\p_{lj}\delta_{ik}-R^\p_{lk}\delta_{ij})-S^\p \pa{f_j\delta_{ik}-f_k\delta_{ij}}\right] \\
		& + f_{ij}f_k-f_{ik}f_j+\frac{1}{m-1}f_l\pa{f_{lk}\delta_{ij}-f_{lj}\delta_{ik}} \\
		& - \frac{\Delta f}{m-1}\pa{f_k\delta_{ij}-f_j\delta_{ik}} -\frac{1}{m-1}U^a\pa{\p^a_k\delta_{ij}-\p^a_j\delta_{ik}}.
	\end{split}
\end{equation}
We define
\begin{equation}\label{Deduzione D A definizione}
	D^A_{ijk} := \frac{1}{m-2}\bigg[f_kR^\p_{ij}-f_jR^\p_{ik}+\frac{f_t}{m-1}\pa{R^\p_{tk}\delta_{ij}-R^\p_{tj}\delta_{ik}} - \frac{S^\p}{m-1}\pa{f_k\delta_{ij}-f_j\delta_{ik}}\bigg]
\end{equation}
and
\begin{equation}\label{Deduzione: D B definizione}
	D^B_{ijk} := \frac{1}{m-2}\bigg[f_jf_{ik}-f_kf_{ij}+\frac{f_t}{m-1}\pa{f_{tj}\delta_{ik}-f_{tk}\delta_{ij}} - \frac{\Delta f}{m-1}\pa{f_j\delta_{ik}-f_k\delta_{ij}}\bigg]
\end{equation}
so that \eqref{Deduzione: spezzamento: quasi versione con D A} becomes
\begin{equation}\label{Deduzione: C phi prima: versione con le D}
	0 = C^\p_{ijk}+f_lW^\p_{lijk}-D^A_{ijk}-(m-2)D^B_{ijk}-\frac{1}{m-1}U^a\pa{\p^a_k\delta_{ij}-\p^a_j\delta_{ik}}.
\end{equation}
System \eqref{Deduzione: C-phi-PF} is therefore equivalent to
\begin{subequations}\label{Deduzione: C-phi-PF: versione con le D}
\begin{empheq}[left={\empheqlbrace}]{alignat=2}
		0=& \; C^\p_{ijk}+f_lW^\p_{lijk}-D^A_{ijk}-(m-2)D^B_{ijk}-\frac{1}{m-1}U^a\pa{\p^a_k\delta_{ij}-\p^a_j\delta_{ik}}, \label{Deduzione: C-phi-PF: versione con le D 1}\\
		0=&\; -\frac{m-1}{m}f_{lli}+\frac{m-2}{m}f_lf_{li}+\pa{\Delta f}f_i+\frac{1}{2m}S^\p_i - f_lR^\p_{li}+\frac{1}{m}U^a\p^a_i-\frac{m-1}{m}\mu_i, \label{Deduzione: C-phi-PF: versione con le D 2}\\
		0=& \; \tau(\p)-\di\p(\nabla f)-\frac{1}{\alpha}\nabla^h U (\p), \label{Deduzione: C-phi-PF: versione con le D 4}\\
		0=& \; \nabla p-(\mu+p)\nabla f \label{Deduzione: C-phi-PF: versione con le D 3}.
\end{empheq}
\end{subequations}
Equation \eqref{Deduzione: C-phi-PF: versione con le D 1} has two important features, that are not seen in \eqref{Deduzione: C-phi-PF 1}:
on the one hand, the tensors $D^A$ and $D^B$ appear explicitly in it; as we will see in Section \ref{Sect 4: local struct}, these tensors are fundamental in the study of the geometry of the regular level sets of $f$; on the other hand, \eqref{Deduzione: C-phi-PF: versione con le D} shares some striking formal similarities with the first integrability condition of  a $\p$-\emph{static perfect fluid space-time}.  
%
%
%
%
%
%
%
%
%
\subsection{$\p$-static perfect fluid space-times}
Following \cite{phiSPFST}, consider, on a static space-time $M\times_{f} \RR$, the solutions of the Einstein field equations
\begin{align}\label{C phi e phi PF: equazioni di einstein}
	\hat{\ric}-\frac{1}{2}\hat{S}\hat{g}=\hat{T}^F+\hat{T}^{\hat \p}
\end{align}
with source given by the combination of a perfect fluid and a non-linear field $\hat{\p}$,
subject to the further condition
\begin{align}\label{C phi e phi PF: equazioni di conservazione}
	\begin{cases}
		\div \hat{T}^F=0,\\[0.2cm]
		\tau(\hat{\p})= \dfrac{1}{\alpha} \nabla^h U .
	\end{cases}
\end{align}
Splitting  \eqref{C phi e phi PF: equazioni di einstein} and \eqref{C phi e phi PF: equazioni di conservazione} on the space and time components of $(\hat{M},\hat{g})$ we obtain a solution of the system
\begin{subequations}\label{Cphi e phiPF: sistema phiSPFST con la f}
	\begin{empheq}[left={\empheqlbrace}]{alignat=2}
		&\ric^\p+\hess f-\di f\otimes \di f-\frac{1}{m-1}\pa{\frac{S^\p}{2}-p+U(\p)}g=0, \label{Cphi e phiPF: sistema phiSPFST con la f 1}\\
		&\Delta_f f=-\frac{1}{m-1}\sq{mp-mU(\p)+\frac{m-2}{2}S^\p}, \label{Cphi e phiPF: sistema phiSPFST con la f 2}\\
		&\tau(\p)-\di \p(\nabla f)=\frac{1}{\alpha} \nabla^h U , \label{Cphi e phiPF: sistema phiSPFST con la f 3}\\
		&\mu+U(\p)=\frac{1}{2}S^\p, \label{Cphi e phiPF: sistema phiSPFST con la f 4}\\[0.2cm]
		&\nabla p-(\mu+p)\nabla f=0. \label{Cphi e phiPF: sistema phiSPFST con la f 5}
	\end{empheq}
\end{subequations}
Performing the substitution $u=e^{-f}$, system \eqref{Cphi e phiPF: sistema phiSPFST con la f} becomes
\begin{align}\label{Cphi e phiPF: sistema phiSPFST con la u}
	\begin{cases}
		\hess u-u\left\{\ric^\p-\dfrac{1}{m-1}\pa{\dfrac{S^\p}{2}-p+U(\p)}g\right\}=0,\\[0.2cm]
		\Delta u=\dfrac{u}{m-1}\sq{mp-mU(\p)+\dfrac{m-2}{2}S^\p},\\[0.4cm]
		u\tau(\p)=-\di \p(\nabla u)+\dfrac{u}{\alpha} \nabla^h U,\\[0.2cm]
		\mu+U(\p)=\dfrac{1}{2}S^\p,\\[0.2cm]
		(\mu+p)\nabla u=-u\nabla p.
	\end{cases}
\end{align}
From \eqref{Cphi e phiPF: sistema phiSPFST con la u} one can see that, when $\p$ is constant and $U\equiv 0$, the system reduces to that of a classical static perfect fluid.
It will be useful to reformulate system \eqref{Cphi e phiPF: sistema phiSPFST con la f} in the following way: solving \eqref{Cphi e phiPF: sistema phiSPFST con la f} for $p$ we find
\begin{align*}
p=-\frac{m-1}{m}\Delta_f f+U(\p)-\frac{m-2}{2m}S^\p;
\end{align*}
using the latter in \eqref{Cphi e phiPF: sistema phiSPFST con la f 1} we get
\begin{align*}
	\ric^\p+\hess f-\di f\otimes \di f=\frac{1}{m}\pa{S^\p+\Delta_f f}g.
\end{align*}
Solving \eqref{Cphi e phiPF: sistema phiSPFST con la f 4} for $U(\p)$ we find
\begin{align*}
	U(\p)=\frac{1}{2}S^\p-\mu
\end{align*} 
and inserting the latter in \eqref{Cphi e phiPF: sistema phiSPFST con la f 2} we get
\begin{align*}
	\Delta_f f=\frac{1}{m-1}\pa{S^\p-m(\mu+p)}.
\end{align*}
Calling 
\begin{align*}
	\lambda:=\frac{1}{m}\pa{S^\p+\Delta_f f}
\end{align*}
we deduce that \eqref{Cphi e phiPF: sistema phiSPFST con la f} is equivalent to
\begin{subequations}\label{Cphi e phiPF: sistema phiSPFST riscritto}
	\begin{empheq}[left={\empheqlbrace}]{alignat= 2}
		&\ric^\p+\hess f-\di f\otimes \di f=\lambda g, \label{Cphi e phiPF: sistema phiSPFST riscritto 1}\\
		&\Delta_f f=\frac{1}{m-1}\sq{S^\p-m(\mu+p)}, \label{Cphi e phiPF: sistema phiSPFST riscritto 2}\\
		&\tau(\p)=\di \p(\nabla f)+\frac{1}{\alpha} \nabla^h U, \label{Cphi e phiPF: sistema phiSPFST riscritto 3}\\
		&\frac{1}{2}S^\p=U(\p)+\mu, \label{Cphi e phiPF: sistema phiSPFST riscritto 4}\\[0.1cm]
		&(\mu+p)\nabla f-\nabla p=0. \label{Cphi e phiPF: sistema phiSPFST riscritto 5}
	\end{empheq}
\end{subequations}

\begin{definition}
	Let $(M,g)$ be an $m$-dimensional Riemannian manifold and let $\p: (M,g)\to (N,h)$ be a smooth map that targets another Riemannian manifold. Given $\alpha \in \RR, \alpha \neq 0, U\in C^
	{\infty}(N), \mu,p,\lambda\in C^\infty(M)$ and $f\in C^{\infty}(M)$, we will say that $(M,g)$ is a $\p$-static perfect fluid space-time ($\p$-SPFST for short) if $f$ is a solution of \eqref{Cphi e phiPF: sistema phiSPFST riscritto}, for $\lambda$ as above.
\end{definition}

\begin{remark}
Recall from Section \ref{Sect 1: setting} that every solution of Einstein's field equations for a stress energy tensor $T$ is also a solution of the Cotton gravity equations, for the same stress-energy tensor. Therefore, we have that every $\p$-SPFST is also a C-$\p$-PF.
\end{remark}

The next result, which corresponds to the case $\eta=1$ of Proposition 4.1 of \cite{phiSPFST}, gives some integrability conditions for $\p$-SPFSTs. 
\begin{proposition}\label{KO: 1 st and 2 nd integrability}
	Let $(M,g)$ be a manifold of dimension $m\geq 3$. Let $\p:(M,g)\to (N,h)$, $U:(N,h)\to \RR$, $\lambda:(M,g)\to \RR$ be smooth maps, $\alpha\in \RR\setminus \{0\}$ and let $f\in C^{\infty}(M)$; assume that equations \eqref{Cphi e phiPF: sistema phiSPFST riscritto 1} and \eqref{Cphi e phiPF: sistema phiSPFST riscritto 3} hold on $M$. We then have
	\begin{align}\label{1st int cond}
		(m-1)D^A_{ijk}=C^\p_{ijk}+f_lW^\p_{lijk}+\frac{U^a}{m-1}\pa{\p^a_j\delta_{ik}-\p^a_k\delta_{ij}}
	\end{align}
	and
	\begin{align}\label{2nd int cond}
		(m-2)B^\p_{ij}=&\;(m-1)\sq{D^\p_{ijk,k}-\frac{\alpha}{m-2}\p^a_{ss}\p^a_if_j}+\frac{m-3}{m-2}f_kC^\p_{jik}- W^\p_{lijk}f_lf_k\\ \nonumber
		&+\frac{U^a}{m-1}\sq{(m-2)\p^a_{ij}-\frac{1}{m-2}\p^a_{ss}\delta_{ij}}\\ \nonumber
		&+\frac{U^{ab}}{m-1}\sq{\p^a_k\p^b_k\delta_{ij}-m\p^a_i\p^b_j}\\ \nonumber
		&+ f_jU^a\p^a_i.
	\end{align}
\end{proposition}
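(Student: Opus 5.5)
The plan is to derive both identities by differentiating the first equation of the system and using the remaining equations to eliminate everything except curvature and $f$. I read $D^\p$ in \eqref{2nd int cond} as $D^A$.

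\textbf{First condition.} Write \eqref{Cphi e phiPF: sistema phiSPFST riscritto 1} in components:
\[
R^\p_{ij}+f_{ij}-f_if_j=\lambda\delta_{ij}.
\]
\begin{enumerate}
\item Differentiate covariantly in $k$ and skew-symmetrize in $j,k$. The Ricci commutation relations give $f_{ijk}-f_{ikj}=f_lR_{lijk}$, so
\[
R^\p_{ij,k}-R^\p_{ik,j}+f_lR_{lijk}+f_{ij}f_k-f_{ik}f_j=\lambda_k\delta_{ij}-\lambda_j\delta_{ik}.
\]
\item In the quadratic terms, substitute $f_{ij}=\lambda\delta_{ij}-R^\p_{ij}+f_if_j$. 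The $f_if_jf_k$ pieces cancel, leaving $f_jR^\p_{ik}-f_kR^\p_{ij}+\lambda(f_k\delta_{ij}-f_j\delta_{ik})$.
\item Compute $\lambda_j$. Take the divergence of the equation (contract $i$ with $k$). Use $f_{iij}=(\Delta f)_j+R_{jl}f_l$ and the $\p$-Schur identity \eqref{phi curv: phi Schur}. Then use the tension equation \eqref{Cphi e phiPF: sistema phiSPFST riscritto 3} in the form $\alpha\p^a_{ll}\p^a_j=\alpha\p^a_j\p^a_lf_l+U^a\p^a_j$; its first term cancels against $R_{jl}f_l-R^\p_{jl}f_l=\alpha\p^a_j\p^a_lf_l$.
\item Subtract the derivative of the traced equation $S^\p+\Delta f-|\nabla f|^2=m\lambda$ to eliminate $(\Delta f)_j$. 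Re-express $f_lf_{lj}$ and $\Delta f$ through the equation itself.
\item This should give
\[
(m-1)\lambda_j=\tfrac12 S^\p_j-U^a\p^a_j+(\text{terms in }f_lR^\p_{lj},\ S^\p f_j,\ \lambda f_j).
\]
\item Write $R^\p_{ij,k}-R^\p_{ik,j}=C^\p_{ijk}+\frac{1}{2(m-1)}(S^\p_k\delta_{ij}-S^\p_j\delta_{ik})$, and split $f_lR_{lijk}$ into $f_lW^\p_{lijk}$ plus Schouten terms as in \eqref{Deduzione: ricci rel + W phi}.
\item Collect terms. The $S^\p_k$ terms should cancel, the $\lambda$ terms should cancel, and the remaining terms should assemble into $(m-1)D^A_{ijk}$ together with the $U^a$ term, giving \eqref{1st int cond}.
\end{enumerate}

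\textbf{Second condition.}
\begin{enumerate}
\item Take the divergence in $k$ of \eqref{1st int cond}.
\item The term $C^\p_{ijk,k}$ is traded for $(m-2)B^\p_{ij}$ via the definition \eqref{phi curv: phi bach}. This produces $R^\p W^\p$ terms, $\p$-terms and $\tau(\p)$ terms.
\item The term $(f_lW^\p_{lijk})_{,k}=f_{lk}W^\p_{lijk}+f_lW^\p_{lijk,k}$ is treated in two parts.
\begin{itemize}
\item Replace $f_{lk}$ by $\lambda\delta_{lk}-R^\p_{lk}+f_lf_k$. The $R^\p_{lk}W^\p$ piece cancels the one coming from $B^\p$. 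The trace term is computed with \eqref{phi curv: traccia di W phi}. The $f_lf_k$ piece gives $-W^\p_{lijk}f_lf_k$.
\item Rewrite $W^\p_{lijk,k}$ by the divergence formula \eqref{phi curv: divergenza di W phi}, which yields $\frac{m-3}{m-2}f_kC^\p_{jik}$ plus $\p$-corrections.
\end{itemize}
\item Differentiate $U^a\p^a_j$. This produces the Hessian terms $U^{ab}\p^b_k\p^a_j$ and the terms $U^a\p^a_{jk}$.
\item Use \eqref{Cphi e phiPF: sistema phiSPFST riscritto 3} once more to replace every occurrence of $\p^a_{ss}$ that is not kept explicitly.
\end{enumerate}

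\textbf{Main obstacle.} For $m=3$, formula \eqref{phi curv: divergenza di W phi} degenerates, so I would instead compute $W^\p_{lijk,k}$ directly from \eqref{phi curv: phi weyl} using the second Bianchi identity. The step I expect to be hardest is the bookkeeping of the $\alpha\p$-quadratic terms and the $U$-terms in the second condition. Several contributions must cancel exactly: those from the definition of $B^\p$, from the divergence of $W^\p$, and from the tension equation. I would organize this by grouping terms by their type ($\p_{ij}\p_{kk}$, $\p_{kkj}\p_i$, $U^a$-terms, $U^{ab}$-terms) and matching each group against \eqref{2nd int cond}.
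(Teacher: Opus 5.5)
Your plan is correct and follows essentially the route behind the statement. The paper defers this proof to \cite{phiSPFST}, but Lemma~\ref{codazzi 1:lemma: D bar e prima cond} and Proposition~\ref{teoremone: prop: divergenza di D bar e co} carry out exactly this skew-symmetrize-then-take-the-divergence computation for a general $Z$. Under \eqref{Cphi e phiPF: sistema phiSPFST riscritto 1} one has $Z\equiv 0$, hence $\overline{D}=0$ and $D^B=D^A$; you substitute for $\hess f$ earlier, so $D^B$ never appears, which is only cosmetic.

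Three small fixes:
\begin{enumerate}
\item The divergence step actually gives $(m-1)\lambda_j=\tfrac12 S^\p_j+U^a\p^a_j+\big((m-1)\lambda-S^\p\big)f_j$, with $+U^a\p^a_j$ rather than $-U^a\p^a_j$. The $f_lR^\p_{lj}$ terms cancel. With your sign, the $U$-term in \eqref{1st int cond} would come out reversed.
\item In the second part you must also use the differentiated tension equation $\alpha\p^a_{kkj}=\alpha\p^a_{lj}f_l+\alpha\p^a_lf_{lj}+U^{ab}\p^b_j$ to remove the $\p^a_{kkj}\p^a_i$ term of $B^\p$. Its piece $\alpha\p^a_lf_{lj}\p^a_i$ combines with $\alpha R^\p_{lj}\p^a_l\p^a_i$ into $\alpha(\lambda\delta_{lj}+f_lf_j)\p^a_l\p^a_i$, whose $\lambda$-part cancels the trace term coming from \eqref{phi curv: traccia di W phi}.
\item No separate treatment of $m=3$ is needed. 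The identity \eqref{phi curv: divergenza di W phi} is derived without dividing by $m-3$, so it holds for every $m\geq 3$.
\end{enumerate}
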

\begin{remark}
It is easy to prove how, assuming that \eqref{Cphi e phiPF: sistema phiSPFST riscritto 1} holds, one has $D^A\equiv D^B$. Therefore, under this assumption, \eqref{Deduzione: C-phi-PF: versione con le D 1}  becomes \eqref{1st int cond}. Equation \eqref{2nd int cond} is important in order to obtain rigidity results for $\p$-SPFSTs, see \cite[Theorem 4.17]{phiSPFST}. A generalization of \eqref{2nd int cond}, with applications to C-$\p$-PFs, is given by Proposition \ref{teoremone: prop: divergenza di D bar e co} below.
\end{remark}

\section{Local structure and a rigidity result}\label{Sect 4: local struct}
In this section, we prove a rigidity result for a C-$\p$-PF (in fact, for a slightly more general system). The latter is inspired by an analogous result for $\p$-SPFST, namely \cite[Theorem 4.17]{phiSPFST} and therefore it gives an answer to Question \ref{Question: domanda 1} of the Introduction.
In Section \ref{subsect: local struct} we study how the vanishing of $D^A$ and $D^B$, as defined in \eqref{Deduzione D A definizione} and \eqref{Deduzione: D B definizione}, affects the geometry of the regular level sets of $f$.
Then we find conditions that imply the vanishing of $D^A$ and $D^B$.
 \subsection{The local structure of a C-$\p$-PF}\label{subsect: local struct}
In the next Theorem, we show that the vanishing of $D^A$ and $D^B$ entails a local warped product splitting of the metric $g$ together with a characterization of the geometry of the level regular level sets of the defining function $f$ of a C-$\p$-PF. Note that, of all the  equations of a C-$\p$-PF, we will only need \eqref{Deduzione: C-phi-PF: versione con le D 1} and\eqref{Deduzione: C-phi-PF: versione con le D 4}. 
 \begin{theorem}\label{local struct: teorema warp product}
 	Let $(M,g)$ be a smooth, $m$-dimensional Riemannian manifold and let $\p:(M,g)\to (N,h)$ be a smooth map that targets another Riemannian manifold $(N,h)$. For a constant $0\neq \alpha\in \RR$ and  some functions $U\in C^\infty(N), f\in C^\infty(M)$, assume that the following system holds:
 	\begin{subequations}\label{local struct: 1 cond di int}
 		\begin{empheq}[left={\empheqlbrace}]{alignat=2}
 			&0=C^\p_{ijk}+f_lW^\p_{lijk}-\frac{1}{m-1}U^a\p^a_k\delta_{ij}+\frac{1}{m-1}U^a\p^a_j\delta_{ik}-D^A_{ijk}-(m-2)D^B_{ijk}, \label{local struct: 1 cond di int 1}\\
 			&\tau(\p)=\di \p(\nabla f)+\frac{1}{\alpha} \nabla^h U. \label{local struct: 1 cond di int 2}
 		\end{empheq}
 	\end{subequations}
 	Assume that $D^A\equiv D^B\equiv 0$ and that $\p$ is $\frac{1}{\alpha}U$-harmonic. Let $c\in \RR$ be a regular value of $f$ and let $\Sigma=\Sigma_c$ be the corresponding level set. Then, for all $ p\in \Sigma$, there exists in $M$ an open neighbourhood $A$ of $p$ such that $g_{|_A}$ is a warped product $(I\times_\rho(\Sigma \cap A), dr^2+\rho^2(r)g_{\Sigma})$ where $I$ is an open interval, $r$ is the signed distance function from $\Sigma$ and $\rho:I\to \RR$ is the warping factor. Moreover $U(\p)$ is locally constant on $\Sigma$ and $(\Sigma, g_{\Sigma})$ satisfies
 	\begin{align}\label{local struct: teo: harmonic einstein}
 		\begin{cases}
 			\Ricc^{\p_{|_\Sigma}} = \dfrac{S^{\p_{|_\Sigma}}}{m-1}g_{\Sigma}, \\[0.2cm]
 			h\pa{\tau(\p_{|_{\Sigma}}),\di \p_{|_\Sigma}}=0,
 		\end{cases}
 	\end{align}
 	where $\Ricc^{\p_{|_\Sigma}}$ and $S^{\p_{|_\Sigma}}$ denote the $\p$-Ricci curvature and the $\p$-scalar curvature of $(\Sigma,g_\Sigma)$.
 \end{theorem}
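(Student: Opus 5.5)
We follow the Cao–Chen / Kobayashi–Obata strategy, as in \cite{phiSPFST}. The plan is to extract, from $D^A\equiv D^B\equiv 0$, the following information at regular points of $f$:
\begin{itemize}
\item the eigenvector structure of $\hess f$ and $\ric^\p$;
\item the constancy of $|\nabla f|$ along $\Sigma$;
\item the umbilicity of $\Sigma$ with constant mean curvature.
\end{itemize}
From these we deduce the warped product splitting. We then read off the intrinsic geometry of $\Sigma$ from the Gauss equation and from \eqref{local struct: 1 cond di int 1}, which with $D^A=D^B=0$ becomes $C^\p_{ijk}+f_lW^\p_{lijk}+\frac{1}{m-1}U^a(\p^a_j\delta_{ik}-\p^a_k\delta_{ij})=0$.

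\emph{Step 1 (algebra of $D^B=0$).} Near $p\in\Sigma$ take an orthonormal frame with $e_1=\nabla f/|\nabla f|$ and $e_2,\dots,e_m$ tangent to $\Sigma$. Contract $D^B_{ijk}=0$ with $f_j$. This gives $|\nabla f|^2 f_{ik}-f_kf_if_{ij}f_j+\dots=0$, which forces $f_{1a}=0$ for $a\ge2$ (so $\nabla f$ is an eigenvector of $\hess f$) and $f_{ab}=\frac{\Delta f-f_{11}}{m-1}\delta_{ab}$. Since $f_{1a}=0$, we get $\nabla_a|\nabla f|^2=2f_tf_{ta}=0$, so $|\nabla f|$ is constant on connected components of $\Sigma$. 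The second fundamental form is $h_{ab}=f_{ab}/|\nabla f|$, so $\Sigma$ is totally umbilical. The identical computation with $D^A=0$ shows that $\nabla f$ is an eigenvector of $\ric^\p$, i.e.\ $R^\p_{1a}=0$, and that $R^\p_{ab}=\frac{S^\p-R^\p_{11}}{m-1}\delta_{ab}$.

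\emph{Step 2 (mean curvature constant and warped product).} To get constancy of the mean curvature $H$ on $\Sigma$, we use the Codazzi equation $h_{ab,c}-h_{ac,b}=R_{1abc}$. Umbilicity gives $H_c\delta_{ab}-H_b\delta_{ac}\propto R_{1abc}$. Tracing, $(m-2)H_b\propto R_{1b}=R^\p_{1b}+\alpha\p^a_1\p^a_b$. Here we need $\p^a_1=0$, i.e.\ $\di\p(\nabla f)=0$, which follows from \eqref{local struct: 1 cond di int 2} together with $\frac1\alpha U$-harmonicity, since then $\di\p(\nabla f)=\tau(\p)-\frac1\alpha\nabla^hU=0$. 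Hence $R_{1b}=0$ and $H$ is locally constant on $\Sigma$ (for $m\ge3$). Moreover $\di\p(\nabla f)=0$ gives $\nabla_{\nabla f}(U\circ\p)=0$. We still need $U(\p)$ to be constant along $\Sigma$. To get it, we trace the reduced equation \eqref{local struct: 1 cond di int 1} in $i,j$ and use \eqref{phi curv: traccia di C phi e tensione} and \eqref{phi curv: traccia di W phi}: this yields $\alpha\tau(\p)^a\p^a_k+\alpha f_l\p^a_l\p^a_k-U^a\p^a_k=0$, which is automatic. So we instead use the $(a,b,c)$ tangential components together with the $\p$-Schur identity \eqref{phi curv: phi Schur} to get $U^a\p^a_b=0$. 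I expect this to be the delicate point.

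Given all of the above, the standard argument applies. Since $|\nabla f|$ is a function of $f$, the flow of $\nabla f/|\nabla f|$ consists of unit-speed geodesics orthogonal to the level sets, so $r$ is the signed distance function from $\Sigma$ and $f=f(r)$. The level sets are umbilic with constant $H$, which gives $g=dr^2+\rho(r)^2g_\Sigma$ with $\rho'/\rho=H$.

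\emph{Step 3 (geometry of $\Sigma$).} Since $\p^a_1=0$, the map $\p$ is locally constant along $r$, so $\p_{|_\Sigma}$ determines $\p$. In the warped product, $\ric_\Sigma$ is expressed through $R_{ab}$, $R_{1a1b}$ and $\rho$. Since $R^\p_{ab}$ is pure trace (Step 1), $R_{1a1b}$ is pure trace (warped structure), and $\p^*h$ restricted to $\Sigma$ equals $\p_{|_\Sigma}^*h$, the Gauss equation gives $\ric^{\p_{|_\Sigma}}=\frac{S^{\p_{|_\Sigma}}}{m-1}g_\Sigma$. For the tension, $\tau(\p)=\tau_\Sigma(\p_{|_\Sigma})\rho^{-2}+(m-1)H\,\di\p(e_1)=\rho^{-2}\tau(\p_{|_\Sigma})$. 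Therefore $h(\tau(\p_{|_\Sigma}),\di\p_{|_\Sigma})=\rho^2\alpha^{-1}\di(U\circ\p)_{|_\Sigma}=0$, using $U(\p)$ locally constant on $\Sigma$. Alternatively, this conservation property follows from the $\p$-Schur consequence after \eqref{phi curv: ric phi proporz metrica} applied on $\Sigma$ (where $m-1\ge 2$ is needed). This closes the argument; the main obstacle is the constancy of $U(\p)$ on $\Sigma$ in Step 2.
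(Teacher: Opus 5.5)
Steps 1 and 2 are the paper's argument (the $D(P)$ proposition, Corollary \ref{local struct: cor: D B =0}, Proposition \ref{local struct: prop: warped product}). Your derivation of the Einstein condition on $\Sigma$ is a legitimate shortcut. Once $g$ is a local warped product, $R_{mAmB}=-\frac{\rho''}{\rho}\delta_{AB}$ is automatically pure trace, so the paper's Lemma \ref{local struct: lemma: W phi mamb} ($W^\p_{mAmB}=0$, obtained there from the field equation) is not needed.

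The genuine gap is the step you flag yourself: $U^a\p^a_A=0$ on $\Sigma$. Here I use the paper's frame, $e_m=\nabla f/\abs{\nabla f}$ with $A,B,C$ tangential. This step is not peripheral. Since $\tau(\p_{|_\Sigma})=\tau(\p)=\frac1\alpha\nabla^hU$ on $\Sigma$, it is equivalent to the second line of the conclusion. So as written, the proposal proves neither that line nor the local constancy of $U(\p)$.

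Your fallback is circular. Applying \eqref{phi curv: ric phi proporz metrica} on the $(m-1)$-dimensional $\Sigma$ only gives $(m-3)\nabla^\Sigma S^{\p_{|_\Sigma}}=2(m-1)\alpha\, h(\tau(\p_{|_\Sigma}),\di\p_{|_\Sigma})^\sharp$. This yields the conservation law for $m=3$. For $m\geq 4$ it requires knowing that $S^{\p_{|_\Sigma}}$ is constant on $\Sigma$, which is the same missing information.

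The route you suggest does work, and you need to carry it out.
\begin{itemize}
\item Codazzi with $\Pi=\frac{H}{m-1}g_\Sigma$ and $H$ constant on $\Sigma$ gives $R_{mABC}=0$. Hence $W^\p_{mABC}=0$, because $R^\p_{mA}=0$.
\item Since $R^\p_{mA}=0$ and $R^\p_{AB}$ is pure trace, the connection terms drop out and $C^\p_{ABC}=\lambda_C\delta_{AB}-\lambda_B\delta_{AC}$, with $\lambda=\frac{S^\p}{2(m-1)}-\frac{R^\p_{mm}}{m-1}$. Here $R^\p_{mm}=R_{mm}=-(m-1)\rho''/\rho$ is constant on $\Sigma$.
\item Tracing the $(A,B,C)$-components of \eqref{local struct: 1 cond di int 1} over $A=B$ then gives $\frac12S^\p_C=U^a\p^a_C$.
\item A second, independent relation is needed. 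One option is the $(m,C,m)$-component. There $C^\p_{mCm}=\frac{S^\p_C}{2(m-1)}$, using $A^\p_{mC,m}=0$ because the integral curves of $e_m$ are geodesics. This gives $\frac12S^\p_C=-U^a\p^a_C$.
\item The other option is the trace identity \eqref{phi curv: traccia di C phi e tensione}, which on this structure reads $U^a\p^a_C=\frac{m-3}{2(m-1)}S^\p_C$.
\item Either way, $S^\p_C=0=U^a\p^a_C$.
\end{itemize}

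The paper gets there differently. It takes the divergence of the identity $D^A\equiv0$ and inserts three things: the $\p$-Schur identity, the $\nabla f$-contraction of \eqref{local struct: 1 cond di int 1}, and the commutation of $\ric^\p$ with $\hess f$. This gives $\frac{(m-2)^2}{2}S^\p_A+(1-(m-1)^2)U^a\p^a_A=0$. It then pairs this with the intrinsic $\p$-Schur identity on $\Sigma$, after checking $S^{\p_{|_\Sigma}}_A=S^\p_A$ and $\p^a_{mm}=0$. In that identity the coefficient should be $\frac{m-3}{2(m-1)}$ rather than $\frac{m-2}{2(m-1)}$; this does not affect the conclusion.

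Your component-wise version is more direct. Both versions rely on the warped-product fact that $R_{mm}$ is constant on $\Sigma$.
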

 
 \begin{remark}
 	Theorem \ref{local struct: teorema warp product} is similar to Theorem 4.14 of \cite{phiSPFST}. There, $(M,g)$ satisfied
 	\begin{align}\label{local struct: phi einstein type}
 		\begin{cases}
 			\ric^\p + \hess f - \eta \, \di f\otimes \di f=\lambda g\\[0.1cm]
 			\tau(\p) = \di \p(\nabla f) + \dfrac{1}{\alpha} \nabla^h U .
 		\end{cases}
 	\end{align}
 	As we saw in Section \ref{Sect 3: system},  such a manifold satisfies \eqref{local struct: 1 cond di int} when $\eta=1$, but this is just a necessary condition, so that the present setting is more general.
 	The proof of Theorem \ref{local struct: teorema warp product} follows the same lines of that of \cite[Theorem 4.14]{phiSPFST} but, due to the differences between equations \eqref{local struct: 1 cond di int} and \eqref{local struct: phi einstein type}, some modifications are needed. We will split the proof into several propositions.
 \end{remark}
 From here on, $c$ will be a regular value of $f$, $\Sigma$ will be the corresponding level set and $p\in \Sigma$.
 At such a $p$, $\{e_i\}_{i=1}^m$ will be an orthonormal frame such that
 \begin{align*}
 	e_m=\frac{\nabla f}{\abs{\nabla f}}
 \end{align*}
 and upper case letters $A,B,C,...$ will denote indexes ranging from $1$ to $m-1$. The dual co-frame of $\{e_i\}_{i=1}^m$ will be denoted by $\{\theta^i\}_{i=1}^m$. For such a co-frame we have
 \begin{align*}
 	\theta^m_{\ A}(e_B)=\Pi_{AB}=-\frac{f_{AB}}{\abs{\nabla f}},
 \end{align*}
 see Proposition 8.1 of \cite{CatMastMontRig}. Here, $\Pi$ is the second fundamental form of $\Sigma$ and $\theta^m_{\ A}(e_B)$ is of course one of the components of the connection form $\theta^m_{\ A}$ associated to the given co-frame. The trace of $\Pi$, the mean curvature, is denoted by $H$. Recall that the immersion $i:\Sigma\hookrightarrow M$ is said to be \emph{totally umbilical} at $p\in \Sigma$ if it holds $\Pi=\frac{H}{m-1}g_{\Sigma}$ at $p$. 
 \begin{proposition}\label{local struct: prop: D(P) è 0}
 	Let $(M,g)$ be a Riemannian manifold and let $P$ be a $2$-covariant symmetric tensor field on $M$. For $f\in C^\infty (M)$, define
 	\begin{align}\label{local struct: tensore D di P}
 		D(P)_{ijk}:=\frac{1}{m-2}\bigg[&f_kP_{ij}-f_jP_{ik}+\frac{f_l}{m-1}\pa{P_{lk}\delta_{ij}-P_{lj}\delta_{ik}}-\frac{P_{ll}}{m-2}\pa{f_k\delta_{ij}-f_j\delta_{ik}}\bigg].
 	\end{align}
 	Then $D(P)\equiv 0$ if and only if we have the following:
 	\begin{itemize}
 		\item[i)] at any regular point $p\in M$ of $f$, $\nabla f$ is an eigenvector of $P$;
 		\item[ii)] at any regular point $p\in M$ of $f$ and for any orthonormal frame $\{e_i\}_{i=1}^m$ at $p$ such that $e_m=\frac{\nabla f}{\abs{\nabla f}}$ we have
 		\begin{align*}
 			P_{AB}=\frac{P_{CC}}{m-1}\delta_{AB}.
 		\end{align*}
 	\end{itemize}
 \end{proposition}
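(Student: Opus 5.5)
The plan is a direct computation of $D(P)$ in an adapted frame. At critical points of $f$ every term of \eqref{local struct: tensore D di P} contains a first derivative of $f$, so $D(P)$ vanishes there automatically. Hence both directions only need to be checked at regular points. Since $D(P)$ is a tensor, I can use at such a point $p$ an orthonormal frame with $e_m=\nabla f/\abs{\nabla f}$, so that $f_A=0$ and $f_m=\abs{\nabla f}\neq 0$.

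I will read the trace normalization as the one in $D^A$, namely $\frac{P_{ll}}{m-1}$, so that $D^A=D(\ric^\p)$ exactly. With the coefficient $\frac{1}{m-2}$ as printed, the computation below gives $P_{AB}=\big[P_{ll}/(m-2)-P_{mm}/(m-1)\big]\delta_{AB}$ instead of ii). Since the intended applications are to $D^A$ and $D^B$, I would flag this as a normalization slip.

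Because $D(P)_{ijk}=-D(P)_{ikj}$, it suffices to look at the components $(i,A,B)$ and $(i,A,m)$; the components with $j=k=m$ vanish by skew-symmetry.

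\emph{Components $(i,A,B)$.} Here $f_j=f_k=0$, and the only surviving terms are
\[
\frac{\abs{\nabla f}}{(m-1)(m-2)}\big(P_{mB}\delta_{iA}-P_{mA}\delta_{iB}\big).
\]

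\emph{Components $(i,A,m)$.} A direct substitution gives
\[
(m-2)D(P)_{iAm}=\abs{\nabla f}\Big[P_{iA}+\frac{1}{m-1}\big(P_{mm}\delta_{iA}-P_{mA}\delta_{im}\big)-\frac{P_{ll}}{m-1}\delta_{iA}\Big].
\]
For $i=m$ this reduces to $\abs{\nabla f}\,\frac{m-2}{m-1}P_{mA}$. For $i=B$ it reduces to $\abs{\nabla f}\big[P_{BA}-\frac{P_{ll}-P_{mm}}{m-1}\delta_{AB}\big]=\abs{\nabla f}\big[P_{AB}-\frac{P_{CC}}{m-1}\delta_{AB}\big]$.

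\emph{Necessity.} Suppose $D(P)\equiv0$. Then $m\geq3$ (needed anyway for $D(P)$ to be defined) and $\abs{\nabla f}\neq0$ give $P_{mA}=0$ for every $A$, which says exactly that $\nabla f$ is an eigenvector of $P$; this is i). The $(B,A,m)$ components then give ii).

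\emph{Sufficiency.} Conversely, i) kills $P_{mA}$, hence all $(i,A,B)$ components and the $(m,A,m)$ components vanish. Condition ii) kills the remaining $(B,A,m)$ components. Together with skew-symmetry in $j,k$ this gives $D(P)=0$ at $p$, and hence everywhere.

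The computation is routine. The only genuine point to be careful about is keeping track of the trace coefficient: whether one gets exactly $P_{CC}/(m-1)$ in ii) depends on it being $\frac{1}{m-1}$. I would double-check this against the definition of $D^A$ before finalizing.
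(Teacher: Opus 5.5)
Your proposal is correct and follows essentially the same route as the paper. Both work in an adapted orthonormal frame at a regular point. The $(m,A,m)$ components encode the eigenvector condition; the paper phrases this via the contraction $f_iD_{ijk}=\frac{1}{m-1}(f_lf_kP_{lj}-f_lf_jP_{lk})$. The $(C,A,B)$ components vanish once $P_{mA}=0$, and the $(B,A,m)$ components give ii). Your reading of the trace coefficient as $\frac{P_{ll}}{m-1}$ is exactly what the paper's own proof uses, so the $\frac{1}{m-2}$ in the displayed definition is a typo. To be precise about your side remark: with the printed coefficient the ``only if'' direction would still hold, because tracing your formula forces $P_{ll}=0$ and then ii) follows. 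It is the ``if'' direction that fails, for instance for $P=g$.
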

 \begin{remark}
 	Proposition \ref{local struct: prop: D(P) è 0} implies that, when $D(P)\equiv 0$, $P$ has at most two eigenvalues at any regular point $p$ of $f$, with eigenspaces $\left<\nabla f\right>$ and $\left<\nabla f\right>^{\perp}$.
 \end{remark}
 \begin{proof}[Proof of Proposition \ref{local struct: prop: D(P) è 0}]
For the time of this proof, we will set $D=D(P)$ for simplicity.
 The condition that $\nabla f$ be an eigenvector of $P$ at $p$ is equivalent to
 \begin{align}\label{local struct: D(P)=0: eigenvector}
 	f_jf_lP_{lk}=f_kf_lP_{lj}.
 \end{align}	
 	Indeed, in the dual co-frame $\{\theta^i\}_{i=1}^m$ of $\{e_i\}_{i=1}^m$, \eqref{local struct: D(P)=0: eigenvector} is equivalent to
 	\begin{align*}
 		\abs{\nabla f}f_jP_{mk}=\abs{\nabla f}f_kP_{mj};
 	\end{align*}
 	therefore, since $f_A=0, \text{for all } A\in \{1,...,m-1\}$ and $f_m=\abs{\nabla f}\neq 0$, we get
 	\begin{align*}
 		0=f_A P_{mm}=f_mP_{mA}=\abs{\nabla f}P_{mA}
 	\end{align*}
 	so that $P_{mA}=0, \text{for all } A\in \{1,...,m-1\}$.
 Now, contracting \eqref{local struct: tensore D di P} with $f_i$ we get
 \begin{align*}
 	f_iD_{ijk}=&\;\frac{1}{m-2}\bigg[f_lf_kP_{lj}-f_lf_jP_{lk}+\frac{1}{m-1}f_l(P_{lk}f_j-P_{lj}f_k) - \frac{P_{ll}}{m-1}(f_jf_k-f_kf_j)\bigg]\\
 	=&\;\frac{1}{m-1}(f_lf_kP_{lj}-f_lf_jP_{lk})
 \end{align*}	
 	implying that \eqref{local struct: D(P)=0: eigenvector} is equivalent to
 	\begin{align}\label{local struct: D(P)=0: D m j k}
 		D_{mjk}=0.
 	\end{align}
 	Having that \eqref{local struct: D(P)=0: eigenvector} holds, we also get
 	\[
 		D_{ABC} = \frac{1}{m-2}\bigg[f_CP_{AB}-f_BP_{AC}+\frac{\abs{\nabla f}}{m-1}(P_{mC}\delta_{AB}-P_{mB}\delta_{AC})-\frac{P_{ll}}{m-1}(f_C\delta_{AB}-f_B\delta_{AC})\bigg] = 0
 	\]
 	since $f_A=f_B=f_C=0$. We are only left to prove that, assuming \eqref{local struct: D(P)=0: eigenvector}, the condition
 	\begin{align*}
 		D_{ABm}=0
 	\end{align*}
 	is equivalent to
 	\begin{align*}
 		P_{AB}=\frac{P_{CC}}{m-1}\delta_{AB}.
 	\end{align*}
 	From \eqref{local struct: tensore D di P}, using $f_A=f_B=0, f_m=\abs{\nabla f}$ and $P_{Am}=0, \text{for all } A,$ we deduce
 	\begin{align*}
 		D_{ABm}=&\;\frac{1}{m-2}\bigg[f_mP_{AB}-f_BP_{Am}+\frac{\abs{\nabla f}}{m-1}\pa{P_{mm}\delta_{AB}-P_{mB}\delta_{Am}}-\frac{P_{ll}}{m-1}\pa{f_m\delta_{AB}-f_B\delta_{Am}}\bigg]\\
 		=&\;\frac{\abs{\nabla f}}{m-2}\bigg[P_{AB}+\frac{P_{mm}}{m-1}\delta_{AB}-\frac{1}{m-1}\pa{P_{CC}+P_{mm}}\delta_{AB}\bigg]\\
 		=&\;\frac{\abs{\nabla f}}{m-2}\bigg[P_{AB}-\frac{P_{CC}}{m-1}\delta_{AB}\bigg]
 	\end{align*}
 	and we are done.
 \end{proof}
 \begin{corollary}\label{local struct: cor: D B =0}
 	Let $(M,g)$ be a Riemannian manifold and let $f\in C^\infty(M)$. Assume $D^B\equiv 0$. Then:
 	\begin{itemize}
 		\item[i)] for every regular point $p$ of $f$, $\nabla f$ is an eigenvector of $\hess f$;
 		\item[ii)] for any regular level set $\Sigma$ of $f$, $i:\Sigma \hookrightarrow M$ is totally umbilical.
  	\end{itemize}
 \end{corollary}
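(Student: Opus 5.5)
The plan is to recognise $D^B$ as a special case of the tensor $D(P)$ of Proposition \ref{local struct: prop: D(P) è 0}, with $P=\hess f$, and then read off both claims from that proposition.

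First I compare \eqref{Deduzione: D B definizione} with \eqref{local struct: tensore D di P} for $P_{ij}=f_{ij}$, so that $P_{ll}=\Delta f$. Each of the first three terms of $D^B$ equals minus the corresponding term of $D(\hess f)$: $f_jf_{ik}-f_kf_{ij}$ against $f_kf_{ij}-f_jf_{ik}$, the two $\frac{f_t}{m-1}$-terms with opposite signs, and likewise the $\Delta f$-terms. The only discrepancy is the coefficient of the trace term. In $D^B$ (and in $D^A$) it is $\frac{1}{m-1}$, while \eqref{local struct: tensore D di P} prints $\frac{1}{m-2}$. However, the proof of Proposition \ref{local struct: prop: D(P) è 0} actually uses $\frac{P_{ll}}{m-1}$: this is exactly what produces the cancellation $P_{mm}-(P_{CC}+P_{mm})$ in the computation of $D_{ABm}$. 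So I take $D(P)$ with the coefficient $\frac{1}{m-1}$, which is the version the proof establishes. With that reading,
\[
D^B=-D(\hess f),
\]
and $D^B\equiv 0$ is equivalent to $D(\hess f)\equiv 0$.

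Next I apply Proposition \ref{local struct: prop: D(P) è 0} with $P=\hess f$. Item i) of that proposition gives directly that $\nabla f$ is an eigenvector of $\hess f$ at every regular point of $f$, which is claim i).

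For claim ii), fix a regular level set $\Sigma$, a point $p\in\Sigma$, and an orthonormal frame with $e_m=\nabla f/\abs{\nabla f}$. Item ii) of the proposition gives
\[
f_{AB}=\frac{f_{CC}}{m-1}\delta_{AB}.
\]
I then use the identity $\Pi_{AB}=-f_{AB}/\abs{\nabla f}$ recalled above from \cite{CatMastMontRig}. Taking the trace gives $H=-f_{CC}/\abs{\nabla f}$, and therefore
\[
\Pi_{AB}=\frac{H}{m-1}\delta_{AB}\quad\text{at every } p\in\Sigma,
\]
that is, $\Sigma\hookrightarrow M$ is totally umbilical.

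The only delicate point is the identification $D^B=-D(\hess f)$, specifically checking the sign and coefficient of the trace term; once that is settled, the rest is immediate from the proposition.
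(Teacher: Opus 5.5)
Your proof is correct and is essentially the paper's argument: the paper also identifies $D^B=D(-\hess f)$ (which equals your $-D(\hess f)$ by linearity) and reads off both claims from the proposition on $D(P)$ together with $f_{AB}=-\abs{\nabla f}\,\Pi_{AB}$. You are also right that the trace coefficient in \eqref{local struct: tensore D di P} should be $\frac{1}{m-1}$, not $\frac{1}{m-2}$: that is the coefficient the proof of the proposition actually uses, and it matches the definitions of $D^A$ and $D^B$.
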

 \begin{proof}
 	Since $D^B=D(-\hess f)$, the conclusion follows from Proposition \ref{local struct: prop: D(P) è 0} and $f_{AB}=-\abs{\nabla f}\Pi_{AB}$.
 \end{proof}

 \begin{proposition}\label{local struct: prop: warped product}
 	Let $(M,g)$ be a Riemannian manifold and let $f \in C^{\infty}(M)$. Assume that, at any regular point $p$ of $f$,
 	\begin{align}
 		D^B  = 0, \qquad R^{\p}_{Am} = 0 \quad \text{ for all } A, \qquad \di \p(\nabla f) = 0.
 	\end{align}
 	For a regular value $c \in \mathbb{R}$ of $f$, let $\Sigma = f^{-1}(c)$ be the corresponding level set of $f$. Then, for all $ p \in \Sigma$, $\exists A$, $p \in A \subset M$ open in $M$ such that $g_{|_A}$ is a warped product
 	\begin{align}
 		\left(I \times_{\rho} ({\Sigma \cap A}), \di r^2 + \rho^2(r) g_{\Sigma}\right)
 	\end{align}
 	where $I$ is an open interval, $r$ is the signed distance function from $\Sigma \cap A$ and $\rho: I \to \mathbb{R}$ is the warping factor.
 \end{proposition}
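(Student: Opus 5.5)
We follow the classical argument used for gradient Ricci solitons (Cao--Chen) and for $\p$-SPFSTs, working near a regular point of $f$ with the adapted orthonormal frame $\{e_i\}$, $e_m=\nabla f/\abs{\nabla f}$. The first step is to show that $\abs{\nabla f}$ is locally constant on $\Sigma$. Corollary \ref{local struct: cor: D B =0} gives that $\nabla f$ is an eigenvector of $\hess f$ at every regular point, that is $f_{Am}=0$. Since $\tfrac12(\abs{\nabla f}^2)_A=f_lf_{lA}=\abs{\nabla f}f_{mA}=0$, the function $\abs{\nabla f}$ is constant along connected components of the level sets. Hence locally $\abs{\nabla f}$ is a function of $f$ alone. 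It follows that $r$, defined via $\di r=\di f/\abs{\nabla f}$, is (up to a constant) the signed distance from $\Sigma$, and that $\nabla r=e_m$ is a geodesic unit field, because $\nabla_{\nabla r}\nabla r=\hess r(e_m,\cdot)=0$ by $f_{mA}=0$. Near $p$ we therefore get coordinates $(r,x)\in I\times(\Sigma\cap A)$ in which $g=\di r^2+g_r$, with $g_r$ the induced metric on the level set $\{r=\text{const}\}$.

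The second step is to show that the umbilicity factor depends only on $r$. By Corollary \ref{local struct: cor: D B =0} the level sets are totally umbilical, so $\Pi=\frac{H}{m-1}g_r$ and $\partial_r g_r=-2\Pi=-\frac{2H}{m-1}g_r$, with the sign fixed by the convention $\Pi_{AB}=-f_{AB}/\abs{\nabla f}$. It then suffices to show that $H$ is constant on each level set, since then $g_r=\rho(r)^2g_\Sigma$ with $\rho$ solving $\rho'/\rho=-H/(m-1)$. For this we use the Codazzi equation on the totally umbilical hypersurface: $R_{mBAB}$ traced gives $R_{mA}=\Pi_{AB,B}-H_A=\frac{H_A}{m-1}-H_A=-\frac{m-2}{m-1}H_A$, up to an overall sign depending on conventions. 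On the other hand $R_{Am}=R^\p_{Am}+\alpha\p^a_A\p^a_m$. The hypothesis gives $R^\p_{Am}=0$, and $\di\p(\nabla f)=0$ gives $\p^a_m=0$, so $R_{Am}=0$. Since $m\geq3$ this yields $H_A=0$, so $H$ is constant along each (connected piece of a) level set. Shrinking $A$ so that all slices are connected, $H=H(r)$.

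The last step is to integrate. With $g=\di r^2+g_r$, $\partial_r g_r=\psi(r)g_r$ and $\psi=-2H/(m-1)$, we obtain $g_r=\rho(r)^2 g_{\Sigma}$ with $\rho(r)=\exp\bigl(\tfrac12\int_0^r\psi\bigr)$, where $\Sigma$ corresponds to $r=0$ and $\rho(0)=1$. This gives the warped product on $A$.

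We expect the main obstacle to be the second step. It requires getting the Codazzi identity for the induced second fundamental form right with the paper's sign and index conventions, and checking that the $\p$ correction term in $R_{Am}$ really vanishes; this is precisely where $\di\p(\nabla f)=0$ enters. The remaining steps are standard.
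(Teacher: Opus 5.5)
Your proposal is correct and follows essentially the paper's argument. The Codazzi equation on the totally umbilical level set, combined with $R_{Am}=R^\p_{Am}+\alpha\p^a_A\p^a_m=0$, gives $H$ locally constant; $f_{Am}=0$ gives $|\nabla f|$ locally constant on level sets; and integrating $\partial_r g_{AB}=-\frac{2H}{m-1}g_{AB}$ in Fermi coordinates yields the warping factor, with only the order of the first two steps swapped relative to the paper.
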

 
 \begin{proof}
 	First, we prove that the mean curvature $H$ of $\Sigma$ is locally constant on it. Since $D^B = 0$ we have, by Corollary \ref{local struct: cor: D B =0},
 	\begin{align*}
 		\Pi_{AB} = \frac{H}{m-1} \delta_{AB}.
 	\end{align*}
 	Using Codazzi equations and $R^{\p}_{Am} = 0$ we get
 	\begin{align*}
 		H_B &= \Pi_{AA,B} = \Pi_{AB,A} - R_{mAAB} \\
 		&= \frac{H_B}{m-1} + R^{\p}_{mB} +\alpha \p^a_m\p^a_B \\
 		&= \frac{H_B}{m-1} + \alpha \p^a_m\p^a_B 
 	\end{align*}
 	and therefore
 	\begin{align*}
 		\frac{m-2}{m-1}H_B = \alpha \p^a_m\p^a_B.
 	\end{align*}
 	From $\di \p(\nabla f) = 0$ we get
 	\begin{align*}
 		\p^a_m = 0
 	\end{align*}
 	so that $H_B = 0$ and $H$ is constant on $\Sigma$.
%
	
	Since $D^B = 0$, we have, see Corollary \ref{local struct: cor: D B =0},
 	\begin{equation*}
 		\frac{1}{2}\abs{\nabla f}^2_A = f_lf_{Al} = \abs{\nabla f}f_{Am} = 0
 	\end{equation*}
 	so that $\abs{\nabla f}$ is locally constant on each regular level set of $f$. As it is well-known, this implies that, for all $ p\in \Sigma $, there exists  an open neighbourhood $A$ of $p$ in $M$  such that $f_{\mid A}$ only depends on $r$. See the proof of the equivalence of item $ii)$ and item $iii)$ of \cite[Lemma 4.4]{phiSPFST} for more details. By restricting $A$, we can assume that $A$ is made only of regular points of $f$, and $A$ is a tubular neighbourhood of $A\cap \Sigma$. We can choose $r$ so that
	\begin{align*}
		\nabla r = \frac{\nabla f}{|\nabla f|}.
	\end{align*}
 	Expressing $g$ in Fermi coordinates $(x^1,\ldots,x^{m-1},r)$ we get, on $A$,
 	\begin{equation*}
 		g = \di r \otimes \di r + g_{AB}(\underline{x},r) \, \di x^A \otimes \di x^B
 	\end{equation*}
 	and
 	\begin{equation*}
 		\hess f = f''\, \di r \otimes \di r + \frac{f'}{2}\, \partial_r g_{AB}(\underline{x},r) \, \di x^A \otimes \di x^B
 	\end{equation*}
 	where $\underline{x} = (x^1,\ldots,x^{m-1})$. Therefore
 	\begin{equation}\label{local struct: warped: hessiano di f}
 		f_{AB} = \frac{f'}{2}\, \partial_r g_{AB}.
 	\end{equation}
 	Since $i: \Sigma_{\overline c} \hookrightarrow M$ is totally umbilical, for every $\overline{c}$ sufficiently close to $c$, we get
 	\begin{equation}\label{local struct: warped: hessiano di f 2}
 		 \frac{f_{AB}}{f'} = -\Pi_{AB} = -\frac{H}{m-1}\, g_{AB}
 	\end{equation}
 	Comparing \eqref{local struct: warped: hessiano di f} and \eqref{local struct: warped: hessiano di f 2} we have
 	\begin{equation*}
 		-\frac{H}{m-1}\, g_{AB} = \frac{1}{2}\, \partial_r g_{AB}
 	\end{equation*} 	
 	and integrating the above expression
 	\begin{equation*}
 		g_{AB}(\underline{x},r) = e^{-2(m-1)^{-1}\int^r_0 H(s)\,ds}\, g_{AB}(\underline{x},0);
 	\end{equation*}
    since $g_{AB}(\underline{x},0)\, \di x^A \otimes \di x^B = g_\Sigma$,
 	we are done.
\end{proof}

We now want to describe the geometry of the regular level sets of $f$.

\begin{proposition}\label{local struct: prop: fibre harm einstein}
	Let $(M,g)$ satisfy system \eqref{local struct: 1 cond di int} and
	assume that
	\begin{align*}
		D^A \equiv D^B \equiv 0
	\end{align*}
	and
	\begin{align*}
		\tau(\p) = \frac{\nabla^h U}{\alpha}
	\end{align*}
	and let $\Sigma$ be a regular level set of $f$. Then,
	\begin{subequations}\label{local struct: fibre: harm einstein}
 		\begin{empheq}[left={\empheqlbrace}]{alignat=2}
		&	\ric^{\p_{|_{\Sigma}}} = \frac{S^{\p_{|_{\Sigma}}}}{m-1} g_{\Sigma}, \label{local struct: fibre: harm einstein 1} \\
		&	h(\tau(\p_{|_{\Sigma}}), \di \p_{|_{\Sigma}}) = 0. \label{local struct: fibre: harm einstein 2}
		\end{empheq}
	\end{subequations}
\end{proposition}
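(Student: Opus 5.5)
The plan is to work at a regular point $p\in\Sigma$ in the adapted frame $e_m=\nabla f/\abs{\nabla f}$. We read off the block structure of $\ric^\p$ from $D^A\equiv 0$, use \eqref{local struct: 1 cond di int 1} to control the normal components of $W^\p$, and then apply the Gauss equations.

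\textbf{Preliminary normalizations.} Comparing \eqref{local struct: 1 cond di int 2} with $\tau(\p)=\frac{1}{\alpha}\nabla^hU$ gives $\di\p(\nabla f)=0$, so $\p^a_m=0$ on the regular set. Hence $\p^\ast h$ has no $m$-components and $(\p_{|_\Sigma})^a_A=\p^a_A$.

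Since $D^A$ is, up to the normalization of the trace term, the tensor $D(\ric^\p)$ of Proposition \ref{local struct: prop: D(P) è 0}, the same component computation gives two facts. First, $R^\p_{Am}=0$. Second, $R^\p_{AB}=\lambda\delta_{AB}$, with $\lambda=R^\p_{CC}/(m-1)$. Write $\nu=R^\p_{mm}$. Then $A^\p$ is diagonal, with eigenvalue $\lambda-\frac{S^\p}{2(m-1)}$ on $\langle\nabla f\rangle^\perp$ and eigenvalue $\nu-\frac{S^\p}{2(m-1)}$ on $\langle\nabla f\rangle$.

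From $D^B\equiv 0$ and Corollary \ref{local struct: cor: D B =0} we know that $\Sigma$ is totally umbilical, $\Pi_{AB}=\frac{H}{m-1}\delta_{AB}$. As in the proof of Proposition \ref{local struct: prop: warped product}, $\abs{\nabla f}$ is locally constant on $\Sigma$, so $f_{Am}=0$. Equivalently, $\nabla_{e_m}e_m$ has no tangential component.

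\textbf{First equation of \eqref{local struct: fibre: harm einstein}.} By the Gauss equations,
\[
R^\Sigma_{AB}=R_{AB}-R_{mAmB}+H\Pi_{AB}-\Pi_{AC}\Pi_{CB}.
\]
Since $R_{AB}=R^\p_{AB}+\alpha\p^a_A\p^a_B$, this gives
\[
\ric^{\p_{|_\Sigma}}_{AB}=R^\p_{AB}-R_{mAmB}+(\text{multiple of }\delta_{AB}).
\]
Expanding $R_{mAmB}$ through \eqref{phi curv: phi weyl}, every Kulkarni--Nomizu term is a multiple of $\delta_{AB}$, because $A^\p$ is block diagonal with a scalar tangential block. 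So it suffices to show that $W^\p_{mAmB}$ is pure trace.

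Evaluate \eqref{local struct: 1 cond di int 1} with $D^A=D^B=0$ at $(i,j,k)=(A,B,m)$. Using $\p^a_m=0$, this gives $\abs{\nabla f}W^\p_{mABm}=-C^\p_{ABm}$. Writing out the covariant derivatives in the adapted frame, the tangential connection terms cancel by skew-symmetry. This leaves
\[
A^\p_{AB,m}=e_m(\lambda')\delta_{AB},\qquad A^\p_{Am,B}=(\lambda-\nu)\Pi_{AB},
\]
where $\lambda'$ denotes the tangential eigenvalue of $A^\p$. By umbilicity, $C^\p_{ABm}\propto\delta_{AB}$. Hence the trace-free part of $\ric^{\p_{|_\Sigma}}$ vanishes, which is \eqref{local struct: fibre: harm einstein 1}.

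\textbf{Second equation and local constancy of $U(\p)$.} In the adapted frame,
\[
\tau(\p_{|_\Sigma})^a=\p^a_{AA}+H\p^a_m=\tau(\p)^a-\p^a_{mm}.
\]
Moreover $\p^a_{mm}=-\p^a_A\,\theta^A_{\ m}(e_m)=0$, since $\nabla_{e_m}e_m$ is normal. Therefore
\[
h(\tau(\p_{|_\Sigma}),\di\p_{|_\Sigma})_A=\frac{1}{\alpha}U^a\p^a_A=\frac{1}{\alpha}\,\big(U(\p)\big)_A,
\]
and both \eqref{local struct: fibre: harm einstein 2} and the local constancy of $U(\p)$ on $\Sigma$ reduce to the single claim $U^a\p^a_A=0$.

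To obtain this claim I would use the remaining components of \eqref{local struct: 1 cond di int 1}. The component $(A,B,C)$ gives $(\lambda')_C\delta_{AB}-(\lambda')_B\delta_{AC}=-\frac{1}{m-1}U^a(\p^a_C\delta_{AB}-\p^a_B\delta_{AC})$. Tracing in $A,B$ gives $(\lambda')_C=-\frac{1}{m-1}U^a\p^a_C$. The component $(m,A,m)$ gives $C^\p_{mAm}=-\frac{1}{m-1}U^a\p^a_A$, where $C^\p_{mAm}=-\big(\nu-\frac{S^\p}{2(m-1)}\big)_A$.

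I would then combine these relations with the $\p$-Schur identity \eqref{phi curv: phi Schur} in the tangential direction $j=A$. Since $\tau(\p)=\frac{1}{\alpha}\nabla^hU$, its right-hand side is $\frac12 S^\p_A-U^a\p^a_A$. The left-hand side, computed in the adapted frame using the block form of $\ric^\p$, umbilicity and $R^\p_{Am}=0$, is $\lambda_A$. Writing $S^\p=(m-1)\lambda+\nu$, this gives a third linear relation among $\lambda_A$, $\nu_A$ and $U^a\p^a_A$, and I expect it to force $U^a\p^a_A=0$. Balancing the coefficients in these three relations carefully is the main obstacle. If they turn out to be dependent, I would fall back on the Codazzi equation for $\Pi$, as in Proposition \ref{local struct: prop: warped product}, to bring in $H_A$ as an additional unknown.
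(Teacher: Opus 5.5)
Your treatment of the first equation is correct and is essentially the paper's. Showing that $C^\varphi_{ABm}$, hence $W^\varphi_{mAmB}$, is pure trace suffices, and together with $W^\varphi_{kmkm}=\alpha\varphi^a_m\varphi^a_m=0$ it even gives $W^\varphi_{mAmB}=0$. Your reduction of the second equation to $U^a\varphi^a_A=0$ via $\varphi^a_m=\varphi^a_{mm}=0$ also matches the paper.

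The gap is the last step: the three relations you propose are linearly dependent and cannot force $U^a\varphi^a_A=0$. First, your $(A,B,C)$ relation has the wrong sign. The equation gives $C^\varphi_{ABC}+\abs{\nabla f}W^\varphi_{mABC}=+\frac{1}{m-1}U^a(\varphi^a_C\delta_{AB}-\varphi^a_B\delta_{AC})$. The trace of the $W^\varphi$-term, which you dropped, vanishes because $W^\varphi_{kmkC}=\alpha\varphi^a_m\varphi^a_C=0$, so tracing in $A=B$ yields $\lambda'_A=+\frac{1}{m-1}U^a\varphi^a_A$. With $\nu':=\nu-\frac{S^\varphi}{2(m-1)}$, the $(m,A,m)$ component gives $\nu'_A=\frac{1}{m-1}U^a\varphi^a_A$. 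Since $\lambda'-\nu'=\lambda-\nu$, you get $\lambda_A=\nu_A=\frac{2}{m-2}U^a\varphi^a_A$. The tangential $\varphi$-Schur relation $\lambda_A=\frac12S^\varphi_A-U^a\varphi^a_A$, with $S^\varphi_A=m\lambda_A$, then holds identically.

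This is structural, not a matter of balancing coefficients. $\varphi$-Schur is the trace identity $C^\varphi_{kkA}=\alpha\varphi^a_{kk}\varphi^a_A$, and $C^\varphi_{kkA}=(m-2)\lambda'_A+\nu'_A$ is already fixed by the two components you used. With your sign the system would seem to give $U^a\varphi^a_A=0$, but only as an artifact of the error. Your fallback does not help either: the Codazzi equation for $\Pi$ only gives $H_A=0$, which is already known and does not enter these relations.

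What is missing is a genuinely new, second-order piece of information on $\nu_A=R^\varphi_{mm,A}$. The paper takes it from the local warped-product structure, which is available here since $D^B=0$, $R^\varphi_{Am}=0$ and $\di\varphi(\nabla f)=0$. In a warped product $R_{mm}$ depends only on $r$, and $R^\varphi_{mm}=R_{mm}$ because $\varphi^a_m=0$, so $\nu_A=0$. Equivalently, $\nabla_{e_m}e_m=0$ and $H$, $\abs{\nabla f}$ are constant on nearby level sets. The traced Riccati equation along the normal geodesics then expresses $R_{mm}$ through $e_m(H)$ and $\abs{\Pi}^2=H^2/(m-1)$, both constant on $\Sigma$. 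With $\nu_A=0$, your corrected relations $\lambda_A=\nu_A$ and $\lambda'_A=\frac{1}{m-1}U^a\varphi^a_A$ give $U^a\varphi^a_A=0$ at once.

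Alternatively, apply the $\varphi$-Schur identity on the $(m-1)$-dimensional $(\Sigma,g_\Sigma)$ to the Einstein condition you have just proved. It gives $\frac{m-3}{2(m-1)}S^{\varphi_{|_\Sigma}}_A=U^a\varphi^a_A$. The Gauss equation with $W^\varphi_{mAmB}=0$ gives $S^{\varphi_{|_\Sigma}}=S^\varphi-2R^\varphi_{mm}+c_mH^2$, for a constant $c_m$ depending only on $m$. Since $H_A=0$, this yields $S^{\varphi_{|_\Sigma}}_A=(m-1)\lambda_A-\nu_A=2U^a\varphi^a_A$. Hence $\frac{m-3}{m-1}U^a\varphi^a_A=U^a\varphi^a_A$, and so $U^a\varphi^a_A=0$.

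The paper combines both ingredients. It obtains a relation between $S^\varphi_A$ and $U^a\varphi^a_A$ from the divergence of $D^A\equiv0$; that relation is again a consequence of your corrected relations. It then uses the warped product to get $S^{\varphi_{|_\Sigma}}_A=S^\varphi_A$ and concludes with $\varphi$-Schur on $\Sigma$.
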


To prove Proposition \ref{local struct: prop: fibre harm einstein},we will need two lemmata.

\begin{lemma}\label{local struct: lemma: W phi mamb}
	Under the assumption of Proposition \ref{local struct: prop: fibre harm einstein}, let $p\in M$ be a regular point for $f$, and let $\{\theta^i\}_{i=1}^m$ be an orthonormal coframe at $p$ such that $\theta^m=\frac{\di f}{\abs{\nabla f}}$. In this coframe, we have
	\begin{align}\label{local struct: fibre: W phi 0}
		W^\p_{mAmB}=0, \ \ \text{ for all } A,B\in \{1,...,m-1\}.
	\end{align}
\end{lemma}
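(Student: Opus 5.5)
The approach is to show that, at a regular point $p$ of $f$, the block $W^\p_{mAmB}$ is pure trace in the tangential indexes, $W^\p_{mAmB}=\lambda\,\delta_{AB}$, and then to prove that this trace vanishes.

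\emph{Preliminary reductions.} Since $\p$ is $\frac{1}{\alpha}U$-harmonic, comparing $\tau(\p)=\frac{1}{\alpha}\nabla^hU$ with \eqref{local struct: 1 cond di int 2} gives $\di\p(\nabla f)=0$, that is, $\p^a_m=0$ at every regular point. Since $D^A=D(\ric^\p)$ with $P=\ric^\p$ in \eqref{local struct: tensore D di P}, Proposition \ref{local struct: prop: D(P) è 0} applied to $D^A\equiv 0$ yields two facts. First, $R^\p_{Am}=0$. Second, $R^\p_{AB}=\frac{R^\p_{CC}}{m-1}\delta_{AB}$. Likewise $D^B\equiv 0$ gives, through Corollary \ref{local struct: cor: D B =0}, that $\nabla f$ is an eigenvector of $\hess f$ and that the level sets are totally umbilical. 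All hypotheses of Proposition \ref{local struct: prop: warped product} are therefore satisfied. Near $p$ the metric is $\di r^2+\rho^2(r)g_\Sigma$, with $e_m=\partial_r$.

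\emph{Step 1: $W^\p_{mAmB}$ is proportional to $\delta_{AB}$.} For a warped product the radial sectional curvatures are $R_{mAmB}=-\frac{\rho''}{\rho}\delta_{AB}$. The standard Fermi-coordinate computation of the Riccati equation for the umbilical shape operator gives this directly. By \eqref{phi curv: phi weyl},
\begin{align*}
W^\p_{mAmB}=R_{mAmB}-\frac{1}{m-2}\pa{A^\p_{AB}+A^\p_{mm}\delta_{AB}},
\end{align*}
because $\delta_{mA}=0$ kills the other two Kulkarni--Nomizu terms. By the preliminary reductions $A^\p_{AB}$ is a multiple of $\delta_{AB}$. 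Hence every term on the right is a multiple of $\delta_{AB}$, so $W^\p_{mAmB}=\lambda\,\delta_{AB}$ for some $\lambda$.

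\emph{Step 2: the trace vanishes.} Use the trace identity \eqref{phi curv: traccia di W phi}, $W^\p_{kikj}=\alpha\p^a_i\p^a_j$, with $i=j=m$. Since $W^\p_{mmmm}=0$ by skew-symmetry,
\begin{align*}
(m-1)\lambda=W^\p_{AmAm}=W^\p_{kmkm}=\alpha\p^a_m\p^a_m=0 .
\end{align*}
Therefore $\lambda=0$, and \eqref{local struct: fibre: W phi 0} follows.

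\emph{Possible obstacle and alternative.} The delicate point is Step 1, namely justifying that the radial curvature of the local warped product is pure trace. This has to be done at the given regular point, using the neighbourhood $A$ provided by Proposition \ref{local struct: prop: warped product}, rather than assuming global structure. If one prefers to avoid the warped-product curvature formula, there is a fallback using \eqref{local struct: 1 cond di int 1} with $D^A=D^B=0$ and $\p^a_m=0$. Taking $(i,j,k)=(A,m,B)$ gives $C^\p_{AmB}=-\abs{\nabla f}W^\p_{mAmB}$. Taking $(i,j,k)=(m,A,B)$ gives $C^\p_{mAB}=0$. One would then try to show that $C^\p_{AmB}$ is a multiple of $\delta_{AB}$, by expanding $A^\p_{Am,B}-A^\p_{AB,m}$ in the adapted frame with $\Pi$ umbilical. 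The trace argument of Step 2 then finishes in the same way.
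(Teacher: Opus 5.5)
Your proof is correct, but your main route differs from the paper's.

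\textbf{The paper's route.} The paper never invokes the local warped-product splitting in this lemma. It differentiates $R^\p_{Am}=0$ through the connection forms to obtain $R^\p_{Am,B}=-\frac{1}{m-1}(S^\p-mR^\p_{mm})\frac{f_{AB}}{\abs{\nabla f}}$. It then computes $C^\p_{ABm}$ from three ingredients: $R^\p_{AB}\propto\delta_{AB}$, the $\p$-Schur identity with $\tau(\p)=\frac1\alpha\nabla^hU$, and umbilicity of the level sets. The result is $C^\p_{ABm}=\frac{U^a\p^a_m}{m-1}\delta_{AB}$. Substituting this into the $(A,m,B)$ component of \eqref{local struct: 1 cond di int 1}, with $D^A=D^B=0$, forces $\abs{\nabla f}\,W^\p_{mAmB}=0$. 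This is essentially your fallback carried to completion.

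\textbf{Your route.} You get the pure-trace form of $W^\p_{mAmB}$ from the radial curvature of the warped product supplied by Proposition \ref{local struct: prop: warped product}. You correctly check its hypotheses from $D^A=D^B=0$ and $\di\p(\nabla f)=0$. You then kill the trace with \eqref{phi curv: traccia di W phi} and $\p^a_m=0$; I checked that this trace identity holds with the paper's conventions, so Step 2 is sound.

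\textbf{What your route buys.} You never use \eqref{local struct: 1 cond di int 1} or the $\p$-Schur identity. So $W^\p_{mAmB}=0$ turns out to be a purely geometric consequence of $D^A=D^B=0$ and $\di\p(\nabla f)=0$. Combined with the paper's independent computation of $C^\p_{ABm}$, this shows that the $(A,m,B)$ component of the field equation carries no extra information under these hypotheses.

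\textbf{What it costs.} You rely on the heavier local splitting (Codazzi equation, Fermi coordinates) plus the warped-product curvature formula. The paper's computation is pointwise and needs only umbilicity and the eigen-structure of $\ric^\p$. In fact, for Step 1 it would suffice to use the Riccati equation for an umbilic shape operator $h(r)\,\mathrm{Id}$ along the unit normal flow, since $\abs{\nabla f}$ is constant on level sets.

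\textbf{A small point of precision.} Proposition \ref{local struct: prop: warped product} is stated for level sets of regular values. Your $p$ is only assumed to be a regular point, so you should say explicitly that you apply its local argument on a neighbourhood of $p$ consisting of regular points.
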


\begin{proof}
	From \eqref{local struct: 1 cond di int 1} and $D^A \equiv D^B \equiv 0$ we get
	\begin{equation}\label{local struct: fibre: sistema bis}
		0 = C^{\p}_{AmB} + \abs{\nabla f} W^{\p}_{mAmB} - \frac{1}{m-1}U^{a}\p^{a}_B\delta_{Am} + \frac{1}{m-1}U^{a}\p^{a}_m\delta_{AB}.
	\end{equation}
	Since $D^A = 0$, Proposition \ref{local struct: prop: D(P) è 0} implies that $R^{\p}_{Am} = 0$ for all $A\in\{1,...,m-1\}$ and therefore
	\begin{align*}
		0 = \di R^{\p}_{Am} &= R^{\p}_{Am,k}\theta^k + R^{\p}_{km} \theta^k_{\ A} + R^{\p}_{Ak} \theta^k_{\ m} \\
		&= R^{\p}_{Am,k} \theta^k + R^{\p}_{Bm} \theta^B_{\ A} + R^{\p}_{mm} \theta^m_{\ A} + R^{\p}_{AB} \theta^B_{\ m} + R^{\p}_{Am} \theta^m_m \\
		&= R^{\p}_{Am,k} \theta^k + R^{\p}_{mm} \theta^m_{\ A} + R^{\p}_{AB} \theta^B_{\ m}.
	\end{align*}
	Since $D^A = 0$, Proposition \ref{local struct: prop: D(P) è 0} gives
	\begin{equation}\label{local struct: fibra: fibra einstein 1 eq}
		\quad R^{\p}_{AB} = \frac{S^{\p} - R^{\p}_{mm}}{m-1} \delta_{AB}
	\end{equation}
	so that
	\begin{align*}
		R^{\p}_{Am,k} \theta^k &= -R^{\p}_{mm} \theta^m_{\ A} - \frac{1}{m-1}(S^{\p} - R^{\p}_{mm})\theta^A_{\ m} \\
		&= \frac{1}{m-1}(S^{\p} - mR^{\p}_{mm}) \theta^m_{\ A}.
	\end{align*}
	Recalling that $\theta^m_{\ A} (e_B) = \Pi_{AB} = -\frac{f_{AB}}{|\nabla f|}$, we get 
	\begin{equation}\label{local struct: R phi A m ,B}
		\quad R^{\p}_{Am,B} = -\frac{1}{m-1}(S^{\p} - mR^{\p}_{mm}) \frac{f_{AB}}{|\nabla f|} \, .
	\end{equation}
	By the definition of the $\p$-Cotton tensor, using \eqref{local struct: fibra: fibra einstein 1 eq} and \eqref{local struct: R phi A m ,B} we obtain
	\begin{align*}
		C^{\p}_{ABm} =& \; R^{\p}_{AB,m} - R^{\p}_{Am,B} - \frac{1}{2(m-1)}\left(S^{\p}_m \delta_{AB} - S^{\p}_B \delta_{Am}\right) \\
		=& \; \frac{1}{m-1}\left(S^{\p} - R^{\p}_{mm}\right)_m \delta_{AB} + \frac{1}{m-1}\left(S^{\p} - mR^{\p}_{mm}\right)\frac{f_{AB}}{|\nabla f|} - \frac{S^{\p}_m}{2(m-1)} \delta_{AB} \\
		= & \; \frac{1}{2(m-1)} S^{\p}_m \delta_{AB} - \frac{1}{m-1}R^{\p}_{mm,m} \delta_{AB} + \frac{1}{m-1}\left(S^{\p} - mR^{\p}_{mm}\right)\frac{f_{AB}}{|\nabla f|}.
	\end{align*}
	By the $\p$-Schur identity
	\begin{align*}
		\frac{1}{2}S^{\p}_m =& \; \alpha \p^{a}_{ll} \p^{a}_m + R^{\p}_{lm,l} \\
		=& \; \alpha \p^{a}_{ll} \p^{a}_m + R^{\p}_{Am,A} + R^{\p}_{mm,m}
	\end{align*}
	and, since $\p^{a}_{ll} = \frac{1}{\alpha}U^a$, we deduce
	\begin{align*}
		C^{\p}_{ABm} &= \frac{1}{m-1}U^a \p^{a}_m \delta_{AB} + \frac{1}{m-1}R^{\p}_{Cm,C} \delta_{AB} + \frac{1}{m-1}\left(S^{\p} - mR^{\p}_{mm}\right)\frac{f_{AB}}{|\nabla f|}.
	\end{align*}
	Taking the trace of \eqref{local struct: R phi A m ,B} we have
	\begin{align*}
		R^{\p}_{Cm,C} = -\frac{1}{m-1}\left(S^{\p} - mR^{\p}_{mm}\right)\frac{f_{BB}}{|\nabla f|},
	\end{align*}
	and therefore
	\begin{align*}
		C^{\p}_{ABm} & = \frac{1}{m-1}U^a \p^{a}_m \delta_{AB} + \frac{1}{m-1}\left(S^{\p} - mR^{\p}_{mm}\right)\left(\frac{f_{AB}}{|\nabla f|} - \frac{f_{CC}}{(m-1)|\nabla f} \delta_{AB}\right).
	\end{align*}
	Since $i: \Sigma \hookrightarrow M$ is totally umbilical it holds $f_{AB} = \frac{f_{CC}}{m-1} \delta_{AB}$ so that
	\begin{align*}
		C^{\p}_{ABm} = \frac{U^a \p^a_m}{m-1} \delta_{AB};
	\end{align*}
	from \eqref{local struct: fibre: sistema bis} we get \eqref{local struct: fibre: W phi 0} as we wanted to prove.
\end{proof}

\begin{lemma}\label{lemma: local struct: ric phi sigma}
	Under the assumptions of Proposition \ref{local struct: prop: fibre harm einstein}, for every regular level set $\Sigma$ of $f$ it holds
	\begin{align}\label{local struct: fibre: harm einstein 1 bis}
			\ric^{\p_{|_{\Sigma}}} = \frac{S^{\p_{|_{\Sigma}}}}{m-1} g_{\Sigma}
	\end{align}
	and
	\begin{equation}\label{local struct: fibre: S phi sigma}
		S^{\varphi_{|_{\Sigma}}} = S^{\varphi}- \frac{2}{\abs{\nabla f}^2}\Ric^{\varphi}(\nabla f,\nabla f) + (m-2)(m-1) H^2.
	\end{equation}
\end{lemma}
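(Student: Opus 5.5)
The plan is to use the Gauss equations for the totally umbilical hypersurface $\Sigma\hookrightarrow M$ and express the $\p$-curvatures of $(\Sigma,g_\Sigma)$ through those of $M$. We work at a regular point $p\in\Sigma$ with the adapted frame $e_m=\nabla f/\abs{\nabla f}$. Because $\p$ is $\frac1\alpha U$-harmonic and \eqref{local struct: 1 cond di int 2} holds, we have $\di\p(\nabla f)=0$, i.e. $\p^a_m=0$. Hence the pullback of $h$ under $\p_{|_\Sigma}$ is just $\p^a_A\p^a_B$. The Gauss equation then gives
\begin{align*}
	R^\Sigma_{AB}=R_{AB}-R_{mAmB}+H\Pi_{AB}-\Pi_{AC}\Pi_{CB},
\end{align*}
and therefore
\begin{align*}
	\ric^{\p_{|_\Sigma}}_{AB}=R^\p_{AB}-R_{mAmB}+H\Pi_{AB}-\Pi_{AC}\Pi_{CB}.
\end{align*}
By Corollary \ref{local struct: cor: D B =0} we have $\Pi_{AB}=\frac{H}{m-1}\delta_{AB}$, so the last two terms combine to $\frac{m-2}{(m-1)^2}H^2\delta_{AB}$. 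Likewise, by \eqref{local struct: fibra: fibra einstein 1 eq} (coming from $D^A=0$ via Proposition \ref{local struct: prop: D(P) è 0}), $R^\p_{AB}$ is a multiple of $\delta_{AB}$.

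The remaining term is $R_{mAmB}$. We decompose it using \eqref{phi curv: phi weyl}:
\begin{align*}
	R_{mAmB}=W^\p_{mAmB}+\frac{1}{m-2}(A^\p\KN g)_{mAmB}.
\end{align*}
Lemma \ref{local struct: lemma: W phi mamb} gives $W^\p_{mAmB}=0$. The Kulkarni--Nomizu term equals $\frac{1}{m-2}\pa{A^\p_{mm}\delta_{AB}+A^\p_{AB}}$, which uses $A^\p_{mA}=0$ (true since $R^\p_{Am}=0$). This is again proportional to $\delta_{AB}$. It follows that $\ric^{\p_{|_\Sigma}}$ is pointwise proportional to $g_\Sigma$, and taking the trace identifies the factor as $S^{\p_{|_\Sigma}}/(m-1)$. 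This proves \eqref{local struct: fibre: harm einstein 1 bis}.

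For \eqref{local struct: fibre: S phi sigma}, we trace the Gauss equation directly, without the Weyl decomposition. The twice-traced Gauss identity reads $S^\Sigma=S-2R_{mm}+H^2-\abs{\Pi}^2$. We subtract $\alpha\abs{\di\p_{|_\Sigma}}^2=\alpha\abs{\di\p}^2$ (again using $\p^a_m=0$), and we note that $R_{mm}=R^\p_{mm}=\ric^\p(\nabla f,\nabla f)/\abs{\nabla f}^2$. Umbilicity gives $\abs{\Pi}^2=H^2/(m-1)$, so $H^2-\abs{\Pi}^2=\frac{m-2}{m-1}H^2$.

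The main obstacle is the normalization of the $H^2$ coefficient. The stated factor $(m-2)(m-1)H^2$ suggests the authors normalize $H$ differently, so the actual check is whether $H$ denotes the trace or the normalized mean curvature. I would therefore recheck $\Pi_{AB}=\frac{H}{m-1}\delta_{AB}$ against $\theta^m_{\ A}(e_B)=\Pi_{AB}$ and adjust the coefficient accordingly, rather than expect any conceptual difficulty there. The rest is bookkeeping of the Kulkarni--Nomizu components and signs in the convention $R_{mAmB}$.
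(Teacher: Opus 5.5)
Your proof is correct and is essentially the paper's: you use the Gauss equation for the umbilical level set, $R^\p_{AB}\propto\delta_{AB}$ from $D^A=0$, and $W^\p_{mAmB}=0$ to make $R_{mAmB}$ pure trace. For the scalar identity the paper computes $R_{mAmB}=\frac{1}{m-1}R^\p_{mm}\delta_{AB}$ and traces the resulting formula for $R^{\p_{|_\Sigma}}_{AC}$, whereas you trace the Gauss equation directly, which only needs umbilicity and $\p^a_m=0$. Your suspicion about the $H^2$ coefficient is justified, and the inconsistency is in the paper rather than in your computation: with $H=\tr\Pi$ as the paper defines it, the correct term is $\frac{m-2}{m-1}H^2$, while the paper's own Gauss line ${}^\Sigma R_{AC}=R_{AC}-R_{AmCm}+(m-2)H^2\delta_{AC}$ and the stated $(m-2)(m-1)H^2$ hold for the normalized mean curvature $\frac{1}{m-1}\tr\Pi$; this is harmless, since afterwards only $H_A=0$ is used.
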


\begin{proof}
	Since $D^A = 0$, Proposition \ref{local struct: prop: D(P) è 0} gives
	\begin{equation}\label{local struct: fibra: fibra einstein 1 eq bis}
		R^{\p}_{AB} = \frac{S^{\p} - R^{\p}_{mm}}{m-1} \delta_{AB}.
	\end{equation}
	Using \eqref{local struct: fibre: W phi 0} and the definition of $W^{\varphi}$ we deduce
	\begin{align*}
		R_{mAmB} = & \; W^{\varphi}_{mAmB} + \frac{1}{m-2}(R^{\varphi}_{mm}\delta_{AB} + R^{\varphi}_{AB}\delta_{mm}- R^{\varphi}_{mB}\delta_{Am} -R^{\varphi}_{Am}\delta_{mB})\notag\\
		&-\frac{S^{\varphi}}{(m-1)(m-2)}(\delta_{mm}\delta_{AB} - \delta_{mB}\delta_{mA})\\
		= & \; \frac{1}{m-2} (R^{\varphi}_{mm}\delta_{AB} + R^{\varphi}_{AB} - \frac{S^{\varphi}}{m-1}\delta_{AB}),
	\end{align*}
	so that \eqref{local struct: fibra: fibra einstein 1 eq bis} implies
	\begin{equation}\label{local struct: fibre: Riemann}
		R_{mAmB} = \frac{1}{m-1} R^{\varphi}_{mm}\delta_{AB}.
	\end{equation}
	From the Gauss equation and the fact that $i:\Sigma \hookrightarrow M$ is totally umbilical we deduce
	\begin{equation*}
		^\Sigma R_{AC} = R_{AC} -  R_{AmCm} + (m-2) H^2\delta_{AC}
	\end{equation*}
	so that, by the definition of $\ric^{\varphi_{|\Sigma}},$ we obtain
	\begin{equation*}
		R^{\varphi_{|_{\Sigma}}}_{AC} =\  ^\Sigma R_{AC} - \alpha \p^a_A \p^a_C = R^{\varphi}_{AC} - R_{AmCm} + (m-2) H^2 \delta_{AC}.
	\end{equation*}
	From \eqref{local struct: fibra: fibra einstein 1 eq bis} and \eqref{local struct: fibre: Riemann} we have
	\begin{align*}
		R^{\varphi_{|_{\Sigma}}}_{AC} &= \frac{S^{\varphi}-R^\p_{mm}}{m-1}\delta_{AC} - \frac{1}{m-1} R^{\varphi}_{mm}\delta_{AC} + (m-2) H^2\delta_{AC}\\
		&= \bigg[ \frac{1}{m-1} (S^{\varphi} - 2 R^{\varphi}_{mm}) + (m-2) H^2\bigg]\delta_{AC}.
	\end{align*}
	Tracing the above equation we get \eqref{local struct: fibre: S phi sigma} and then \eqref{local struct: fibre: harm einstein 1 bis}.
\end{proof}

\begin{proof}[Proof of Proposition \ref{local struct: prop: fibre harm einstein}]
 	In view of Lemma \ref{local struct: lemma: W phi mamb} and Lemma \ref{lemma: local struct: ric phi sigma}, we only need to prove \eqref{local struct: fibre: harm einstein 2}.
 	 We first need to prove that $S^{\varphi_{|_{\Sigma}}}$ is constant.
 	Since $D^A \equiv 0$, by its definition, \eqref{Deduzione D A definizione}, we get
 	\begin{equation*}
 		  0 = R^{\varphi}_{ij} f_k - R^{\varphi}_{ik} f_j + \frac{1}{m-1}f_l (R^{\varphi}_{lk} \delta_{ij}- R^{\varphi}_{lj}\delta_{ik}) - \frac{S^{\varphi}}{m-1}(f_{k}\delta_{ij}-f_j\delta_{ik}).
 	\end{equation*}
 	Taking the divergence of the latter with respect to $i$ and using the $\varphi$-Schur identity we obtain
 	\begin{align*}
 		0 = & \; R^{\varphi}_{ij,i} f_k + R^{\varphi}_{ij} f_{ik} - R^{\varphi}_{ik,i} f_j - R^{\varphi}_{ik} f_{ij}  \\
 		&+ \frac{1}{m-1}f_{l} (R^{\varphi}_{lk,j} - R^{\varphi}_{lj,k}) + \frac{1}{m-1} (f_{lj}R^{\varphi}_{lk}  - f_{lk}R^{\varphi}_{lj}) \ \\
 		&- \frac{1}{m-1} (S^{\varphi}_{j}f_{k} - S^{\varphi}_kf_{j}) - \frac{1}{m-1} S^{\varphi}(f_{jk}-f_{kj})\\
 		= & \; \frac{1}{2} \big( S^{\varphi}_{j} f_k - S^{\varphi}_{k}f_j \big)- \alpha\p^{a}_{ll} \big( \p^{a}_j f_k - \p^{a}_k f_j \big) + \frac{m-2}{m-1}( R^{\varphi}_{ij}f_{ik}-R^\p_{ik}f_{ij})\\
 		&+ \frac{1}{m-1} f_l(R^{\varphi}_{lk,j}- R^{\varphi}_{lj,k}) - \frac{1}{m-1} (S^{\varphi}_{j}f_k - S^{\varphi}_kf_j).
 	\end{align*}
 	Using the definition of $C^{\varphi}$ and
 		$\tau (\p) = \frac{\nabla^h U}{\alpha} $
 	 we get
	\begin{align*}
		0 = & \; \Big[ \frac{1}{2} + \frac{1}{2(m-1)^2} - \frac{1}{m-1} \Big] \Big(S^{\varphi}_{j} f_k - S^{\varphi}_{k}f_j\Big) + \frac{m-2}{m-1}\Big(R^{\varphi}_{lj}f_{lk}-R^\p_{lk}f_{lj} \Big)\\
		&+ \frac{1}{m-1} f_l C^{\p}_{lkj} - U^a\pa{\p^a_jf_k-\p^a_kf_j}.
	\end{align*}
	From \eqref{local struct: 1 cond di int 1} and $D^A\equiv D^B\equiv 0$, we have
	\begin{equation*}
		f_l C^{\p}_{lkj} = \frac{1}{m-1}U^a\p^a_jf_k - \frac{1}{m-1}U^a\p^a_kf_j,
	\end{equation*}
	so that
	\begin{align*}
		0 = & \; \frac{(m-2)^2}{2(m-1)^2} \Big(S^{\varphi}_{j} f_k - S^{\varphi}_{k}f_j\Big) + \frac{m-2}{m-1}\Big(R^{\varphi}_{lj}f_{lk}-R^\p_{lk}f_{lj} \Big) + \frac{1-(m-1)^2}{(m-1)^2}U^a\pa{\p^a_j f_k-\p^a_k f_j}.
	\end{align*}
	From Proposition \ref{local struct: prop: D(P) è 0}, and since $D^A\equiv D^B\equiv 0$,	$\ric^{\varphi}$ and $\hess f$ have the same eigenspaces so that they commute, i.e.
		$R^{\varphi}_{lj}f_{lk} - R^{\varphi}_{lk}f_{lj} = 0.$
	Therefore,
	\begin{equation*}
		0 = \Big[ \frac{(m-2)^2}{2}S^{\varphi}_{j} + (1 - (m-1)^2)U^{a}\p^a_{j}\Big]f_{k} - \Big[ \frac{(m-2)^2}{2}S^{\varphi}_{k} + (1 - (m-1)^2)U^a\p^a_k \Big]f_j.
	\end{equation*}	
	Choosing $j=A$ and $k=m$ we get
	\begin{align*}
		0 = & \; \left[ \frac{(m-2)^2}{2}S^{\varphi}_{A} + (1 - (m-1)^2)U^{a}\p^a_{A}\right]f_{m} - \left[ \frac{(m-2)^2}{2}S^{\varphi}_{m} + (1 - (m-1)^2)U^a\p^a_m \right]f_A\\
		= & \; \abs{\nabla f}\left[ \frac{(m-2)^2}{2}S^{\varphi}_{A} + (1 - (m-1)^2)U^{a}\p^a_{A}\right], 
	\end{align*}
	so that
	\begin{equation}\label{local struct: fibre: S phi e U parte 1}
		 \quad 0 = \frac{(m-2)^2}{2}S^{\varphi}_{A} + (1 - (m-1)^2)U^{a}\p^a_{A}. 
	\end{equation}
	Differentiate \eqref{local struct: fibre: S phi sigma} to get
	\begin{align*}
		S^{\varphi_{|_{\Sigma}}}_{A} = & \; S^{\varphi}_{A} - 2R^{\varphi}_{mm,A} + 2(m-2)(m-1) H H_A\\
		= & \; S^{\varphi}_{A} - 2R^{\varphi}_{mm,A},
	\end{align*}
	where the last equality follows because $H$ is locally constant on $\Sigma$.
	Now,
	\begin{align*}
		R^{\varphi}_{mm,A} = & \; R_{mm,A} - 2 \alpha\p^a_m\p^a_{Am}\\
		= & \; R_{mm,A}
	\end{align*}
	since $\p^a_m=0$. Since $g$ is a warped product metric, $R_{mm}$ is a function of $r$ alone so that it is locally constant on $\Sigma$, see Chapter 1.7 of \cite{AMR}. Therefore $R_{mm,A}=0$, and $S^{\varphi_{|_{\Sigma}}}_{A} = S^{\varphi}_{A}$ so that \eqref{local struct: fibre: S phi e U parte 1} becomes
 	\begin{equation}\label{local struct: fibre: S phi e U parte 2}
 		0 = \frac{(m-2)^2}{2}S^{\varphi_{|_{\Sigma}}}_{A} + (1 - (m-1)^2)U^{a}\p^a_{A}.
 	\end{equation}
 	By a direct computation, we have 
 	\begin{align*}
 		\tau^a(\varphi_{|_{\Sigma}}) = & \; \tau^a(\varphi) - \p^a_{mm} + (m-1)\p^a_mH\\
 		= & \; \tau^a(\varphi) - \p_{mm}^{a}.
 	\end{align*}
	We want to prove that $\p^a_{mm} = 0$. Since $ \tau(\varphi) = \frac{\nabla^h U}{\alpha}$, \eqref{local struct: 1 cond di int 2} gives $\p_l^{a} f_l = 0$. Differentiating, since $f_{Am} = 0 = \p_m^{a}$,
 	\begin{align*}
 		0 = & \; \p_{lm}^{a}f_l  + \p_l^{a}f_{lm}\\
 		= & \; \p_{mm}^{a} |\nabla f| + \p^a_m f_{mm} + \p^a_A f_{Am}\\
 		= & \; \p_{mm}^{a} |\nabla f|
 	\end{align*}
 	so that
 	\begin{align}\label{local strcut: fibre: phi sigma è U armonica}
 	 \tau^a(\varphi_{|_{\Sigma}}) =  \tau(\varphi) = \frac{\nabla^h U}{\alpha}.
 	\end{align}
 	Taking the divergence of \eqref{local struct: fibre: harm einstein 1 bis}
 	and using the $\varphi$-Schur  identity we get
 	\begin{equation*}
 		R^{\varphi_{|_{\Sigma}}}_{AC,C} = \frac{1}{2} S^{\varphi_{|_{\Sigma}}}_{A} - \alpha \tau^a(\varphi_{|_{\Sigma}})\p_A^{a} = \frac{S^{\varphi_{|_{\Sigma}}}_{A}}{m-1}
 	\end{equation*}
 	so that
 	\begin{equation}\label{local struct: fibre: S phi e U parte 3}
 		\frac{m-2}{2(m-1)}S^{\varphi_{|_{\Sigma}}}_{A} =  \alpha\tau^a(\varphi_{|_{\Sigma}}) \p_{A}^{a} = U^{a}\p_A^a
 	\end{equation}
 	Combining with \eqref{local struct: fibre: S phi e U parte 2} we get
 	\begin{equation*}
 		\frac{(m-2)^2}{2} S^{\varphi_{|_{\Sigma}}}_{A} + (1-(m-1)^2 )\frac{m-2}{2(m-1)} S^{\varphi_{|_{\Sigma}}}_{A}=0;
 	\end{equation*}
 	since
 	\begin{equation*}
 		\frac{(m-2)^2}{2} + (1-(m-1)^2)\frac{m-2}{2(m-1)}=-\frac{(m-2)^2}{2(m-1)} <0
 	\end{equation*}
 	we obtain
 		$S^{\varphi_{|_{\Sigma}}}_{A}=0.$
 	From \eqref{local struct: fibre: S phi e U parte 2} we also have
 		$U^{a}\p_A^{a} = 0$
 	and from \eqref{local strcut: fibre: phi sigma è U armonica}
 	we deduce \eqref{local struct: fibre: harm einstein 2}.
 \end{proof}
 	\begin{proof}[Proof of Theorem \ref{local struct: teorema warp product}]
 		This follows from Propositions \ref{local struct: prop: D(P) è 0} and \ref{local struct: prop: fibre harm einstein} combined.
 	\end{proof}

\subsection{Proof of the main Theorem}
We finally prove Theorem \ref{Introduzione: codazzi: teo: teoremone} of the Introduction, that we restate for the convenience of the reader.
\begin{theorem}\label{codazzi: teo: teoremone}
	Let $(M,g)$ be a complete $m$-dimensional Riemannian manifold, $m\geq 3$, and let $\p:(M,g)\to (N,h)$ be a smooth map of Riemannian manifolds. Let $\alpha \in \RR,\alpha > 0, f\in C^\infty(M), U\in C^\infty(N)$. Assume that $(M,g)$ satisfies
	\begin{align*}
		\begin{cases}
			0 = C^\p_{ijk}+f_lW^\p_{lijk}-\dfrac{1}{m-1}U^a\pa{\p^a_k\delta_{ij}-\p^a_j\delta_{ik}}-D^A_{ijk}-(m-2)D^B_{ijk},\\[0.2cm]
			\tau(\p)-\di \p(\nabla f)=\dfrac{1}{\alpha} \nabla^h U.
		\end{cases}
	\end{align*}
	Let $S^2(M)$ be the space of $2$-covariant, symmetric tensors on $M$ and define a linear map $F:S^2(M)\to C^\infty(M)$ by setting, for $\beta\in S^2(M), \beta=\beta_{lk}\theta^l\otimes \theta^k$ locally,
	\begin{align}\label{codazzi: teoremone: funzionale F}
		F(\beta):=\pa{\frac{m}{m-1}U^a\p^a_lf_k+D^A_{lik}f_i-W^\p_{likj}f_if_j}\beta_{lk}.
	\end{align}
	Assume that
	\begin{enumerate}[1.]
		\item $f$ is proper;\label{Codazzi: main: f propria}
		\item $B^\p(\nabla f,\nabla f)=0$;\label{Codazzi: main: bach piatto}
		\item $\p$ is $\frac{1}{\alpha} U$-harmonic;\label{Codazzi: main: U armonica}
		\item for all $p\in M$ regular for $f$, we have that $\nabla f$ is an eigenvector of $\ric^\p$ at $p$;\label{Codazzi: main: autovett}
		\item $\ric^\p+\hess f\in \ker (F)$.\label{Codazzi: main: funzionale}
	\end{enumerate}
	Then, for each regular level set $\Sigma$ of $f$ and for every $p\in \Sigma$, there exists $A\subset M$ open such that $p\in A$ and $g_{|_A}$ is a warped product metric.  Moreover, $(\Sigma,g_{\Sigma})$ satisfies
	\begin{align*}
		\begin{cases}
			\ric^{\p_{|_{\Sigma}}}=\dfrac{S^{\p_{|_{\Sigma}}}}{m-1}g_\Sigma,\\[0.2cm]
			\tau(\p_{|_{\Sigma}})=0.
		\end{cases}
	\end{align*}
\end{theorem}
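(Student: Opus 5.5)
The strategy is to reduce everything to Theorem \ref{local struct: teorema warp product}: it suffices to prove $D^A\equiv D^B\equiv 0$ on $M$. Assumption \ref{Codazzi: main: U armonica} then supplies the remaining hypothesis. Combined with \eqref{local struct: 1 cond di int 2} it also gives $\di\p(\nabla f)=0$, which yields $\tau(\p_{|_\Sigma})=0$ rather than merely $h(\tau,\di\p)=0$. Indeed, in the proof of Proposition \ref{local struct: prop: fibre harm einstein} we found $\tau(\p_{|_\Sigma})=\nabla^hU/\alpha$ and $U^a\p^a_A=0$. On $\Sigma$ the map has no normal derivative, so I would argue $\nabla^h U(\p)$ is tangent to $\di\p(T\Sigma)$ and hence vanishes; I would need to double check this small point.

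To obtain the vanishing of $D^A$ and $D^B$, I would follow Cao--Chen. The first step is to prove a divergence identity generalizing \eqref{2nd int cond}, i.e.\ the announced Proposition \ref{teoremone: prop: divergenza di D bar e co}. I would set $\bar D:=D^A+(m-2)D^B$, so that the first equation reads $\bar D_{ijk}=C^\p_{ijk}+f_lW^\p_{lijk}-\frac{1}{m-1}U^a(\p^a_k\delta_{ij}-\p^a_j\delta_{ik})$. Taking the divergence in $k$, contracting with $f_if_j$, and using the definition \eqref{phi curv: phi Bach} of $B^\p$, the $\p$-Schur identity, \eqref{phi curv: divergenza di W phi} (or the Ricci identities for $m=3$) and $\tau(\p)=\nabla^hU/\alpha$, I expect an identity of the form
\begin{align*}
(m-2)B^\p_{ij}f_if_j = \pa{\bar D_{ijk}f_if_j}_{,k}+ c_m\,\abs{\bar D}^2\text{-type terms} + F(\ric^\p+\hess f),
\end{align*}
where the terms quadratic in $D$ come from writing $R^\p_{lk}$ and $f_{lk}$ back through $D^A$, $D^B$. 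The functional $F$ in \eqref{codazzi: teoremone: funzionale F} has been designed exactly to collect the leftover pieces. The term $\frac{m}{m-1}U^a\p^a_lf_k$ comes from the potential. The term $D^A_{lik}f_i$ and the Weyl term come from contracting $W^\p$ with $\hess f$ and $\ric^\p$, and the Ricci identity $f_{ijk}-f_{ikj}=f_lR_{lijk}$ splits this via $W^\p$ as in \eqref{Deduzione: ricci rel + W phi}. Assumption \ref{Codazzi: main: autovett} kills the cross terms $R^\p_{mA}$, so $D^A$ reduces to its $D_{ABm}$ part. Note that $f_iD^A_{ijk}=0$ by the computation in Proposition \ref{local struct: prop: D(P) è 0}, which yields $\abs{D^A}^2$ and $\abs{D^B}^2$ plus mixed terms controllable by Cauchy--Schwarz.

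Using hypotheses \ref{Codazzi: main: bach piatto} and \ref{Codazzi: main: funzionale}, the identity becomes $\div X = a\abs{D^A}^2+b\abs{D^B}^2$ with $a,b>0$ (here $\alpha>0$ is used to sign the $\alpha\abs{\di\p}^2$-type contributions). The vector field is $X_k=\bar D_{ijk}f_if_j$, possibly corrected by a multiple of $f_kU^a\p^a_lf_l$, which vanishes anyway. By properness, $\Omega_{\delta,\eta}=\{\delta\le f\le\eta\}$ is compact for regular values $\delta<\eta$ (Sard). I would integrate over it. On the boundary the normal is $\pm\nabla f/\abs{\nabla f}$, and $X_kf_k=\bar D_{ijk}f_if_jf_k=0$ by antisymmetry in $j,k$. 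Hence $\int_{\Omega_{\delta,\eta}}(a\abs{D^A}^2+b\abs{D^B}^2)=0$. Exhausting $M$, this gives $D^A\equiv D^B\equiv0$, and Theorem \ref{local struct: teorema warp product} concludes.

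The main obstacle is the first step: the tensorial computation producing a divergence whose remainder is exactly a positive combination of $\abs{D^A}^2,\abs{D^B}^2$ plus $F(\ric^\p+\hess f)$ and $B^\p(\nabla f,\nabla f)$. In particular, the mixed product $D^A\cdot D^B$ must either cancel or be absorbed. I would first test the computation in the special case $D^A=D^B$ from \eqref{2nd int cond}, then track the extra $D^B$ terms.
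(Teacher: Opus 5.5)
Your route to $D^A\equiv D^B\equiv 0$ is the paper's. You contract the first equation with $\nabla f\otimes\nabla f$ to get a Cao--Chen field, whose flux through regular level sets vanishes by antisymmetry in $j,k$. You integrate over the compact sets $\Omega_{\delta,\eta}$ and then apply Theorem~\ref{local struct: teorema warp product}. However, the step giving $\tau(\p_{|_\Sigma})=0$ fails. The proof of Proposition~\ref{local struct: prop: fibre harm einstein} yields $\p^a_m=0$, $U^a\p^a_A=0$ and $\tau(\p_{|_\Sigma})=\tau(\p)=\nabla^hU(\p)/\alpha$. This only says that $\nabla^hU(\p)$ is \emph{orthogonal} to $\di\p(TM)$. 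There is no reason for it to be tangent to $\di\p(T\Sigma)$, since a tension field generally has a component normal to the image. For example, $\theta\mapsto(\cos\theta,\sin\theta)\in\RR^2$ with $U(y)=-\frac{\alpha}{2}\abs{y}^2$ is $\frac{1}{\alpha}U$-harmonic and $U$ is constant along it, yet $\tau=\nabla U/\alpha\neq 0$. So $\nabla^hU=0$ cannot be obtained pointwise. In the paper it comes out of the integral identity, through a nonnegative term that your predicted identity $\div X=a\abs{D^A}^2+b\abs{D^B}^2$ omits.

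Concretely, the paper uses $Y_k=X_k+\frac{1}{m-1}\abs{\nabla f}^2U^a\p^a_k$, which by the first equation equals $C^\p_{ijk}f_if_j$. The correction you suggest, a multiple of $f_kU^a\p^a_lf_l$, is identically zero and changes nothing. Two contributions combine here: the term $\frac{1}{(m-1)(m-2)}U^a\p^a_{ll}\delta_{ij}$ of $\pa{D^A+(m-2)D^B}_{ijk,k}$ contracted with $f_if_j$, and the divergence of the correction. Together they give $\frac{\abs{\nabla f}^2}{m-2}U^a\p^a_{jj}=\frac{\abs{\nabla f}^2}{\alpha(m-2)}\abs{\nabla^hU}^2$, and under assumptions 2--5
\[
\div Y=\frac{m-2}{2}\pa{\abs{D^A}^2+(m-2)\abs{D^B}^2}+\frac{\abs{\nabla f}^2}{\alpha(m-2)}\abs{\nabla^hU}^2 .
\]
This is where $\alpha>0$ enters; the $\alpha\,\p^a_l\p^a_i$-terms you mention vanish since $\p^a_if_i=0$. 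Integrating gives $\nabla^hU(\p)=0$ at regular points, hence $\tau(\p)=0$ on $\Sigma$ and $\tau(\p_{|_\Sigma})=\tau(\p)-\p_{mm}=0$.

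With your $X$ instead, $\div X$ also contains the unsigned terms $-\frac{1}{m-1}\big(\abs{\nabla f}^2U^{ab}\p^a_k\p^b_k+2f_lf_{lk}U^a\p^a_k\big)$. These integrate to zero over $\Omega_{\delta,\eta}$ because $X-Y\perp\nabla f$, but your pointwise identity is false as stated.

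Several other predicted features also differ:
\begin{itemize}
\item No $D^A\cdot D^B$ term occurs and no Cauchy--Schwarz is used. The squares come from the exact identities $R^\p_{ij}f_kD^A_{ijk}=\frac{m-2}{2}\abs{D^A}^2$ and $f_{ik}f_jD^B_{ijk}=\frac{m-2}{2}\abs{D^B}^2$.
\item The leftover $(R^\p_{ik}+f_{ik})f_jD^A_{ijk}$ is exactly the $D^A$-part of $F(\ric^\p+\hess f)$, so it is removed by assumption 5.
\item Assumption 4 does not act on $D^A$. Together with $\p^a_lf_l=0$, it kills the term $-\frac{m}{m-1}U^a\p^a_lR^\p_{lk}f_k$ that remains after subtracting $F(\ric^\p+\hess f)$.
\end{itemize}
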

We divide the proof of Theorem \ref{codazzi: teo: teoremone} in several propositions.
First, we set some notations.
For an arbitrary smooth function $\lambda$ on $M$, set
\begin{align}\label{codazzi 1: tensore zeta}
	Z:=\ric^\p+\hess f-\di f\otimes \di f-\lambda g
\end{align}
and
\begin{align}\label{codazzi 1: D barrato}
	\overline{D}_{ijk}=Z_{ij,k}-Z_{ik,j}-\frac{1}{m-1}\sq{\pa{Z_{ll,k}-Z_{lk,l}}\delta_{ij}-\pa{Z_{ll,j}-Z_{lj,l}}\delta_{ik}}.
\end{align}
Clearly, we have that $\overline{D}\equiv 0$ if and only if $Z$ is a Codazzi tensor. Moreover, an easy computation gives the following
\begin{lemma}\label{codazzi 1:lemma: D bar e prima cond}
	Let $Z$ and $\overline{D}$ be defined as in \eqref{codazzi 1: tensore zeta} and \eqref{codazzi 1: D barrato}. Then we have
	\begin{equation}\label{codazzi 1: prima cond int con D bar}
		D^A_{ijk} +(m-2)D^B_{ijk}+\overline{D}_{ijk} = C^\p_{ijk}+f_lW^\p_{lijk}-\frac{\alpha}{m-1}\pa{\p^a_{ll}-\p^a_lf_l}\pa{\p^a_k\delta_{ij}-\p^a_j\delta_{ik}}.
	\end{equation}
\end{lemma}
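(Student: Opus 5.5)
The identity \eqref{codazzi 1: prima cond int con D bar} is purely algebraic-differential and holds for arbitrary $f$, $\p$ and $\lambda$, with no field equations assumed. I will prove it by expanding $\overline{D}$ directly from the definitions \eqref{codazzi 1: tensore zeta}--\eqref{codazzi 1: D barrato}. I then rewrite the third-order terms in $f$ via the Ricci commutation relations and the decomposition of $\Riem$ through $W^\p$, exactly as in \eqref{Deduzione: ricci rel + W phi}, and finally recognize the definitions \eqref{Deduzione D A definizione} and \eqref{Deduzione: D B definizione} of $D^A$ and $D^B$.

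\emph{Step 1: the skew part.} First I compute
\[
Z_{ij,k}-Z_{ik,j}=R^\p_{ij,k}-R^\p_{ik,j}+f_{ijk}-f_{ikj}+f_{ij}f_k-f_{ik}f_j-\lambda_k\delta_{ij}+\lambda_j\delta_{ik}.
\]
The terms $f_if_{jk}$ cancel by symmetry of $\hess f$. By definition of $A^\p$,
\[
R^\p_{ij,k}-R^\p_{ik,j}=C^\p_{ijk}+\tfrac{1}{2(m-1)}\pa{S^\p_k\delta_{ij}-S^\p_j\delta_{ik}} .
\]

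\emph{Step 2: the trace part.} Next I compute the trace correction $Z_{ll,k}-Z_{lk,l}$. From $Z_{ll}=S^\p+\Delta f-\abs{\nabla f}^2-m\lambda$ I get $Z_{ll,k}=S^\p_k+f_{llk}-2f_lf_{lk}-m\lambda_k$. For $Z_{lk,l}$ I use three facts:
\begin{itemize}
\item the $\p$-Schur identity \eqref{phi curv: phi Schur};
\item the commutation $f_{lkl}=f_{llk}+R_{kl}f_l$;
\item the splitting $R_{kl}=R^\p_{kl}+\alpha\p^a_k\p^a_l$.
\end{itemize}
This yields
\[
Z_{ll,k}-Z_{lk,l}=\tfrac12 S^\p_k+\alpha\p^a_{ll}\p^a_k-\alpha\p^a_k\p^a_lf_l-R^\p_{kl}f_l-f_lf_{lk}+(\Delta f)f_k-(m-1)\lambda_k .
\]

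\emph{Step 3: assembling $\overline{D}$.} Inserting Steps 1 and 2 into \eqref{codazzi 1: D barrato}, several terms cancel:
\begin{itemize}
\item the $\lambda$-terms cancel exactly, since the trace part contributes $+\lambda_k\delta_{ij}-\lambda_j\delta_{ik}$;
\item the $S^\p$-gradient terms cancel, since $\tfrac{1}{2(m-1)}-\tfrac{1}{2(m-1)}=0$.
\end{itemize}
What survives is
\begin{align*}
\overline{D}_{ijk}={}&C^\p_{ijk}+f_{ijk}-f_{ikj}+f_{ij}f_k-f_{ik}f_j\\
&+\tfrac{1}{m-1}\big[f_l(R^\p_{lk}\delta_{ij}-R^\p_{lj}\delta_{ik})+f_l(f_{lk}\delta_{ij}-f_{lj}\delta_{ik})-\Delta f(f_k\delta_{ij}-f_j\delta_{ik})\big]\\
&-\tfrac{\alpha}{m-1}\pa{\p^a_{ll}-\p^a_lf_l}\pa{\p^a_k\delta_{ij}-\p^a_j\delta_{ik}} .
\end{align*}

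\emph{Step 4: matching $D^A$ and $D^B$.} I substitute \eqref{Deduzione: ricci rel + W phi} for $f_{ijk}-f_{ikj}$. This is the same manipulation as in passing from \eqref{Deduzione: C-phi 1 riscritta} to \eqref{Deduzione: spezzamento: quasi versione con D A}. It produces $f_lW^\p_{lijk}$ together with the $R^\p$ and $S^\p$ terms which, combined with the $\tfrac1{m-1}f_lR^\p$ terms above, give exactly $-D^A_{ijk}$. The remaining Hessian terms give $-(m-2)D^B_{ijk}$. Moving $D^A+(m-2)D^B$ to the left-hand side gives \eqref{codazzi 1: prima cond int con D bar}.

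The main obstacle is bookkeeping rather than any idea: the signs and the coefficients $\tfrac{1}{m-1}$ versus $\tfrac{1}{(m-1)(m-2)}$ in the $R^\p$-terms must combine into precisely the coefficients of $D^A$. I would cross-check this against the already-verified computation leading to \eqref{Deduzione: C phi prima: versione con le D}, which involves the same terms.
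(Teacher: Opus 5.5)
Your proposal is correct and follows essentially the same route as the paper's proof in the appendix. Both expand $\overline{D}$ from the definition of $Z$, compute $Z_{ll,k}-Z_{lk,l}$ via the $\p$-Schur identity and the Ricci commutation relations, then substitute the $W^\p$-decomposition of $f_{ijk}-f_{ikj}$ and recognize $-D^A$ and $-(m-2)D^B$. Your intermediate formulas check out, including the cancellation of the $\lambda$- and $\nabla S^\p$-terms and the combination $\frac{1}{m-1}-\frac{1}{m-2}=-\frac{1}{(m-1)(m-2)}$ in the $R^\p$-trace terms.
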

In particular, we have that \eqref{Deduzione: C-phi-PF: versione con le D 1} holds if $Z$ is Codazzi and \eqref{Deduzione: C-phi-PF: versione con le D 3} holds.
See Appendix \ref{Appendice: tensori di Codazzi} for a proof of Lemma \ref{codazzi 1:lemma: D bar e prima cond}. We compute the divergence of \eqref{codazzi 1: prima cond int con D bar}.
\begin{proposition}\label{teoremone: prop: divergenza di D bar e co}
	Assume that
	\begin{equation}\label{teoremone: prop: seconda del sistema}
		\tau(\p)-\di \p(\nabla f)=\frac{\nabla^h U}{\alpha}.
	\end{equation}
	Then we have
	\begin{align}\label{teoremone: prop: divergenza, formula}
		D^A_{ijk,k}&+(m-2)D^B_{ijk,k}+\overline{D}_{ijk,k} \nonumber\\[0.2cm]
		= & \; (m-2)B^\p_{ij}-(R^\p_{lk}+f_{lk})W^\p_{likj}+\alpha (R^\p_{lj}+f_{lj})\p^a_l\p^a_i \nonumber\\
		&-\frac{m-2}{m-1}U^a\p^a_{ij}+\frac{m}{m-1}U^{ab}\p^a_i\p^b_j+\frac{1}{(m-1)(m-2)}U^a\p^a_{ll}\delta_{ij}\nonumber\\
		&+\frac{m-3}{m-2}f_lC^\p_{jli}+\frac{\alpha}{m-2}f_j\p^a_i\p^a_{ll}-\frac{1}{m-1}U^{ab}\p^a_k\p^b_k \delta_{ij}. 
	\end{align}
\end{proposition}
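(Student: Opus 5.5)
The plan is to take the divergence (in the index $k$) of the identity \eqref{codazzi 1: prima cond int con D bar} of Lemma \ref{codazzi 1:lemma: D bar e prima cond}. The left-hand side then becomes exactly $D^A_{ijk,k}+(m-2)D^B_{ijk,k}+\overline{D}_{ijk,k}$. All the work is on the right-hand side: expand $C^\p_{ijk,k}+(f_lW^\p_{lijk})_{,k}$ minus the divergence of the $\p$-term, and rewrite the result through the definition \eqref{phi curv: phi Bach} of $B^\p$.

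\emph{Simplifying the $\p$-term.} Assumption \eqref{teoremone: prop: seconda del sistema} gives $\alpha(\p^a_{ll}-\p^a_lf_l)=U^a$, evaluated along $\p$. The last term of \eqref{codazzi 1: prima cond int con D bar} is therefore
\[
-\frac{1}{m-1}U^a\pa{\p^a_k\delta_{ij}-\p^a_j\delta_{ik}}.
\]
Its divergence is
\[
-\frac{1}{m-1}\sq{U^{ab}\p^b_k\p^a_k\delta_{ij}+U^a\p^a_{kk}\delta_{ij}-U^{ab}\p^b_i\p^a_j-U^a\p^a_{ji}}.
\]
This is the source of the $U^{ab}$ terms and of part of the $U^a\p^a_{ij}$ and $U^a\p^a_{ll}\delta_{ij}$ terms in \eqref{teoremone: prop: divergenza, formula}.

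\emph{The Cotton term.} By \eqref{phi curv: phi Bach},
\[
C^\p_{ijk,k}=(m-2)B^\p_{ij}-R^\p_{lk}W^\p_{likj}+\alpha R^\p_{lj}\p^a_l\p^a_i-\alpha\pa{\p^a_{ij}\p^a_{kk}-\p^a_{kkj}\p^a_i-\frac{\abs{\tau(\p)}^2}{m-2}\delta_{ij}}.
\]
In the bracket I would substitute $\p^a_{kk}=\p^a_lf_l+\frac{1}{\alpha}U^a$, and differentiate this to get $\p^a_{kkj}=\p^a_{lj}f_l+\p^a_lf_{lj}+\frac1\alpha U^{ab}\p^b_j$.

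\emph{The Weyl term.} Write $(f_lW^\p_{lijk})_{,k}=f_{lk}W^\p_{lijk}+f_lW^\p_{lijk,k}$. Using the pair and skew symmetries, $f_{lk}W^\p_{lijk}=-f_{lk}W^\p_{likj}$. For $W^\p_{lijk,k}=W^\p_{kjli,k}$ I would apply the divergence formula \eqref{phi curv: divergenza di W phi}, namely
\[
W^\p_{kjli,k}=-\frac{m-3}{m-2}C^\p_{jli}+\alpha\pa{\p^a_{jl}\p^a_i-\p^a_{ji}\p^a_l}+\frac{\alpha}{m-2}\p^a_{ss}\pa{\p^a_l\delta_{ji}-\p^a_i\delta_{jl}}.
\]
This formula is stated for $m\ge4$. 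For $m=3$ I expect the same identity to follow directly from \eqref{phi curv: phi weyl}, the contracted Bianchi identity and the $\p$-Schur identity, so the argument should cover $m\ge3$. Contracting with $f_l$ produces the terms $\frac{m-3}{m-2}f_lC^\p_{jli}$ (with the sign fixed by the antisymmetry of $C^\p$ in its last two indices) and $\frac{\alpha}{m-2}f_j\p^a_i\p^a_{ll}$. It also produces a term $\alpha f_l\p^a_l\p^a_{ss}\delta_{ij}/(m-2)$ and terms $\alpha f_l(\p^a_{jl}\p^a_i-\p^a_{ij}\p^a_l)$.

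\emph{Collecting terms.} This is the main obstacle, and it is pure bookkeeping. The cubic terms $\alpha\p^a_{ij}\p^a_lf_l$ and $\alpha\p^a_{lj}f_l\p^a_i$ coming from the Cotton bracket must cancel against those coming from the Weyl divergence. The term $\alpha\p^a_lf_{lj}\p^a_i$ combines with $\alpha R^\p_{lj}\p^a_l\p^a_i$ to give $\alpha(R^\p_{lj}+f_{lj})\p^a_l\p^a_i$. The $\abs{\tau(\p)}^2$ term, rewritten using $\alpha\p^a_lf_l=\alpha\p^a_{ll}-U^a$, must absorb the $\delta_{ij}$ term from the Weyl divergence and leave $\frac{1}{(m-1)(m-2)}U^a\p^a_{ll}\delta_{ij}$. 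Finally the $U^a\p^a_{ij}$ coefficients must combine to $-\frac{m-2}{m-1}$, and the $U^{ab}\p^a_i\p^b_j$ coefficients to $\frac{m}{m-1}$ (from $1+\frac1{m-1}$). I would carry out this check term by term, grouping by type: Bach and Weyl terms, $U^a$ terms, $U^{ab}$ terms, and remaining $f$-terms.
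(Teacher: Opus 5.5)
Your route is the paper's. You take the divergence of the identity $D^A_{ijk}+(m-2)D^B_{ijk}+\overline{D}_{ijk}=C^\p_{ijk}+f_lW^\p_{lijk}-\dots$ from the preceding lemma, then:
\begin{itemize}
\item rewrite $C^\p_{ijk,k}$ through the definition of $B^\p$;
\item rewrite $f_lW^\p_{lijk,k}$ through the divergence formula for $W^\p$;
\item eliminate $\p^a_{kk}$ and $\p^a_{kkj}$ using the hypothesis and its covariant derivative.
\end{itemize}
Your remark on $m=3$ is also correct. That divergence identity follows from the definition of $W^\p$, the contracted second Bianchi identity and the $\p$-Schur identity for every $m\ge 3$; only its use as a definition of $C^\p$ needs $m\ge 4$.

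One step is wrong as written, however. By the symmetries of $W^\p$, $W^\p_{lijk}=W^\p_{jkli}=-W^\p_{kjli}$, so $W^\p_{lijk,k}=-W^\p_{kjli,k}$, not $+W^\p_{kjli,k}$. If you contract your displayed formula with $f_l$, the result breaks the bookkeeping in four places:
\begin{itemize}
\item You get $-\frac{m-3}{m-2}f_lC^\p_{jli}$. The antisymmetry $C^\p_{jli}=-C^\p_{jil}$ cannot turn this into $+\frac{m-3}{m-2}f_lC^\p_{jli}$.
\item You get $-\frac{\alpha}{m-2}f_j\p^a_i\p^a_{ll}$, again with the wrong sign.
\item The cubic terms $\alpha f_l(\p^a_{jl}\p^a_i-\p^a_{ij}\p^a_l)$ double those coming from the Bach bracket instead of cancelling them.
\item The $\delta_{ij}$ term combines with $\frac{\alpha}{m-2}\abs{\tau(\p)}^2\delta_{ij}$ into $\frac{\alpha}{m-2}\p^a_{ss}(\p^a_{ll}+\p^a_lf_l)\delta_{ij}$, which the hypothesis does not reduce to $U^a$.
\end{itemize}
So your predicted outcomes are right, but they do not follow from the formula you wrote down.

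With the correct sign one gets
\[
f_lW^\p_{lijk,k}=\frac{m-3}{m-2}f_lC^\p_{jli}+\alpha f_l\pa{\p^a_{ij}\p^a_l-\p^a_{jl}\p^a_i}+\frac{\alpha}{m-2}\p^a_{ss}\pa{f_j\p^a_i-f_l\p^a_l\delta_{ij}},
\]
which is exactly what the paper uses. The cubic terms then cancel. The $\delta_{ij}$ terms combine as
\[
\frac{\alpha}{m-2}\p^a_{ss}(\p^a_{ll}-\p^a_lf_l)\delta_{ij}=\frac{1}{m-2}U^a\p^a_{ll}\delta_{ij},
\]
and adding $-\frac{1}{m-1}U^a\p^a_{ll}\delta_{ij}$ gives the coefficient $\frac{1}{(m-1)(m-2)}$. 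The rest of the bookkeeping you list then goes through: the $U^a\p^a_{ij}$ coefficient becomes $-\frac{m-2}{m-1}$, and the $U^{ab}\p^a_i\p^b_j$ coefficient becomes $\frac{m}{m-1}$ using the symmetry of $U^{ab}$.
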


\begin{proof}
	Use \eqref{teoremone: prop: seconda del sistema} into \eqref{codazzi 1: prima cond int con D bar} and take the divergence to get
	\begin{align}\label{teoremone: prop: conto 1}
		D^A_{ijk,k}+(m-2)D^B_{ijk,k} +\overline{D}_{ijk,k} = & \; C^\p_{ijk,k}+f_{lk}W^\p_{lijk}+f_lW^\p_{lijk,k} \\[0.1cm]
		& -\frac{1}{m-1}U^{ab}\p^b_k\pa{\p^a_k\delta_{ij}-\p^a_j\delta_{ik}} \\ \nonumber
		& -\frac{1}{m-1}U^a\pa{\p^a_{kk}\delta_{ij}-\p^a_{ij}}. \nonumber
	\end{align}
	We use equations \eqref{phi curv: divergenza di W phi} and \eqref{phi curv: phi bach}
	into \eqref{teoremone: prop: conto 1} to infer
	\begin{align}\label{teoremone: prop: conto 2}
		D^A_{ijk,k} +(m-2)D^B_{ijk,k}+\overline{D}_{ijk,k} = & \; (m-2)B^\p_{ij}-R^\p_{lk}W^\p_{likj}+\alpha R^\p_{lj}\p^a_l\p^a_i\\ \nonumber
		&-\alpha \pa{\p^a_{ij}\p^a_{ll}-\p^a_i\p^a_{llj}-\frac{1}{m-2}\abs{\tau(\p)}\delta_{ij}}-f_{lk}W^\p_{likj}\\ \nonumber
		&+\frac{m-3}{m-2}f_lC^\p_{jli}+\alpha f_l\pa{\p^a_{ij}\p^a_l-\p^a_{jl}\p^a_i}\\ \nonumber
		&+\frac{\alpha}{m-2}f_l\p^a_{pp}\pa{\p^a_i\delta_{lj}-\p^a_l\delta_{ij}}\\ \nonumber
		&-\frac{1}{m-1}U^{ab}\p^b_k\pa{\p^a_k\delta_{ij}-\p^a_j\delta_{ik}}-\frac{1}{m-1}U^a\pa{\p^a_{kk}\delta_{ij}-\p^a_{ij}}.
	\end{align}
	Taking the covariant derivative of \eqref{teoremone: prop: seconda del sistema} we get
	\begin{align}\label{teoremone: prop: der cov seconda sistema}
		\p^a_{llj}-\p^a_{lj}f_l-\p^a_lf_{lj}=\frac{1}{\alpha}U^{ab}\p^b_j.
	\end{align}
	Using \eqref{teoremone: prop: der cov seconda sistema} and \eqref{teoremone: prop: seconda del sistema} into \eqref{teoremone: prop: conto 2} we obtain
	\begin{align*}
		D^A_{ijk,k}&+(m-2)D^B_{ijk,k}+\overline{D}_{ijk,k}\\
		=&\;(m-2)B^\p_{ij}-(R^\p_{lk}+f_{lk})W^\p_{likj}+\alpha R^\p_{lj}\p^a_l\p^a_i\\
		&-U^a\p^a_{ij}+U^{ab}\p^a_i\p^b_j+\alpha f_{lj}\p^a_l\p^a_i+\frac{\alpha}{m-2}\abs{\tau(\p)}^2\delta_{ij}\\
		&+\frac{m-3}{m-2}f_lC^\p_{jli}+\frac{\alpha}{m-2}\p^a_{ll}\p^a_if_j -\frac{\alpha}{m-2}f_l\p^a_l\p^a_{pp}\delta_{ij}\\
		&-\frac{1}{m-1}U^{ab}\p^b_k\pa{\p^a_k\delta_{ij}-\p^a_j\delta_{ik}}-\frac{1}{m-1}U^a\pa{\p^a_{kk}\delta_{ij}-\p^a_{ij}}\\
		=&\;(m-2)B^\p_{ij}-(R^\p_{lk}+f_{lk})W^\p_{likj}+\alpha (R^\p_{lj}+f_{lj})\p^a_l\p^a_i\\
		&-\frac{m-2}{m-1}U^a\p^a_{ij}+\frac{m}{m-1}U^{ab}\p^a_i\p^b_j+\frac{1}{m-2}U^a\p^a_{ll}\delta_{ij}\\
		&+\frac{m-3}{m-2}f_lC^\p_{jli}+\frac{\alpha}{m-2}\p^a_{ll}\p^a_if_j-\frac{1}{m-1}U^{ab}\p^a_k\p^b_k\delta_{ij}-\frac{1}{m-1}U^a\p^a_{ll}\delta_{ij}.
		\end{align*}
	After some simplifications, the latter becomes \eqref{teoremone: prop: divergenza, formula}.
\end{proof}

Let
\begin{align}\label{teoremone: campo Y}
	Y_k:=f_if_j\pa{D^A_{ijk}+(m-2)D^B_{ijk}+\overline{D}_{ijk}}-\frac{1}{m-1}U^a\pa{\p^a_jf_jf_k-\abs{\nabla f}^2\p^a_k};
\end{align}
then we have
\begin{proposition}\label{teoremone: prop: divergenza di Y}
	With the assumptions of Proposition \ref{teoremone: prop: divergenza di D bar e co} and with $Y$ defined in \eqref{teoremone: campo Y}, we have
	\begin{align}\label{teoremone: divergenza di Y}
		\div Y = & \; \frac{m-2}{2}\pa{\abs{D^A}^2+(m-2)\abs{D^B}^2}+D^A_{ijk}\pa{f_{ik}f_j+R^{\p}_{ik}f_j}\\ \nonumber
		&+\overline{D}_{ijk}f_jf_{ik}+(m-2)B^\p_{ij}f_if_j-\pa{R^\p_{lk}+f_{lk}}W^\p_{likj}f_if_j\\ \nonumber
		&+\alpha (R^\p_{lj}+f_{lj})\p^a_l\p^a_if_if_j-U^a\p^a_{ij}f_if_j+U^{ab}\p^a_i\p^b_jf_if_j\\ \nonumber
		&+\frac{\abs{\nabla f}^2}{m-2}U^a\p^a_{jj}+\frac{\alpha}{m-2}\abs{\nabla f}^2\p^a_{jj}\p^a_if_i\\ \nonumber
		&+\frac{1}{m-1}U^a\p^a_j f_k f_{jk}-\frac{1}{m-1}U^a\p^a_jf_j\Delta f. 
	\end{align}
\end{proposition}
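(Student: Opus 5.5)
We compute $\div Y = Y_{k,k}$ by the product rule, splitting $Y$ into the term $f_if_jE_{ijk}$, with $E:=D^A+(m-2)D^B+\overline{D}$, and the potential term $-\frac{1}{m-1}U^a\pa{\p^a_jf_jf_k-\abs{\nabla f}^2\p^a_k}$. For the first term,
\[
(f_if_jE_{ijk})_{,k}=f_{ik}f_jE_{ijk}+f_if_{jk}E_{ijk}+f_if_jE_{ijk,k}.
\]
Proposition \ref{teoremone: prop: divergenza di D bar e co} gives $E_{ijk,k}$ directly. Contracting with $f_if_j$ produces the terms $(m-2)B^\p_{ij}f_if_j$, $-\pa{R^\p_{lk}+f_{lk}}W^\p_{likj}f_if_j$ and $\alpha(R^\p_{lj}+f_{lj})\p^a_l\p^a_if_if_j$. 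The term $f_lC^\p_{jli}f_if_j$ drops out, because $C^\p_{jli}$ is skew in $(l,i)$. The $U$-terms give $-\frac{m-2}{m-1}U^a\p^a_{ij}f_if_j+\frac{m}{m-1}U^{ab}\p^a_i\p^b_jf_if_j$, plus trace pieces proportional to $\abs{\nabla f}^2$.

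Next we treat the two terms $f_{ik}f_jE_{ijk}+f_if_{jk}E_{ijk}$. Since $E$ is skew in $(j,k)$ (each of $D^A$, $D^B$, $\overline{D}$ is) and $f_{jk}$ is symmetric, the second one vanishes. The first gives $\overline{D}_{ijk}f_jf_{ik}+(D^A+(m-2)D^B)_{ijk}f_jf_{ik}$, which we rewrite as a norm identity. From \eqref{Deduzione: D B definizione}, skew-symmetrizing in $(j,k)$ and using that $D^B$ is trace-free in $(i,j)$ and in $(i,k)$, we get
\[
D^B_{ijk}f_jf_{ik}=-\tfrac{m-2}{2}\abs{D^B}^2 .
\]
Indeed, $(m-2)D^B$ equals $f_jf_{ik}-f_kf_{ij}$ plus $\delta$-terms, and the $\delta$-terms pair to zero against $D^B$. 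Hence $(m-2)D^B_{ijk}f_jf_{ik}$ contributes $\frac{(m-2)^2}{2}\abs{D^B}^2$ up to sign conventions, which we need to track carefully. Similarly, $D^A_{ijk}$ paired with $f_kR^\p_{ij}-f_jR^\p_{ik}$ gives $\frac{m-2}{2}\abs{D^A}^2$ as $-D^A_{ijk}R^\p_{ik}f_j$ after antisymmetrization. Adding and subtracting $D^A_{ijk}R^\p_{ik}f_j$ therefore produces the term $\frac{m-2}{2}\abs{D^A}^2+D^A_{ijk}(f_{ik}f_j+R^\p_{ik}f_j)$ that appears in the statement.

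The potential term is differentiated directly:
\[
\Bigl(U^a\p^a_jf_jf_k\Bigr)_{,k}=U^{ab}\p^b_k\p^a_jf_jf_k+U^a\p^a_{jk}f_jf_k+U^a\p^a_jf_{jk}f_k+U^a\p^a_jf_j\Delta f ,
\]
\[
\Bigl(\abs{\nabla f}^2U^a\p^a_k\Bigr)_{,k}=2f_{lk}f_lU^a\p^a_k+\abs{\nabla f}^2\bigl(U^{ab}\p^a_k\p^b_k+U^a\p^a_{kk}\bigr).
\]
Multiplying both by $-\frac1{m-1}$ with the appropriate signs, the $U^{ab}\p^a_k\p^b_k\abs{\nabla f}^2$ pieces cancel the corresponding trace term coming from Proposition \ref{teoremone: prop: divergenza di D bar e co}. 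The $U^a\p^a_{ij}f_if_j$ coefficients combine as $-\frac{m-2}{m-1}-\frac{1}{m-1}=-1$, and the $U^{ab}\p^a_i\p^b_jf_if_j$ coefficients as $\frac{m}{m-1}-\frac1{m-1}=1$. The $f_{jk}$ pieces combine into $\frac1{m-1}U^a\p^a_jf_kf_{jk}$. The $\abs{\nabla f}^2U^a\p^a_{jj}$ pieces combine as $\frac1{(m-1)(m-2)}+\frac1{m-1}=\frac{1}{m-2}$. Finally, the term $\frac{\alpha}{m-2}f_j\p^a_i\p^a_{ll}f_if_j$ gives $\frac{\alpha}{m-2}\abs{\nabla f}^2\p^a_{jj}\p^a_if_i$.

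The main obstacle is the bookkeeping of signs and constants in the $D^A$ and $D^B$ norm identities. In particular we must confirm that the $\delta$-terms pair trivially with $D^A$ and $D^B$, which uses $D^A_{iik}=D^B_{iik}=0$ and $D_{ijj}=0$, checked from the definitions \eqref{Deduzione D A definizione} and \eqref{Deduzione: D B definizione}. We must also confirm that the $f_{jk}$-type leftovers from the $U$-term match the stated coefficients.
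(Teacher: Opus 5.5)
Your proposal follows the paper's proof essentially step by step: the product rule on $f_if_jE_{ijk}$, the vanishing of $f_if_{jk}E_{ijk}$ by skew-symmetry in $(j,k)$, the two norm identities for $D^A$ and $D^B$, and the insertion of the formula for $E_{ijk,k}$ from Proposition \ref{teoremone: prop: divergenza di D bar e co}, where $f_lf_if_jC^\p_{jli}=0$. The direct differentiation of the potential term also matches, and your coefficient checks ($-1$, $1$, $\frac{1}{m-2}$, $\frac{1}{m-1}$, and the cancellation of $|\nabla f|^2U^{ab}\p^a_k\p^b_k$) agree with the paper's.

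The one thing to fix is the sign in your displayed $D^B$ identity. Trace-freeness and skew-symmetry give $(m-2)|D^B|^2=D^B_{ijk}(f_jf_{ik}-f_kf_{ij})=2D^B_{ijk}f_jf_{ik}$, so $D^B_{ijk}f_jf_{ik}=+\frac{m-2}{2}|D^B|^2$. This is opposite to the $D^A$ case, $D^A_{ijk}R^\p_{ik}f_j=-\frac{m-2}{2}|D^A|^2$, because the leading terms of $D^B$, $f_jf_{ik}-f_kf_{ij}$, have the reverse sign pattern of those of $D^A$, $f_kR^\p_{ij}-f_jR^\p_{ik}$. So there is no sign convention left to track.
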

\begin{proof}
Taking the divergence of $Y$
\begin{align*}
	\div Y = & \; f_{ik}f_j\pa{D^A_{ijk}+(m-2)D^B_{ijk}+\overline{D}_{ijk}}+f_if_{jk}\pa{D^A_{ijk}+(m-2)D^B_{ijk}+\overline{D}_{ijk}}\\
	&+f_if_j\pa{D^A_{ijk,k}+(m-2)D^B_{ijk,k}+\overline{D}_{ijk,k}}-\frac{1}{m-1}U^{ab}\p^b_k\pa{\p^a_jf_jf_k-\abs{\nabla f}^2\p^a_k}\\
	&-\frac{1}{m-1}U^a\pa{\p^a_{jk}f_jf_k+\p^a_jf_{jk}f_k+\p^a_j f_j\Delta f-\abs{\nabla f}^2\p^a_{kk}-2\p^a_kf_{jk}f_j};
\end{align*}
using the symmetries of $D^A, D^B$ and $\overline{D}$ we get
\begin{align}\label{teoremone: prop 2: conto 1}
	\div Y = & \; f_{ik}f_j\pa{D^A_{ijk}+(m-2)D^B_{ijk}+\overline{D}_{ijk}}+f_if_j\pa{D^A_{ijk,k}+(m-2)D^B_{ijk,k}+\overline{D}_{ijk,k}}\\ \nonumber
	&-\frac{1}{m-1}U^{ab}\p^b_k\pa{\p^a_jf_jf_k-\abs{\nabla f}^2\p^a_k}\\ \nonumber
	&-\frac{1}{m-1}U^a\pa{\p^a_{jk}f_jf_k-\p^a_jf_{jk}f_k+\p^a_jf_j\Delta f-\abs{\nabla f}^2\p^a_{kk}}.
\end{align}
Using the definition of $D^B$, its symmetries and the fact that it is totally trace-free we get
\begin{align}\label{teoremone: prop 2: norma D B}
	f_{ik}f_j D^B_{ijk}=\frac{1}{2}\pa{f_{ik}f_j-f_{ij}f_k}D^B_{ijk}=\frac{m-2}{2}\abs{D^B}^2.
\end{align}
In the same fashion we have
\begin{align*}
	R^\p_{ij}f_kD^A_{ijk}=\frac{m-2}{2}\abs{D^A}^2,
\end{align*}
so that
\begin{align}\label{teoremone: prop 2: norma di D A}
	f_{ik}f_jD^A_{ijk}&=\pa{f_{ik}f_j+R^\p_{ik}f_j}D^A_{ijk}-R^\p_{ik}f_j D^A_{ijk}\\ \nonumber
	&=\pa{f_{ik}f_j+R^\p_{ik}f_j}D^A_{ijk}+R^\p_{ik}f_j D^A_{ikj}\\ \nonumber
	&=\frac{m-2}{2}\abs{D^A}^2+\pa{f_{ik}f_j+R^\p_{ik}f_j}D^A_{ijk}.
\end{align}
Using \eqref{teoremone: prop 2: norma di D A}, \eqref{teoremone: prop 2: norma D B} and \eqref{teoremone: prop: divergenza, formula} in \eqref{teoremone: prop 2: conto 1} we get
\begin{align*}
	\div Y = & \; \frac{m-2}{2}\abs{D^A}^2+\frac{(m-2)^2}{2}\abs{D^B}^2+D^A_{ijk}\pa{f_{ik}f_j+R^\p_{ik}f_j}\\
	&+f_{ik}f_j\overline{D}_{ijk}+f_if_j\sq{(m-2)B^\p_{ij}-\pa{R^\p_{lk}+f_{lk}}W^\p_{likj}}\\
	&+f_if_j\sq{\alpha\pa{R^\p_{lj}+f_{lj}}\p^a_l\p^a_i-\frac{m-2}{m-1}U^a\p^a_{ij}+\frac{m}{m-1}U^{ab}\p^a_i\p^b_j}\\
	&+f_if_j\sq{\frac{1}{(m-1)(m-2)}U^a\p^a_{ll}\delta_{ij}+\frac{m-3}{m-2}f_lC^\p_{jli}+\frac{\alpha}{m-2}\p^a_{ll}\p^a_if_j}\\
	&-\frac{\abs{\nabla f}^2}{m-1}U^{ab}\p^a_k\p^b_k-\frac{1}{m-1}U^{ab}\p^b_k\pa{\p^a_jf_jf_k-\abs{\nabla f}^2\p^a_k}\\
	&-\frac{1}{m-1}U^a\pa{\p^a_{jk}f_jf_k-\p^a_jf_{jk}f_k+\p^a_jf_j\Delta f-\abs{\nabla f}^2\p^a_{kk}}.
\end{align*}
Rearranging terms we get \eqref{teoremone: divergenza di Y}.
\end{proof}

\begin{proposition}\label{teoremone: proposizione 3}
With the assumptions of Proposition \ref{teoremone: prop: divergenza di D bar e co}, assume further that $\overline{D}\equiv 0$ and that $\p$ is $\frac{1}{\alpha}U$-harmonic, that is,
\begin{align}\label{teoremone: prop 3: U armonicità}
	\tau(\p)=\frac{\nabla^h U}{\alpha}.
\end{align}
Then we have
\begin{align}\label{teoremone: prop 3: divergenza di Y}
	\div Y = & \; (m-2)B^\p_{ij}f_if_j+\frac{\abs{\nabla f}^2}{\alpha(m-2)}\abs{\nabla^h U}^2+\frac{m-2}{2}\pa{\abs{D^A}^2+(m-2)\abs{D^B}^2}\\ \nonumber
	&-\frac{m}{m-1}U^a\p^a_lf_kR^\p_{kl}+\left\{\frac{m}{m-1}U^a\p^a_lf_k+D^A_{lik}f_i-W^\p_{likj}f_if_j\right\}\pa{R^\p_{lk}+f_{lk}}.
\end{align}
\end{proposition}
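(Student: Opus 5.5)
The plan is to start from the general identity \eqref{teoremone: divergenza di Y} of Proposition \ref{teoremone: prop: divergenza di Y} and simplify it using $\overline{D}\equiv 0$ together with \eqref{teoremone: prop 3: U armonicità}. Combined with \eqref{teoremone: prop: seconda del sistema}, the latter gives $\di\p(\nabla f)=0$, that is $\p^a_lf_l=0$. Hence $\overline{D}_{ijk}f_jf_{ik}=0$, the term $\frac{\alpha}{m-2}\abs{\nabla f}^2\p^a_{jj}\p^a_if_i$ vanishes, and so do $\alpha(R^\p_{lj}+f_{lj})\p^a_l\p^a_if_if_j$, $U^{ab}\p^a_i\p^b_jf_if_j$ and $\frac{1}{m-1}U^a\p^a_jf_j\Delta f$. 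Next, substituting $\p^a_{jj}=\frac{1}{\alpha}U^a$ into $\frac{\abs{\nabla f}^2}{m-2}U^a\p^a_{jj}$ produces exactly $\frac{\abs{\nabla f}^2}{\alpha(m-2)}\abs{\nabla^h U}^2$.

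Differentiating $\p^a_lf_l=0$ gives $\p^a_{lk}f_l=-\p^a_lf_{lk}$, and therefore $-U^a\p^a_{ij}f_if_j=U^a\p^a_if_{ij}f_j$. Adding $\frac{1}{m-1}U^a\p^a_jf_kf_{jk}$ gives a total of $\frac{m}{m-1}U^a\p^a_lf_kf_{lk}$. Writing this as $\frac{m}{m-1}U^a\p^a_lf_k(R^\p_{lk}+f_{lk})-\frac{m}{m-1}U^a\p^a_lf_kR^\p_{kl}$ yields the isolated Ricci term and the first summand inside the braces in \eqref{teoremone: prop 3: divergenza di Y}.

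It remains to rewrite $D^A_{ijk}(f_{ik}f_j+R^\p_{ik}f_j)$ as $D^A_{lik}f_i(R^\p_{lk}+f_{lk})$. This uses the skew-symmetry of $D^A$ in the last two indices: rename $i\to l$, $j\to i$, so that $D^A_{lik}f_i(R^\p_{lk}+f_{lk})$ appears directly. The Weyl term $-(R^\p_{lk}+f_{lk})W^\p_{likj}f_if_j$ is already in the required form, and $(m-2)B^\p_{ij}f_if_j$ and the norms of $D^A$ and $D^B$ pass through unchanged.

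The main obstacle is bookkeeping rather than conceptual: the $U$-terms must be tracked carefully, in particular the sign of the $\p^a_{ij}f_if_j$ term after differentiating $\p^a_lf_l=0$. One also needs to check that no residual $\frac{1}{m-1}U^a\p^a_{ll}$ contributions survive beyond the $\abs{\nabla^h U}^2$ term. I expect this to close exactly, since the coefficient $\frac{m}{m-1}$ appearing in $F$ matches $1+\frac{1}{m-1}$.
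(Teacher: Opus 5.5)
Your proposal is correct and is essentially the paper's own argument: from \eqref{teoremone: prop: seconda del sistema} and the $\frac{1}{\alpha}U$-harmonicity you get $\p^a_lf_l=0$. Together with $\overline{D}\equiv 0$ this kills the superfluous terms of \eqref{teoremone: divergenza di Y}; differentiating $\p^a_lf_l=0$ merges $-U^a\p^a_{ij}f_if_j$ and $\frac{1}{m-1}U^a\p^a_jf_kf_{jk}$ into $\frac{m}{m-1}U^a\p^a_lf_kf_{lk}$, and $\p^a_{ll}=\frac{1}{\alpha}U^a$ gives the $\abs{\nabla^h U}^2$ term. One harmless inaccuracy: rewriting $D^A_{ijk}(f_{ik}+R^\p_{ik})f_j$ as $D^A_{lik}f_i(R^\p_{lk}+f_{lk})$ is a pure relabeling of dummy indices and does not use the skew-symmetry of $D^A$.
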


\begin{proof}
From \eqref{teoremone: prop 3: U armonicità} and \eqref{teoremone: prop: seconda del sistema} we get
\begin{align*}
	\di \p(\nabla f)=0,
\end{align*}
that is, in components
\begin{align}\label{teoremone: prop 3: d phi grad f}
\p^a_if_i=0.
\end{align}
Therefore, from \eqref{teoremone: divergenza di Y} and $\overline{D}\equiv 0$,
\begin{align*}
	\div Y = & \; \frac{m-2}{2}\pa{\abs{D^A}^2+(m-2)\abs{D^B}^2}+D^A_{ijk}\pa{f_{ik}f_j+R^\p_{ik}f_j}\\
	&+(m-2)B^\p_{ij}f_if_j-\pa{R^\p_{lk}+f_{lk}}W^\p_{likj}-U^a\p^a_{ij}f_if_j\\
	&+\frac{\abs{\nabla f}^2}{m-2}U^a\p^a_{ll}+\frac{1}{m-1}U^a\p^a_jf_kf_{jk}.\\
\end{align*}
Taking the covariant derivative of \eqref{teoremone: prop 3: d phi grad f} we obtain
\begin{align*}
	\p^a_{ij}f_j+\p^a_jf_{ij}=0
\end{align*}
so that
\begin{align*}
	\frac{1}{m-1}U^a\p^a_jf_{jk}f_k-U^a\p^a_{ij}f_if_j = & \; \frac{1}{m-1}U^a\p^a_jf_{jk}f_k+U^a\p^a_jf_{ij}f_i\\
	= & \; \frac{m}{m-1}U^a\p^a_j f_{jk}f_k;
\end{align*}
using also \eqref{teoremone: prop 3: U armonicità} we get \eqref{teoremone: prop 3: divergenza di Y}.
\end{proof}

\begin{proof}[Proof of Theorem \ref{codazzi: teo: teoremone}]
	By assumptions \ref{Codazzi: main: bach piatto} and \ref{Codazzi: main: funzionale} we have
 that \eqref{teoremone: prop 3: divergenza di Y} becomes
	\begin{align*}
		\div Y = & \; \frac{m-2}{2}\pa{\abs{D^A}^2+(m-2)\abs{D^B}^2}-\frac{m}{m-1}U^a\p^a_jR^\p_{jk}f_k + \frac{\abs{\nabla f}^2}{\alpha(m-2)}\abs{\nabla^h U}^2.
	\end{align*}
	Since $\nabla f$ is an eigenvector of $\ric^\p$ we get, at any regular point $p\in M$ of $f$,
	\begin{align*}
		U^a\p^a_jR^\p_{jk}f_k=\Lambda U^a\p^a_j f_j, \ \ \text{ for some } \Lambda\in \RR.
	\end{align*}
	Since $\p$ is $\frac{U}{\alpha}$-harmonic, we get
		$\p^a_if_i=0$
	and therefore
		$U^a\p^a_jR^\p_{jk}f_k=0$,	so that
	\begin{align}\label{teoremone: dim: div Y finale}
		\div Y=\frac{m-2}{2}\pa{\abs{D^A}^2+(m-2)\abs{D^B}^2}+\frac{\abs{\nabla f}^2}{\alpha(m-2)}\abs{\nabla^h U}^2.
	\end{align}
	Note that, by the definition \eqref{teoremone: campo Y} of $Y$, we get
	\begin{align}\label{teoremone: dim: Y e grad f ortogonali}
		Y_kf_k=0.
	\end{align}
	Let $\delta,\eta\in \RR$, $\delta<\eta$ be two regular values of $f$.
	Set
	\begin{align*}
		\Omega_{\delta,\eta}=\left\{x\in M: \delta\leq f(x)\leq \eta\right\};
	\end{align*}
	then $\Omega_{\delta,\eta}$ is compact since $f$ is proper. Integrating \eqref{teoremone: dim: div Y finale} on $\Omega_{\delta,\eta}$, using the divergence theorem and \eqref{teoremone: dim: Y e grad f ortogonali} we obtain
	\begin{align*}
		0=\int_{\partial \Omega_{\delta,\eta}}Y_k\nu_k=\int_{\Omega_{\delta,\eta}}\frac{m-2}{2}\pa{\abs{D^A}^2+(m-2)\abs{D^B}^2}+\frac{\abs{\nabla f}^2}{\alpha(m-2)}\abs{\nabla^h U}^2
	\end{align*}
	where $\nu=-\frac{\nabla f}{\abs{\nabla f}}$ is the inward unit normal to $\partial \Omega_{\delta,\eta}$. Letting $\delta\to -\infty, \eta\to +\infty$, since $\alpha>0$, we get $D^A=D^B=0$ on $M$ and $\nabla^h U=0$ at any regular point. Since $\p$ is $\frac{U}{\alpha}$-harmonic, we deduce that $\p$ is harmonic at every point and since $D^A\equiv D^B\equiv 0$ we can use Theorem \ref{local struct: teorema warp product} to conclude.
\end{proof}
\begin{remark}\label{Remark U=0}
	When $U\equiv 0$, some of the assumptions of Theorem \ref{codazzi: teo: teoremone} are unnecessary. Indeed, when $U\equiv 0$, Proposition \ref{teoremone: prop: divergenza di Y} implies
	\begin{align*}
		\div Y = & \; \frac{m-2}{2}\pa{\abs{D^A}^2+(m-2)\abs{D^B}^2}+\overline{D}_{ijk}f_j f_{ik}\\
		&+(m-2)B^\p_{ij}f_if_j+\frac{\alpha}{m-2}\abs{\nabla f}^2\abs{\tau(\p)}^2\\
		&+\pa{R^\p_{ik}+f_{ik}}\sq{f_j D^A_{ijk}-f_lf_j W^\p_{ilkj}+\alpha \p^a_i\p^a_lf_l f_k}
	\end{align*}
for the same vector field $Y$.
If, instead of studying the map $F$ given by \eqref{codazzi: teoremone: funzionale F}, one asks that $\ric^\p+\hess f\in \ker G$, where $G$ is the linear map $G: S^2(M)\to C^\infty(M)$ given by
\begin{align*}
	G(\beta):=\sq{f_j D^A_{ijk}-f_lf_j W^\p_{ilkj}+\alpha \p^a_i\p^a_lf_l f_k}\beta_{ik},
\end{align*}
then the same conclusion of Theorem \ref{codazzi: teo: teoremone} is obtained following the same procedure, without the need of assumptions \ref{Codazzi: main: U armonica} and \ref{Codazzi: main: autovett}.
\end{remark}

 \section{Riemann compatibility}\label{Sect 5: riem comp}

The following definition has been given by Mantica and Molinari \cite{Mantica2012RiemannCT}, \cite{Mantica2012WeylCT}:
\begin{definition}
	Let $(M,g)$ be a (pseudo-)Riemannian manifold and $P$ be a symmetric 2-covariant tensor. We say that $P$ is \emph{Riemann compatible} if it holds
	\begin{align}\label{riem comp: def di riem comp}
		R^l_{\ iks}P_{jl}+R^l_{\ isj}P_{kl}+R^l_{\ ijk}P_{sl}=0.
	\end{align}
\end{definition}
Contracting \eqref{riem comp: def di riem comp} with the metric and using the orthogonal decomposition of $\Riem$ one easily sees that \eqref{riem comp: def di riem comp} is equivalent to the system
\begin{align}\label{riem comp: weyl + commut}
	\begin{cases}
		W^l_{\ iks}P_{jl}+W^l_{\ isj}P_{kl}+W^l_{\ ijk}P_{sl}=0,\\
		P_{jl}R^l_{\ i}=P_{il}R^l_{\ j},
	\end{cases}
\end{align}
in particular $P$ is Riemann compatible if and only if it is Weyl compatible and it commutes with the Ricci tensor of $(M,g)$.
Every Codazzi tensor is Riemann compatible as the next proposition shows.
\begin{proposition}[Proposition 2.2 of \cite{Mantica2012RiemannCT}]
	Let $(M,g)$ be a pseudo-Riemannian manifold and $P$ be a 2-covariant symmetric tensor on $M$.
	Set
	\begin{align*}
		C(P)_{ijk}:=P_{ij,k}-P_{ik,j}.
	\end{align*}
	Then we have
	\begin{equation}\label{riem comp: riem comp e codazzi dev}
		C(P)_{ijk,l}+C(P)_{ikl,j}+C(P)_{ilj,k} = R^s_{\ ikl}P_{js}+R^s_{\ ilj}P_{ks}+R^s_{\ ijk}P_{ls}.
	\end{equation}
\end{proposition}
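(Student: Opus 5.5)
We prove the identity by a direct computation of the cyclic sum on the left-hand side. The only tool needed is the Ricci commutation formula for the second covariant derivative of a symmetric $2$-covariant tensor. With the convention $\hat R^\eta_{\ \beta\gamma\rho}$ used in Section \ref{Sect 1: setting}, this formula reads
\begin{align*}
	P_{ij,kl}-P_{ij,lk}=-R^s_{\ ikl}P_{sj}-R^s_{\ jkl}P_{is},
\end{align*}
up to an overall sign that must be fixed consistently with the chosen convention.

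First we expand the left-hand side. Using $C(P)_{ijk,l}=P_{ij,kl}-P_{ik,jl}$, we get
\begin{align*}
	C(P)_{ijk,l}+C(P)_{ikl,j}+C(P)_{ilj,k}
	=\pa{P_{ij,kl}-P_{ij,lk}}+\pa{P_{ik,lj}-P_{ik,jl}}+\pa{P_{il,jk}-P_{il,kj}}.
\end{align*}
To see this, note that each term $P_{ix,yz}$ appears exactly twice with opposite signs, so the six terms regroup into three commutators. The first index $i$ is held fixed, and the pair $(x,\cdot)$ runs cyclically over $j,k,l$.

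Next we apply the Ricci identity to each of the three commutators. This produces two families of curvature terms.
\begin{itemize}
	\item[(a)] The terms in which the curvature acts on the fixed index $i$:
	\begin{align*}
		-\pa{R^s_{\ ikl}P_{sj}+R^s_{\ ilj}P_{sk}+R^s_{\ ijk}P_{sl}}.
	\end{align*}
	Up to the sign convention, this is exactly the right-hand side of \eqref{riem comp: riem comp e codazzi dev}, once we use the symmetry $P_{sj}=P_{js}$.
	\item[(b)] The terms in which the curvature acts on the varying index:
	\begin{align*}
		-P_{is}\pa{R^s_{\ jkl}+R^s_{\ klj}+R^s_{\ ljk}}.
	\end{align*}
	This vanishes by the first Bianchi identity.
\end{itemize}

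The only real obstacle is bookkeeping of signs and index orders. We need to check that the commutation convention matches the curvature convention fixed in Section \ref{Sect 1: setting}, namely $\delta\hat R^\eta_{\ \beta\gamma\rho}=B^\eta_{\ \beta\rho,\gamma}-B^\eta_{\ \beta\gamma,\rho}$, which implies $X^\eta_{\ ,\beta\gamma}-X^\eta_{\ ,\gamma\beta}=\pm R^\eta_{\ s\beta\gamma}X^s$. The overall sign in (a) must then come out $+$ as stated. We would verify it on the simplest case $P=\di u\otimes\di u$, or directly against the vector-field identity. If the sign comes out opposite, we reorder the last two indices using the antisymmetry $R^s_{\ ikl}=-R^s_{\ ilk}$ and absorb the sign there. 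No other ingredient is needed, and the argument holds verbatim in pseudo-Riemannian signature.
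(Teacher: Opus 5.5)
Your argument is the paper's argument. You expand the cyclic sum, regroup the six second derivatives into the three commutators $(P_{ij,kl}-P_{ij,lk})+(P_{ik,lj}-P_{ik,jl})+(P_{il,jk}-P_{il,kj})$, apply the Ricci identities, and kill the three terms containing $P_{is}$ with the first Bianchi identity.

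The one real defect is the sign, and the way you propose to handle it. The paper's convention for the Ricci commutation relations is $f_{ijk}-f_{ikj}=f_lR_{lijk}$; this is the form used in \eqref{Deduzione: ricci rel + W phi}, and it matches $R_{ijij}$ being the sectional curvature. Applying it to $P=\omega\otimes\eta$ gives
\[
P_{ij,kl}-P_{ij,lk}=R^s_{\ ikl}P_{sj}+R^s_{\ jkl}P_{is},
\]
with a plus sign. So the formula you display is off by an overall sign. With the correct sign, your family (a) is exactly the stated right-hand side, and family (b) vanishes by Bianchi as you say.

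Your fallback of ``absorbing'' a wrong sign via $R^s_{\ ikl}=-R^s_{\ ilk}$ is not legitimate. It only rewrites $-(R^s_{\ ikl}P_{js}+R^s_{\ ilj}P_{ks}+R^s_{\ ijk}P_{ls})$ as $R^s_{\ ilk}P_{js}+R^s_{\ ijl}P_{ks}+R^s_{\ ikj}P_{ls}$. That is still the negative of the claimed right-hand side, not the same expression. The sign of this identity depends on the curvature convention, so it must be fixed by determining the convention (as above), not removed by relabeling indices.
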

\begin{proof}
	By the definition of $C(P)$, the left hand side of \eqref{riem comp: riem comp e codazzi dev} is
	\begin{align*}
		P_{ij,kl}-P_{ik,jl}+P_{ik,lj}-P_{il,kj}+P_{il,jk}-P_{ij,lk} =: (\ast) .
	\end{align*}
Rearranging terms  and using the Ricci commutation relations, we get
	\begin{align*}
		(\ast) & = P_{ij,kl}-P_{ij,lk}+P_{ik,lj}-P_{ik,jl}+P_{il,jk}-P_{il,kj} \\
		& = R^s_{\ ikl}P_{js}+R^s_{\ ilj}P_{ks}+R^s_{\ ijk}P_{ls} + R^s_{\ jkl}P_{is}+R^s_{\ klj}P_{is}+R^s_{\ ljk}P_{is}.
	\end{align*}
	Using the first Bianchi identity the result follows.
\end{proof}
\begin{corollary}
	Every Codazzi tensor is Riemann compatible.
\end{corollary}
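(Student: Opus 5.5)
The corollary follows directly from the preceding proposition (Proposition 2.2 of \cite{Mantica2012RiemannCT}). Let $P$ be a Codazzi tensor, so that $P_{ij,k}$ is totally symmetric. In particular $P_{ij,k}=P_{ik,j}$, and hence the tensor $C(P)_{ijk}=P_{ij,k}-P_{ik,j}$ vanishes identically on $M$. Its covariant derivative vanishes too, so every term on the left-hand side of \eqref{riem comp: riem comp e codazzi dev} is zero.

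Identity \eqref{riem comp: riem comp e codazzi dev} therefore reduces to
\[
R^s_{\ ikl}P_{js}+R^s_{\ ilj}P_{ks}+R^s_{\ ijk}P_{ls}=0 .
\]
This is exactly \eqref{riem comp: def di riem comp} after renaming the summation index $s$ to $l$ and the free indices $(j,k,l)$ to $(j,k,s)$. To be precise, under the renaming $l\mapsto s$ the three terms become $R^s_{\ iks}P_{js}$, and so on. A clean way to do this is to first rename the dummy index $s\to t$, then the free index $l\to s$, and then $t\to l$. One then checks that the cyclic ordering of the triple $(j,k,s)$ matches the definition, term by term.

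Hence $P$ is Riemann compatible. There is no real obstacle here: the only point requiring care is this bookkeeping of indices, to confirm that the cyclic sum in \eqref{riem comp: riem comp e codazzi dev} coincides, up to relabeling, with the one in \eqref{riem comp: def di riem comp}. The same argument works verbatim in the pseudo-Riemannian setting, since the proposition was stated there.
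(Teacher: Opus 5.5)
Your proposal is correct and follows the paper's own argument: the paper gives the corollary as an immediate consequence of Proposition 2.2 of Mantica--Molinari, since $C(P)\equiv 0$ for a Codazzi tensor kills the left-hand side of \eqref{riem comp: riem comp e codazzi dev}, leaving \eqref{riem comp: def di riem comp} after relabeling. Your three-step relabeling $s\to t$, $l\to s$, $t\to l$ is the right way to carry this out. The intermediate naive renaming that produces $R^s_{\ iks}P_{js}$ clashes the dummy and free indices and should simply be dropped.
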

\begin{remark}\label{riem comp: riem comp e termini proporz alla metrica}
	Note that being Riemann compatible is a more general condition than being Codazzi: for example, assuming that $P$ is Riemann compatible, then the tensor
	\begin{align*}
		P-\lambda g,
	\end{align*}
	where $\lambda$ is an arbitrary smooth function on $M$, is still Riemann compatible, as a consequence of the first Bianchi identity.
	However, if $P$ is a Codazzi tensor, $P-\lambda g$ will not be so in general.
\end{remark}
A simple computations reveals that every $n$-dimensional Lorentzian manifold $(\hat{M},\hat{g})$ that solves Harada's field equations for a stress-energy tensor $\hat{T}$ admits a Codazzi tensor, namely
\begin{align*}
	\hat{Z}:=\hat{\ric}-\hat{T}-\frac{\hat{S}-2\text{tr}_{\hat{g}}\hat{T}}{2(n-1)}\hat{g}.
\end{align*}
Moreover, every $m$-dimensional C-$\p$-PF admits a Codazzi tensor, that is,
\begin{align*}
	\mathscr{C}:=\ric^\p+\hess f-\di f\otimes \di f-\frac{S^\p+2\Delta_f f+2U(\p)+2\mu}{2m}g.
\end{align*}
See Appendix \ref{Appendice: tensori di Codazzi} for more details.
Therefore, from Remark \ref{riem comp: riem comp e termini proporz alla metrica}, both $\mathscr{C}$ and $\ric^\p+\hess f-\di f\otimes \di f$ are Riemann compatible.
\begin{proposition}\label{riem comp: prop: C phi PF e riem comp}
	Let $(M,g)$ be an $m$-dimensional Riemannian manifold. Assume that there exists $f\in C^\infty(M)$ that solves the system of C-$\p$-PF. Then
	\begin{align*}
		\ric^\p+\hess f-\di f\otimes \di f
	\end{align*}
	is Riemann compatible.
\end{proposition}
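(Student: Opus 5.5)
The plan is to exhibit a Codazzi tensor that differs from $\ric^\p+\hess f-\di f\otimes \di f$ by a multiple of the metric. Concretely, I would show that
\[
\mathscr{C}=\ric^\p+\hess f-\di f\otimes \di f-\lambda g,\qquad \lambda:=\frac{S^\p+2\Delta_f f+2U(\p)+2\mu}{2m},
\]
is Codazzi. Then I would invoke the Corollary (every Codazzi tensor is Riemann compatible) together with Remark \ref{riem comp: riem comp e termini proporz alla metrica}. That remark says adding $\lambda g$ preserves Riemann compatibility, because the terms $R^l_{\ iks}g_{jl}+R^l_{\ isj}g_{kl}+R^l_{\ ijk}g_{sl}=R_{jiks}+R_{kisj}+R_{sijk}$ vanish by the first Bianchi identity.

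To show $\mathscr{C}$ is Codazzi, I would use the tensor $Z$ of \eqref{codazzi 1: tensore zeta} with this particular $\lambda$, and the quantity $\overline{D}$ of \eqref{codazzi 1: D barrato}.

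\textbf{Step 1.} From Lemma \ref{codazzi 1:lemma: D bar e prima cond}, together with \eqref{Deduzione: C-phi-PF: versione con le D 1} and \eqref{Deduzione: C-phi-PF: versione con le D 4} (the latter gives $\alpha(\p^a_{ll}-\p^a_lf_l)=U^a$), the right-hand side of \eqref{codazzi 1: prima cond int con D bar} equals $D^A+(m-2)D^B$. Hence $\overline{D}\equiv 0$. This holds for any $\lambda$, since the trace-type correction in $\overline{D}$ removes the $\lambda$-dependence.

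\textbf{Step 2.} So $Z_{ij,k}-Z_{ik,j}$ is of pure trace form, namely $\frac{1}{m-1}(V_k\delta_{ij}-V_j\delta_{ik})$ with $V_k:=Z_{ll,k}-Z_{lk,l}$. It therefore remains to choose $\lambda$ so that $V\equiv 0$.

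\textbf{Step 3.} Compute $V$ directly. For this I would use:
\begin{itemize}
\item the $\p$-Schur identity \eqref{phi curv: phi Schur};
\item the commutation $f_{lkl}=f_{llk}+R_{kl}f_l$, written through $R^\p$ and $\alpha\p^a_k\p^a_lf_l$;
\item the divergence of $\di f\otimes\di f$;
\item the derivative of $\lambda$.
\end{itemize}
The $\tau(\p)$-terms and the $\p^a_lf_l$ terms combine via \eqref{Deduzione: C-phi-PF 4} into $U^a\p^a_k$. The resulting expression is a combination of $f_{llk}$, $f_lf_{lk}$, $(\Delta f)f_k$, $S^\p_k$, $f_lR^\p_{lk}$, $U^a\p^a_k$ and $\lambda_k$.

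\textbf{Step 4.} Substitute equation \eqref{Deduzione: C-phi-PF 2} to eliminate $f_{llk}$. With the stated $\lambda$, using $(U(\p))_k=U^a\p^a_k$ and the $\mu_k$ term from \eqref{Deduzione: C-phi-PF 2}, everything should cancel. This is the content of the Appendix computation, and equivalently it recovers the time--space component of the Cotton gravity equations.

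The main obstacle is Step 4: bookkeeping the coefficients, particularly the $\frac{m-2}{m}f_lf_{lk}$ and $\Delta_f f$ contributions, so that they match exactly with $\lambda$. If a mismatch appears, I would instead define $\lambda$ by solving $V=0$ as a first-order condition. I would then check that it is exact by reading \eqref{Deduzione: C-phi-PF 2} as $\nabla$ of a scalar up to terms already absorbed.

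Even without pinning $\lambda$ exactly, there is a fallback. Steps 1--2 give $C(Z)_{ijk}=\frac{1}{m-1}(V_k\delta_{ij}-V_j\delta_{ik})$, and plugging this into \eqref{riem comp: riem comp e codazzi dev} yields
\[
R^s_{\ ikl}Z_{js}+R^s_{\ ilj}Z_{ks}+R^s_{\ ijk}Z_{ls}=\frac{1}{m-1}\bigl[(V_{k,l}-V_{l,k})\delta_{ij}+(V_{l,j}-V_{j,l})\delta_{ik}+(V_{j,k}-V_{k,j})\delta_{il}\bigr].
\]
So it then suffices that $V$ be closed, which is weaker than exactness and provides a second route.
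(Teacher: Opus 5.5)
Your proposal is correct and follows the paper's argument: show that $\mathscr{C}=\ric^\p+\hess f-\di f\otimes\di f-\lambda g$, with $\lambda=\frac{1}{2m}(S^\p+2\Delta_f f+2U(\p)+2\mu)$, is Codazzi, and then invoke the Corollary (Codazzi implies Riemann compatible) together with Remark \ref{riem comp: riem comp e termini proporz alla metrica}. The only difference is how you check the Codazzi property: the paper observes that for this $\lambda$ the expansion of $Z_{ij,k}-Z_{ik,j}$ is literally the right-hand side of \eqref{Deduzione: C-phi-PF 1}, whereas your detour through $\overline{D}$ and $V$ also works but uses more of the system and, through $D^A$, $D^B$, $W^\p$, tacitly requires $m\geq 3$ (after using \eqref{Deduzione: C-phi-PF 4}, your $V_k$ is exactly the right-hand side of \eqref{Deduzione: C-phi-PF 2}, so no coefficient mismatch arises and the fallback is not needed).
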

In the case of a $\p$-SPFST, the tensor
\begin{align*}
	\ric^\p+\hess f-\di f\otimes \di f
\end{align*}
is proportional to the metric so that it is clearly Riemann compatible.
This suggested to us that, in some way, one might try to use Riemann compatibility (and Weyl compatibility) to study how far solutions of the Cotton gravity equations are from  solutions to the Einstein field equations, therefore giving a partial answer to Question \ref{question: domanda 2} of the Introduction.
In this direction, in the next subsection, inspired by works of Gover, Nurowski and Nagy, \cite{GoverA.Rod2006OtcE},\cite{GoverA.Rod2007FCC}, we show how the existence of a non-trivial Weyl-compatible tensor constrains the algebraic structure of the Weyl tensor. This will give us natural conditions on the Weyl tensor under which every solution of the C-$\p$-PF equations gives rise to a solution of a system that is strictly related to the $\p$-SPFST system. In Appendix \ref{Appendice B}, we further deepen this discussion by specializing to the case of dimension 4.
\subsection{Weyl compatibility}
Let us introduce some notations. Let $(M,g)$ be an $m$-dimensional Riemannian manifold and assume that it is orientable; let $\epsilon$ be its volume form. In an orthonormal co-frame, its components are given by the so called \emph{Levi-Civita symbol}. Define an $m$-times covariant tensor
\begin{align*}
	W^*_{i_1i_2...i_{m-2}jk}:=\epsilon_{i_1i_2...i_{m-2}pq}W^{pq}_{\ \ jk}.
\end{align*}
Let $S^2_0(M)$ be the space of traceless symmetric 2-covariant tensors on $M$, while $T^{(0,m-2)}M$ will be the space of $(m-2)$-covariant tensors on $M$. Then $W^*$ induces a linear map
\begin{align*}
	W^*_{|_{S^2_0}}:S^2_0(M)\to T^{(0,m-2)}M
\end{align*}
by sending $P\in S^2_0(M)$ of components $P_{ij}$ to the tensor $W^*_{|_{S^2_0}}(P)$ of components
\begin{align*}
	W^*_{|_{S^2_0}}(P)_{i_2...i_{m-2}j}=W^*_{li_2..i_{m-2}kj}P_{lk}.
\end{align*}
When $m=4$, as it is well-known, $W^*$ is an algebraic Weyl tensors and $W^*_{|_{S^2_0}}$ is an endomorphism of $S^2_0(M)$. It is clear how, in a similar way, the Weyl tensor also gives rise to a linear map $W_{|_{S^2_0}}$ which, in any dimension $m\geq 4$, is actually an endomorphism of $S^2_0(M)$.
The following simple lemma is probably well-known among experts.
\begin{lemma}\label{riem comp: W star e weyl comp}
	Let $(M,g)$ be an $m$-dimensional orientable Riemannian manifold and let $P$ belong to $S^2_0(M)$. Then $P$ lies in the kernel of $W^*_{|_{S^2_0}}$ if and only if $P$ is Weyl compatible.
\end{lemma}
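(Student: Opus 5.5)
Let $P\in S^2_0(M)$. I will show that $W^*_{|_{S^2_0}}(P)=0$ is equivalent to the first line of \eqref{riem comp: weyl + commut}, i.e. to
\[
\mathcal{W}(P)_{jiks}:=W_{liks}P_{jl}+W_{lisj}P_{kl}+W_{lijk}P_{sl}=0 .
\]
The whole argument works pointwise in an orthonormal coframe, where indices can be raised and lowered freely. The idea is Hodge duality: contracting a tensor that is skew in three indices with $\epsilon$ loses no information.

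\textbf{Step 1.} Note that $\mathcal{W}(P)_{jiks}$ is skew-symmetric in $(j,k,s)$. Indeed, the three terms are the cyclic sum over $(j,k,s)$ of $W_{liks}P_{jl}$. Each summand is already skew in its last two Weyl slots, so the cyclic sum is totally skew. A $3$-form in $(j,k,s)$ vanishes if and only if its contraction with $\epsilon_{i_2\dots i_{m-2}jks}$ vanishes, the Hodge star being an isomorphism $\Lambda^3\to\Lambda^{m-3}$. So $\mathcal{W}(P)=0$ is equivalent to
\[
\epsilon_{i_2\dots i_{m-2}jks}\,\mathcal{W}(P)_{jiks}=0\quad\text{for all } i,i_2,\dots,i_{m-2}.
\]
Dualizing in $(k,s)$ instead of in all three indices is not enough on its own, so I will use the full triple.

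\textbf{Step 2.} Compute this contraction. By skew-symmetry, each of the three cyclic terms contributes the same amount, so the contraction equals
\[
3\,\epsilon_{i_2\dots i_{m-2}jks}W_{liks}P_{jl}.
\]
Since $\epsilon_{\cdots jks}W_{liks}$ is, up to an index permutation and a constant, the component $W^*_{i_2\dots jli}$ with the last two slots $(l,i)$, this is $P_{jl}W^*_{i_2\dots i_{m-2}j\,li}$.

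\textbf{Step 3.} Match this with the definition $W^*_{|_{S^2_0}}(P)_{i_2\dots j}=W^*_{l i_2\dots k j}P_{lk}$, where $P$ is contracted into the first $\epsilon$-slot and one Weyl slot. The two contractions differ in which $\epsilon$-slot is paired with $P$, and the link between them is the same as in the classical statement that $W^*$ is trace-free when $m=4$.

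\textbf{Main obstacle.} Step 3 is where the real work lies. My plan is to express both contractions through the algebraic identity coming from the tracelessness of $W$: antisymmetrizing over $m+1$ indices gives zero, so
\[
\epsilon_{[i_1\dots i_{m-2}pq}W_{jk]rs}=0 .
\]
Contracting this identity appropriately with $P$, and using $P_{ll}=0$ (the trace term is the only place where the hypothesis $P\in S^2_0$ enters), should show that the two contractions agree up to a nonzero constant and an index permutation. This requires a careful combinatorial count of signs. For $m=4$ I would double-check the result against the known self-adjointness of $W^*$ on $S^2_0$.

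Putting the steps together gives $\mathcal{W}(P)=0$ if and only if $W^*_{|_{S^2_0}}(P)=0$.
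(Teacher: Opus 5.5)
Your Steps 1 and 2 are correct. They are essentially the paper's argument written in Hodge-star language. The paper fixes a positively oriented orthonormal coframe and takes $i_1,\dots,i_m$ distinct. It then observes that the component of $W^*_{|_{S^2_0}}(P)$ indexed by $(i_2,\dots,i_{m-2},j)$ equals, up to a nonzero constant factor, the cyclic sum defining Weyl compatibility evaluated on the complementary triple $\{i_1,i_{m-1},i_m\}$. Your remark that $\mathcal W(P)_{jiks}$ is a $3$-form in $(j,k,s)$ makes explicit why reading off components on complementary index sets is enough; the paper leaves this implicit.

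The step you call the main obstacle, however, is not one, and the route you plan for it rests on a wrong premise. In $W^*_{i_1\dots i_{m-2}jk}=\epsilon_{i_1\dots i_{m-2}pq}W_{pqjk}$ the first $m-2$ slots are $\epsilon$-slots, so $W^*$ is totally skew in them. Changing ``which $\epsilon$-slot is paired with $P$'' therefore costs only a sign. Concretely, moving the contracted index from slot $m-2$ to slot $1$ is a cyclic permutation of $m-2$ entries, so
\[
P_{jl}W^*_{i_2\dots i_{m-2}j\,l\,i}=(-1)^{m-3}P_{jl}W^*_{j\,i_2\dots i_{m-2}\,l\,i}=(-1)^{m-3}\,W^*_{|_{S^2_0}}(P)_{i_2\dots i_{m-2}i}.
\]
Combined with your Step 2, this gives
\[
\epsilon_{i_2\dots i_{m-2}jks}\,\mathcal W(P)_{jiks}=3(-1)^{m-3}\,W^*_{|_{S^2_0}}(P)_{i_2\dots i_{m-2}i},
\]
and with Step 1 the proof is finished.

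Several parts of your plan are therefore unnecessary or mistaken:
\begin{itemize}
\item The $(m+1)$-fold antisymmetrization identity $\epsilon_{[i_1\dots i_{m-2}pq}W_{jk]rs}=0$ is not needed.
\item The analogy with the trace-freeness of $W^*$ in dimension $4$ is misplaced.
\item $P_{ll}=0$ enters nowhere. The displayed identity holds for every symmetric $P$; consistently, both $\mathcal W(g)$ and $W^*(g)$ vanish by the first Bianchi identity. The hypothesis $P\in S^2_0$ only reflects the domain on which the paper defines $W^*_{|_{S^2_0}}$.
\end{itemize}

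As written, your Step 3 is an unexecuted plan (``should show''), so the proposal is incomplete at that point. The missing piece, though, is just the one-line sign computation above, not a combinatorial identity.
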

\begin{proof}
	Let $p\in M$ be fixed and let $\{\theta^i\}_{i=1}^m$ be a positively oriented orthonormal co-frame at $p$. Then, at $p$, we have
	\begin{align*}
		\epsilon=\theta^1\wedge \theta^2\wedge...\wedge \theta^m.
	\end{align*}
	First, note that, by the obvious symmetries of $\epsilon$ and by the definition of $W^*$ we have
	\begin{align*}
		W^*_{i_1i_2...i_{m-2}jq}=0
	\end{align*}
	whenever there are two coinciding indices among $i_1,i_2,...,i_{m-2}$.
	Assume now that $i_1,i_2,...,i_m$ are distinct indices. Suppressing the Einstein summation convention for the rest of this proof, we have
	\begin{align*}
		\sum_{l,q}P_{lq}&W^*_{li_2...i_{m-2}jq}\\
		&=\sum_q\pa{P_{i_1q}W^*_{i_1i_2...i_{m-2}jq}+P_{i_{m-1}q}W^*_{i_{m-1}i_2...i_{m-2}jq}+P_{i_mq}W^*_{i_mi_2...i_{m-2}jq}}\\
		&=\pm\sum_q\pa{P_{i_1q}W_{i_{m-2}i_m jq}+P_{i_{m-1}q}W_{i_{m}i_{1}jq}+P_{i_mq}W_{i_1i_{m-1}jq}}
	\end{align*} 
	where the last equality follows from the definition of $W^*$ and the sign $\pm$ coincides with the determinant of $\theta^{i_1}\wedge \theta^{i_2}\wedge...\wedge \theta^{i_m}$.
	Using the symmetries of the Weyl tensor we get
	\begin{align*}
		\sum_{l,q}P_{lq}&W^*_{li_2...i_{m-2}jq}\\
		&=\mp \sum_q\pa{P_{i_1q}W_{qji_{m-2}i_m}+P_{i_{m-1}q}W_{qji_{m}i_{1}}+P_{i_mq}W_{qji_1i_{m-1}}}
	\end{align*}
	which gives the desired result.
\end{proof}
From Lemma \ref{riem comp: W star e weyl comp} we deduce that if $W^*_{|_{S^2_0}}$ is injective, then there exists no non-zero trace-free Weyl compatible tensor.

In the next Theorem, we apply the above discussion  to the setting of C-$\p$-PF, showing how, under suitable conditions on the Weyl tensor, every solution of the C-$\p$-PF system  solves a strict relative of the $\p$-SPFST system.
\begin{theorem}\label{riem comp: risultato principale}
	Let $(M,g)$ be an orientable Riemannian manifold; assume that there exists $f\in C^{\infty}(M)$ that solves the C-$\p$-PF system. Let  $W^*_{|_{S^2_0}}$ be injective. Then, there exist two constants $\Lambda_1,\Lambda_2\in \RR$ such that $f$ satisfies
	\begin{subequations}\label{Cphi e phiPF: integrazione: conclusione bis}
		\begin{empheq}[left={\empheqlbrace}]{alignat=2}
			&\ric^\p+\hess f-\di f\otimes \di f=\lambda g,\label{Cphi e phiPF: integrazione: conclusione bis 1}\\
			&\Delta_f f=\frac{1}{m-1}\sq{S^\p-m(\mu+p)}+\Lambda_1e^f, \label{Cphi e phiPF: integrazione: conclusione bis 2}\\
			&\tau(\p)=\di \p(\nabla f)+\frac{\nabla^h U}{\alpha},\label{Cphi e phiPF: integrazione: conclusione bis 3}\\
			&\frac{1}{2}S^\p=U(\p)+\mu+\Lambda_2,\label{Cphi e phiPF: integrazione: conclusione bis 4}\\[0.2cm]
			&(\mu+p)\nabla f-\nabla p=0, \label{Cphi e phiPF: integrazione: conclusione bis 5}
		\end{empheq}
	\end{subequations}
	where $\lambda=\frac{1}{m}\pa{S^\p+\Delta_f f}$.
\end{theorem}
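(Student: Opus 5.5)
Let $P:=\ric^\p+\hess f-\di f\otimes \di f$. By Proposition \ref{riem comp: prop: C phi PF e riem comp}, $P$ is Riemann compatible, hence Weyl compatible by \eqref{riem comp: weyl + commut}. By Remark \ref{riem comp: riem comp e termini proporz alla metrica}, its traceless part $\mathring{P}=P-\frac{\tr P}{m}g$ is also Weyl compatible, since Weyl compatibility is insensitive to adding multiples of $g$ (the $g$-terms vanish by the first Bianchi identity for $W$). Lemma \ref{riem comp: W star e weyl comp} puts $\mathring P$ in the kernel of $W^*_{|_{S^2_0}}$, and injectivity gives $\mathring P=0$. Thus $P=\frac{\tr P}{m}g=\frac1m(S^\p+\Delta f-|\nabla f|^2)g=\lambda g$, which is \eqref{Cphi e phiPF: integrazione: conclusione bis 1}. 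Equations \eqref{Cphi e phiPF: integrazione: conclusione bis 3} and \eqref{Cphi e phiPF: integrazione: conclusione bis 5} are already part of the C-$\p$-PF system.

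It remains to derive the two scalar equations, and this is where the constants appear. I will use that $\mathscr C=P-\frac{S^\p+2\Delta_f f+2U(\p)+2\mu}{2m}g$ is Codazzi (Appendix \ref{Appendice: tensori di Codazzi}). With $P=\lambda g$ this gives $\mathscr C=\psi g$ with
\[
\psi=\lambda-\frac{S^\p+2\Delta_f f+2U(\p)+2\mu}{2m}=\frac{1}{2m}\bigl(S^\p-2U(\p)-2\mu\bigr)-\frac1m\Delta_f f\cdot 0+\ldots,
\]
where the exact coefficient is to be computed. A Codazzi tensor of the form $\psi g$ satisfies $\psi_k\delta_{ij}=\psi_j\delta_{ik}$, so $\psi$ is constant when $m\ge3$. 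This gives one linear relation among $S^\p$, $U(\p)$, $\mu$ and $\Delta_f f$ with a constant.

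The second relation comes from equation \eqref{Deduzione: C-phi-PF 2}. I will substitute $f_{ij}=\lambda\delta_{ij}-R^\p_{ij}+f_if_j$, differentiate to express $f_{lli}$, and use the $\p$-Schur identity \eqref{phi curv: phi Schur} together with $\tau(\p)=\di\p(\nabla f)+\frac1\alpha\nabla^hU$. As in the $\p$-SPFST integrability computation, \eqref{Deduzione: C-phi-PF 2} should then reduce to a first-order exact equation. I expect it to take the form $\nabla(e^{-f}Q)=0$, or $\nabla Q=Q\nabla f$ plus terms killed by the constancy of $\psi$, where $Q=\Delta_f f-\frac{1}{m-1}[S^\p-m(\mu+p)]$. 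The pressure would enter through $\nabla p=(\mu+p)\nabla f$. Integrating on connected $M$ then gives $Q=\Lambda_1e^f$, which is \eqref{Cphi e phiPF: integrazione: conclusione bis 2}.

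Combining the constancy of $\psi$ with this equation should then yield \eqref{Cphi e phiPF: integrazione: conclusione bis 4} with a constant $\Lambda_2$. If the resulting combination is off, I would instead use the equation for $\psi$ directly to define $\Lambda_2$.

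The main obstacle is this tensorial bookkeeping in \eqref{Deduzione: C-phi-PF 2}. The hard part is identifying the right combination that becomes $e^{-f}$ times a gradient, and making sure the two scalar relations decouple into exactly the pair of constants claimed, rather than into one constant and a leftover non-exact term.
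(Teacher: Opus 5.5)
Your proof is correct. The first step ($P=\lambda g$ from Weyl compatibility of $\mathring P$ and injectivity of $W^*_{|_{S^2_0}}$) is the paper's, but you obtain \eqref{Cphi e phiPF: integrazione: conclusione bis 4} by a different and shorter route.

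\textbf{Equation \eqref{Cphi e phiPF: integrazione: conclusione bis 4}.} The coefficient you left open works out exactly. Since $m\lambda=S^\p+\Delta_f f$, the $\Delta_f f$ terms cancel and $\mathscr C=\psi g$ with $\psi=\frac{1}{2m}\pa{S^\p-2U(\p)-2\mu}$. The Codazzi condition then forces $\nabla\psi=0$, and $\Lambda_2:=m\psi$ gives \eqref{Cphi e phiPF: integrazione: conclusione bis 4} directly. Neither a combination with the second scalar relation nor your fallback is needed. The paper argues differently: it proves the Hamilton-type identity of Lemma \ref{lemma: Cphi e phiPF: lemma alla Hamilton} (divergence of $P=\lambda g$, the $\p$-Schur identity and the $\tau$-equation). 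It inserts this into \eqref{Deduzione: C-phi-PF 2} to get $\nabla\pa{\Delta_f f+\frac12 S^\p-m\lambda+U(\p)+\mu}=0$, and then integrates.

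\textbf{Equation \eqref{Cphi e phiPF: integrazione: conclusione bis 2}.} Here both routes coincide, and you need less than you planned: no differentiation of $f_{ij}$ and no $\p$-Schur identity. Contracting $P=\lambda g$ with $\nabla f$ gives $f_lR^\p_{li}=(\lambda+\abs{\nabla f}^2)f_i-f_lf_{li}$. Together with $f_{lli}-2f_lf_{li}=(\Delta_f f)_i$, equation \eqref{Deduzione: C-phi-PF 2} becomes
\[
0=-\frac{m-1}{m}\sq{(\Delta_f f)_i-(\Delta_f f)f_i}+\frac{1}{2m}S^\p_i-\frac{1}{m}S^\p f_i+\frac{1}{m}U^a\p^a_i-\frac{m-1}{m}\mu_i .
\]
The $U^a\p^a_i$ term is exactly the one removed by $\nabla\psi=0$, i.e.\ $U^a\p^a_i=\frac12 S^\p_i-\mu_i$, as you anticipated. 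The pressure equation then gives $\nabla\pa{e^{-f}Q}=0$ with your $Q$.

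\textbf{Comparison.} Your route reuses the Codazzi tensor $\mathscr C$ that already yields the Riemann compatibility of $P$. It shows that \eqref{Cphi e phiPF: integrazione: conclusione bis 4} follows from \eqref{Deduzione: C-phi-PF 1} alone once $P=\lambda g$, without the $\tau$-equation. The paper's route costs an extra divergence computation, but rests on an identity that holds for any solution of $P=\lambda g$ and the $\tau$-equation, independently of the Cotton system.
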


\begin{remark}
	Assume that \eqref{Cphi e phiPF: integrazione: conclusione bis} holds and consider the substitution
	\begin{align}\label{Cphi e phiPF: integrazione: sostituzione}
		\begin{cases}
			\tilde{\mu} =\mu+\Lambda_2, \\[0.2cm]
			\tilde{p} = p - \dfrac{m-1}{m}\Lambda_1e^f-\Lambda_2.
		\end{cases}
	\end{align}
	It is easy to see that equations \eqref{Cphi e phiPF: integrazione: conclusione bis 2}, \eqref{Cphi e phiPF: integrazione: conclusione bis 4} and \eqref{Cphi e phiPF: integrazione: conclusione bis 5} reduce to equations \eqref{Cphi e phiPF: sistema phiSPFST riscritto 2}, \eqref{Cphi e phiPF: sistema phiSPFST riscritto 4} and \eqref{Cphi e phiPF: sistema phiSPFST riscritto 5} respectively, with $\tilde{\mu}$ and $\tilde{p}$ in place of $\mu$ and $p$. This implies that every solution of \eqref{Cphi e phiPF: integrazione: conclusione bis} is a $\p$-SPFST, for a different choice of pressure and density, unless $\Lambda_1=\Lambda_2=0$.
\end{remark}
\begin{proof}[Proof of Theorem \ref{riem comp: risultato principale}]
	From Proposition \ref{riem comp: prop: C phi PF e riem comp}, we have that
	\begin{align*}
		P:=\ric^\p+\hess f-\di f\otimes \di f
	\end{align*}
	is a Riemann compatible tensor, so that it is also Weyl compatible. From Lemma \ref{riem comp: W star e weyl comp} we deduce that
	\begin{align*}
		W^*_{|_{S^2_0}}(\mathring P)=0
	\end{align*}
	where $\mathring P$ denotes the trace-less part of $P$. Therefore, since $W^*_{|_{S^2_0}}$ is injective, there exists $\lambda\in C^{\infty}(M)$ such that
	\begin{align*}
		\ric^\p+\hess f-\di f\otimes \di f=\lambda g.
	\end{align*}
	From the equation above and the equations of C-$\p$-PF
	we deduce that it holds
	\begin{subequations}\label{Cphi e phiPF: intgerazione: sistemone da integrare}
		\begin{empheq}[left={\empheqlbrace}]{alignat=2}
			&R^\p_{ij}+ f_{ij}-f_if_j=\lambda \delta_{ij}, \label{Cphi e phiPF: intgerazione: sistemone da integrare 1}\\[0.2cm]
			&0=-\frac{m-1}{m}f_{lli}+\frac{m-2}{m}f_lf_{li}+\Delta f f_i+\frac{1}{2m}S^\p_i -f_lR^\p_{li}+\frac{1}{m}U^a\p^a_i-\frac{m-1}{m}\mu_i, \label{Cphi e phiPF: intgerazione: sistemone da integrare 2}\\
			&\tau(\p)=\di \p(\nabla f)+\frac{1}{\alpha} \nabla^h U(\p), \label{Cphi e phiPF: intgerazione: sistemone da integrare 3}\\[0.2cm]
			&(\mu+p)\nabla p-\nabla f=0. \label{Cphi e phiPF: intgerazione: sistemone da integrare 4}
		\end{empheq}
	\end{subequations}
	We are going to prove that \eqref{Cphi e phiPF: intgerazione: sistemone da integrare} implies \eqref{Cphi e phiPF: integrazione: conclusione bis}. 
	We will use the following equation, that will be proved in Lemma \ref{lemma: Cphi e phiPF: lemma alla Hamilton} below:
	\begin{align}\label{Cphi e phiPF: integrazione: identità pre-Hamilton}
		\frac{1}{2}S^\p_i-(m-1)\lambda_i=f_lR^\p_{li}+f_lf_{li}-\pa{\Delta f}f_i-U^a\p^a_i.
	\end{align}
	From \eqref{Cphi e phiPF: integrazione: identità pre-Hamilton} we get \eqref{Cphi e phiPF: integrazione: conclusione bis 4} in the following way: first, rearrange \eqref{Cphi e phiPF: integrazione: identità pre-Hamilton} to deduce
	\begin{align*}
		f_lR^\p_{li}-(\Delta f) f_i=\frac{1}{2}S^\p_i-(m-1)\lambda_i-f_lf_{li}+U^a\p^a_i.
	\end{align*}
	Insert the latter into \eqref{Cphi e phiPF: intgerazione: sistemone da integrare 2}:
	\begin{align*}
		0=&-\frac{m-1}{m}f_{lli}+\frac{m-2}{m}f_lf_{li}-\frac{1}{2}S^\p_i+(m-1)\lambda_i+f_lf_{li}\\
		&-U^a\p^a_i+\frac{1}{2m}S^\p_i+\frac{1}{m}U^a\p^a_i-\frac{m-1}{m}\mu_i\\
		=&-\frac{m-1}{m}\pa{\Delta_f f}_i-\frac{m-1}{2m}S^\p_i+(m-1)\lambda_i-\frac{m-1}{m}U^a\p^a_i -\frac{m-1}{m}\mu_i,
	\end{align*}
	that is,
	\begin{align*}
		0=\pa{\Delta_f f}_i+\frac{1}{2}S^\p_i-m\lambda_i+U^a\p^a_i+\mu_i.
	\end{align*}
	Integrating the latter, we get
	\begin{align*}
		0=\Delta_f f+\frac{1}{2}S^\p-m\lambda+U^\p+\mu+\Lambda_2
	\end{align*}
	for some $\Lambda_2\in \RR$. Use \eqref{Cphi e phiPF: integrazione: lambda uguale} to obtain
	\begin{align*}
		-\Lambda_2 = & \; \Delta_f f -\frac{1}{2}S^\p-\Delta_f f+U(\p)+\mu\\
		= & -\frac{1}{2}S^\p+U(\p)+\mu
	\end{align*}
	and therefore \eqref{Cphi e phiPF: integrazione: conclusione bis 4}. 
	
	Take the covariant derivative of \eqref{Cphi e phiPF: integrazione: conclusione bis 4} and rearrange
	\begin{align*}
		U^a\p^a_i=\frac{1}{2}S^\p_i-\mu_i;
	\end{align*}
	use the latter in \eqref{Cphi e phiPF: intgerazione: sistemone da integrare 2}
	\begin{align*}
		0 = & -\frac{m-1}{m}f_{lli}+\frac{m-2}{m}f_lf_{li}+(\Delta f)f_i + \frac{1}{2m}S^\p_i-f_lR^\p_{li}+\frac{1}{m}U^a\p^a_i-\frac{m-1}{m}\mu_i\\
		= & -\frac{m-1}{m}f_{lli}+\frac{m-2}{m}f_lf_{li}+(\Delta f)f_i+\frac{1}{m}S^\p_i-R^\p_{il}f_l-\mu_i.
	\end{align*}
	From \eqref{Cphi e phiPF: intgerazione: sistemone da integrare 1} we obtain
	\begin{align*}
		0 = & -\frac{m-1}{m}f_{lli}+\frac{m-2}{m}f_lf_{li}+(\Delta f)f_i+\frac{1}{m}S^\p_i-\mu_i +f_lf_{li}-\abs{\nabla f}^2f_i-\lambda f_i\\
		= & -\frac{m-1}{m}f_{lli}+2\frac{m-1}{m}f_lf_{li}+\pa{\Delta_f f}f_i+\frac{1}{m}S^\p_i - \mu_i-\lambda f_i\\
		= & -\frac{m-1}{m}\pa{\Delta_f f}_i+\pa{\Delta_f f}f_i+\frac{1}{m}S^\p_i-\mu_i-\lambda f_i.
	\end{align*}
	Use \eqref{Cphi e phiPF: integrazione: lambda uguale} to get
	\begin{align*}
		0 = &-\frac{m-1}{m}\pa{\Delta_f f}_i+\pa{\Delta_f f}f_i+\frac{1}{m}S^\p_i-\mu_i -\frac{1}{m}S^\p f_i-\frac{1}{m}\pa{\Delta_f f} f_i,
	\end{align*}
	that is,
	\begin{align*}
		0=-\frac{m-1}{m}\pa{\Delta_f f}_i+\frac{m-1}{m}\pa{\Delta_f f}f_i+\frac{1}{m}S^\p_i-\frac{1}{m}S^\p f_i-\mu_i.
	\end{align*}
	Using \eqref{Cphi e phiPF: intgerazione: sistemone da integrare 4}, the latter can be re-written
	\begin{align*}
		0=&-\frac{m-1}{m}\pa{\Delta_f f}_i+\frac{m-1}{m}\pa{\Delta_f f}f_i+\frac{1}{m}S^\p_i-\frac{1}{m}S^\p f_i-\mu_i-p_i+\pa{\mu+p}f_i.
	\end{align*}
	Multiply by $\frac{m}{m-1}$ to deduce
	\begin{align*}
		0=&-\pa{\Delta_f f}_i+\pa{\Delta_f f}f_i+\frac{1}{m-1}S^\p_i-\frac{1}{m-1}S^\p f_i-\frac{m}{m-1}\pa{\mu_i+p_i}+\frac{m}{m-1}\pa{\mu+p}f_i
	\end{align*}
	which is equivalent to
	\begin{align*}
		0=-\left\{e^{-f}\sq{\Delta_f f-\frac{1}{m-1}S^\p+\frac{m}{m-1}\pa{\mu+p}}\right\}_i.
	\end{align*}
	Integrating the latter we obtain \eqref{Cphi e phiPF: integrazione: conclusione bis 2}.
	\end{proof}
	We still owe the proof of the following
	\begin{lemma}\label{lemma: Cphi e phiPF: lemma alla Hamilton}
		Let $\p:(M,g)\to (N,h)$ be a smooth map between Riemannian manifolds and consider $\alpha\in \RR,U\in C^\infty(N), f\in C^\infty(M)$ such that \eqref{Cphi e phiPF: integrazione: conclusione bis 1} and \eqref{Cphi e phiPF: integrazione: conclusione bis 3} hold. Then the following identity holds on $M$:
			\begin{align}\label{Cphi e phiPF: integrazione: identità pre-Hamilton bis}
			\frac{1}{2}S^\p_i-(m-1)\lambda_i=f_lR^\p_{li}+f_lf_{li}-\pa{\Delta f}f_i-U^a\p^a_i.
		\end{align}
	\end{lemma}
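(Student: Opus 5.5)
This is the Hamilton-type identity for Ricci solitons, adapted to the $\p$-setting. I will take the divergence of \eqref{Cphi e phiPF: integrazione: conclusione bis 1}, written as
\[
R^\p_{ij}+f_{ij}-f_if_j=\lambda\delta_{ij},
\]
with respect to the index $j$. This gives
\[
R^\p_{ij,j}+f_{ijj}-f_{ij}f_j-f_i\Delta f=\lambda_i .
\]
By the $\p$-Schur identity \eqref{phi curv: phi Schur},
\[
R^\p_{ij,j}=\tfrac12 S^\p_i-\alpha\p^a_{jj}\p^a_i .
\]
By the Ricci commutation relations,
\[
f_{ijj}=f_{jji}+R_{il}f_l=(\Delta f)_i+R^\p_{il}f_l+\alpha\p^a_i\p^a_lf_l .
\]

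Next I trace \eqref{Cphi e phiPF: integrazione: conclusione bis 1}, which gives $S^\p+\Delta f-\abs{\nabla f}^2=m\lambda$. Differentiating,
\[
(\Delta f)_i=m\lambda_i-S^\p_i+2f_lf_{li}.
\]
Substituting everything into the divergence identity, I collect terms:
\[
\tfrac12 S^\p_i-\alpha\p^a_{jj}\p^a_i+m\lambda_i-S^\p_i+2f_lf_{li}+R^\p_{il}f_l+\alpha\p^a_i\p^a_lf_l-f_lf_{li}-f_i\Delta f=\lambda_i .
\]
The $\p$-terms combine to $-\alpha\p^a_i\bigl(\p^a_{jj}-\p^a_lf_l\bigr)$. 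By \eqref{Cphi e phiPF: integrazione: conclusione bis 3} this equals $-U^a\p^a_i$; this cancellation is the only place the second hypothesis enters.

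Rearranging then gives
\[
-\tfrac12 S^\p_i+(m-1)\lambda_i+f_lf_{li}+R^\p_{il}f_l-f_i\Delta f-U^a\p^a_i=0,
\]
which is exactly \eqref{Cphi e phiPF: integrazione: identità pre-Hamilton bis}.

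There is no real obstacle here; the only care needed is in tracking signs in the commutation $f_{ijj}-f_{jji}=R_{il}f_l$ (consistent with \eqref{Deduzione: ricci rel + W phi}) and in the factor $2$ coming from differentiating $\abs{\nabla f}^2$.
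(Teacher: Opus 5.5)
Your proof is correct and follows the paper's argument: take the divergence of $\ric^\p+\hess f-\di f\otimes\di f=\lambda g$, apply the $\p$-Schur identity and the commutation $f_{ijj}=(\Delta f)_i+R_{il}f_l$, use the differentiated trace identity $m\lambda=S^\p+\Delta_f f$, and then use the equation for $\tau(\p)$ to turn the $\p$-terms into $-U^a\p^a_i$. The only difference is cosmetic: you substitute $(\Delta f)_i$ directly, whereas the paper writes a second identity and subtracts the two.
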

	\begin{proof}
			From \eqref{Cphi e phiPF: integrazione: conclusione bis 1}, we clearly  have
		\begin{align}\label{Cphi e phiPF: integrazione: lambda uguale}
			m\lambda=S^\p+\Delta_f f;
		\end{align}
		taking the covariant derivative of \eqref{Cphi e phiPF: integrazione: lambda uguale} we deduce
		\begin{align}\label{Cphi e phiPF: integrazione: derivata di lambda uguale}
			S^\p_i-m\lambda_i=-f_{lli}+2f_lf_{li}.
		\end{align}
		Computing the divergence of \eqref{Cphi e phiPF: integrazione: conclusione bis 1} 
		\begin{align*}
			R^\p_{ij,j}+f_{ijj}-f_jf_{ij}-f_if_{jj}=\lambda_i
		\end{align*}
		and using the $\p$-Schur identity and the Ricci commutation relations we obtain
		\begin{align*}
			\frac{1}{2}S^\p_i-\alpha \p^a_{ll}\p^a_i+f_{lli}+f_lR_{li}-f_lf_{li}-f_i f_{ll}=\lambda_i;
		\end{align*}
		from the definition of $\ric^\p$ and \eqref{Cphi e phiPF: integrazione: conclusione bis 3} we deduce
		\begin{align*}
			-\alpha \p^a_{ll}\p^a_i+f_lR_{li}=&-\alpha \p^a_{ll}\p^a_i+f_lR^\p_{li}+\alpha \p^a_lf_l\p^a_i\\
			=&-U^a\p^a_i+f_lR^\p_{li}
		\end{align*}
		and therefore
		\begin{align*}
			\frac{1}{2}S^\p_i-U^a\p^a_i+f_lR^\p_{li}+f_{lli}-f_lf_{li}-f_if_{ll}=\lambda_i.
		\end{align*}
		Rearranging we get
		\begin{align}\label{Cphi e phi PF: integrazione: quasi identità pre Hamilton}
			\frac{1}{2}S^\p_i-\lambda_i=-f_lR^\p_{li}-f_{lli}+f_lf_{li}+f_i\Delta f+U^a\p^a_i.
		\end{align}
		Subtracting \eqref{Cphi e phi PF: integrazione: quasi identità pre Hamilton} from \eqref{Cphi e phiPF: integrazione: derivata di lambda uguale} we obtain \eqref{Cphi e phiPF: integrazione: identità pre-Hamilton bis}.\\
	\end{proof}

\appendix

\section{More on the C-$\p$-PF system}\label{Appendice A}

\subsection{Deriving the system}\label{Appendice A 1}
Starting from the Cotton Gravity equation with source given by a perfect fluid and a non-linear field $\hat{\p}$, we deduce system \eqref{Deduzione: C-phi-PF versione 3}.
\begin{theorem}
	Let $\hat M =M\times_f \RR$ be the static space-time with metric $\hat g$ given by
	\[
	\hat g = -e^{-2\hat{f}} \di t^2 + g
	\]
	with $f\in C^\infty(M), \hat f :=  \hat\pi_M\circ f$, $(M,g)$ an $m$-dimensional Riemannian manifold. For $\mu,p \in C^\infty(M)$, $\varphi :(M,g) \to (N,h)$ smooth, define
	\[
	\hat\mu = \hat\pi_M\circ \mu, \qquad \hat p = \hat\pi_M\circ p, \qquad \hat\varphi =  \hat\pi_M\circ \varphi.
	\]
	Define the stress-energy tensor $\hat T$ by
	\[
	\hat T = \hat T^{\hat{\varphi}} + \hat T^F
	\]
	with
	\begin{align*}
		\hat T^{\hat{\varphi}}  = \alpha \, \hat\varphi^\ast h - \left[ U(\hat\varphi) + \frac{\alpha}{2} |\di\hat{\varphi}|^2 \right] \hat{g},\ \ \ \ \hat T^F  =(\hat\mu +\hat p)e^{-2\hat f} \di t^2 + \hat p \hat g \, .
	\end{align*}
	for some $\alpha\in\RR\setminus\{0\}$, and assume that both $\hat{T}^{\hat{\p}}$ and $\hat{T}^F$ are divergence-free. In this setting, Cotton field equations yield
	\begin{subequations}\label{Appendice: C-phi-PF versione 3}
		\begin{empheq}[left={\empheqlbrace}]{alignat=2}
			0 = & \; C^\p_{ijk}+f_{ijk}-f_{ikj}-f_{ik}f_j+f_{ij}f_k\notag\\
			&-\frac{1}{m}\sq{\pa{f_{llk}-2f_lf_{lk}}\delta_{ij}-\pa{f_{llj}-2f_lf_{lj}}\delta_{ik}}\notag\\
			&+\frac{1}{2m(m-1)}\pa{S^\p_k\delta_{ij}-S^\p_j\delta_{ik}}\notag\\
			&+\frac{1}{m}U^a\p^a_j\delta_{ik}-\frac{1}{m}U^a\p^a_k\delta_{ij}+\frac{1}{m}\pa{\mu_j\delta_{ik}-\mu_k\delta_{ij}},\label{Appendice: C-phi-PF versione 3 1}\\
			0=&-\frac{m-1}{m}f_{lli}+\frac{m-2}{m}f_lf_{li}+\pa{\Delta f}f_i+\frac{1}{2m}S^\p_i-f_lR^\p_{il}\notag\\
			&+\frac{1}{m}U^a\p^a_i-\frac{m-1}{m}\mu_i,\label{Appendice: C-phi-PF versione 3 2}\\
			0 = & \; h\pa{\tau(\p)-\di\p(\nabla f)-\frac{1}{\alpha}\nabla^h U (\p),\di \p},\label{Appendice: C-phi-PF versione 3 4} \\ 
			0 = & \; \nabla p-(\mu+p)\nabla f \label{Appendice: C-phi-PF versione 3 3}.
		\end{empheq}
	\end{subequations}
\end{theorem}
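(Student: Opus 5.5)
The approach is a direct computation of the Cotton gravity equations in the form \eqref{Har_eq2 bis} of Proposition \ref{Prop: cotton gravity: conseguenze base delle eq. di campo}. We split $\hat C$ and the right-hand side into their space--time components with respect to the adapted orthonormal coframe $\omega^0=e^{-\hat f}\di t$, $\omega^i=\theta^i$.

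First, we record the standard warped-product formulas for $\hat g=-e^{-2\hat f}\di t^2+g$. Set $u=e^{-f}$. The connection forms are $\omega^0_{\ i}=-f_i\,\omega^0$. The curvature components are
\[
\hat R_{ijkl}=R_{ijkl},\qquad \hat R_{0i0j}=-\frac{u_{ij}}{u}=f_{ij}-f_if_j,\qquad \hat R_{0ijk}=0.
\]
Hence
\[
\hat R_{ij}=R_{ij}+f_{ij}-f_if_j,\qquad \hat R_{00}=\Delta f-|\nabla f|^2,\qquad \hat R_{0i}=0,
\]
up to the sign conventions of Section \ref{Lorentzian Geometry}, and
\[
\hat S=S+2\Delta f-2|\nabla f|^2 .
\]
Since $\hat\p$ is $t$-independent, $\hat\p^*h$ has no $0$-components, and $|\di\hat\p|^2=|\di\p|^2$. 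Combining these gives $\hat A$ and $\hat T$ blockwise:
\[
\hat T_{ij}=\alpha\p^a_i\p^a_j+\Big(p-U-\tfrac{\alpha}{2}|\di\p|^2\Big)\delta_{ij},\qquad
\hat T_{00}=\mu+U+\tfrac{\alpha}{2}|\di\p|^2,
\]
with $\tr_{\hat g}\hat T=-\mu+mp-(m+1)U-\tfrac{m-1}{2}\alpha|\di\p|^2$.

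Second, we compute covariant derivatives of block-diagonal tensors $Q$ with $Q_{0i}=0$. For such $Q$ the only corrections come from $\omega^0_{\ i}$:
\[
\hat\nabla_k Q_{ij}=Q_{ij,k},\qquad \hat\nabla_0 Q_{0i}=f_i\,(Q_{00}-Q_{ii}\text{-type terms}),
\]
more precisely $Q_{0i;0}=-f_l(Q_{li}+Q_{00}\delta_{li})$ with signs to be fixed, and $Q_{00;k}=(Q_{00})_k$. It follows that the independent components of \eqref{Har_eq2 bis} are three. The $(ijk)$ component gives a spatial equation. The $(0\,0\,k)$ component, equivalently $(i\,0\,0)$, gives a second one. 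The $(0ij)$ and $(i0j)$ components are either trivial or reduce to the symmetry of $Q_{ij,k}$-type quantities. The divergence-free hypotheses, pulled down to $M$, give two further facts. First, $\div\hat T^F=0$ reduces to $\nabla p=(\mu+p)\nabla f$, which is \eqref{Appendice: C-phi-PF versione 3 3}. Second, $\div\hat T^{\hat\p}=0$ reduces to $h(\tau(\hat\p)-\tfrac1\alpha\nabla U,\di\p)=0$. Using $\tau(\hat\p)=\tau(\p)-\di\p(\nabla f)$ for the static metric, this yields \eqref{Appendice: C-phi-PF versione 3 4}.

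Third, the $(i00)$ component. Here $\hat A_{i0;0}-\hat A_{00;i}$ produces $f_{lli}$, $f_lf_{li}$ and $f_iR_{il}$-type terms from $\hat A_{00}$ and from the connection correction. Matching these with the corresponding $\hat T$ terms and using the $\p$-Schur identity \eqref{phi curv: phi Schur} to eliminate $\alpha\p^a_{ll}\p^a_i$ via the conservation law should give \eqref{Appendice: C-phi-PF versione 3 2}. The factors $\frac{1}{m}$ arise from $n-1=m$.

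The $(ijk)$ component is handled similarly. $\hat A_{ij,k}-\hat A_{ik,j}$ splits as $C^\p_{ijk}$, coming from the $\Ric$, $\alpha\p^*h$ and $S^\p$ parts, plus $f_{ijk}-f_{ikj}+f_{ij}f_k-f_{ik}f_j$ from the Hessian part. In addition there are trace terms in $\delta_{ij}$ involving $(\Delta f-|\nabla f|^2)_k$, because $\hat S-S^\p$ contains $2\Delta f-2|\nabla f|^2$ and $\alpha|\di\p|^2$. The right-hand side contributes $-\alpha(\p^a_{ik}\p^a_j-\p^a_{ij}\p^a_k)$, which cancels the $\p^*h$ part, together with $\delta$-terms in $p_k,U^a\p^a_k$ and $\mu_k$. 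One then eliminates $p_k$ using the fluid equation and the $00$-equation just obtained; the expected outcome is that every pressure term drops, leaving only $\mu$ as in \eqref{Appendice: C-phi-PF versione 3 1}.

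The main obstacle is purely bookkeeping in this last step. The delicate points are getting the signs of the Lorentzian trace, the $\omega^0_{\ i}$ corrections, and the coefficients $\frac{1}{2m(m-1)}$ and $\frac1m$ right. To control this, I would check the result against Remark \ref{Deduzione: rmk: U armonicità}'s consistency requirement that $\p$-SPFST solutions satisfy the system: that is, plugging \eqref{Cphi e phiPF: sistema phiSPFST riscritto} into the derived equations must give identities. That test fixes any residual sign ambiguity.
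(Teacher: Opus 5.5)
Your route is the paper's: split both sides of \eqref{Har_eq2 bis} in the adapted coframe $\omega^0=e^{-\hat f}\di t$, $\omega^i=\hat\pi_M^*\theta^i$. Then read off \eqref{Appendice: C-phi-PF versione 3 3} and \eqref{Appendice: C-phi-PF versione 3 4} from $\div_1\hat T^F=0$ and $\div_1\hat T^{\hat\p}=0$. The paper writes the right-hand side as $-(m-1)\div_1(\hat{\mathcal T}^{\hat\p}+\hat{\mathcal T}^F)$ with $\div_1\hat T=0$, which is the same identity. But the theorem \emph{is} this bookkeeping, and several of your predictions about it are wrong, including where the hypotheses are used.

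Minor points first:
\begin{itemize}
\item $\hat R_{00}=\Delta u/u=\abs{\nabla f}^2-\Delta f$. Your sign contradicts your own, correct, $\hat S=S+2\Delta_f f$.
\item With $\omega^0_{\ i}=\omega^i_{\ 0}=-f_i\omega^0$ one gets $Q_{0i;0}=f_lQ_{li}+f_iQ_{00}$.
\item Only the blocks $(ijk)$ and $(00i)$ are nontrivial. Here $\hat C_{i00}=0$ by skew-symmetry, and the $(ij0)$ and $(0ij)$ components vanish identically on both sides.
\end{itemize}

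\textbf{The $(ijk)$ block.} Nothing has to be eliminated here. Since $\hat T^F_{ij;k}=p_k\delta_{ij}$ and $\tr_{\hat g}\hat T^F=mp-\mu$, the fluid part of the right-hand side is identically
\[
p_k\delta_{ij}-p_j\delta_{ik}+\tfrac1m\big[(mp-\mu)_j\delta_{ik}-(mp-\mu)_k\delta_{ij}\big]=\tfrac1m\pa{\mu_k\delta_{ij}-\mu_j\delta_{ik}} .
\]
So neither the fluid equation nor the $(00i)$ equation enters. Your proposed mechanism could not work anyway: substituting $\nabla p=(\mu+p)\nabla f$ only trades $p_k$ for $(\mu+p)f_k$, it does not remove $p$. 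On the $\p$-side, the $\delta$-terms in $\alpha\p^a_l\p^a_{lk}$ come from three sources:
\begin{itemize}
\item $C\to C^\p$, with coefficient $-\frac1{m-1}$;
\item $S\to S^\p$, with coefficient $\frac1{m(m-1)}$;
\item the right-hand side, with coefficient $\frac1m$.
\end{itemize}
These sum to zero, leaving exactly $\frac1mU^a(\p^a_j\delta_{ik}-\p^a_k\delta_{ij})$.

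\textbf{The $(00i)$ block.} This is where the fluid equation is actually needed. The fluid part of $\hat T_{00;i}-\hat T_{0i;0}+\frac1m(\tr_{\hat g}\hat T)_i$ is $\frac{m-1}{m}\mu_i+p_i-(\mu+p)f_i$. The combination $p_i-(\mu+p)f_i$ is killed precisely by $\div_1\hat T^F=0$, which your description of this step omits. The $\p$-part is $-\frac1mU^a\p^a_i+\frac{\alpha}{m}\p^a_l\p^a_{li}-\alpha\p^a_lf_l\p^a_i$, and the definitions alone absorb it:
\begin{itemize}
\item $-f_lR_{il}+\alpha\p^a_lf_l\p^a_i=-f_lR^\p_{il}$;
\item $\frac1{2m}S_i-\frac\alpha m\p^a_l\p^a_{li}=\frac1{2m}S^\p_i$.
\end{itemize}
No tension term $\p^a_{ll}$ ever appears. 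So neither the $\p$-Schur identity nor $\div_1\hat T^{\hat\p}=0$ plays a role in the first two equations; the latter only yields \eqref{Appendice: C-phi-PF versione 3 4}. With these corrections your plan becomes exactly the paper's proof.
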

\begin{proof}
	Let $\{\theta^i\}_{i=1}^m$ be an orthonormal coframe on $(M,g)$. Let $\{\omega^\alpha\}_{\alpha=0}^m$ be an orthonormal coframe on $(\hat M,\hat g)$ such that $\omega^i = \hat\pi^\ast_M \theta^i$ for $1\leq i\leq m$ and $\omega^0 = e^{-\hat f}\di t$. As in Section \ref{Sect 1: setting}, $\{\theta^i_{\ j}\}_{i,j=1}^m$ and $\{\omega^\alpha_{\ \beta}\}^m_{\alpha,\beta=0}$ will be the Levi-Civita connection forms for the coframes $\{\theta^i\}_{i=1}^m$ and $\{\omega^\alpha\}_{\alpha=0}^m$, respectively. Then, we have
	\begin{equation} \label{omij}
		\left\{
		\begin{array}{r@{\;}c@{\;}l}
			\omega^i_{\ j} & = & \hat\pi^\ast_M \theta^i_{\ j}, \\[0.2cm]
			\omega^0_{\ i} & = & \omega^i_{\ 0} = - f_i \omega^0 \, .
		\end{array}
		\right.
	\end{equation}
	The components of the Ricci tensor of $(M,\hat g)$ with respect to $\{\omega^\alpha\}$ are
	\begin{subequations} \label{hatRij}
		\begin{empheq}[left={\empheqlbrace}]{alignat=2}
			\hat R_{ij} & =  R_{ij} + f_{ij} - f_i f_j, \label{hatRij 1} \\
			\hat R_{00} & =  -\Delta f + |\nabla f|^2, \label{hatRij 2} \\
			\hat R_{0i} & =  0, \label{hatRij 3}
		\end{empheq}
	\end{subequations}
	where $\Delta f$ and $\abs{\nabla f}^2$ are computed with respect to the metric $g$. Note that the latter implies
	\begin{align}\label{hat S}
		\hat{S}=S+2\Delta_f f.
	\end{align}
	By the definition of covariant derivative and using \eqref{omij} and \eqref{hatRij} we get
	\begin{align*}
		\hat R_{ij,\gamma} \omega^\gamma & = \di\hat R_{ij} - \hat R_{\gamma j}\omega^\gamma_{\ i} - \hat R_{i\gamma}\omega^\gamma_{\ i} \\
		& = \di\hat R_{ij} - \hat R_{kj} \theta^k_{\ i} - \hat R_{ik} \theta^k_{\ j} \\
		& = (R_{ij,k} + f_{ijk} - f_{ik} f_j - f_i f_{jk})\theta^k \, ,
	\end{align*}
	that is,
	\begin{equation}\label{Appendice: hat R i j k}
		\left\{
		\begin{array}{r@{\;}c@{\;}l}
			\hat R_{ij,k} & = & R_{ij,k} + f_{ijk} - f_{ik} f_j - f_i f_{jk}, \\ [0.2cm]
			\hat R_{ij,0} & = & 0 \, .
		\end{array}
		\right.
	\end{equation}
	In the same fashion, we have
	\begin{align*}
		\hat R_{0i,\gamma} \omega^\gamma & = \di\hat R_{0i} - \hat R_{\gamma i}\omega^\gamma_{\ 0} - \hat R_{0\gamma}\omega^\gamma_{\ i} \\
		& = - \hat R_{ki} \omega^k_{\ 0} - \hat R_{00} \omega^0_{\ i} \\
		& = (f_k \hat R_{ki} + f_i \hat R_{00}) \omega^0 \\
		& = (f_k R_{ki} + f_k f_{ki} - |\nabla f|^2 f_i - f_i\Delta f + f_i |\nabla f|^2) \omega^0 \, ,
	\end{align*}
	that is,
	\begin{equation}\label{Appendice: hat R 0 i j}
		\left\{
		\begin{array}{r@{\;}c@{\;}l}
			\hat R_{0i,j} & = & 0, \\[0.2cm]
			\hat R_{0i,0} & = & R_{ij} f_j + f_{ij} f_j - (\Delta f) f_i \, .
		\end{array}
		\right.
	\end{equation}
	Lastly, from
	\[
	\hat R_{00,\gamma} \omega^\gamma = \di\hat R_{00} - 2\hat R_{0\gamma} \theta^\gamma_{\ 0} = \di\hat R_{00} = -\di\Delta f + \di|\nabla f|^2
	\]
	we have
	\begin{equation}\label{Appendice: hat R 0 0 i}
		\left\{
		\begin{array}{r@{\;}c@{\;}l}
			\hat R_{00,i} & = & - f_{jji} + 2 f_{ij} f_j, \\[0.2cm]
			\hat R_{00,0} & = & 0 \, .
		\end{array}
		\right.
	\end{equation}
	Next, since
	\[
	\hat S_\alpha = g^{\gamma\beta} \hat R_{\gamma\beta,\alpha} = \hat R_{tt,\alpha} - \hat R_{00,\alpha}
	\]
	we have
	\begin{equation}\label{Appendice: hat S i}
		\left\{
		\begin{array}{r@{\;}c@{\;}l}
			\hat S_i & = & S_i + 2 f_{jji} - 4 f_{ij} f_j, \\[0.2cm]
			\hat S_0 & = & 0 \, .
		\end{array}
		\right.
	\end{equation}
	We are ready to compute how the Cotton tensor of $(\hat M,\hat g)$ splits on the spatial and the time components. Recalling that $n = \dim\hat M = m+1$, $\hat C$ reads, in components,
	\[
	\hat C_{\alpha\beta\gamma} = \hat R_{\alpha\beta,\gamma} - \hat R_{\alpha\gamma,\beta} - \frac{1}{2m} (\hat S_\gamma g_{\alpha\beta} - \hat S_\beta g_{\alpha\gamma}) \, .
	\]
	Therefore, from \eqref{Appendice: hat R i j k} and \eqref{Appendice: hat S i} we get
	\begin{align*}
		\hat C_{ijk} & = R_{ij,k} - R_{ik,j} + f_{ijk} - f_{ikj} - f_{ik} f_j + f_{ij} f_k \\
		& \phantom{=\;} - \frac{1}{2m} (S_k g_{ij} - S_j g_{ik}) - \frac{1}{m} [(\Delta f - |\nabla f|^2)_k g_{ij} - (\Delta f - |\nabla f|^2)_j g_{ik}] \\
		& = C_{ijk} + f_{ijk} - f_{ikj} - f_{ik} f_j + f_{ij} f_k \\
		& \phantom{=\;} + \frac{1}{2m(m-1)} (S_k g_{ij} - S_j g_{ik}) - \frac{1}{m} [(\Delta f - |\nabla f|^2)_k g_{ij} - (\Delta f - |\nabla f|^2)_j g_{ik}].
	\end{align*}
	A direct computation yields $\hat C_{ij0} = \hat C_{0ij} = 0$. Lastly, we have, from \eqref{Appendice: hat R 0 0 i}, \eqref{Appendice: hat R 0 i j} and \eqref{Appendice: hat S i},
	\begin{align*}
		\hat C_{00i} & = \hat R_{00,i} - \hat R_{0i,0} - \frac{1}{2m} (\hat S_i g_{00} - \hat S_0 g_{0i}) \\
		& = -(\Delta f - |\nabla f|^2)_i - R_{ij} f_j - f_{ij} f_j + (\Delta f) f_i + \frac{1}{2m} S_i + \frac{1}{m} (\Delta f - |\nabla f|^2)_i \\
		& = - R_{ij} f_j - \frac{m-1}{m} (\Delta f)_i + \frac{m-2}{m} f_{ij} f_j + (\Delta f) f_i + \frac{1}{2m} S_i \, .
	\end{align*}
	Putting together the above computations, we get
	\begin{align}\label{Appendice: hat C = 0 spezzata}
		\left\{
			\begin{array}{r@{\;}c@{\;}l}
				\hat C_{ijk} & = & C_{ijk}+f_{ijk}-f_{ikj}-f_{ik}f_j+f_{ij}f_k \\ [0.2cm]
				& & -\dfrac{1}{m}\sq{\pa{f_{llk}-2f_lf_{lk}}\delta_{ij}-\pa{f_{llj}-2f_lf_{lj}}\delta_{ik}} + \dfrac{1}{2m(m-1)}\pa{S_k\delta_{ij}-S_j\delta_{ik}},\\[0.2cm]
				\hat{C}_{00i} & = & -\dfrac{m-1}{m}f_{lli}+\dfrac{m-2}{m}f_lf_{li}+\pa{\Delta f}f_i+\dfrac{1}{2m}S_i-f_lR_{il},\\[0.2cm]
				\hat{C}_{ij0} & = & 0, \\[0.2cm]
				\hat{C}_{0ij} & = & 0.
			\end{array}
		\right.
	\end{align}
	We now study the field equations. Keeping in mind that $n = \dim\hat M = m+1$, we recall from Section \ref{Sect 1: setting} that the field equations of Cotton gravity are
	\[
	\hat C = -(m-1) \div_1\hat{\mathcal T} \, 
	\]
	with $\hat{\mathcal T}$ given by \eqref{Tensore T maiuscolo}.
	As we already discussed in Section \ref{Sect 1: setting}, the field equations imply that $\hat{T}$ is divergence-free. From our assumptions, we have $\hat\varphi^a_0 = 0$, $\hat\mu_0 = \hat p_0 = 0$.
	From equation \eqref{div_That} and since $\div_1 \hat{T}=0$ we have
	\begin{equation}\label{Appendice: Divergenza di T grande}
		(\div_1\hat{\mathcal{T}})_{\alpha\beta\gamma} = \frac{1}{m-1}(\hat{T}_{\alpha\gamma,\beta}-\hat{T}_{\alpha\beta,\gamma})-\frac{1}{m(m-1)}\left[(\tr_{\hat{g}}\hat{T})_\beta g_{\alpha\gamma}-(\tr_{\hat{g}}\hat{T})_\gamma g_{\alpha\beta}\right].
	\end{equation}
	Since $\hat{T}=\hat{T}^{\hat{\p}} +\hat{T}^F$ we have
	\begin{align*}
		\div_1 \hat{\mathcal{T}}=\div_1 \hat{\mathcal{T}}^{\hat{\p}}+\div_1 \hat{\mathcal{T}}^F
	\end{align*}
	with an obvious meaning of notations. We compute the last two terms separately.
	From \eqref{Appendice: Divergenza di T grande} and using \eqref{Deduzione: derivata di T phi} and \eqref{Deduzione: traccia di T phi} we get
	\begin{align*}
		(m-1)(\div_1 \hat{\mathcal{T}}^{\hat{\p}})_{\alpha\beta\gamma} = & \; \hat{T}^{\hat{\p}}_{\alpha\gamma,\beta}-\hat{T}^{\hat{\p}}_{\alpha\beta,\gamma}-\frac{1}{m}\left[(\tr_{\hat{g}} \hat{T}^{\hat{\p}})_\beta g_{\alpha\gamma}-(\tr_{\hat{g}} \hat{T}^{\hat{\p}})_\gamma g_{\alpha\beta}\right] \\
		= & \; \alpha \bigg\{\hat{\p}^a_{\alpha\beta}\hat{\p}^a_\gamma-\hat{\p}^a_{\alpha\gamma}\hat{\p}^a_\beta-\frac{1}{\alpha}U^a\hat{\p}^a_\beta g_{\alpha\gamma}+\frac{1}{\alpha}U^a\hat{\p}^a_\gamma g_{\alpha\beta}\\
		&-g^{\eta\rho}\hat{\p}^a_{\eta\beta}\hat{\p}^a_\rho g_{\alpha\gamma}+g^{\eta\rho}\hat{\p}^a_{\eta\gamma}\hat{\p}^a_\rho g_{\alpha\beta}\\
		&+\frac{1}{m}\sq{(m-1)g^{\eta\rho}\hat{\p}^a_{\eta\beta}\hat{\p}^a_\rho g_{\alpha\gamma}+\frac{m+1}{\alpha}U^a\hat{\p}^a_\beta g_{\alpha \gamma} } \\
		&-\frac{1}{m}\sq{(m-1)g^{\eta\rho}\hat{\p}^a_{\eta\gamma}\hat{\p}^a_\rho g_{\alpha\beta}+\frac{m+1}{\alpha}U^a\hat{\p}^a_\gamma g_{\alpha\beta}} \bigg\}\\
		= & \; \alpha\bigg\{\hat{\p}^a_{\alpha\beta}\hat{\p}^a_\gamma-\hat{\p}^a_{\alpha\gamma}\hat{\p}^a_\beta-\pa{1-\frac{m+1}{m}}\frac{1}{\alpha}U^a\hat{\p}^a_\beta g_{\alpha\gamma}\\
		&+\pa{1-\frac{m+1}{m}}\frac{1}{\alpha}U^a\hat{\p}^a_\gamma g_{\alpha\beta} -\pa{1-\frac{m-1}{m}}g^{\eta\rho}\hat{\p}^a_{\eta\beta}\hat{\p}^a_\rho g_{\alpha\gamma}\\
		&+\pa{1-\frac{m-1}{m}}g^{\eta\rho}\hat{\p}^a_{\eta\gamma}\hat{\p}^a_\rho g_{\alpha\beta}\bigg\}\\
		= & \; \alpha \bigg\{\hat{\p}^a_{\alpha\beta}\hat{\p}^a_\gamma-\hat{\p}^a_{\alpha\gamma}\hat{\p}^a_\beta+\frac{1}{\alpha m}U^a\hat{\p}^a_\beta g_{\alpha\gamma} -\frac{1}{\alpha m}U^a\hat{\p}^a_\gamma g_{\alpha\beta} \\
		&-\frac{1}{m}g^{\eta\rho}\hat{\p}^a_{\eta\beta}\hat{\p}^a_\rho g_{\alpha\gamma}+\frac{1}{m}g^{\eta\rho}\hat{\p}^a_{\eta\gamma}\hat{\p}^a_\rho g_{\alpha\beta}\bigg\}.
	\end{align*}
	Set
	\begin{align*}
		\Phi_{\alpha\beta\gamma}:=&-(m-1)(\div_1 \hat{\mathcal{T}}^{\hat{\p}})_{\alpha\beta\gamma}\\
		=&-\alpha \bigg\{\hat{\p}^a_{\alpha\beta}\hat{\p}^a_\gamma-\hat{\p}^a_{\alpha\gamma}\hat{\p}^a_\beta+\frac{1}{\alpha m}(U^a\hat{\p}^a_\beta g_{\alpha\gamma}-U^a\hat{\p}^a_\gamma g_{\alpha\beta})-\frac{1}{m}g^{\eta\rho}(\hat{\p}^a_{\eta\beta}\hat{\p}^a_\rho g_{\alpha\gamma}-\hat{\p}^a_{\eta\gamma}\hat{\p}^a_\rho g_{\alpha\beta}) \bigg\}.
	\end{align*}
	Computing in the usual frame, the following holds (see pages 25 and 26 of \cite{phiSPFST}):
	\begin{align}\label{Appendice: tension di phi spezzamento}
		\begin{cases}
			\hat{\p}^a_{00}=\p^a_tf_t,\\
			\hat{\p}^a_{0i}=0,\\
			\hat{\p}^a_{ij}=\p^a_{ij}.
		\end{cases}
	\end{align}
	Using \eqref{Appendice: tension di phi spezzamento} and $\hat{\p}^a_0=0$ the following are easily computed
	\begin{equation}\label{Appendice: phi grande spezzamento}
		\left\{
		\begin{array}{r@{\;}c@{\;}l}
			\Phi_{ijk} & = & -\alpha \p^a_{ij}\p^a_k+\alpha\p^a_{ik}\p^a_j-\frac{1}{m}U^a\p^a_j\delta_{ik}+\frac{1}{m}U^a\p^a_k\delta_{ij}+\frac{\alpha}{m}\p^a_{lj}\p^a_l\delta_{ik}-\frac{\alpha}{m}\p^a_{lk}\p^a_l\delta_{ij}, \\[0.2cm]
			\Phi_{00i} & = & -\alpha \p^a_i\p^a_lf_l-\frac{1}{m}U^a\p^a_i+\frac{\alpha}{m}\p^a_{li}\p^a_l, \\[0.2cm]
			\Phi_{ij0} & = & 0, \\[0.2cm]
			\Phi_{0ij} & = & 0.
		\end{array}
		\right.
	\end{equation}
	We now study the perfect fluid stress-energy tensor.
	As we already discussed, $\hat \mu_0=\hat p_0=0$ and
	\begin{align}\label{Appendice: forme di connessione}
		\begin{cases}
			\omega^i_{\ j}=-\omega^j_{\ i}=\theta^i_{\ j},\\
			\omega^0_{\ i}=\omega^i_{\ 0}=-f_i\omega^0.
		\end{cases}
	\end{align}
	We compute the components of the covariant derivative of $\hat{T}^F$.
	From the definition of $\hat{T}^F$ we have
	\begin{align}\label{Appendice: T F}
		\begin{cases}
			\hat{T}^F_{0\alpha}=\hat{\mu}\delta_{0\alpha},\\[0.2cm]
			\hat{T}^F_{i\alpha}=\hat{p}\delta_{i\alpha}
		\end{cases}
	\end{align}
	and
	\begin{align}\label{Appendice: traccia T F}
		\tr_{\hat{g}} \hat{T}^F=m\hat{p}-\hat{\mu}.
	\end{align}
	Therefore \eqref{Appendice: T F} and \eqref{Appendice: forme di connessione} imply
	\begin{align*}
		\hat{T}^F_{ij,\alpha}\omega^\alpha & = \di \hat{T}^F_{ij}-\hat{T}^F_{\gamma j}\omega^\gamma_{\ j}-\hat{T}^F_{\gamma i}\omega^\gamma_{\ j} \\
		& = \delta_{ij}\hat{p}_\alpha\omega^\alpha-\hat{p}\omega^{j}_{\ i}-\hat{p}\omega^i_{\ j}\\
		& = \delta_{ij}\hat{p}_\alpha\omega^\alpha,
	\end{align*}
	that is,
	\begin{align}\label{Appendice: T F i j alpha}
		\hat{T}^F_{ij,\alpha}=\hat{p}_\alpha\delta_{ij}.
	\end{align}
	Using \eqref{Appendice: T F} we have
	\begin{align*}
		\hat{T}^F_{00,\alpha} \omega^\alpha & = \di \hat{T}^F_{00}-\hat{T}^F_{\gamma 0}\omega^\gamma 
		_{\ 0}-\hat{T}^F_{0\gamma}\omega^\gamma_{\ 0} \\
		& = \hat{\mu}_\alpha \omega^\alpha,
	\end{align*}
	that is,
	\begin{align}\label{Appendice T F 0 0 alpha}
		\hat{T}^F_{00,\alpha}=\hat{\mu}_\alpha.
	\end{align}
	Using \eqref{Appendice: T F} and \eqref{Appendice: forme di connessione} we get
	\begin{align*}
		\hat{T}^F_{0i,\alpha}\omega^\alpha & = \di \hat{T}^F_{0i}-\hat{T}^F_{\gamma i}\omega^\gamma_{\ 0}-\hat{T}^F_{0\gamma}\omega^\gamma_{\ i}\\
		& = 0-\hat{p}\omega^i_{\ 0}-\hat{\mu} \omega^0_{\ i}\\
		& = -(\hat{\mu}+\hat{p})\omega^i_{\ 0}=(\hat{\mu}+\hat{p})f_i\omega^0
	\end{align*}
	so that
	\begin{align}\label{Appendice: T 0 i alpha}
		\hat{T}^F_{0i,j}=0, \ \ \hat{T}_{0i,0}=(\hat{\mu}+\hat{p})f_i.
	\end{align}
	From \eqref{Appendice: Divergenza di T grande}, \eqref{Appendice: traccia T F} and \eqref{Appendice: T F i j alpha} we deduce
	\begin{align*}
		(\diver_1 \hat{\mathcal{T}}^F)_{ijk} = & \; \frac{1}{m-1}(\hat{T}^F_{ik,j}-\hat{T}^F_{ij,k})-\frac{1}{m(m-1)}\left[(\tr_{\hat{g}}\hat{T}^F)_j\delta_{ik}-(\tr_{\hat{g}}\hat{T}^F)_k\delta_{ij}\right] \\
		= & \; \frac{1}{m-1}\pa{p_j\delta_{ik}-p_k\delta_{ij}}-\frac{1}{m(m-1)}\sq{\pa{m p_j-\mu_j}\delta_{ik}-\pa{m p_k-\mu_k}\delta_{ij}}
	\end{align*}
	that is,
	\begin{align}\label{Appendice: div T F i j k}
		(\div_1\hat{\mathcal{T}}^F)_{ijk}=\frac{1}{m(m-1)}\sq{\mu_j\delta_{ik}-\mu_k\delta_{ij}}.
	\end{align}
	Using \eqref{Appendice: T F i j alpha} and \eqref{Appendice: T 0 i alpha}
	\[
		(\div_1 \hat{\mathcal{T}}^F)_{ij0} = \frac{1}{m-1}(\hat{T}^F_{i0,j}-\hat{T}^F_{ij,0})-\frac{1}{m(m-1)}[(\tr_{\hat{g}} \hat{T}^F)_j g_{i0}-(\tr_{\hat{g}}\hat{T}^F)_0g_{ij}] = 0
	\]
	and using \eqref{Appendice: T 0 i alpha}
	\[
		(\div_1\hat{\mathcal{T}}^F)_{0ij} = \frac{1}{m-1}(\hat{T}^F_{0j,i}-\hat{T}^F_{0i,j}) - \frac{1}{m(m-1)}\left[(\tr_{\hat{g}} \hat{T}^F)_i g_{0j}-(\tr_{\hat{g}}\hat{T}^F)_jg_{0i}\right] = 0.
	\]
	Using \eqref{Appendice: traccia T F}, \eqref{Appendice T F 0 0 alpha} and \eqref{Appendice: T 0 i alpha} we obtain
	\begin{align*}
		(\div_1 \hat{\mathcal{T}}^F)_{0i0} & = \frac{1}{m-1}(\hat{T}^F_{00,i}-\hat{T}^F_{0i,0})-\frac{1}{m(m-1)}\left[(\tr_{\hat{g}}\hat{T}^F)_i g_{00}-(\tr_{\hat{g}}\hat{T}^F)_0 g_{0i}\right] \\
		& = \frac{1}{m-1}\sq{\mu_i-(\mu+p)f_i}+\frac{1}{m(m-1)}\pa{mp_i-\mu_i}\\
		& = \frac{1}{m}\mu_i-\frac{\mu+p}{m-1}f_i+\frac{1}{m-1}p_i.
	\end{align*}
	Putting together these informations we get
	\begin{align}\label{Appendice: div T F spezzamento}
		\begin{cases}
			(\div_1 \hat{\mathcal{T}}^F)_{ijk} = \dfrac{1}{m(m-1)}\pa{\mu_j\delta_{ik}-\mu_k\delta_{ij}},\\[0.3cm]
			(\div_1 \hat{\mathcal{T}}^F)_{0i0}=\dfrac{1}{m}\mu_i+\dfrac{1}{m-1}\sq{p_i-(\mu+p)f_i},\\[0.3cm]
			(\div_1 \hat{\mathcal{T}}^F)_{0ij}=0,\\[0.2cm]
			(\div_1 \hat{\mathcal{T}}^F)_{ij0}=0.
		\end{cases}
	\end{align}
	Combining \eqref{Appendice: hat C = 0 spezzata}, \eqref{Appendice: phi grande spezzamento} and \eqref{Appendice: div T F spezzamento} the field equations become
	\begin{align*}
		0 = & \; C_{ijk}+f_{ijk}-f_{ikj}-f_{ik}f_j+f_{ij}f_k\\
		&-\frac{1}{m}\sq{\pa{f_{llk}-2f_lf_{lk}}\delta_{ij}-\pa{f_{llj}-2f_lf_{lj}}\delta_{ik}}\\
		&+\frac{1}{2m(m-1)}\pa{S_k\delta_{ij}-S_j\delta_{ik}}\\
		&+\alpha \p^a_{ij}\p^a_k-\alpha \p^a_{ik}\p^a_j+\frac{1}{m}U^a\p^a_j\delta_{ik}-\frac{1}{m}U^a\p^a_k\delta_{ij}\\
		&-\frac{\alpha}{m}\p^a_{lj}\p^a_l\delta_{ik}+\frac{\alpha}{m}\p^a_{lk}\p^a_l\delta_{ij}+\frac{1}{m}\pa{\mu_j\delta_{ik}-\mu_k\delta_{ij}}
	\end{align*}
	and
	\begin{align*}
		0=&-\frac{m-1}{m}f_{lli}+\frac{m-2}{m}f_lf_{li}+\pa{\Delta f}f_i+\frac{1}{2m}S_i-f_lR_{il}\\
		&+\alpha \p^a_i\p^a_lf_l+\frac{1}{m}U^a\p^a_i-\frac{\alpha}{m}\p^a_{li}\p^a_l\\
		&-\frac{m-1}{m}\mu_i+p_i-(\mu+p)f_i.
	\end{align*}
	Using the definitions of the $\p$-curvatures the above equations can be re-written
	\begin{align}\label{Appendice: spezzamento 1}
		\left\{
		\begin{array}{r@{\;}c@{\;}l}
			0 & = & C^\p_{ijk}+f_{ijk}-f_{ikj}-f_{ik}f_j+f_{ij}f_k \\ [0.2cm]
			& & - \dfrac{1}{m}\sq{\pa{f_{llk}-2f_lf_{lk}}\delta_{ij}-\pa{f_{llj}-2f_lf_{lj}}\delta_{ik}} + \dfrac{1}{2m(m-1)}\pa{S^\p_k\delta_{ij}-S^\p_j\delta_{ik}} \\ [0.2cm]
			& & + \dfrac{1}{m}U^a\p^a_j\delta_{ik}-\dfrac{1}{m}U^a\p^a_k\delta_{ij}+\dfrac{1}{m}\pa{\mu_j\delta_{ik}-\mu_k\delta_{ij}},\\ [0.2cm]
			0 & = & -\dfrac{m-1}{m}f_{lli}+\dfrac{m-2}{m}f_lf_{li}+\pa{\Delta f}f_i+\dfrac{1}{2m}S^\p_i-f_lR^\p_{il}\\ [0.2cm]
			& & + \dfrac{1}{m}U^a\p^a_i-\dfrac{m-1}{m}\mu_i+p_i-(\mu+p)f_i.
		\end{array}
		\right.
	\end{align}
	As a last step, we need to prove equations \eqref{Appendice: C-phi-PF versione 3 4} and \eqref{Appendice: C-phi-PF versione 3 3}; since by assumption we have $\div_1 \hat{T}^F=\div_1\hat{T}^{\hat{\p}}=0$, it will be enough to prove the following:
	\begin{equation} \label{Appendice: inutile}
		\begin{cases}
			\div_1 \hat{T}^F=(\hat{\mu}+\hat{p}) \hat{\nabla} \hat{f} - \hat{\nabla} \hat{p},\\
			\div_1\hat{T}^{\hat{\p}}= \alpha h\left(\tau(\hat{\varphi}) - \frac{1}{\alpha}\pa{\nabla^h U}(\hat{\varphi}),\di\hat{\varphi}\right) \, .
		\end{cases}
	\end{equation}
	First, we compute the divergence of $\hat{T}^F$. This is
	\begin{align*}
		g^{\alpha\beta}\hat{T}^F_{\gamma\alpha,\beta}=\hat{T}^F_{0i,i}-\hat{T}^F_{\gamma0,0}.
	\end{align*}
	Using \eqref{Appendice: T 0 i alpha}, \eqref{Appendice T F 0 0 alpha} and \eqref{Appendice: div T F i j k} we get
	\begin{align*}
		(\div_1\hat{T}^F)_0=\hat{T}^F_{0i,i}-\hat{T}^F_{00,0}=0-\hat \mu_0=0
	\end{align*}
	and
	\begin{align*}
		(\div_1\hat{T}^F)_j=\hat{T}^F_{ji,i}-\hat{T}^F_{j0,0}=p_i-(\mu+p)f_i=0
	\end{align*}
	so that
	\begin{align*}
		\div_1\hat{T}^F=\hat{\nabla}\hat p-(\hat \mu+\hat p)\hat{\nabla} \hat f.
	\end{align*}
	To compute $\div_1\hat{T}^{\hat{\p}}$, use equation \eqref{Deduzione: derivata di T phi} 
	\begin{align*}
		\div_1 \hat{T}^{\hat{\p}}_\gamma & = g^{\alpha\beta}\hat{T}^{\hat{\p}}_{\gamma\alpha,\beta}\\
		& = \alpha g^{\alpha\beta}\hat \p^a_{\gamma\beta}\hat \p^a_\gamma+\alpha g^{\alpha\beta}\hat \p^a_{\alpha\beta}\hat \p^a_\gamma - \alpha\pa{g^{\eta\rho}\hat \p^a_{\eta\beta}\hat \p^a_\rho+\frac{1}{\alpha}U^a\hat \p^a_\beta}g_{\gamma\alpha}g^{\alpha\beta}\\
		& = \alpha \hat \p^a_\gamma \tau(\hat{\p})^a-U^a\hat \p^a_\gamma.
	\end{align*}
	Since $\div_1\hat{T}^{\hat{\p}}=\div_1\hat{T}^F=0$, this gives \eqref{Appendice: inutile}.
	Recalling that
	\begin{align*}
		\tau(\hat{\p})=\tau(\p)-\di\p(\nabla f),
	\end{align*}
	we  get that \eqref{Appendice: C-phi-PF versione 3} is implied by   \eqref{Appendice: spezzamento 1} and \eqref{Appendice: inutile}.
\end{proof}

\subsection{On Codazzi tensors}\label{Appendice: tensori di Codazzi}
It has been first pointed out by Mantica and Molinari, \cite{ManticaCottonG}, 
how, on an $n$-dimensional Lorentzian manifold $(\hat{M},\hat{g})$, Cotton field equations \eqref{Har_eq}
be equivalent to the requirement that the tensor
\begin{align}\label{codazzi: tensore zeta hat}
	\hat{Z}_{\alpha \beta}:=\hat{R}_{\alpha \beta}-\hat{T}_{\alpha \beta}-\frac{\hat{S}-2\tr_{\hat{g}}\hat{T}}{2(n-1)}g_{\alpha \beta}
\end{align}
be Codazzi.
Indeed, by \eqref{codazzi: tensore zeta hat}
\begin{align*}
	\hat{Z}_{\alpha\beta,\gamma}-\hat{Z}_{\alpha \gamma,\beta} = & \; \hat{R}_{\alpha \beta,\gamma}-\hat{R}_{\alpha\gamma,\beta} -\hat{T}_{\alpha \beta,\gamma}+\hat{T}_{\alpha\gamma,\beta}\\
	& -\frac{\hat{S}_\gamma}{2(n-1)}g_{\alpha \beta}+\frac{\hat{S}_\beta}{2(n-1)}g_{\alpha \gamma}+\frac{(\tr_{\hat{g}}\hat{T})_\gamma}{n-1}g_{\alpha \beta}-\frac{(\tr_{\hat{g}}\hat{T})_\beta}{n-1}g_{\alpha \gamma},	
\end{align*}
that is,
\begin{align}\label{codazzi: codazzi di hat zeta}
	\hat{Z}_{\alpha\beta,\gamma}-\hat{Z}_{\alpha \gamma,\beta}=\hat{C}_{\alpha\beta\gamma}-\hat{T}_{\alpha\beta,\gamma}+\hat{T}_{\alpha\gamma,\beta}+\frac{1}{n-1}\sq{(\tr_{\hat{g}}\hat{T})_\gamma g_{\alpha\beta}-(\tr_{\hat{g}}\hat{T})_\beta g_{\alpha\gamma}}.
\end{align}

By equation \eqref{Har_eq2 bis} we see that the right hand side of \eqref{codazzi: codazzi di hat zeta} vanishes if and only if \eqref{Har_eq} holds, as we wanted to show. Similarly, we are going to see that every C-$\p$-PF admits a Codazzi tensor.

Define a 2-covariant symmetric tensor on $M$
\begin{align}\label{codazzi: tensore zeta}
	Z:=\ric^\p+\hess f-\di f\otimes \di f-\lambda g,
\end{align}
for some $\lambda \in C^\infty(M)$. In the following we will make different choices of $\lambda$.
If 
\begin{align*}
	\lambda=\frac{S^\p+2\Delta_f f+2U(\p)+2\mu}{2m},
\end{align*}
then $Z$ is the projection of $\hat{Z}$ on $(M,g)$. Indeed, since $\hat{T}=\hat{T}^{\hat{\p}}+\hat{T}^F$, its trace is
\begin{align*}
	\tr_{\hat{g}}\hat{T}&=\alpha \abs{\di \hat\p}^2-\alpha\frac{n}{2}\abs{\di \hat\p}^2-nU(\hat\p)-(\hat\mu+\hat p)+n\hat p\\
	&=\alpha\frac{2-n}{2}\abs{\di \hat\p}^2-nU(\hat \p)-\hat \mu+(n-1)\hat p,
\end{align*}
so that \eqref{codazzi: tensore zeta hat} becomes
\begin{align*}
	\hat{Z}_{\alpha\beta} = & \; \hat{R}_{\alpha \beta}-\alpha \hat \p^*h+\frac{\alpha}{2}\abs{\di \hat\p}^2\hat{g}+U(\hat \p)\hat{g}-(\hat\mu+\hat p)e^{-2\hat f}t_\alpha t_\beta\\
	&-\hat pg_{\alpha \beta}-\frac{\hat{S}-\alpha (2-n)\abs{\di \hat \p}^2+2nU(\hat \p)+2\hat\mu-2(n-1)\hat p}{2(n-1)}g_{\alpha\beta}
\end{align*}
and, simplifying,
\begin{align*}
	\hat{Z}_{\alpha \beta}=\hat{R}^{\hat \p}_{\alpha \beta}-(\hat \mu+\hat p)e^{-2\hat f} t_\alpha t_\beta-\frac{\hat{S}^{\hat\p}+2U(\hat\p)+2\hat\mu}{2(n-1)}g_{\alpha \beta}.
\end{align*}
From \eqref{hatRij 1} and \eqref{hat S} and since $n=m+1$ we have
\begin{align*}
	\hat{Z}_{ij}=R^\p_{ij}+f_{ij}-f_if_j-\frac{S^\p+2\Delta_f f+2U(\p)+2\mu}{2m}\delta_{ij}
\end{align*}
as we claimed.
A straightforward computation reveals that \eqref{Deduzione: C-phi-PF versione 3 1} is equivalent to the requirement that $Z$ be Codazzi for the choice $\lambda=\frac{1}{2m} (S^\p+2\Delta_f f+2U(\p)+2\mu).$
Recall that, using \eqref{Deduzione: C-phi-PF versione 3 2} into \eqref{Deduzione: C-phi-PF versione 3 1}, and elaborating a bit, the latter becomes
\begin{equation}\label{codazzi: prima cond int}
	0 = C^\p_{ijk}+f_lW^\p_{lijk}-\frac{1}{m-1}U^a\p^a_k\delta_{ij}+\frac{1}{m-1}U^a\p^a_j\delta_{ik} -D^A_{ijk}-(m-2)D^B_{ijk}.
\end{equation}
We want to characterize \eqref{codazzi: prima cond int} in terms of the existence of a suitable Codazzi tensor.

\begin{proposition}\label{codazzi: prop: Z e prima cond}
	Let $(M,g)$ be an $m$-dimensional Riemannian manifold and $\p:(M,g)\to (N,h)$ a smooth map with $(N,h)$ a second Riemannian manifold. Let $\alpha \in \RR,\alpha \neq 0, f\in C^\infty(M)$. Assume that, for some $\lambda\in C^\infty(M)$, the tensor
	\begin{align*}
		Z=\ric^\p+\hess f-\di f\otimes \di f-\lambda g
	\end{align*}
	be a Codazzi tensor. Then it holds
	\begin{equation}\label{codazzi: prima cond int riscritta}
		0 = C^\p_{ijk}+f_lW^\p_{lijk}-\frac{\alpha}{m-1}\pa{\p^a_{ll}-\p^a_lf_l}\pa{\p^a_k\delta_{ij}-\p^a_j\delta_{ik}} -D^A_{ijk}-(m-2)D^B_{ijk}
	\end{equation}
	on $(M,g)$.
\end{proposition}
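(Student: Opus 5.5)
The plan is to compute the Codazzi obstruction of $Z$ directly and compare it with the right-hand side of \eqref{codazzi: prima cond int riscritta}. Since $Z$ is Codazzi, its trace $\tr Z = S^\p+\Delta f-\abs{\nabla f}^2-m\lambda$ satisfies $Z_{ll,k}-Z_{lk,l}=0$. Hence the full tensor $\overline{D}$ of \eqref{codazzi 1: D barrato} vanishes, and it is enough to prove the algebraic identity of Lemma \ref{codazzi 1:lemma: D bar e prima cond}, namely
\[
D^A_{ijk}+(m-2)D^B_{ijk}+\overline{D}_{ijk}=C^\p_{ijk}+f_lW^\p_{lijk}-\frac{\alpha}{m-1}\pa{\p^a_{ll}-\p^a_lf_l}\pa{\p^a_k\delta_{ij}-\p^a_j\delta_{ik}},
\]
valid for every $\lambda$. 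Setting $\overline D=0$ then gives the claim. The $\lambda$-terms cancel identically in $\overline D$: $\lambda_k\delta_{ij}-\lambda_j\delta_{ik}$ is removed by the trace correction $-\frac{1}{m-1}[(1-m)\lambda_k\delta_{ij}-(1-m)\lambda_j\delta_{ik}]$. So we may take $\lambda=0$.

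Next I would expand $Z_{ij,k}-Z_{ik,j}$ with $\lambda=0$. It equals
\[
R^\p_{ij,k}-R^\p_{ik,j}+f_{ijk}-f_{ikj}-f_{ik}f_j-f_if_{jk}+f_{ij}f_k+f_if_{kj},
\]
which simplifies to
\[
C^\p_{ijk}+\frac{1}{2(m-1)}\pa{S^\p_k\delta_{ij}-S^\p_j\delta_{ik}}+f_lR_{lijk}+f_{ij}f_k-f_{ik}f_j .
\]
For the trace part, I would compute $Z_{ll,k}-Z_{lk,l}$ using the $\p$-Schur identity \eqref{phi curv: phi Schur}, the commutation $f_{lkl}=f_{llk}+f_lR_{lk}$, and $R_{lk}=R^\p_{lk}+\alpha\p^a_l\p^a_k$. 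This yields
\[
\tfrac12 S^\p_k+\alpha\p^a_{ll}\p^a_k-f_lR^\p_{lk}-\alpha\p^a_lf_l\p^a_k+f_lf_{lk}-(\Delta f)f_k .
\]
The $\alpha$-terms here produce exactly $\alpha(\p^a_{ll}-\p^a_lf_l)\p^a_k$, and they enter $\overline{D}$ with the factor $-\frac{1}{m-1}$. That is the source of the last term in the target identity.

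I would then substitute the decomposition \eqref{Deduzione: ricci rel + W phi} of $f_lR_{lijk}$ into $f_lW^\p_{lijk}$ plus $\ric^\p$-terms, and collect. The $S^\p_k$ derivative terms should cancel: $\frac{1}{2(m-1)}$ from the Cotton part against $-\frac{1}{2(m-1)}$ from the trace correction. The remaining $\ric^\p$ and $S^\p$ terms regroup as $-D^A_{ijk}$, using exactly the reshuffling that took \eqref{Deduzione: C-phi 1 riscritta} to \eqref{Deduzione: spezzamento: quasi versione con D A}. The Hessian terms $f_{ij}f_k-f_{ik}f_j+\frac{1}{m-1}[f_lf_{lk}-(\Delta f)f_k]\delta_{ij}-(j\leftrightarrow k)$ regroup as $-(m-2)D^B_{ijk}$ by \eqref{Deduzione: D B definizione}.

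The main obstacle is bookkeeping the coefficients $\frac1{m-1}$ versus $\frac1{m-2}$ in the $\ric^\p$-terms. The check is the same algebra already carried out in the passage from \eqref{Deduzione: C-phi 1 riscritta} to \eqref{Deduzione: C phi prima: versione con le D}. There, the term $\frac{1}{m-1}U^a(\p^a_j\delta_{ik}-\p^a_k\delta_{ij})$ played the role that $\frac{\alpha}{m-1}(\p^a_{ll}-\p^a_lf_l)(\p^a_j\delta_{ik}-\p^a_k\delta_{ij})$ plays here; under \eqref{Deduzione: C-phi-PF 4} the two coincide. So I would copy that computation with $U^a$ replaced by $\alpha(\p^a_{ll}-\p^a_lf_l)$.
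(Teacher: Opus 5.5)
Your proposal is correct and is essentially the paper's proof: reduce to the $\lambda$-independent identity of Lemma~\ref{codazzi:lemma: D bar e prima cond} via $\overline{D}\equiv 0$, compute $Z_{ll,k}-Z_{lk,l}$ with the $\p$-Schur identity and the Ricci commutation, and regroup into $-D^A_{ijk}$ and $-(m-2)D^B_{ijk}$. Your remark that this is the algebra from \eqref{Deduzione: C-phi 1 riscritta} to \eqref{Deduzione: C phi prima: versione con le D} with $U^a$ replaced by $\alpha(\p^a_{ll}-\p^a_lf_l)$ is accurate.

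One sign slip needs fixing. The Hessian part of $Z_{ll,k}-Z_{lk,l}$ is $-f_lf_{lk}+(\Delta f)f_k$, not the opposite sign, because $Z_{ll,k}$ contributes $-2f_lf_{lk}$ and $Z_{lk,l}$ contributes $-f_lf_{lk}-(\Delta f)f_k$. Only with this sign does your (correct) final grouping into $-(m-2)D^B_{ijk}$ come out.
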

The proof of Proposition \ref{codazzi: prop: Z e prima cond} relies on the following
\begin{lemma}\label{codazzi:lemma: D bar e prima cond}
	Let $Z$ be defined as in \eqref{codazzi: tensore zeta}. Set
	\begin{align}\label{codazzi: D barrato}
		\overline{D}_{ijk}=Z_{ij,k}-Z_{ik,j}-\frac{1}{m-1}\sq{\pa{Z_{ll,k}-Z_{lk,l}}\delta_{ij}-\pa{Z_{ll,j}-Z_{lj,l}}\delta_{ik}}.
	\end{align}
	Then we have
	\begin{equation}\label{codazzi: prima cond int con D bar}
		D^A_{ijk} +(m-2)D^B_{ijk}+\overline{D}_{ijk} = C^\p_{ijk}+f_lW^\p_{lijk}-\frac{\alpha}{m-1}\pa{\p^a_{ll}-\p^a_lf_l}\pa{\p^a_k\delta_{ij}-\p^a_j\delta_{ik}}.
	\end{equation}
\end{lemma}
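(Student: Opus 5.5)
This is a pointwise tensor identity, so I will prove it by direct expansion in a local orthonormal coframe. I start from the definition \eqref{codazzi: tensore zeta} of $Z$ and compute $Z_{ij,k}-Z_{ik,j}$. The $\ric^\p$ part gives $R^\p_{ij,k}-R^\p_{ik,j}$, which equals $C^\p_{ijk}+\frac{1}{2(m-1)}(S^\p_k\delta_{ij}-S^\p_j\delta_{ik})$ by the definition of $A^\p$. The Hessian part gives $f_{ijk}-f_{ikj}=f_lR_{lijk}$, and I rewrite this with \eqref{Deduzione: ricci rel + W phi} as $f_lW^\p_{lijk}$ plus $\ric^\p$-terms. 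The term $-\di f\otimes\di f$ contributes $f_{ik}f_j-f_{ij}f_k$. The term $-\lambda g$ contributes $-(\lambda_k\delta_{ij}-\lambda_j\delta_{ik})$.

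Next I compute the traces that enter $\overline D$. First, $Z_{ll,k}=S^\p_k+(\Delta f)_k-2f_lf_{lk}-m\lambda_k$. For the divergence $Z_{lk,l}$, I use the $\p$-Schur identity \eqref{phi curv: phi Schur} together with $f_{kll}=f_{llk}+R_{kl}f_l$ and $R_{kl}=R^\p_{kl}+\alpha\p^a_k\p^a_l$. This gives
\[
Z_{lk,l}=\tfrac12 S^\p_k-\alpha\p^a_{ll}\p^a_k+f_{llk}+R^\p_{kl}f_l+\alpha\p^a_k\p^a_lf_l-f_lf_{lk}-(\Delta f)f_k-\lambda_k .
\]
Subtracting, the second and third derivatives of $f$ largely cancel, and I obtain
\[
Z_{ll,k}-Z_{lk,l}=\tfrac12 S^\p_k-(m-1)\lambda_k-f_lf_{lk}+(\Delta f)f_k-R^\p_{kl}f_l+\alpha(\p^a_{ll}-\p^a_lf_l)\p^a_k .
\]
This is essentially the computation behind Lemma \ref{lemma: Cphi e phiPF: lemma alla Hamilton}, but without assuming $Z=0$.

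Then I assemble $\overline D_{ijk}$ from these pieces. In the combination $-\frac{1}{m-1}[\cdots]$, the $\lambda$-terms give $+(\lambda_k\delta_{ij}-\lambda_j\delta_{ik})$, which cancels the $\lambda$-terms coming from $Z_{ij,k}-Z_{ik,j}$. The $S^\p_k$-terms combine to $\left(\frac{1}{2(m-1)}-\frac{1}{2(m-1)}\right)=0$. What remains is $C^\p_{ijk}+f_lW^\p_{lijk}$, plus the $\ric^\p$- and $\hess f$-terms, plus the term $-\frac{\alpha}{m-1}(\p^a_{ll}-\p^a_lf_l)(\p^a_k\delta_{ij}-\p^a_j\delta_{ik})$. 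I then regroup the $\ric^\p$-terms: those from \eqref{Deduzione: ricci rel + W phi} together with $\frac{1}{m-1}f_l(R^\p_{lk}\delta_{ij}-R^\p_{lj}\delta_{ik})$ should give exactly $-D^A_{ijk}$ as defined in \eqref{Deduzione D A definizione}. Likewise the $f$-terms $f_{ik}f_j-f_{ij}f_k+\frac{1}{m-1}[f_lf_{lk}\delta_{ij}-\dots]-\frac{\Delta f}{m-1}[\dots]$ should give $-(m-2)D^B_{ijk}$. This is the same regrouping already carried out when passing from \eqref{Deduzione: C-phi 1 riscritta} to \eqref{Deduzione: C phi prima: versione con le D}. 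Moving $D^A$ and $(m-2)D^B$ to the left-hand side yields \eqref{codazzi: prima cond int con D bar}.

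The main obstacle is bookkeeping. I have to check the coefficients of the $S^\p(f_k\delta_{ij}-f_j\delta_{ik})$ terms and of the $f_lR^\p_{lk}\delta_{ij}$ terms, which arise with weights $\frac{1}{m-2}$ and $\frac{1}{m-1}$ from two different sources. I also have to make sure the $\alpha\p^a_k\p^a_lf_l$ term produced by converting $R_{kl}$ into $R^\p_{kl}$ is correctly absorbed into the factor $(\p^a_{ll}-\p^a_lf_l)$. I will verify these coefficients by comparing against the already-simplified form \eqref{Deduzione: spezzamento: quasi versione con D A}.
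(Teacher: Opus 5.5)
Your plan is correct and is essentially the paper's own proof. You expand $Z_{ij,k}-Z_{ik,j}$ and compute $Z_{ll,k}-Z_{lk,l}$ via the $\p$-Schur identity and the Ricci commutation relations (your formula coincides with the paper's). You then rewrite $f_{ijk}-f_{ikj}=f_lR_{lijk}$ through \eqref{Deduzione: ricci rel + W phi} and regroup into $-D^A_{ijk}$ and $-(m-2)D^B_{ijk}$, exactly as the paper does.

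One sign must be fixed. The term $-\di f\otimes\di f$ contributes $-(f_{ik}f_j+f_if_{jk})+(f_{ij}f_k+f_if_{kj})=f_{ij}f_k-f_{ik}f_j$, not $f_{ik}f_j-f_{ij}f_k$. Only with this sign do the quadratic $f$-terms combine to $-(m-2)D^B_{ijk}$; with yours, a spurious remainder $2(f_{ik}f_j-f_{ij}f_k)$ would survive. Your planned comparison with \eqref{Deduzione: spezzamento: quasi versione con D A} would catch this.
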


\begin{remark}
	From \eqref{codazzi: prima cond int con D bar} we see that $\overline{D}$ is independent from the choice of $\lambda$.
\end{remark}

\begin{proof}[Proof of Lemma \ref{codazzi:lemma: D bar e prima cond}]
	We expand \eqref{codazzi: D barrato} using \eqref{codazzi: tensore zeta}. From \eqref{codazzi: tensore zeta}
	\begin{align*}
		Z_{ij,k}=R^\p_{ij,k}+f_{ijk}-f_{ik}f_j-f_if_{jk}-\lambda_k\delta_{ij},
	\end{align*}
	so that
	\begin{align}\label{codazzi: tracci di zeta derivata}
		Z_{ll,k}=S^\p_k+\pa{\Delta_f f}_k-m\lambda_k
	\end{align}
	and 
	\begin{align*}
		Z_{lk,l}=R^\p_{lk,l}+f_{lkl}-(\Delta f)f_k-f_lf_{lk}-\lambda_k.
	\end{align*}
	Using the $\p$-Schur identity and the Ricci commutation relations
	\[
		Z_{lk,l} = \frac{1}{2}S^\p_k-\alpha \p^a_{ll}\p^a_k+(\Delta f)_k+f_lR_{lk}-(\Delta f)f_k -f_lf_{lk}-\lambda_k.
	\]
	From the definition of $\ric^\p$ we get
	\begin{equation}\label{codazzi: divergenza di zeta}
		Z_{lk,l} = \frac{1}{2}S^\p_k-\alpha\p^a_k (\p^a_{ll}-\p^a_lf_l)+(\Delta f)_k+f_tl^\p_{lk}-(\Delta f)f_k-f_lf_{lk}-\lambda_k.
	\end{equation}
	Taking the difference of \eqref{codazzi: tracci di zeta derivata} and \eqref{codazzi: divergenza di zeta} we have
	\[
		Z_{ll,k}-Z_{lk,l}=\frac{1}{2}S^\p_k-2f_lf_{lk}-(m-1)\lambda_k-f_lR^\p_{lk}+\alpha\p^a_k(\p^a_{ll}-\p^a_lf_l)+(\Delta f)f_k+f_lf_{lk}
	\]
	so that
	\begin{equation}\label{codazzi: traccia di zeta meno diver di zeta}
		Z_{ll,k}-Z_{lk,l} = \frac{1}{2}S^\p_k-f_lf_{lk}-(m-1)\lambda_k-f_lR^\p_{lk} + \alpha\p^a_k(\p^a_{ll}-\p^a_lf_l)+(\Delta f)f_k; 
	\end{equation}
	then \eqref{codazzi: D barrato} becomes
	\begin{align*}
		\overline{D}_{ijk} = & \; R^\p_{ij,k}-R^\p_{ik,j}+f_{ijk}-f_{ikj}-f_{ik}f_j+f_{ij}f_k\\
		& -\lambda_k\delta_{ij}+\lambda_j\delta_{ik}-\frac{1}{m-1}\pa{\frac{1}{2}S^\p_k-f_lf_{lk}+(\Delta f)f_k}\delta_{ij}\\
		& -\frac{1}{m-1}\pa{-f_lR^\p_{lk}-(m-1)\lambda_k+\alpha \p^a_k(\p^a_{ll}-\p^a_lf_l)}\delta_{ij}\\
		& +\frac{1}{m-1}\pa{\frac{1}{2}S^\p_j-f_lf_{lj}+(\Delta f)f_j-f_lR^\p_{lj}-(m-1)\lambda_j+\alpha \p^a_j(\p^a_{ll}-\p^a_lf_l)}\delta_{ik},
	\end{align*}
	that is,
	\begin{align}\label{codazzi: riscrittura di D bar}
		\overline{D}_{ijk} = & \; C^\p_{ijk}+f_{ijk}-f_{ikj}-f_{ik}f_j+f_{ij}f_k\\ \nonumber
		&-\frac{1}{m-1}\sq{(\Delta f)f_k-f_lf_{lk}-f_lR^\p_{lk}+\alpha\p^a_k\pa{\p^a_{ll}-\p^a_lf_l}}\delta_{ij}\\
		&+\frac{1}{m-1}\sq{(\Delta f)f_j-f_lf_{lj}-f_lR^\p_{lj}+\alpha\p^a_j\pa{\p^a_{ll}-\p^a_lf_l}}\delta_{ik}.\nonumber
	\end{align}
	Using the Ricci commutation relations and the definition of $W^\p$ we get
	\begin{align}\label{codazzi: roba}
		f_{ijk}-f_{ikj} = & \; f_lR_{lijk}\\ \nonumber
		= & \; f_lW^\p_{lijk}+\frac{1}{m-2}\pa{f_jR^\p_{ik}-f_kR^\p_{ij}+f_lR^\p_{lj}\delta_{ik}-f_lR^\p_{lk}\delta_{ij}}\\
		&-\frac{S^\p}{(m-1)(m-2)}\pa{f_j\delta_{ik}-f_k\delta_{ij}}. \nonumber
	\end{align}
	Inserting \eqref{codazzi: roba} in \eqref{codazzi: riscrittura di D bar} we find
	\begin{align*}
		\overline{D}_{ijk} = & \; C^\p_{ijk}+f_lW^\p_{lijk}+\frac{1}{m-2}\pa{f_jR^\p_{ik}-f_kR^\p_{ij}+f_lR^\p_{lj}\delta_{ik}-f_lR^\p_{lk}\delta_{ij}}\\
		&-\frac{S^\p}{(m-1)(m-2)}\pa{f_j\delta_{ik}-f_k\delta_{ij}}-f_{ik}f_j+f_{ij}f_k\\
		&-\frac{1}{m-1}\sq{(\Delta f)f_k-f_lf_{lk}-f_lR^\p_{lk}+\alpha\p^a_k\pa{\p^a_{ll}-\p^a_lf_l}}\delta_{ij}\\
		&+\frac{1}{m-1}\sq{(\Delta f)f_j-f_lf_{lj}-f_lR^\p_{lj}+\alpha\p^a_j\pa{\p^a_{ll}-\p^a_lf_l}}\delta_{ik}.
	\end{align*}
	Simplifying and rearranging
	\begin{align*}
		\overline{D}_{ijk} = & \; C^\p_{ijk}+f_lW^\p_{lijk}+\frac{1}{m-2}\sq{f_jR^\p_{ik}-f_kR^\p_{ij}-\frac{f_l}{m-1}\pa{R^\p_{lk}\delta_{ij}-R^\p_{lj}\delta_{ik}}}\\
		&-\frac{S^\p}{(m-1)(m-2)}\pa{f_j\delta_{ik}-f_k\delta_{ij}}-f_{ik}f_j+f_{ij}f_k\\
		&-\frac{1}{m-1}\pa{-f_lf_{lk}\delta_{ij}+f_lf_{lj}\delta_{ik}+(\Delta f)f_k\delta_{ij}-(\Delta f)f_j\delta_{ik}}\\
		&-\frac{\alpha}{m-1}\pa{\p^a_{ll}-\p^a_lf_l}\pa{\p^a_k\delta_{ij}-\p^a_j\delta_{ik}}.
	\end{align*}
	From the definitions of $D^A$ and $D^B$ we deduce
	\[
		D^A_{ijk} +(m-2)D^B_{ijk}+\overline{D}_{ijk} = C^\p_{ijk}+f_lW^\p_{lijk}-\frac{\alpha}{m-1}\pa{\p^a_{ll}-\p^a_lf_l}\pa{\p^a_k\delta_{ij}-\p^a_j\delta_{ik}}	
	\]
	which is \eqref{codazzi: prima cond int con D bar}.
\end{proof}
\begin{proof}[Proof of Proposition  \ref{codazzi: prop: Z e prima cond}]
	From the definition of $\overline{D}$ it is clear that $\overline{D}\equiv 0$ if $Z$ is Codazzi so that \eqref{codazzi: prima cond int con D bar} becomes \eqref{codazzi: prima cond int riscritta}.
\end{proof}
It is now clear how \eqref{codazzi: prima cond int} generalizes \eqref{Deduzione: C-phi-PF versione 3 1}: the former holds when there exists a function $\lambda\in C^\infty(M)$ such that $Z$ is Codazzi, while the latter holds only when $\lambda=\frac{1}{2m}\pa{S^\p+2\Delta_f f+2U(\p)+2\mu}$ makes $Z$ Codazzi. \\

\section{More on Riemann compatibility}\label{Appendice B}
\subsection{Riemann compatibility and $\p$-curvatures}\label{Appendice B 1}
We already saw in \eqref{riem comp: weyl + commut} that a Riemann compatible tensor is also Weyl compatible and it commutes with the Ricci tensor. We want to show that, quite interestingly, on a C-$\p$-PF a similar result holds for the $\p$-curvatures. First, we have the following

\begin{proposition}\label{riem comp: prop: phi* h commuta con P}
	Let $(M,g)$ be an $m$-dimensional Riemannian manifold and $\p: (M,g)\to (N,h)$ a smooth map that targets another Riemannian manifold. For some $f\in C^\infty(M), \alpha \in \RR, \alpha \neq 0$, set
	\begin{align*}
		P:=\ric^\p+\hess f-\di f\otimes \di f.
	\end{align*}
	Assume that system \eqref{Deduzione: C-phi-PF: versione con le D} holds on $M$,  
	for some $\mu,p\in C^\infty(M), U\in C^\infty(N)$. Then $\p^*h$ commutes with $P$.
\end{proposition}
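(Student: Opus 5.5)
The plan is to reduce the statement to commutation with the $\p$-Ricci tensor, and then obtain that from the Codazzi property of $Z$ combined with the $\p$-structure of the system. First I recall what is already available. By Proposition \ref{riem comp: prop: C phi PF e riem comp}, the tensor $P=\ric^\p+\hess f-\di f\otimes\di f$ is Riemann compatible. By \eqref{riem comp: weyl + commut}, $P$ therefore commutes with $\ric$, that is, $P_{il}R_{lj}=P_{jl}R_{li}$. Since $\ric=\ric^\p+\alpha\p^*h$ and $\alpha\neq 0$, the claim $P_{il}\p^a_l\p^a_j=P_{jl}\p^a_l\p^a_i$ is equivalent to
\[
	P_{il}R^\p_{lj}=P_{jl}R^\p_{li}.
\]
So the whole proof reduces to showing that $P$ commutes with $\ric^\p$. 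Writing $\ric^\p=P-\hess f+\di f\otimes\di f$, this is in turn equivalent to $P$ commuting with $\hess f-\di f\otimes\di f$.

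To get this second commutation I intend to rerun the argument behind Riemann compatibility, keeping track of the $\p$-terms instead of discarding them. By Appendix \ref{Appendice: tensori di Codazzi}, the tensor $Z=P-\lambda g$ is Codazzi for $\lambda=\frac{1}{2m}(S^\p+2\Delta_f f+2U(\p)+2\mu)$. The commutation with $\ric$ comes from skew-symmetrizing $Z_{ik,jk}-Z_{ik,kj}$ via the Ricci identities. I will instead compute the same skew part by differentiating the explicit expression of $Z$:
\begin{itemize}
	\item expand $Z_{ik,kj}$ using the $\p$-Schur identity \eqref{phi curv: phi Schur};
	\item use the Ricci commutation relations on $f$;
	\item use the map equation \eqref{Deduzione: C-phi-PF: versione con le D 4} together with its covariant derivative \eqref{teoremone: prop: der cov seconda sistema}.
\end{itemize}
Running the same computation once with $\ric$ and once with $\ric^\p$ should produce an identity of the form
\[
	\alpha\bigl(P_{il}\p^a_l\p^a_j-P_{jl}\p^a_l\p^a_i\bigr)=(\text{skew part of terms involving } \p^a_{ll}-\p^a_lf_l \text{ and } U^{ab}\p^a_i\p^b_j).
\]
The map equation replaces $\alpha(\p^a_{ll}-\p^a_lf_l)$ by $U^a$, and $U^{ab}\p^a_i\p^b_j$ is symmetric in $i,j$. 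Hence the right-hand side should vanish identically.

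As a cross-check I will also contract the first equation \eqref{Deduzione: C-phi-PF: versione con le D 1} with $f_i$ and use the trace identities $C^\p_{kki}=\alpha\p^a_{kk}\p^a_i$ and $W^\p_{kikj}=\alpha\p^a_i\p^a_j$. These are exactly the places where $\p^*h$ enters the $\p$-curvatures in a non-trace-free way, so they are the natural source of the commutator with $\p^*h$.

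The main obstacle is the middle step. The naive contraction of the cyclic identity \eqref{riem comp: riem comp e codazzi dev} only ever produces the classical $\ric$, because the term $R_{liks}P_{il}$ vanishes by symmetry. So the $\p^*h$-commutator must be isolated from the non-vanishing traces of $W^\p$ and $C^\p$, and I must verify that all remaining terms are symmetric. If this bookkeeping fails, my fallback is to work in a frame diagonalizing $P$ at a point. Commutation with $\ric$ then says that $\ric$ preserves the eigenspaces of $P$. It would remain to show the same for $\p^*h$, by evaluating the traced first equation on pairs of eigenvectors with distinct eigenvalues.
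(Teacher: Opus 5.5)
Your opening reduction is the same as the paper's. Riemann compatibility of $P$ gives $P_{il}R_{lj}=R_{il}P_{lj}$, so with $\alpha\neq0$ it suffices to show that $\ric^\p$ commutes with $H:=\hess f-\di f\otimes\di f$. The gap is in the middle step, and it is not a bookkeeping issue.

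Skew-symmetrizing the derivative of the explicit $Z_{ik,k}$ (equivalently, of \eqref{Deduzione: C-phi-PF: versione con le D 2}) leaves two kinds of terms: $f_l(R^\p_{li,j}-R^\p_{lj,i})$ and $f_j(\Delta f)_i-f_i(\Delta f)_j$. To dispose of them you need two further inputs:
\begin{itemize}
\item the contraction of the first equation with $f_i$, which you list only as a cross-check;
\item the expression of $(\Delta f)_i$ from the second equation.
\end{itemize}
In that contraction $f_if_lW^\p_{lijk}=0$, and the traces $C^\p_{kki}$, $W^\p_{kikj}$ never appear; what matters is $f_lD^A_{ljk}$ and $f_lD^B_{ljk}$. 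Because $\lambda$ contains $\mu$, the $S^\p$, $\Delta f$, $U^a\p^a_i$ and $f_lf_{li}$ terms then cancel. The map-equation terms you anticipate are indeed symmetric. But a $\mu$-term survives:
\[
R^\p_{il}H_{lj}-H_{il}R^\p_{lj}=\mu_if_j-\mu_jf_i, \qquad\text{hence}\qquad \alpha\pa{P_{il}\p^a_l\p^a_j-\p^a_i\p^a_lP_{lj}}=\mu_if_j-\mu_jf_i .
\]
So the right-hand side of your predicted identity is not made of map-equation terms that vanish identically. Given only the Codazzi property of $Z$ and the map equation, which is all your argument uses, the proposition is \emph{equivalent} to $\di\mu\wedge\di f=0$. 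Neither the $\p$-term bookkeeping nor the eigenframe fallback can deliver this.

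The missing ingredient is the fluid equation \eqref{Deduzione: C-phi-PF: versione con le D 3}, which your proposal never invokes. Differentiate $\nabla p=(\mu+p)\nabla f$ and substitute it back: this gives $\mu_jf_i=p_{ij}-(\mu+p)(f_{ij}+f_if_j)$, which is symmetric in $i,j$. Hence $\mu_jf_i=\mu_if_j$ and the right-hand side above vanishes. Therefore $[\ric^\p,H]=0$, so $P$ commutes with $\ric^\p$, and hence with $\alpha\p^*h=\ric-\ric^\p$.

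This is exactly the paper's route: contract the first equation with $\nabla f$ to get $f_lC^\p_{ljk}$, differentiate and skew-symmetrize the second equation, substitute both, and kill the leftover $\mu$-term with $\di\mu\wedge\di f=0$.
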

\begin{proof}
	From the definition \eqref{Deduzione D A definizione} of $D^A$ we immediately get
	\begin{align}\label{riem comp: prop phi comm: D A mjk }
		f_lD^A_{ljk}=\frac{1}{m-1}\pa{f_lR^\p_{lj}f_k-f_lR^\p_{lk}f_j}.
	\end{align}
	Indeed,
	\begin{align*}
		f_lD^A_{ljk} = & \; \frac{1}{m-2}\bigg[f_lR^\p_{lj}f_k-f_lR^\p_{lk}f_j+\frac{1}{m-1}(f_lR^\p_{lk}f_j-f_lR^\p_{lj}f_k) -\frac{S^\p}{m-1}(f_jf_k-f_kf_j)\bigg]\\
		= & \; \frac{1}{m-2}\sq{\frac{m-2}{m-1}(f_lR^\p_{lj}f_k-f_lR^\p_{lk}f_j)}\\
		= & \; \frac{1}{m-1}(f_lR^\p_{lj}f_k-f_lR^\p_{lk}f_j).
	\end{align*}
	In the same fashion, we have
	\begin{align}\label{riem comp: prop phi comm: D B mjk}
		f_lD^B_{ljk}=\frac{1}{m-1}\pa{f_lf_{lk}f_j-f_lf_{lj}f_k}.
	\end{align}
	We contract \eqref{Deduzione: C-phi-PF: versione con le D 1} with $f_i$ and we use \eqref{riem comp: prop phi comm: D A mjk } and \eqref{riem comp: prop phi comm: D B mjk} to get
	\begin{align*}
		0 = & \; f_lC^\p_{ljk}+f_sf_lW^\p_{sljk}-\frac{1}{m-1}U^a\p^a_kf_j+\frac{1}{m-1}U^a\p^a_jf_k-f_lD^A_{ljk}-(m-2)f_lD^B_{ljk}\\
		= & \; f_lC^\p_{ljk}-\frac{1}{m-1}U^a\p^a_kf_j+\frac{1}{m-1}U^a\p^a_jf_k-\frac{1}{m-1}(f_lR^\p_{lj}f_k-f_lR^\p_{lk}f_j)\\
		&-\frac{m-2}{m-1}\pa{f_lf_{lk}f_j-f_lf_{lj}f_k},
	\end{align*}
	that is,
	\begin{equation}\label{riem comp: prop phi comm: C phi mjk}
		f_lC^\p_{ljk} = \frac{1}{m-1}\bigg[U^a\p^a_kf_j-U^a\p^a_j f_k+f_l(R^\p_{lj}f_k-R^\p_{lk}f_j) +(m-2)f_l\pa{f_{lk}f_j-f_{lj}f_k}\bigg].
	\end{equation}
	We will also need the following:
	\begin{align}\label{riem comp: prop phi comm: gradc mu e f commutano}
		\mu_j f_i-\mu_if_j=0.
	\end{align}
	Take the covariant derivative of \eqref{Deduzione: C-phi-PF: versione con le D 3}
	\begin{align*}
		\mu_j f_i+p_jf_i+(\mu+p)f_{ij}-p_{ij}=0
	\end{align*}
	and use \eqref{Deduzione: C-phi-PF: versione con le D 3} into the latter to obtain
	\begin{align*}
		\mu_j f_i=-(\mu+p)f_if_j-(\mu+p)f_{ij}+p_{ij}.
	\end{align*}
	Since the right hand side of the equation above is symmetric in $i$ and $j$, the left hand side has to be so, and we obtain \eqref{riem comp: prop phi comm: gradc mu e f commutano}.
	Now, take the covariant derivative of \eqref{Deduzione: C-phi-PF: versione con le D}	
	\begin{align*}
		0=&-\frac{m-1}{m}f_{llij}+\frac{m-2}{m}f_lf_{lij}+\frac{m-2}{m}f_{li}f_{lj}+\pa{\Delta f}_j f_i\\
		&+\pa{\Delta f}f_{ij}+\frac{1}{2m}S^\p_{ij}-f_lR^\p_{li,j}-R^\p_{li}f_{lj}+\frac{1}{m}U^a\p^a_{ij}\\
		&+\frac{1}{m}U^{ab}\p^a_i\p^b_j-\frac{m-1}{m}\mu_{ij};
	\end{align*}
	skew symmetrize the equation above, observing that $f_tf_{tij}$ is symmetric by the Ricci commutation identities,
	\begin{align*}
		0=\pa{\Delta f}_j f_i-\pa{\Delta f}_i f_j-f_l (R^\p_{li,j}-R^\p_{lj,i})-R^\p_{li}f_{lj}+R^\p_{lj}f_{li}.
	\end{align*}
	From the definition of $C^\p$ we get
	\begin{equation}\label{riem comp: prop phi comm: conto 1}
		0 = \pa{\Delta f}_j f_i-\pa{\Delta f}_i f_j-f_lC^\p_{lij}-\frac{1}{2(m-1)}\pa{S^\p_j f_i-S^\p_if_j}-R^\p_{li}f_{lj}+R^\p_{lj}f_{li}.
	\end{equation}
	Insert \eqref{riem comp: prop phi comm: C phi mjk} in \eqref{riem comp: prop phi comm: conto 1}
	\begin{align}\label{riem comp: prop phi comm: conto 2}
		0 = & \; \pa{\Delta f}_j f_i-\pa{\Delta f}_i f_j-\frac{1}{m-1}\bigg[U^a\p^a_j f_i-U^a\p^a_if_j+f_l \pa{R^\p_{li}f_j-R^\p_{lj}f_i} \\
		&+(m-2)f_l\pa{f_{lj}f_i-f_{li}f_j}\bigg]-\frac{1}{2(m-1)}\pa{S^\p_j f_i-S^\p_if_j} -R^\p_{li}f_{lj}+R^\p_{lj}f_{li}. \nonumber
	\end{align}
	Multiply \eqref{Deduzione: C-phi-PF: versione con le D 2} by $\frac{m}{m-1}$ and rearrange to deduce
	\[
		\pa{\Delta f}_i = \frac{m-2}{m-1}f_{l}f_{li}+\frac{m}{m-1}\pa{\Delta f}f_i+\frac{1}{2(m-1)}S^\p_i-\frac{m}{m-1}f_lR^\p_{li} +\frac{1}{m-1}U^a\p^a_i-\mu_i.
	\]
	Use the latter in \eqref{riem comp: prop phi comm: conto 2} to obtain
	\begin{align*}
		0 = & \; \frac{m-2}{m-1}f_lf_{lj}f_i+\frac{m}{m-1}\pa{\Delta f}f_j f_i+\frac{1}{2(m-1)}S^\p_j f_i-\frac{m}{m-1}f_lR^\p_{lj}f_i +\frac{1}{m-1}U^a\p^a_j f_i-\mu_j f_i\\
		& -\frac{m-2}{m-1}f_lf_{li}f_j-\frac{m}{m-1}\pa{\Delta f}f_i f_j-\frac{1}{2(m-1)}S^\p_i f_j+\frac{m}{m-1}f_lR^\p_{li}f_j -\frac{1}{m-1}U^a\p^a_i f_j+\mu_i f_j\\
		& -\frac{1}{m-1}\bigg[U^a\p^a_j f_i-U^a\p^a_if_j+f_l (R^\p_{li}f_j-R^\p_{lj}f_i) + (m-2)f_l (f_{lj}f_i-f_{li}f_j) \bigg] \\
		& -\frac{1}{2(m-1)}(S^\p_j f_i-S^\p_if_j) -R^\p_{li}f_{lj}+R^\p_{lj}f_{li}. 
	\end{align*}
	Simplifying and using \eqref{riem comp: prop phi comm: gradc mu e f commutano} we get
	\begin{align*}
		0=-f_lR^\p_{lj}f_i+f_l R^\p_{li}f_j-R^\p_{li}f_{lj}+R^\p_{lj}f_{li}.
	\end{align*}
	Since $\ric^\p$ clearly commutes with itself, the latter gives the commutativity of $\ric^\p$ and $P$. From Proposition \ref{riem comp: prop: C phi PF e riem comp}, we have that $P$ is Riemann compatible so that in particular it commutes with the Ricci tensor. Combining these two informations we deduce that $P$ and $\p^*h$ commute and we are done.
\end{proof}

\begin{proposition}\label{riem comp: prop: phi weyl compatibilità}
	Let $(M,g)$ be an $m$-dimensional C-$\p$-PF. Set
	\begin{align*}
		P:=\ric^\p+\hess f-\di f\otimes \di f.
	\end{align*}
	Then we have
	\begin{subequations}\label{riem comp: prop phi weyl: phi weyl e comm con phi ric}
		\begin{empheq}[left={\empheqlbrace}]{alignat=2}
			&	W^\p_{sikl}P_{js}+W^\p_{silj}P_{ks}+W^\p_{sijk}P_{ls}=0, \label{riem comp: prop phi weyl: phi weyl e comm con phi ric 1}\\
			&P_{js}R^\p_{si}=P_{is}R^\p_{sj}, \label{riem comp: prop phi weyl: phi weyl e comm con phi ric 2}
		\end{empheq}
	\end{subequations}
	that is, $P$ is $\p$-Weyl compatible and it commutes with $\ric^\p$.
\end{proposition}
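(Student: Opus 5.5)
The plan is to derive both identities from what is already known for $P$ with respect to the classical curvatures, and then convert them into the $\p$-curvatures using the commutation of $P$ with $\p^*h$.

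\textbf{Commutation with $\ric^\p$.} By Proposition \ref{riem comp: prop: C phi PF e riem comp}, $P$ is Riemann compatible. By \eqref{riem comp: weyl + commut}, it is therefore Weyl compatible and commutes with $\ric$. Proposition \ref{riem comp: prop: phi* h commuta con P} says that $P$ commutes with $\p^*h$. Since $\ric^\p=\ric-\alpha\p^*h$, subtracting the two commutation relations gives \eqref{riem comp: prop phi weyl: phi weyl e comm con phi ric 2} at once. (In fact the proof of Proposition \ref{riem comp: prop: phi* h commuta con P} already obtains this relation on the way.)

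\textbf{$\p$-Weyl compatibility.} I would use \eqref{phi curv: W phi e W} to write
\[
W^\p=W+\tfrac{\alpha}{m-2}A(\p^*h)\KN g,\qquad A(\p^*h)=\p^*h-\tfrac{|\di\p|^2}{2(m-1)}g .
\]
The cyclic expression in \eqref{riem comp: prop phi weyl: phi weyl e comm con phi ric 1} is linear in the curvature-type tensor, so it splits into the Weyl part and a Kulkarni--Nomizu part. The Weyl part vanishes by Weyl compatibility of $P$. It then remains to show that, for symmetric $Q$,
\[
(Q\KN g)_{sikl}P_{js}+(Q\KN g)_{silj}P_{ks}+(Q\KN g)_{sijk}P_{ls}=0
\]
whenever $QP=PQ$. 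Here $Q$ is $A(\p^*h)$, which commutes with $P$ because $\p^*h$ does and $g$ trivially does.

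\textbf{The algebraic identity.} Expanding $(Q\KN g)_{sikl}P_{js}=Q_{sk}g_{il}P_{js}+Q_{il}g_{sk}P_{js}-Q_{sl}g_{ik}P_{js}-Q_{ik}g_{sl}P_{js}$ gives terms of two kinds:
\[
(PQ)_{jk}\delta_{il}-(PQ)_{jl}\delta_{ik}\quad\text{and}\quad Q_{il}P_{jk}-Q_{ik}P_{jl}.
\]
Summing cyclically over $(j,k,l)$, the terms $Q_{il}P_{jk}-Q_{ik}P_{jl}$ cancel pairwise by the symmetry of $P$. The remaining terms collect into
\[
\big[(PQ)_{jk}-(PQ)_{kj}\big]\delta_{il}
\]
plus its cyclic permutations, and these vanish because $PQ$ is symmetric when $P$ and $Q$ commute.

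This is the same mechanism as the known fact that Riemann compatibility equals Weyl compatibility plus commutation with Ricci. The main obstacle is only the sign and index bookkeeping in this cyclic sum, which I would check by writing out all twelve terms explicitly.
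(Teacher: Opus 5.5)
Your proposal is correct and is essentially the paper's proof. Commutation with $\ric^\p$ comes from $P$ commuting with $\ric$ (by Riemann compatibility) and with $\p^*h$ (Proposition~\ref{riem comp: prop: phi* h commuta con P}), and $\p$-Weyl compatibility follows by substituting $W=W^\p-\frac{\alpha}{m-2}A(\p^*h)\KN g$ into the Weyl compatibility of $P$ and discarding the extra terms using $P\,\p^*h=\p^*h\,P$. Your cyclic-sum identity for $Q\KN g$ checks out: the terms of the form $Q_{i\cdot}P_{\cdot\cdot}$ cancel by symmetry of $P$, and the $\delta$-terms are antisymmetric parts of $PQ$. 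It is a cleaner packaging of the paper's ``simplifying'' step.
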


\begin{proof}
	From Proposition \ref{riem comp: prop: C phi PF e riem comp} we get that $P$  commutes with $\ric$, while Proposition \ref{riem comp: prop: phi* h commuta con P} tells us that $P$ commutes with $\ric^\p$ so that we have \eqref{riem comp: prop phi weyl: phi weyl e comm con phi ric 2}.
	By Proposition \ref{riem comp: prop: C phi PF e riem comp}, we obtain
	\begin{align}\label{riem comp: porp phi weyl: weyl comp}
		0=P_{ls}W_{sijk}+P_{ks}W_{silj}+P_{js}W_{sikl}.
	\end{align}
	From \eqref{phi curv: W phi e W} we have
	\begin{align*}
		W_{lijk} = & \; W^\p_{lijk}-\frac{\alpha}{m-2}\pa{\p^a_l\p^a_j\delta_{ik}-\p^a_l\p^a_k\delta_{ij}+\p^a_i\p^a_k\delta_{lj}-\p^a_i\p^a_j\delta_{lk}}\\
		& +\frac{\alpha}{(m-2)(m-1)}\abs{\di \p}^2\pa{\delta_{lj}\delta_{ik}-\delta_{lk}\delta_{ij}}.
	\end{align*}
	Using the latter in \eqref{riem comp: porp phi weyl: weyl comp} we deduce
	\begin{align*}
		0 = & \; P_{ls}W^\p_{sijk}-\frac{\alpha}{m-2}\pa{P_{ls}\p^a_s\p^a_j\delta_{ik}-P_{ls}\p^a_s\p^a_k\delta_{ij}+P_{lj}\p^a_i\p^a_k-P_{lk}\p^a_i\p^a_j}\\
		&+\frac{\alpha}{(m-2)(m-1)}\abs{\di \p}^2\pa{P_{jl}\delta_{ik}-P_{kl}\delta_{ij}}\\
		&+P_{ks}W^\p_{silj}-\frac{\alpha}{m-2}\pa{P_{ks}\p^a_s\p^a_l\delta_{ij}-P_{ks}\p^a_s\p^a_j\delta_{il}+P_{lk}\p^a_i\p^a_j-P_{jk}\p^a_i\p^a_l}\\
		&+\frac{\alpha}{(m-2)(m-1)}\abs{\di \p}^2\pa{P_{kl}\delta_{ij}-P_{kj}\delta_{il}}\\
		&+P_{js}W^\p_{sikl}-\frac{\alpha}{m-2}\pa{P_{js}\p^a_s\p^a_k\delta_{il}-P_{js}\p^a_s\p^a_l\delta_{ik}+P_{kj}\p^a_i\p^a_l-P_{lj}\p^a_i\p^a_k}\\
		&+\frac{\alpha}{(m-2)(m-1)}\abs{\di \p}^2\pa{P_{jk}\delta_{il}-P_{jl}\delta_{ik}}.\\
	\end{align*}
	Using the fact that $P$ and $\p^*h$ commute and simplifying we obtain \eqref{riem comp: prop phi weyl: phi weyl e comm con phi ric 1}.
\end{proof}
\subsection{Riemann compatibility and 4-dimensional Riemannian geometry}\label{Appendice B 2}
We recall some well-know facts about differential topology and the geometry of a Riemannian 4-manifolds. This will allow us to strengthen some relatively recent results of Mantica and Molinari and, at the same time, to characterize the injectivity of the $W^*$ operator in dimension 4, giving more depth to the results in Section \ref{Sect 5: riem comp}.\\
Consider the homomorphism
\begin{align*}
	R:\Lambda^2 TM\to \text{End }TM
\end{align*}
that sends $X\wedge Y\in \Lambda^2TM$ to the endomorphism of $TM$ of local components
\begin{align*}
	R(X\wedge Y)_{\ j}^{i}=X^kY^tR^{i}_{\ jkt}.
\end{align*}
Then, for any positive integer $k$, we can define a $4k$-covariant tensor field $\omega_k$  on $M$ by setting, for all $X_1,...,X_{4k}\in TM$
\begin{align*}
	\omega_k(X_1,...,X_{4k})=\text{Trace}\sq{R(X_1\wedge X_2)\circ R(X_3\wedge X_4)\circ ...\circ R(X_{4k-1}\wedge X_{4k})}
\end{align*}
where the composition and the trace operation are taken with respect to $\text{End } TM$. \\
For $4k\leq m$, the $k$-th \emph{Pontryagin form}, $\Omega_k(R)$,  of $M$ is defined to be the total anti-symmetrization of $\omega_k.$ \\
The next theorem shows how the existence of a Riemann compatible tensor can constrain the geometry of $(M,g)$.
\begin{theorem}[Theorem 5.3 of \cite{Mantica2012RiemannCT}]\label{riem comp: teo pontryagin}
	Let $(M,g)$ be an $m$-dimensional Riemannian manifold. Suppose that there exist on $M$ a Riemann compatible, 2-covariant symmetric tensor $P$ and a point $x\in M$ such that $P$ has $m$ distinct eigenvalues at $x$. Then every Pontryagin form of $M$ vanishes at $x$.
\end{theorem}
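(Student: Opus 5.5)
Fix $x$ and diagonalize $P$ there in an orthonormal frame $\{e_i\}$, so that $P_{ij}=\lambda_i\delta_{ij}$ with the $\lambda_i$ pairwise distinct. We then rewrite Riemann compatibility in this frame. Choosing $j=i$ and suppressing summation, the condition $R^l_{\ iks}P_{jl}+R^l_{\ isj}P_{kl}+R^l_{\ ijk}P_{sl}=0$ becomes
\begin{align*}
	\lambda_j R_{jiks}+\lambda_k R_{kisj}+\lambda_s R_{sijk}=0 .
\end{align*}
Adding $\lambda_i$ times the first Bianchi identity $R_{iksj}+R_{isjk}+R_{ijks}=0$ gives
\[
(\lambda_j-\lambda_i)R_{ijks}+(\lambda_k-\lambda_i)R_{iksj}+(\lambda_s-\lambda_i)R_{isjk}=0 .
\]

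The first step is to extract the vanishing of components from this identity.

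When the four indices $i,j,k,s$ are pairwise distinct, we combine the relation with its versions obtained by permuting the roles of the indices. Together with the first Bianchi identity, this gives a homogeneous linear system for the three independent components $R_{ijks}, R_{iksj}, R_{isjk}$. Its coefficient matrix is built from differences $\lambda_a-\lambda_b$. We expect its determinant to be, up to sign, a product of such differences, hence nonzero, which would force all components with four distinct indices to vanish.

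When exactly three indices are distinct, say $s=i$, the relation reduces to $(\lambda_j-\lambda_k)R_{ijki}=0$, so $R_{ijki}=0$ for $j\neq k$.

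The outcome is that at $x$ the only possibly nonzero components are $R_{ijij}$. Equivalently, each operator $R(e_a\wedge e_b)$ is a multiple of $e_a\wedge e_b$ acting as a rotation in the $(e_a,e_b)$-plane. Checking the four-distinct-index case carefully is the main obstacle. Concretely, we use the identity above for the triples $(i;j,k,s)$, $(j;i,k,s)$, and so on, and verify that the resulting $3\times3$ system is nondegenerate when the $\lambda$'s are distinct.

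With this structure in hand, the curvature forms at $x$ are $\Omega_{ab}=K_{ab}\,\theta^a\wedge\theta^b$ with no sum, where $K_{ab}=R_{abab}$. The Pontryagin forms $\Omega_k(R)$ are, up to constants, the total antisymmetrizations of $\mathrm{tr}(\Omega^{2k})$ (the definition via $\omega_k$ coincides with this). A term of $\mathrm{tr}(\Omega^{2k})$ is a closed cycle $\Omega_{a_1a_2}\wedge\Omega_{a_2a_3}\wedge\dots\wedge\Omega_{a_{2k}a_1}$. Its wedge factor is $\prod\theta^{a_r}\wedge\theta^{a_{r+1}}$, and in this product each index appears in two consecutive factors of the cycle, so every $\theta^{a_r}$ occurs twice. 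The product therefore vanishes, and hence $\mathrm{tr}(\Omega^{2k})=0$ at $x$. Since the Pontryagin forms are polynomials in these traces, which are elementary symmetric functions of $\Omega$, every Pontryagin form vanishes at $x$.
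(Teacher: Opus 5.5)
The paper gives no proof of this statement; it quotes it from Mantica--Molinari, so your proposal has to stand on its own. Your overall architecture is sound, and it is the Derdzinski--Shen scheme: show that in an eigenframe of $P$ at $x$ only the components $R_{abab}$ survive, then note that every cyclic product $\Omega_{a_1a_2}\wedge\cdots\wedge\Omega_{a_{2k}a_1}$ vanishes. Your identity $(\lambda_j-\lambda_i)R_{ijks}+(\lambda_k-\lambda_i)R_{iksj}+(\lambda_s-\lambda_i)R_{isjk}=0$ is correct, and so are the three-index case and the final Pontryagin step.

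The gap is exactly the step you flagged, and your prediction about it is wrong. Put $X=R_{ijks}$, $Y=R_{iksj}$, $Z=R_{isjk}=-X-Y$. The identity with base index $i$, $j$, $k$, $s$ respectively reduces to
\[
(\lambda_j-\lambda_s)X+(\lambda_k-\lambda_s)Y=0,\qquad (\lambda_k-\lambda_i)X+(\lambda_k-\lambda_s)Y=0,
\]
\[
(\lambda_s-\lambda_j)X+(\lambda_i-\lambda_j)Y=0,\qquad (\lambda_i-\lambda_k)X+(\lambda_i-\lambda_j)Y=0.
\]
The $2\times 2$ minors are not products of eigenvalue differences. Take the pair you propose, coming from $(i;j,k,s)$ and $(j;i,k,s)$: its determinant is $(\lambda_k-\lambda_s)(\lambda_i+\lambda_j-\lambda_k-\lambda_s)$. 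This vanishes for distinct eigenvalues such as $(\lambda_i,\lambda_j,\lambda_k,\lambda_s)=(0,3,1,2)$. There the two equations coincide, so $X=Y$, $Z=-2X$ survives, and the $3\times 3$ system you describe (Bianchi plus these two identities) is singular.

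The repair is to bring in a third base index. The minors for the pairs $\{i,k\}$ and $\{i,s\}$ are $(\lambda_j-\lambda_s)(\lambda_i+\lambda_k-\lambda_j-\lambda_s)$ and $(\lambda_j-\lambda_k)(\lambda_i+\lambda_s-\lambda_j-\lambda_k)$. No two of the quantities $\lambda_i+\lambda_j-\lambda_k-\lambda_s$, $\lambda_i+\lambda_k-\lambda_j-\lambda_s$, $\lambda_i+\lambda_s-\lambda_j-\lambda_k$ can vanish together, since the difference of any two is twice a difference of two distinct eigenvalues. Hence the four equations have rank $2$ and $X=Y=Z=0$, and with this inserted your argument is complete.

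You can also bypass the traces entirely. With the paper's definition, $\Omega_k(R)$ is the antisymmetrization of $\operatorname{Trace}[R(X_1\wedge X_2)\circ\cdots\circ R(X_{4k-1}\wedge X_{4k})]$. Once $R(e_a\wedge e_b)$ maps into $\mathrm{span}(e_a,e_b)$ and kills its orthogonal complement, any two consecutive factors act on disjoint planes and compose to zero. So the remark about symmetric functions is unnecessary; note also that $\tr(\Omega^{2k})$ are power sums, not elementary symmetric functions.
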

As we will see, we can relax the assumptions of Theorem \ref{riem comp: teo pontryagin} when $m=4$. It is well known, by the celebrated Hirzebruch signature formula, that the integral of the first Pontryagin  form $\Omega_1(R)$ on a closed 4-manifold $(M,g)$ coincides with the topological signature $\tau(M)$ of $M$, that is, with the signature of the intersection form of the second, real, singular co-homology group $H^2(M,\RR)$ of $M$. Elaborating on this fact, we are going to prove the following    
\begin{theorem}\label{riem comp: teo signature}
	Let $(M,g)$ be a compact, orientable Riemannian manifold of dimension $4$ and let $P$ be a Weyl compatible, 2-covariant, symmetric tensor on $M$.
	If, on a dense open subset of $M$, we have
	\begin{align}\label{riem comp: teo signature: ipotesi autovalori di P}
		P\neq \frac{\tr_g P}{4}g,
	\end{align}
	then the signature $\tau(M)$ of $M$ is zero.
\end{theorem}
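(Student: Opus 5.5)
The plan is to reduce the statement to a pointwise algebraic fact about the Weyl tensor in dimension $4$, and then integrate using Hirzebruch's signature formula. For a compact oriented $4$-manifold,
\begin{align*}
	\tau(M)=\frac{1}{12\pi^2}\int_M\pa{\abs{W^+}^2-\abs{W^-}^2},
\end{align*}
where $W=W^++W^-$ is the splitting induced by $\Lambda^2=\Lambda^+\oplus\Lambda^-$. Both sides of the pointwise identity $\abs{W^+}^2=\abs{W^-}^2$ are continuous. It therefore suffices to prove it on the dense open set where $\mathring P\neq 0$; it then extends to all of $M$, the integrand vanishes identically, and $\tau(M)=0$.

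\emph{Reduction.} By the first Bianchi identity the pure-trace part $\frac{\tr_g P}{4}g$ is Weyl compatible, as in Remark \ref{riem comp: riem comp e termini proporz alla metrica}. Hence $\mathring P$ is Weyl compatible, and by Lemma \ref{riem comp: W star e weyl comp} it lies in $\ker W^*_{|_{S^2_0}}$. Fix a point $x$ with $\mathring P(x)\neq 0$, and take an orthonormal frame at $x$ diagonalizing $P$, with eigenvalues $\lambda_1,\dots,\lambda_4$ that are not all equal. In this frame, Weyl compatibility
\begin{align*}
	W_{lik s}P_{jl}+W_{lisj}P_{kl}+W_{lijk}P_{sl}=0
\end{align*}
becomes a family of scalar relations of the form $(\lambda_j-\lambda_k)\,W_{\cdots}=0$, one for each choice of indices. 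The goal is to extract from these enough vanishing of Weyl components to force the self-dual and anti-self-dual parts to have the same spectrum.

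\emph{Case analysis.} I would split according to the multiplicity pattern of the eigenvalues at $x$: four distinct; $(2,1,1)$; $(2,2)$; $(3,1)$.
\begin{itemize}
	\item[$\bullet$] \emph{Four distinct eigenvalues.} Every component $W_{ijkl}$ with three or four distinct indices should vanish. Only the sectional-type components $W_{ijij}$ survive. In dimension $4$ the Weyl tensor satisfies $W_{1212}=W_{3434}$, $W_{1313}=W_{2424}$, $W_{1414}=W_{2323}$. Consequently, in the standard bases $e_1\wedge e_2\pm e_3\wedge e_4$, etc., of $\Lambda^\pm$, the operators $W^+$ and $W^-$ are diagonal with identical eigenvalues, so $\abs{W^+}=\abs{W^-}$. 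This recovers the $m=4$ case of Theorem \ref{riem comp: teo pontryagin}.
	\item[$\bullet$] \emph{Multiplicity $(3,1)$ or $(2,2)$.} The eigenspace structure is preserved by a subgroup of $SO(4)$. I would choose the frame inside the degenerate eigenspaces so as to diagonalize $W$ further; the aim is again to show that all components with three distinct indices vanish.
	\item[$\bullet$] \emph{Multiplicity $(2,1,1)$.} Here I would argue the same way, using the remaining rotational freedom in the two-dimensional eigenspace.
\end{itemize}

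\emph{Expected obstacle.} The main difficulty is the degenerate multiplicity cases, where the compatibility relations give fewer constraints. A priori, $\ker W^*\neq 0$ only says that $W^+$ and $W^-$ (as endomorphisms of $\Lambda^\pm$, with $S^2_0\cong\Lambda^+\otimes\Lambda^-$) share an eigenvalue, namely $W^+X=XW^-$ for some $X\neq 0$. That is weaker than equality of norms. I would therefore use the specific form of $\mathring P$: in the $(3,1)$ and $(2,2)$ cases, $\mathring P$ corresponds to an $X$ of rank $3$, which should force all three eigenvalues to match and hence $\abs{W^+}=\abs{W^-}$. In the $(2,1,1)$ case I would try to show, via the explicit relations above, that the diagonalizing frame can be adjusted so that again only the $W_{ijij}$ survive. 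If that fails, I would look for additional constraints from the full (not merely traced) compatibility identity.
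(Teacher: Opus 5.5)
\paragraph{Overall structure.} Your skeleton matches the paper's: prove $|W^+|^2=|W^-|^2$ on the dense open set where $\mathring P\neq0$, extend by continuity, and apply Hirzebruch. The paper takes the pointwise step from its Lemma that $W^+|_{\Lambda^2_+}$ and $W^-|_{\Lambda^2_-}$ have the same spectrum wherever $P\neq\frac{\tr_g P}{4}g$. You do a case analysis instead.

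\paragraph{The cases that work.} Take an eigenframe of $P$ and set $\mu_i=\lambda_i-\frac{\tr_g P}{4}$. The element $X\in\Lambda^2_+\otimes\Lambda^2_-$ attached to $\mathring P$ is diagonal in the standard bases $\frac{1}{\sqrt2}(e_1\wedge e_2\pm e_3\wedge e_4),\dots$, with entries proportional to $\mu_1+\mu_2$, $\mu_1+\mu_3$, $\mu_1+\mu_4$. Weyl compatibility is exactly $W^+X=XW^-$. For the multiplicity patterns $(1,1,1,1)$, $(3,1)$ and $(2,1,1)$, at least two of these entries are nonzero. Then $X$ intertwines $W^-$ and $W^+$ on invariant subspaces of dimension $\geq2$, and trace-freeness of $W^\pm$ gives equal spectra. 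So those three cases are fine, and your rotation idea for $(2,1,1)$ does work.

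\paragraph{The gap: the $(2,2)$ case.} For $P=\mathrm{diag}(a,a,b,b)$ with $a\neq b$, the entries are proportional to $(1,0,0)$, so $X$ has rank one, not three.
\begin{itemize}
\item Compatibility then says only that $e_1\wedge e_2$ and $e_3\wedge e_4$ are eigenvectors of $W$. So $W^\pm$ share the single eigenvalue $W_{1212}$, and their remaining $2\times2$ blocks are constrained only to have equal traces.
\item Rotating the frame inside the two eigenplanes just conjugates these blocks. Even when every component with three distinct indices vanishes, the components $W_{1342}=-W_{1423}$ survive and encode the mismatch.
\item Concretely, let $\omega_2=\frac{1}{\sqrt2}(e_1\wedge e_3+e_4\wedge e_2)$ and $\omega_3=\frac{1}{\sqrt2}(e_1\wedge e_4+e_2\wedge e_3)$. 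The algebraic Weyl tensor $W=\omega_2\otimes\omega_2-\omega_3\otimes\omega_3$ has $W^+=\mathrm{diag}(0,1,-1)$, $W^-=0$, and $W(e_1\wedge e_2,\cdot)=W(e_3\wedge e_4,\cdot)=0$.
\item Hence $P=\mathrm{diag}(1,1,-1,-1)$ is Weyl compatible, yet $|W^+|^2\neq|W^-|^2$.
\end{itemize}
No pointwise argument can close this case. Continuity only rescues $(2,2)$ points that are limits of points where $P$ has a simple eigenvalue.

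\paragraph{The paper has the same hole.} This is not a weakness of your route relative to the paper's. The paper's Lemma asserts exactly the pointwise conclusion that fails here. The input it substitutes, Mantica--Molinari's Proposition 2.4, is itself a pointwise consequence of compatibility. Whatever extra information the Codazzi condition supplies in Derdzinski's setting is not available for a merely Weyl compatible $P$.

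The Lemma even fails on open sets. On $\RR\times SU(2)$ take $g=dt^2+(1+e^{2t})\sigma_1^2+e^{2t}\sigma_2^2+\sigma_3^2$, with $d\sigma_1=\sigma_2\wedge\sigma_3$ cyclically.
\begin{itemize}
\item In the frame dual to $dt,\sqrt{1+e^{2t}}\,\sigma_1,e^{t}\sigma_2,\sigma_3$ one finds $R_{0123}\equiv0$.
\item So $P=\mathrm{diag}(1,1,-1,-1)$ in that frame is Weyl compatible.
\item However, $|W^+|^2-|W^-|^2$ is a nonzero multiple of $R_{0231}(W_{0202}-W_{0303})$, which does not vanish.
\end{itemize}

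\paragraph{What is actually proved.} Both your argument and the paper's establish the theorem only under the extra assumption that $P$ has a simple eigenvalue on a dense subset. An open region where $P$ is of type $(2,2)$ would need a genuinely global idea or an additional hypothesis.
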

\begin{remark}
	Theorems \ref{riem comp: teo pontryagin} and \ref{riem comp: teo signature} generalize two results of Derdzinski and Shen, see \cite{DerdzinskiShen}, valid under the assumption that $P$ be Codazzi.
\end{remark}
\begin{remark}
	In the following, we will actually prove a stronger statement than that of Theorem \ref{riem comp: teo signature}, that is, we will prove that, under the given assumptions, the first Pontryagin form $\Omega_1(R)$ of $(M,g)$ vanishes identically on $M$. Of course, when $m=4$, $\Omega_1(R)$ is the only Pontryagin form of $(M,g)$ so that we have actually generalized Theorem \ref{riem comp: teo pontryagin}: the same conclusion is now obtained under assumption \eqref{riem comp: teo signature: ipotesi autovalori di P} which is equivalent to the requirement that the endomorphism of $TM$ induced by $P$ has more than one eigenvalue on a dense open subset of $M$.
	We decided to formulate Theorem \ref{riem comp: teo signature} in terms of $\tau(M)$ rather than $\Omega_1(R)$ because of the deeper geometric meaning of the former compared to the latter and in order to make contact with \cite{DerdzinskiShen}.
\end{remark}
We now recall some standard facts regarding the algebraic structures of the Weyl tensor in dimension 4, in order to provide
 simple conditions on the Weyl tensor under which $W^*_{|_{S^2_0}}$ is injective.
When $m=4$ it is well-known that the space of $2$-forms on $M$ decomposes in the spaces of self-dual and anti-self dual forms, i.e.
\begin{align*}
	\Lambda^2(M)=\Lambda^2_+(M)\oplus \Lambda^2_-(M).
\end{align*}
Moreover, the Weyl tensor also decomposes in self-dual and anti-self dual parts, which take the form
\begin{align*}
	W^+:=\frac{1}{2}\pa{W+W^*}, \ \ W^-:=\frac{1}{2}\pa{W-W^*}.
\end{align*}
These define endomorphisms
\begin{align*}
	W^+_{|_{\Lambda^2_+}}:\Lambda^2_+\to \Lambda^2_+, \ \ W^-_{|_{\Lambda^2_-}}:\Lambda^2_-\to \Lambda^2_- 
\end{align*}
which assign to a 2-form $\omega \in \Lambda^2_{\pm}$ of components $\omega_{ij}$ the two form $W^{\pm}_{|_{\Lambda^2_{\pm}}}(\omega)$ of components
\begin{align*}
	W^{\pm}_{|_{\Lambda^2_{\pm}}}(\omega)_{ij}=W^{\pm}_{ijkt}\omega_{kt}.
\end{align*}
With these notations in mind, it is proved in \cite{GoverA.Rod2007FCC} that the injectivity of $W^*_{|_{S^2_0}}$ holds under the assumptions
\begin{align}\label{riem comp: ipotesi Gover Nagy}
	W^+=0, \ \ \det W^-_{|_{\Lambda^2_-}}\neq 0,
\end{align}
where $\det W^-_{|_{\Lambda^2_-}}$ is the determinant of $W^-_{|_{\Lambda^2_-}}$. 
More in depth, it is proved there, see also \cite[Section 4]{Gursky2021CurvatureOT}, that the endomorphism $W^*_{|_{S^2_0}}$ is completely determined by the endomorphisms $W^+_{|_{\Lambda^2_+}}$ and $W^-_{|_{\Lambda^2_-}}$.
Indeed, consider at a given point $p\in M$, an orthonormal basis $\omega^1,\omega^2,\omega^3$ of $\Lambda^2_+(T^*_pM)$ made of eigenforms of $W^+_{|_{\Lambda^2_+}}$ of respective eigenvalues $\lambda^+_1,\lambda^+_2,\lambda^+_3$. Similarly, consider at $p$ an orthonormal basis $\eta^1,\eta^2,\eta^3$ of $\Lambda^2_-(T^*_pM)$ made of eigenforms of $W^-_{|_{\Lambda^2_-}}$ of respective eigenvalues $\lambda^-_1,\lambda^-_2,\lambda^-_3$.
Then it is proved in Item iv) of \cite[Proposition 4.1]{Gursky2021CurvatureOT} that the nine 2-covariant tensors
\begin{align*}
	h^{\alpha,\beta}_{ij}:=\omega^\alpha_{ip}\eta^{\beta}_{pj}, \ \ \alpha,\beta \in\{1,2,3\}
\end{align*}
are symmetric, trace-free and they form a basis of $S^2_0(T^*_pM)$.
Moreover, $W_{|_{S^2_0}}$ is diagonalized on this basis and the eigenvalue corresponding to $h^{\alpha,\beta}$ is
\begin{align*}
	\lambda^+_{\alpha}+\lambda^-_{\beta},
\end{align*}
see \cite[Proposition 4.3]{Gursky2021CurvatureOT}.
Similarly, $W^*_{|_{S^2_0}}$ is diagonalized by $\{h^{\alpha,\beta}\}_{\alpha,\beta=1,2,3}$ and $h^{\alpha,\beta}$ has eigenvalue
\begin{align*}
	\lambda_\alpha^+-\lambda_\beta^-.
\end{align*}
From this we see that $W^*_{|_{S^2_0}}$ fails to be injective at $p$ if and only if, for some $\alpha,\beta\in\{1,2,3\}$ we have
\begin{align*}
	\lambda_{\alpha}^+=\lambda_{\beta}^-.
\end{align*}
In particular, this cannot happen if \eqref{riem comp: ipotesi Gover Nagy} holds.
\begin{remark}
	It is possible to give a more precise characterization of the relation between $W^+$ and $W^-$ on a four-manifold admitting a non-trivial Weyl compatible tensor. Indeed, we have the validity of the following 
	\begin{lemma}\label{riem comp: lemma spectas uguali}
		Let $(M,g)$ be a Riemannian four-manifold and let $P$ be a Weyl compatible tensor on $M$. Then, at any point $x\in M$ such that
		\begin{align*}
			P\neq \frac{\tr_g P}{4}g
		\end{align*}
		holds at $x$, we have that the spectra of $W^+_{|_{\Lambda^2_+}}$ and $W^-_{|_{\Lambda^2_-}}$ coincide, with equal multiplicities.
	\end{lemma}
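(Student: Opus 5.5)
Replacing $P$ by its trace-free part $\mathring P=P-\frac{\tr_g P}{4}g$ does not affect Weyl compatibility: the extra terms are of the form $W^l_{\ iks}g_{jl}+W^l_{\ isj}g_{kl}+W^l_{\ ijk}g_{sl}$, which vanish by the first Bianchi identity for $W$, as in Remark \ref{riem comp: riem comp e termini proporz alla metrica}. So at a point $x$ with $P\neq\frac{\tr_g P}{4}g$ we have a nonzero $\mathring P\in S^2_0(T_x^*M)$. By Lemma \ref{riem comp: W star e weyl comp}, $\mathring P$ lies in the kernel of $W^*_{|_{S^2_0}}$, and the rest of the argument is pointwise linear algebra in the framework recalled above from \cite{Gursky2021CurvatureOT}.

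I would pick eigenbases $\omega^1,\omega^2,\omega^3$ of $W^+_{|_{\Lambda^2_+}}$ and $\eta^1,\eta^2,\eta^3$ of $W^-_{|_{\Lambda^2_-}}$, and expand $\mathring P=\sum_{\alpha,\beta}M_{\alpha\beta}h^{\alpha,\beta}$ with a nonzero real $3\times3$ matrix $M$. Since $W^*_{|_{S^2_0}}h^{\alpha,\beta}=(\lambda^+_\alpha-\lambda^-_\beta)h^{\alpha,\beta}$, the kernel condition reads $(\lambda^+_\alpha-\lambda^-_\beta)M_{\alpha\beta}=0$ for all $\alpha,\beta$. In basis-free form this is the intertwining relation
\[
W^+ M = M W^-,
\]
where $M$ is viewed as a map $\Lambda^2_-\to\Lambda^2_+$ via the standard identification $S^2_0\cong\Lambda^2_+\otimes\Lambda^2_-$. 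Since $W^\pm$ are symmetric, transposing gives $M^TW^+=W^-M^T$. Hence $MM^T$ commutes with $W^+$ and $M^TM$ commutes with $W^-$. Taking a singular value decomposition $M=U\Sigma V^T$, one sees that $W^+$ restricted to $\operatorname{Im}M$ is conjugate, through $M$, to $W^-$ restricted to $(\ker M)^\perp$. So the spectra agree, with multiplicities, on subspaces of dimension $\operatorname{rank}M$. If $\operatorname{rank}M=3$, then $M$ is invertible, $W^+=MW^-M^{-1}$, and the lemma follows at once.

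The main obstacle is the case $\operatorname{rank}M<3$. If the rank is $2$, the remaining eigenvalue is forced to agree because $W^+$ and $W^-$ are both trace free: the eigenvalue sums are zero and two eigenvalues already match. The case $\operatorname{rank}M=1$ is the real difficulty. There the displayed relation only produces one common eigenvalue $\lambda^+_\alpha=\lambda^-_\beta$, and trace-freeness only gives $\lambda^+_{\alpha'}+\lambda^+_{\alpha''}=\lambda^-_{\beta'}+\lambda^-_{\beta''}$. To close this case I would first try to extract more information from the full three-index Weyl compatibility identity, written in an orthonormal frame adapted to the eigenvectors of $\mathring P$. A rank-one $M$ corresponds to $\mathring P$ having the special form $\omega\circ\eta$, with eigenvalues $\pm c$, each of multiplicity two. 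With this explicit form I would re-examine whether Lemma \ref{riem comp: W star e weyl comp} loses information, by checking each component of \eqref{Intro: weyl comp} by hand. If no further constraint appears, the rank-one case would have to be excluded by an additional hypothesis, or the conclusion weakened to a common eigenvalue there. I expect this to be where most of the care is needed.
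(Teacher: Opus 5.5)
Your reduction is correct up to the rank-one case. The trace part drops out by the first Bianchi identity, and Lemma \ref{riem comp: W star e weyl comp} turns compatibility into $W^+M=MW^-$. Then $\operatorname{Im}M$ is $W^+$-invariant and $\ker M$ is $W^-$-invariant, so ranks $3$ and $2$ give isospectrality (rank $2$ with the help of trace-freeness). The rank-one case, however, is a genuine gap, and it cannot be closed.

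Re-examining the three-index identity will not help, because Lemma \ref{riem comp: W star e weyl comp} loses nothing in dimension four. The left-hand side of \eqref{Intro: weyl comp} is totally skew in $(j,k,s)$, and its Hodge dual in those indices is, up to a constant, $W^*(\mathring P)$. Concretely, take $\mathring P=\operatorname{diag}(1,1,-1,-1)$ in an orthonormal frame. This is a multiple of $h^{1,1}$ with $\omega^1,\eta^1=(e^{12}\pm e^{34})/\sqrt2$, where $e^{ij}=e^i\wedge e^j$. Combined with the first Bianchi identity, compatibility reduces exactly to $W_{3i12}=W_{4i12}=W_{1i34}=W_{2i34}=0$ for all $i$. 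Equivalently, $e^{12}$ is an eigenvector of $W$ on $\Lambda^2$, i.e. $W^+\omega^1=\kappa\omega^1$ and $W^-\eta^1=\kappa\eta^1$ for one common $\kappa$, and nothing more is imposed. So $W^+=\operatorname{diag}(2,-1,-1)$, $W^-=\operatorname{diag}(2,-3,1)$ (in bases starting with $\omega^1,\eta^1$) is an algebraic Weyl tensor for which $P$ is Weyl compatible and $P\neq\frac{\tr_g P}{4}g$, yet the spectra differ.

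This failure occurs on actual manifolds, not only pointwise. On $\RR^4$ take $g=dx_1^2+dx_2^2+G_{uv}(x)\,dy^udy^v$ with $G=e^{2x_2}\rho_{x_1}^T\operatorname{diag}(e^{x_1},e^{-x_1})\rho_{x_1}$, where $\rho_\theta$ is the rotation by angle $\theta$.
\begin{itemize}
\item For any metric of this shape, $R(\partial_{x_a},\partial_{x_b})\partial_{x_c}=0$ and the mixed Ricci components vanish. Hence $W(\partial_{x_1}\wedge\partial_{x_2})$ has no mixed components.
\item Its $\partial_{y^1}\wedge\partial_{y^2}$ component is governed by $[G^{-1}\partial_{x_1}G,G^{-1}\partial_{x_2}G]$. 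This vanishes here because $G^{-1}\partial_{x_2}G=2I$.
\item Therefore $P=dx_1^2+dx_2^2-G_{uv}\,dy^udy^v$ is a smooth, trace-free, nowhere vanishing Weyl compatible tensor.
\item The eigenframe of $G$ rotates in $x_1$. A second-order computation at the origin then shows that the traceless parts of $W^+$ on $(\omega^1)^\perp$ and of $W^-$ on $(\eta^1)^\perp$ have norms differing by a factor $3$. So $W^\pm$ are not isospectral there; in fact $|W^+|\neq|W^-|$.
\end{itemize}
Thus the statement fails exactly in the case you isolated. Your fallback, that one only gets a common eigenvalue when $\mathring P$ has two $2$-dimensional eigenspaces, is the correct conclusion there.

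The paper's proof does not escape this. It transplants Derdzinski's argument, replacing Derdzinski--Shen's theorem with Mantica--Molinari's Proposition 2.4. That proposition follows from Weyl compatibility alone, so it holds in the example above, and the transplanted argument cannot work in this eigenvalue configuration. For Codazzi tensors the case is rescued by first-order information that Weyl compatibility lacks. The two eigendistributions are integrable with umbilical leaves, so $g$ is locally a doubly twisted product, and this forces the $S^2_0\otimes S^2_0$ part of the mixed Weyl curvature to vanish. The proof of Theorem \ref{riem comp: teo signature} inherits the same gap.
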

	The above result has been proved in \cite[Lemma 2]{DerdzinskiBundles} under the stronger assumption that $P$ be Codazzi.
	The same argument used there works in this situation if one applies  \cite[Proposition 2.4]{Mantica2012WeylCT} instead of  \cite[Theorem 1]{DerdzinskiShen}.
\end{remark}
\begin{proof}[Proof of Theorem \ref{riem comp: teo signature}]
	The proof is a simple application of Lemma \ref{riem comp: lemma spectas uguali}. Indeed, it is well known, see equation $(2.21)$ of \cite{Bourguignon}, that, on an orientable $4$-manifold, the first Pontryagin form is
	\begin{align*}
		\Omega_1(R)=\pa{\abs{W^+}^2-\abs{W^-}^2}\epsilon,
	\end{align*}
	where $\epsilon$ is again the volume form of $(M,g)$.
		Under the assumptions of Theorem \ref{riem comp: teo signature} and using Lemma \ref{riem comp: lemma spectas uguali} we obtain $\abs{W^+}^2=\abs{W^-}^2$ on a dense open subset of $M$, so that $\Omega_1(R)\equiv 0$ on $M$.
	From the Hirzebruch signature formula one gets
	\begin{align*}
		\tau(M)=\int_M \pa{\abs{W^+}^2-\abs{W^-}^2}=0
	\end{align*}
and we are done.
\end{proof}


\begin{thebibliography}{99}
	\bibitem{AMR}
	L. J. Alìas, P. Mastrolia and M. Rigoli, \emph{Maximum principles and geometric applications}, Springer Monographs in Mathematics, Springer, Cham, 2016. MR 3445380
	
	\bibitem{Altas2025}
	E. Altas  and B. Tekin,
		\emph{Vanishing of conserved charges in Cotton gravity},
		Phys. Rev. D
		\textbf{111},
		no. 2,  L021503,
		(2025).
\bibitem{Altas2025Killing}
	E. Altas  and B. Tekin,
	\emph{Consistency problems of conformal Killing gravity},
	Phys. Rev. D \textbf{111},
	no. 6, 064083,
	(2025).
	\bibitem{Ambrozio}
	L. Ambrozio, \emph{On static three-manifolds with positive scalar curvature}, Journal of Differential Geometry \textbf{107} (2015).
	
	\bibitem{Anselli_2021} A. Anselli, \emph{Bach and Einstein's equations in presence of a field}, Int. J. Geom. Methods Mod. Phys. \textbf{18} (2021), no. 5, paper no. 2150077, 68 pp.
	
	\bibitem{ACR} A. Anselli, G. Colombo and M. Rigoli, \emph{On the geometry of Einstein-type structures}, Nonlinear Anal. \textbf{204} (2021), paper no. 112198, 84 pp.
\bibitem{PhysRevD.104.088501} P. Bargueño,
	 \emph{Comment on ``Emergence of the Cotton tensor for describing gravity''}, Phys. Rev. D \textbf{104}, no. 8, 104.088501	(2021).

\bibitem{BorghiniMazzieri1}
S. Borghini and L. Mazzieri,\emph{On the mass of static metrics with positive cosmological	constant: {I}}, Classical and Quantum Gravity \textbf{35} (2018) no. 12.
	
	\bibitem{Bourguignon} J. P. Bourguignon, \emph{Les variétés de dimension 4 à signature non nulle dont la courbure est
	harmonique sont d’Einstein}, Invent. Math. \textbf{63} n.2 (1981) 263-286.
   \bibitem{phiSPFST} L. Branca, G.Colombo, P. Mastrolia, F. Mastropietro and M. Rigoli, \emph{On the geometry of $\p$-Static Perfect Fluid Space-Times}, preprint, submitted for publication.
\bibitem{Cao2013}
	H.D. Cao and Q. Chen,
	\emph{On {B}ach-flat gradient shrinking {R}icci solitons},
	Duke Math. J.,
	\textbf{162},
	(2013),
	no. 6,
	1149--1169.
	

	\bibitem{CatMastMontRig}
	G. Catino, P. Mastrolia, D. D. Monticelli and M. Rigoli, \emph{On the geometry
		of gradient Einstein-type manifolds}, Pacific J. Math. 286 (2017), no. 1, 39–67. MR 3582400
	\bibitem{Feng2024}
		P. Chen and J. Feng,
			\emph{Cosmological constant as an integration constant},
	Eur. Phys. J. C 
		\textbf{84},
		no. 12,
		(2024).
\bibitem{Chrusciel}
P. T. Chruściel, \emph{Remarks on rigidity of the de Sitter metric}, homepage.univie.ac.at/piotr.chrusciel/papers/
deSitter/deSitter2.pdf

	\bibitem{Clement:2023tyx}
	G. Clément  and K. Nouicer,
	\emph{Cotton gravity is not predictive},
	Phys. Lett. B
	\textbf{856}, 138947,
	(2024).
	
	
	\bibitem{CMR2022} G. Colombo, L. Mari and M. Rigoli, \emph{Einstein-type structures, Besse's conjecture, and a uniqueness result for a $\varphi $-CPE metric in its conformal class}, J. Geom. Anal. \textbf{32} (2022), no. 11, paper no. 267, 32 pp.
	
\bibitem{Ribeiro2} J. Costa, R. Diógenes, N. Pinherio and E. Ribeiro, \emph{Geometry of static perfect fluid space-time}, (2023), 2306.00225.
	\bibitem{Ribeiro1} F. Coutinho, R. Diógenes, B. Leandro, E: Ribeiro, \emph{Static perfect fluid space-time on compact manifolds}, Class. Quantum Grav. \textbf{37}, (2020), no. 1, 015003.
	
\bibitem{DerdzinskiBundles}
 A. Derdzinski  , \emph{Riemannian metrics with harmonic curvature on 2-sphere bundles
over compact surfaces}, Bull. Soc. Math. France \textbf{116} (1988), 133–156
\bibitem{DerdzinskiShen}  A. Derdzinski and C. L. Shen, \emph{Codazzi tensor fields, curvature and Pontryagin forms}, Proc.
London Math. Soc. \textbf{47} n.3 (1983) 15-26.	
	

	\bibitem{EL} J. Eells and L. Lemaire, \emph{A report on harmonic maps}, Bull. London Math. Soc. \textbf{10} (1978), no. 1, 1--68.
	
	\bibitem{EL2} J. Eells and L. Lemaire, \emph{Two reports on harmonic maps}, World Scientific Publications, River Edge, NJ, 1995.
	\bibitem{Fardoun1997HarmonicMW} A. Fardoun and A. Ratto, \emph{Harmonic maps with potential}, Calc. Var. Partial Differential
	Equations \textbf{5} (1997), no. 2, 183–197. MR 1433176
	\bibitem{FRRharmonicPot} A. Fardoun, A. Ratto and R. Regbaoui, \emph{On the heat flow for harmonic maps with
	potential}, Ann. Global Anal. Geom. \textbf{18} (2000), no. 6, 555–567. MR 1800592
\bibitem{FischerMarsden1}
A. E. Fischer and J. E. Marsden, \emph{Manifolds of Riemannian metrics with prescribed scalar curvature}, Bulletin (new series) of the American Mathematical Society \textbf{80} (1974) no. 3, pg. 479-484.
\bibitem{FischerMarsden2}
A. E. Fischer and J. E. Marsden, \emph{Deformations of the scalar curvature}, Duke mathematical journal, \textbf{42} (1975) no. 3, pg. 519-547.
	
	\bibitem{GoverA.Rod2007FCC} A. R. Gover and P.-A. Nagy, \emph{Four-dimensional conformal c-spaces}, Quarterly journal of mathematics \textbf{58} (2007), no. 4, 443--462.
	
	\bibitem{GoverA.Rod2006OtcE} A. R. Gover and P. Nurowski, \emph{Obstructions to conformally Einstein metrics in $n$ dimensions}, J. Geom. Phys. \textbf{56} (2006), no. 3, 450--484.
	
	\bibitem{Gursky2021CurvatureOT} M. J. Gursky, X. Cao and H. Tran, \emph{Curvature of the second kind and a conjecture of Nishikawa}, Comm. Math. Helvetici (2021).
	\bibitem{Harada} J. Harada, \emph{Emergence of the Cotton tensor for describing gravity}, Phys. Rev. D \textbf{103} (2021), no. 11, paper no. L121502.
	\bibitem{Harada2022} J. Harada, \emph{Cotton gravity and 84 galaxy rotation curves}, Phys. Rev. D \textbf{106},
		no. 6. 064044, (2022)
\bibitem{Hawking} S. W. Hawking and G. F. R. Ellis, \emph{The large scale structure of space-time, anniversary
ed., Cambridge Monographs on Mathematical Physics}, Cambridge University Press,
Cambridge, 2023, With a foreword by Abhay Ashtekar. MR 4615777
\bibitem{HwangYun2}
S. Hwang and G. Yun, \emph{Vacuum Static Spaces with Vanishing of Complete Divergence of Weyl Tensor}, The Journal of Geometric Analysis (2020) \textbf{31}.
\bibitem{Kobayashi}
O. Kobayashi, \emph{A differential equation arising from scalar curvature function}, J. Math.
Soc. Japan \textbf{34}, No. 4 (1982), 665-675.

\bibitem{KO}
	O. Kobayashi and M. Obata,
	\emph{Conformally-flatness and static space-time},
	Manifolds and {L}ie groups ({N}otre {D}ame, {I}nd., 1980),
	 Progr. Math.,
	\textbf{14},
	197--206,
	(1981),
\bibitem{Lafontaine}
J. Lafontaine, \emph{Sur la geometrie d’une generalisation de l’equation d’Obata}, J. Math.
Pures Appliquees \textbf{62} (1983), 63-72

\bibitem{Lemaire1977OnTE} L. Lemaire, \emph{On the existence of harmonic maps}, University of Warwick, 1977, Unpublished.

\bibitem{Lindblom} L. Lindblom, \emph{Some properties of static general relativistic stellar models}, J. Math. Phys. \textbf{21}, (1980), 6, 1455-1459.

    \bibitem{List2008EvolutionOA} B. List, \emph{Evolution of an extended Ricci flow system}, Comm. Anal. Geom. \textbf{16} (2008), no. 5, 1007--1048.


	\bibitem{Mantica2012RiemannCT} C. A. Mantica and L. G. Molinari, \emph{Riemann compatible tensors}, Colloq. Math \textbf{128}
	(2012), 197–210.
	\bibitem{Mantica2012WeylCT}
	C. A. Mantica and L. G. Molinari, \emph{Weyl compatible tensors}, Int. J. Geom. Meth. Mod. Phys.
	\textbf{11} (2014) 1450070 (15 pp).
\bibitem{ManticaCottonG}
 C. A. Mantica and L. G. Molinari, \emph{Codazzi tensors and their
spacetimes, and Cotton gravity}, Gen. Relativ. Gravit. \textbf{55}, 62
(2023).
\bibitem{mantica2025noteconservedcurrentsstatic}
	C.A. Mantica and L. G. Molinari,
	\emph{Note on conserved currents in static Conformal Killing Gravity}, 
	(2025)
	eprint={2504.12840},
	url={https://arxiv.org/abs/2504.12840}, 

\bibitem{Masood}
A. Masood-ul-Alam, \emph{Proof that static stellar models are spherical}, General Relativity and Gravitation \textbf{39}, (2007), 55-85.

\bibitem{sussman2024responsecritiquecottongravity}
 C. A. Mantica and L. G. Molinari and S. Nájera and R. A Sussman,
	\emph{Response to a critique of "Cotton Gravity"},
	(2024),
	https://arxiv.org/abs/2401.10479. 

\bibitem{MR} L. Marini and M. Rigoli, \emph{On the geometry of $\varphi$-curvatures}, J. Math. Anal. Appl. \textbf{483} (2020), no. 2, paper no. 123657, 22 pp.
\bibitem{QingWan}
J. Qing and W. Yuan, \emph{A note on static spaces and related problems}, J. Geom. Phys.
\textbf{74} (2013), 18-27.

\bibitem{Ratto} A. Ratto, \emph{Harmonic maps with potential}, Rend. Circ. Mat. Palermo (\textbf{2}) Suppl. (1997),
no. 49, 229–242. MR 1603011
\bibitem{Shen}
Y. Shen, \emph{A note on Fischer–Marsden’s conjecture}, Proc. Amer. Math. Soc. \textbf{125} (1997), pg. 901-905.

\end{thebibliography}
\end{document}